\documentclass{article}
\usepackage{mystyle}

\begin{document}
\title{Maximum Likelihood Estimation for Entity Ranking under Iterative Synthetic Data Augmentation}
\author{
    Yiqiao Jin\thanks{Department of Statistics and Data Science, Washington University in St. Louis. Email: \texttt{j.yiqiao@wustl.edu}} \qquad
    Mengxin Yu\thanks{Department of Statistics and Data Science, Washington University in St. Louis. Email: \texttt{myu@wustl.edu}}
}
\maketitle
\begin{abstract}
   We study entity ranking and inference under the Bradley--Terry--Luce (BTL) model using pairwise comparisons collected over sparse comparison graphs. In practice, collecting high-quality human judgments can be expensive and time-consuming, motivating the use of synthetic data augmentation in model training. We analyze an iterative synthetic \textit{augmentation} workflow in which synthetic comparisons generated from fitted models are successively added to the original dataset. In this process, the proportion of real data may vanish as the number of iterations grows. However, emerging literature has shown that recursive training on synthetic data can lead to \textit{model collapse}, raising concerns about the statistical reliability of such augmentation procedures. To this end, we systematically analyze the resulting MLE in finite-sample, high-dimensional regimes.  For the resulting iterative maximum likelihood estimator (MLE), we derive its optimal finite-sample  $\ell_2$ and $\ell_{\infty}$ statistical rates and establish its asymptotic normality under natural identifiability conditions. We further characterize regimes in which model collapse is avoided despite the diminishing fraction of real data. We validate our theoretical findings through large-scale numerical experiments and an application to the Arena Human Preference 140k dataset.
\end{abstract}
{\small
}
\clearpage
\section{Introduction}
Ranking arises in various real-world applications, include web search \citep{dwork2001rank, wang2016learning}, recommendation systems \citep{baltrunas2010group, he2018adversarial}, scientific journals \citep{stigler1994citation, ji2022co}, election candidates \citep{plackett1975analysis, mattei2013preflib}, assortment optimization \citep{rusmevichientong2010dynamic, aouad2018approximability}, sports competition \citep{massey1997statistical, turner2012bradley}, and even gene expression \citep{boulesteix2009stability, plaisier2010rank}. Therefore, ranking problems have attracted significant attention across many fields such as psychology \citep{thurstone1927method, thurstone2017law}, healthcare \citep{wang2018association, adelman2020efficient}, operations research  \citep{McFadden1974conditional, mohammadi2020ensemble}, statistics \citep{hunter2004mm, chen2019bspectral,fan2025ranking}, and artificial intelligence \citep{ouyang2022training, chiang2024chatbot}.

One of the most widely used parametric models for pairwise comparisons in ranking problems is the Bradley--Terry--Luce (BTL) model \citep{bradley1952rank, luce2012individual}. Under the BTL model, a collection of $n$ items is ranked according to their latent preference scores $\theta_i^{\ast}$, for $i=1,\ldots,n$. Specifically, for any pair of items $i$ and $j$, the probability that item $i$ is preferred over item $j$ is given by
\[
\mathbb{P}\left(\text{item $i$ is preferred over item $j$}\right)
=
\frac{e^{\theta_i^{\ast}}}{e^{\theta_i^{\ast}}+e^{\theta_j^{\ast}}}.
\]
To model which pairs of items are compared, following the literature \citep{chen2015spectral,chen2019bspectral,chen2022partial,fan2024uncertainty}, we assume that the comparisons among the $n$ items are generated according to an Erd\H{o}s--R\'enyi random comparison graph \citep{erdHos1960evolution}. In particular, each unordered pair $(i,j)$ is independently included in the comparison graph with probability $p$, and for each observed pair, $L$ independent comparisons are collected. Given the resulting binary comparison outcomes, the goal is typically to recover the underlying preference scores, or to identify the $K$ items with the largest preference scores. Beyond pairwise comparisons, the Plackett-Luce (PL) model \citep{plackett1975analysis} extends this framework to $M$-way comparisons by modeling a full ranking over multiple items.

It is worth noting that, under both the Bradley--Terry--Luce (BTL) and Plackett--Luce (PL) models, ranking recovery relies heavily on the quality, authenticity, and reliability of human preference labels. However, collecting a sufficiently large volume of human preference data can be expensive and time-consuming \citep{ji2024reinforcement}. To alleviate this burden, recent work has explored various data synthesis approaches that leverage generative models to supplement human annotations and facilitate downstream inference, particularly in the reinforcement learning from human feedback (RLHF) and reinforcement learning from AI feedback (RLAIF) literature \citep{ouyang2022training, bai2022constitutional, lee2023rlaif}.

While synthetic data offer a promising way to alleviate the scarcity and high cost of human annotations, the phenomenon of \textit{model collapse} \citep{shumailov2024ai} has raised concerns that recursively training  models on model-generated data can gradually degrade both model performance and distributional fidelity, eventually rendering the trained models ineffective. This issue is becoming increasingly relevant as model-generated content becomes more prevalent on the Internet, while the amount of training data used by large language models continues to grow rapidly. For example, Llama~1 was trained on 1.4 trillion tokens \citep{touvron2023llama}, compared with 2 trillion for Llama~2 \citep{touvron2023llama2}, 15 trillion for Llama~3 \citep{grattafiori2024llama3}, and 30 trillion for Llama~4 \citep{adcock2026llama}.

Since then, a rapidly growing literature has studied the empirical mechanisms, theoretical properties, and potential mitigation strategies for model collapse \citep{alemohammad2024self, bertrand2024stability, dohmatob2024model, gerstgrasser2024model, dey2024universality,dohmatob2025strong, barzilai2026models}. Among these works, \cite{gerstgrasser2024model} considered a more realistic \textit{augmentation workflow} in linear models, where the original real data remain available at each generation of model fitting, while synthetic data are gradually added to the training set. Similar results have been further explored in exponential families \citep{dey2024universality,barzilai2026models}. However, these works either rely on estimators with closed-form expressions or require restrictive assumptions on the loss function, making their analysis difficult to extend to ranking problems and other application settings, particularly high-dimensional ones. In addition, they mainly focus on the $\ell_2$ statistical error and do not provide the entrywise error control needed for accurate ranking recovery. Moreover, most existing results are asymptotic and developed under low-dimensional regimes, while the high-dimensional regime has received little attention.

Motivated by the phenomenon of model collapse and the limitations discussed above, this paper studies entity ranking under iterative synthetic data augmentation and asks whether such a framework can avoid progressive degradation in ranking accuracy.  Specifically, we study the iterative MLE for each round $t$,  under the BTL model and derive its \emph{non-asymptotic} statistical rates in both $\ell_{2}$ and $\ell_{\infty}$ norms for estimating the latent preference scores $\theta^{\ast}.$ We show that both rates are optimal under a sparse comparison graph, provided that $\log t \left[ \frac{\sqrt{\log n}}{\sqrt{nL}p} + \frac{1}{\sqrt{np}} \left( 1 + \frac{\log n}{\sqrt{np}} \right) \right] = \mathcal{O}(1)$ and $\frac{1}{t+1}  \left( \frac{\log n}{np} + \frac{1}{\sqrt{np}} \right) \leq c$ for some fixed constant $c > 0$. These non-asymptotic guarantees remain to hold even as the fraction of real data vanishes under the accumulating setting. 

Our analysis relies  on a induction argument together with several leave-one-out constructions \cite{chen2019bspectral,chen2022partial}, which allow us to handle the temporal dependence across iterations and control how the estimation errors propagate over time. As a consequence, when the number of retraining iterations grows at most polynomially with $n$, the iterative MLE remains consistent and enjoys optimal $\ell_2,$ $\ell_{\infty}$ statistical rates, showing that model collapse can be avoided in this setting. We further establish the asymptotic normality of the iterative MLE, which provides a theoretical basis for uncertainty quantification under iterative retraining. Finally, we conduct extensive numerical experiments to support our theoretical results and demonstrate the proposed framework on the Arena Human Preference 140K dataset.

To summarize, this work makes several contributions. We study the MLE under the BTL model with iterative synthetic data augmentation and develop a new induction framework to obtain non-asymptotic error bounds for the iterative MLE in both $\ell_2$ and $\ell_\infty$ norms. We further establish optimal statistical rates under a sparse comparison graph, showing that model collapse is not inevitable in ranking problems when the number of retraining iterations grows at most polynomially in $n$. In addition, we derive the asymptotic normality of the iterative MLE and validate our theoretical findings through large-scale numerical experiments on both synthetic and real data.

\subsection{Related Works}

In this section, we review the literature on ranking, as well as related work on iterative model retraining with synthetic data augmentation and its connection to model collapse.
\paragraph{Ranking.}
Recently, a growing body of literature has studied ranking problems under the Bradley--Terry--Luce (BTL) model \citep{bradley1952rank, luce2012individual}. Among these works, \cite{negahban2012iterative} introduced \emph{Rank Centrality}, a spectral ranking method, and established optimal $\ell_2$ statistical rates for estimating the latent preference scores. Later, \cite{chen2015spectral} proposed a two-stage procedure combining a spectral method with maximum likelihood estimation (MLE) under an Erd\H{o}s--R\'enyi comparison graph, and established  top-$K$ items recovery with optimal sample complexity guarantees in the sparsest possible regime of the comparison graph. Under a similar setting, \cite{chen2019bspectral} established optimal statistical rates in both the $\ell_2$ and $\ell_\infty$ norms for the latent preference scores using the maximum likelihood estimator (MLE) or spectral methods only, showing that both achieve exact top-$K$ recovery optimality when the condition number is of constant order. In addition, \cite{chen2022partial} further studied partial recovery and proved that both estimators from MLE and spectral methods attain optimal statistical rates in the $\ell_2$ and $\ell_\infty$ norms, whereas spectral methods can be suboptimal when the condition number is allowed to vary.

Beyond estimation and ranking recovery, several recent works have developed asymptotic distribution theory and uncertainty quantification for ranking scores. Early results by \cite{simons1999asymptotics} and \cite{han2020asymptotic} established the asymptotic normality of the MLE for fully connected graphs ($p=1$) or relatively dense comparison graphs satisfying $p \gtrsim (\log n)^{1/5}/n^{1/10}$, where $p$ denotes the edge probability of the Erd\H{o}s--R'enyi comparison graph and $n$ is the number of items being compared.  More recently, \cite{liu2023lagrangian} developed a Lagrangian debiasing approach to establish asymptotic distributional results in the sparsest graph regime, where $np \gtrsim \log n$. Their results, however, require a relatively large number of repeated comparisons, with $L \gtrsim n^{2}\log^{2} n$. This requirement was substantially relaxed by \cite{gao2023uncertainty}, who introduced a leave-two-out technique and established asymptotic normality under sparse comparison graphs with $p\asymp 1/n$ up to logarithmic factors and only $L=O(1)$ repeated comparisons. They further constructed simultaneous confidence intervals using a Bonferroni-type procedure. Subsequently, \cite{fan2025ranking} developed sharper simultaneous inference procedures for ranks under the Plackett--Luce (PL) model \citep{plackett1975analysis}. Their approach reduces rank uncertainty to inference on pairwise score differences and approximates the distribution of the resulting maximum statistic using a valid Gaussian multiplier bootstrap.

Recent studies have also extended the classical framework for ranking estimation and inference under the BTL model in several directions, including nonparametric comparison models \citep{wang2024confidence, fan2025uncertainty, li2025efficient}, multiway comparisons \citep{han2025unified, fan2025ranking, fan2026spectral}, and models incorporating individual-level covariate information \citep{li2022bayesian, fan2024covariate, fan2024uncertainty, li2025efficient, dong2026statistical}.

Ranking methods have also become increasingly important in a range of AI applications. For example, in reinforcement learning from human feedback (RLHF) \citep{ouyang2022training}, human annotators compare pairs of responses to the same prompt, and the resulting preference data are used to fine-tune language models so that their outputs better align with human preferences. Ranking methods also play a central role in language model evaluation \citep{chiang2024chatbot, zhang2025fisher, cen2026perturbation, liu2025uncertainty}, where they are used to compare and rank the performance of different language models across a wide range of tasks based on human preference data.

For a comprehensive review of recent developments in the BTL model and its extensions, we refer readers to \cite{fang2026recent}.

\paragraph{Model collapse.}

With the rapid development of artificial intelligence, publicly available text, image, and video corpora are increasingly mixed with AI-generated content. Repeatedly training future models on such data may lead to \textit{model collapse}, a degenerative phenomenon in which the learned distribution progressively drifts away from the original data-generating distribution through repeated retraining on synthetic data, leading to degraded performance, reduced diversity, and compromised reliability \citep{briesch2023large, shumailov2024ai, xie2026reviewing}. A growing body of empirical and theoretical work has investigated model collapse under various forms of synthetic data contamination.

A first line of work considers a fully synthetic setting, often referred to as the \textit{discard workflow}, in which the model is retrained exclusively on synthetic data generated by the model from the previous iteration, with no real data retained. In this setting, \cite{shumailov2024ai} showed that even for Gaussian distribution estimation, the estimated covariance can progressively collapse toward zero while the estimated mean drifts away from the true mean. Similar degradation was later established for high-dimensional linear regression by \cite{dohmatob2024model}.

A complementary line of work asks whether model collapse can be mitigated when real data are retained during retraining. \cite{gerstgrasser2024model} and \cite{kazdan2024collapse} considered an accumulating-data setting, often referred to as the \textit{augmentation workflow}, in which a fixed real dataset is retained throughout training and synthetic data generated in previous iterations are progressively added. Under this regime, they showed that model collapse can be avoided for regression and Gaussian distribution estimation under suitable conditions, and provided supporting empirical evidence across a range of LLMs. Extending these results beyond linear regression, \cite{dey2024universality} showed that the augmentation workflow remains effective for exponential-family models, establishing a universal $\pi^2/6$-type risk bound. More recently, \cite{barzilai2026models} studied iterative maximum likelihood estimation in the low-dimensional setting and established non-asymptotic error bounds under mild regularity conditions. Their results show that model collapse can be avoided even when the fraction of real data vanishes over time.

More recent work has further refined this picture by studying settings in which real and synthetic data are mixed at each retraining round. \cite{bertrand2024stability} considered an iterative retraining procedure in which each generation combines a fixed real dataset with synthetic data generated by the most recent model, and established stability when the fraction of synthetic data is sufficiently small. In contrast, \cite{alemohammad2024self} studied a fresh-data augmentation setting, in which new real data are collected at each generation, and empirically identified a trade-off between exploiting synthetic data and preserving generative model performance. Under this setting, \cite{he2025golden} derived an optimal strategy for combining real and synthetic data in linear regression and Gaussian estimation, while \cite{garg2026preventing} characterized high-dimensional asymptotic prediction risks and studied optimal mixing weights in overparameterized linear regression. More broadly, \cite{wu2026does} investigated model collapse in systems of interacting models, establishing finite-sample guarantees for linear regression and asymptotic results for general M-estimators.

Existing studies have also examined model collapse across a broad range of settings, including regression models \citep{dohmatob2024model, dohmatob2025strong, dey2024universality, vu2025happens, garg2026preventing, he2025golden}, maximum likelihood estimation \citep{bertrand2024stability, suresh2024rate, kanabar2025model, barzilai2026models}, variational autoencoders \citep{shumailov2023curse, shumailov2024ai, gerstgrasser2024model}, diffusion models \citep{alemohammad2024self, yoon2024model, Shi2025a, khelifa2026error}, and flow-based models \citep{zhu2024analyzing, kim2025simple, kwon2026balanced}. For a recent review of model collapse across different application domains, together with strategies for mitigating it, we refer readers to \cite{xie2026reviewing}.

While synthetic data can be harmful when used indiscriminately, recent advances in generative models have also opened up new ways to use synthetic data to improve statistical inference. Examples include the bootstrap \citep{tran2026generative}, imbalanced classification \citep{lyu2025bias, ma2026synthetic}, and prediction-powered inference \citep{angelopoulos2023prediction, van2026calibeating}. We refer readers to \cite{abdel2026harnessing} for a recent overview of this literature, which organizes existing approaches into four paradigms and discusses how synthetic data can be used to support trustworthy scientific discovery.

\subsection{Notation}
Unless otherwise stated, we let $\mathbb{N} := \{0,1, 2,\ldots\}$ denote the set of non-negative integers and write $[n] := \{1,2,\dots,n\}$ for any positive integer $n$. For a vector $\mathbf{x} \in \mathbb{R}^{d}$ and $q \in (0,\infty]$, we denote its $\ell_q$-norm by $\|\mathbf{x}\|_{q} := \|\mathbf{x}\|_{\ell_q}$. The inner product $\langle \mathbf{x}, \mathbf{y} \rangle$ between $\mathbf{x}, \mathbf{y} \in \mathbb{R}^{d}$ is the usual Euclidean inner product $\mathbf{x}^{\top}\mathbf{y}$. We use $\boldsymbol{1}_{d}$ to denote the $d$-dimensional vector of all ones. For a matrix $\mathbf{X} \in \mathbb{R}^{d_{1} \times d_{2}}$, we write $\mathbf{X} \succeq 0$ (respectively $\mathbf{X} \preceq 0$) if $\mathbf{X}$ is positive semidefinite (respectively negative semidefinite), and we denote by $\mathbf{X}^{\dagger}$ its Moore–Penrose pseudoinverse. Given two nonnegative sequences $\{a_{n}\}$ and $\{b_{n}\}$, the notation $a_{n} \lesssim b_{n}$ or $a_{n} = \mathcal{O}(b_{n})$ means that there exists a constant $C > 0$ such that $a_{n} \le C b_{n}$. Similarly, $a_{n} \gtrsim b_{n}$ or $a_{n} = \Omega(b_{n})$ means that there exists $C > 0$ such that $a_{n} \ge C b_{n}$. We write $a_{n} \asymp b_{n}$ or $a_{n} = \Theta(b_{n})$ if both $a_{n} \lesssim b_{n}$ and $a_{n} \gtrsim b_{n}$ hold. Finally, $a_{n} \ll b_{n}$ or $a_{n} = o(b_{n})$ means $\lim_{n \to \infty} a_{n}/b_{n} = 0$. Throughout, $\mathbb{P}$ and $\mathbb{E}$ denote probability measure and expectation whose distribution is determined from the context. We use $\xrightarrow{p}$ to denote convergence in probability and $\rightsquigarrow$ to denote convergence in distribution.

\subsection{Roadmap}
The remainder of the paper is organized as follows. We introduce the BTL model under the augment workflow, formalize the estimation problem, and derive the corresponding iterative maximum likelihood estimator in \S\ref{sec:Model_Formulation}. In \S\ref{sec:Main_Results}, we establish our main theoretical results, including non-asymptotic estimation error bounds and asymptotic normality for the iterative MLE. In \S\ref{sec:Numerical_Results}, we conduct large-scale simulation evidence on synthetic and real data that supports the theoretical results. In \S\ref{sec:Proof_Sketch_of_Theorem_1}, we outline the main ideas of the proofs and provide a flowchart to illustrate the logical structure. Supplementary technical results and proofs are deferred to Appendix~\ref{sec:Preliminaries}--\ref{appendix:last}.


\section{Model Formulation}
\label{sec:Model_Formulation}

In this section, we formulate the problem and introduce the iterative synthetic augmentation framework. The procedure begins by fitting the model to the original real data. At each subsequent iteration, synthetic comparisons are generated from the current fitted model and incorporated into the training data, after which the model is retrained on the augmented dataset. We next formalize this iterative procedure and introduce the corresponding model and data-generating process.

 We assume there are $n$ items and each item is associated with a latent score $\theta_i^{\star}$, and we let $\theta^{(0)} \equiv \theta^{\ast} := \left( \theta_{1}^{\ast}, \cdots, \theta_{n}^{\ast} \right)^{\top} \in \Theta$ denote as the latent score vector. For identifiability, we impose the standard identifiablity condition and define the parameter space as $\Theta := \left\{ \theta \in \mathbb{R}^{n}: \boldsymbol{1}^{\top} \theta = 0 \right\}$, as widely adopted in the existing ranking literature \citep{chen2019bspectral,chen2022partial, fan2024covariate}. We further assume that the latent scores lie in a bounded range $[\theta_{\min}, \theta_{\max}]$ for some fixed lower and upper bounds $\theta_{\min}$ and $\theta_{\max}$.

In practice, collecting high-quality human-labeled  data is often expensive. A natural alternative is to iteratively augment the training set with synthetic data generated by the current model and update the model parameters accordingly. Our goal is to formalize the statistical behavior of this iterative workflow and provide rigorous theoretical guarantees for recovering the latent score vector $\theta^\star$ using a mix of human-labeled and synthetic data.


Next, we discuss our data generation process. At any iteration $T \in \mathbb{N}$, let $\mathcal{G}^{(T)} := \left( [n], \mathcal{E}^{(T)} \right)$ denote the comparison graph, with adjacency matrix $G^{(T)}$, where $(i,j) \in \mathcal{E}^{(T)}$, or equivalently $G_{ij}^{(T)} = 1$, if and only if items $i$ and $j$ are compared. Throughout the paper, we assume that the comparison graphs in all rounds are independently generated according to the Erd\H{o}s--R\'enyi model,
$\mathcal{G}^{(T)} \overset{\mathrm{i.i.d.}}{\sim} \mathcal{G}(n,p).$
Equivalently, the edge indicators satisfy
$G_{ij}^{(T)}\overset{\mathrm{i.i.d.}}{\sim} \operatorname{Bernoulli}(p), 1 \le i < j \le n,$
with $G_{ji}^{(T)} = G_{ij}^{(T)}$ and $G_{ii}^{(T)} = 0$.

In addition, at each iteration $T=0,1,\ldots$, for every pair $(i,j)\in\mathcal{E}^{(T)}$, we observe $L$ independent and identically distributed pairwise comparison outcomes

$$
y_{i,j}^{(T,l)}
\overset{\mathrm{i.i.d.}}{\sim}
\operatorname{Bernoulli}\left(
\sigma\left(\theta_j^{(T)}-\theta_i^{(T)}\right)
\right),
\qquad l=1,\ldots,L,
$$
where $\sigma(\cdot)$ denotes the sigmoid function, defined as $\sigma(a)=1/(1+e^{-a})$, and $y_{i,j}^{(T,l)}=1-y_{j,i}^{(T,l)}$. Here, $\theta^{(0)} = \theta^\ast$ denotes the true latent score vector. For each iteration $T \ge 1$, we let $\theta^{(T)}$ be the MLE estimator computed from all data accumulated up to round $T-1$ (formally defined in \eqref{eq:argminLossT}), which is then used to generate synthetic pairwise comparisons for iteration $T$. Consequently, the initial preference observations $y_{i,j}^{(0,l)}$ for $l \in [L]$ represent real human feedback drawn from the BTL model with ground-truth score $\theta^\star$, whereas all subsequent preference feedbacks ($\{y_{i,j}^{(T,l)}\}$ with $T \ge 1$) are synthetically generated from the fitted model of the preceding round under the BTL model.

Next, we rigorously define the iterative MLE obtained at each round of iteration. For ease of presentation, we represent the collection of sufficient statistics as
\begin{align*}
    Y^{(\leq T)} := \bigcup_{t=0}^{T} Y^{(t)}, \quad Y^{(T)} := \left\{ y^{(T)}_{j,i} : (i,j) \in \mathcal{E}^{(T)},\, i>j \right\}, \quad y^{(T)}_{j,i} = \frac{1}{L}\sum_{l=1}^{L} y_{j,i}^{(T,l)}.
\end{align*}

Under the identifiability constraint imposed by $\Theta$ \citep{chen2019bspectral,chen2022partial,fan2024covariate}, we define the iterative MLE at each round $T\geq 0$ as:
\begin{align} \label{eq:argminLossT}
    \theta^{(T+1)} = \arg\min_{\theta \in \Theta} \ \mathcal{L}^{(\leq T)}(\theta),
\end{align} 
where 
\begin{align*}
     \mathcal{L}^{(\leq T)}(\theta) 
     := 
     \sum_{t=0}^{T} \ell_{t} (\theta)
     =
     \sum_{t=0}^{T} \sum_{(i,j) \in \mathcal{E}^{(t)},\, i>j} \left\{ - y_{j,i}^{(t)} (\theta_i - \theta_j) + \log\left(1 + e^{\theta_i - \theta_j} \right) \right\}.
\end{align*}
This means that, at each round $T$, the updated parameter $\theta^{(T+1)}$ is estimated using all data accumulated up to that round, including the original human comparison data and the synthetic comparisons generated in the subsequent rounds. As $T$ increases, the amount of synthetic data continues to grow, while the amount of original human data remains fixed. Consequently, the proportion of real data in the augmented dataset gradually decreases. We refer the reader to Algorithm~\ref{algorithm:unregularized_IMLE} for a summary of the iterative MLE under the synthetic augmentation workflow.


The cumulative augmentation workflow described above has been studied in linear models, exponential families, and general maximum likelihood estimation \citep{alemohammad2024self, bertrand2024stability, dohmatob2024model, gerstgrasser2024model, dey2024universality,dohmatob2025strong, barzilai2026models}. However, it remains unclear how such synthetic data augmentation behaves in high-dimensional ranking problems, particularly from a non-asymptotic perspective. We study this workflow under the Bradley--Terry--Luce (BTL) model over sparse comparison graphs, deriving optimal finite-sample $\ell_2$ and $\ell_\infty$ statistical rates and establishing asymptotic normality for the iterative MLE. Our results characterize high-dimensional, non-asymptotic regimes in which model collapse can be avoided even as the proportion of real data vanishes.

\begin{algorithm}[h] 
\caption{Iterative MLE for BTL Model Under the Augment Workflow}
\label{algorithm:unregularized_IMLE}
\begin{algorithmic}[1] 
\Require Parameter space $\Theta \subseteq \mathbb{R}^n$; number of samples  per iteration; 
\State Set $\theta^{(0)} := \theta^\ast$. 
\For{$T = 0, 1, 2, \dots$}
    \State Construct pairwise comparison set $\mathcal{E}_{T}$ from $\mathcal{G}^{(T)} \sim \mathcal{G}(n,p)$.
    \State Collect comparison samples $Y^{(T)} := \{y^{(T)}_{j,i} \mid (i,j) \in \mathcal{E}^{(T)},\, i>j \}$: 
    \begin{align*}
        y_{j,i}^{(T)} = \frac 1L\sum_{l=1}^{L}y_{j,i}^{(T,l)}, \text{where } y_{j,i}^{(T,l)} \overset{\mathrm{i.i.d.}}{\sim} \operatorname{Bernoulli}\!\left( \sigma(\theta_i^{(T)}-\theta_j^{(T)})\right) \text{ with } \ell=1,\ldots,L.
    \end{align*}
    \State Define cumulative dataset: $Y^{(\leq T)} := \bigcup_{t=0}^{T} Y^{(t)}$.
    \State Update model parameters using $Y^{(\leq T)}$:
    \begin{align*}
        \theta^{(T+1)}
        &:= 
        \arg\min_{\theta \in \Theta} \ \mathcal{L}^{(\leq T)}(\theta),
        \\ 
        \mathcal{L}^{(\leq T)}(\theta) 
        &:= 
        \sum_{t=0}^{T} \ell_{t} (\theta),
        \\ 
        \textrm{where~} \ell_{t}(\theta) 
        &:= 
        \sum_{(i,j) \in \mathcal{E}^{(t)},\, i>j} \left\{ - y_{j,i}^{(t)} (\theta_i - \theta_j) + \log\left(1 + e^{\theta_i - \theta_j} \right) \right\}.
    \end{align*}
\EndFor
\end{algorithmic}
\end{algorithm}
\section{Main Results}
\label{sec:Main_Results}
In this section, we formally establish that the iterative MLE obtained at each round of Algorithm~\ref{algorithm:unregularized_IMLE} achieves optimal statistical rates under mild conditions. In particular, we derive non-asymptotic $\ell_2$ and $\ell_{\infty}$ bounds on the estimation error relative to the ground-truth parameter $\theta^\star$. Crucially, our results demonstrate that model collapse is not inevitable, even as the number of iterations $T$ grows polynomially with $n$ and the fraction of real human data vanishes to zero.
\subsection
{Rate of Convergence of Iterative MLE} \label{subsec:Rate_of_Convergence}

We summarize our theoretical guarantees for the estimators $\theta^{(t)}$, $t\geq 1$, in the following Theorem~\ref{thm:iterative_loss}. 

\begin{theorem}[$\ell_2$- and $\ell_{\infty}$- statistical rates]
\label{thm:iterative_loss}
Consider the sequence of MLEs $\{\theta^{(t)}\}_{t=1}^{\infty}$ estimated from BTL model and the iterative synthetic data augment workflow described in Algorithm~\ref{algorithm:unregularized_IMLE}. Suppose that $L \leq d_{0}\cdot n^{d_{1}}$ for any fixed constants $d_{0}, d_{1} > 0$. Consider iterations $t$ satisfying
 $t \lesssim \frac{n^{3}}{p\log n}$, $\log t \left[ \frac{\sqrt{\log n}}{\sqrt{nL}p} + \frac{1}{\sqrt{np}} \left( 1 + \frac{\log n}{\sqrt{np}} \right) \right] = \mathcal{O}(1)$ and $\frac{1}{t+1}  \left( \frac{\log n}{np} + \frac{1}{\sqrt{np}} \right) \leq c$ for some fixed constant $c > 0$. Then, with probability at least $1-\mathcal{O}(n^{-5})$, the following bounds hold uniformly over all iterations $t$ satisfying the above conditions:
\begin{align*}
    \norm{\theta^{(t)} - \theta^{\ast}}_{2} 
    &\lesssim 
    \sqrt{\frac{1}{pL}}, 
    \\
    \norm{\theta^{(t)} - \theta^{\ast}}_{\infty} 
    &\lesssim 
    \sqrt{\frac{\log n}{npL}}.
\end{align*}
\end{theorem}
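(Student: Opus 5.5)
We would prove Theorem~\ref{thm:iterative_loss} by induction on the round $t$, tracking the deviations $\theta^{(t)}-\theta^\ast$ and combining the standard one-shot MLE analysis of \cite{chen2019bspectral,chen2022partial} with a recursion for how the errors propagate. The starting point is the first-order condition for $\theta^{(T+1)}$. Since $\nabla\mathcal{L}^{(\le T)}(\theta^{(T+1)})=0$ on $\Theta$, we have
\[
\sum_{t=0}^{T}\nabla\ell_t(\theta^{(T+1)})=0 .
\]
We write $\nabla\ell_t(\theta)=\nabla\ell_t(\theta^{(t)})+H_t(\theta-\theta^{(t)})$ plus a quadratic remainder. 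Here $H_t$ is the weighted Laplacian of $\mathcal{G}^{(t)}$ with weights $\sigma'(\theta_i-\theta_j)$. The first term $\nabla\ell_t(\theta^{(t)})$ is a centered noise vector $\xi_t$, because $Y^{(t)}$ is generated from $\theta^{(t)}$. Conditional on the past, $\xi_t$ is a martingale difference with covariance of order $L^{-1}$ times a Laplacian. Linearizing, and using $H_t\approx pL_{\mathcal{K}}$ uniformly (where $L_{\mathcal{K}}$ is the complete-graph Laplacian, spectrum $\asymp n$ on $\Theta$, by matrix Bernstein over the independent graphs), we get an approximate recursion
\[
\Bigl(\textstyle\sum_{t\le T}H_t\Bigr)(\theta^{(T+1)}-\theta^\ast)\approx -\sum_{t\le T}\xi_t+\sum_{t\le T}H_t(\theta^{(t)}-\theta^\ast).
\]
The key observation is that $\theta^{(T+1)}-\theta^\ast$ is then roughly the average $\frac{1}{T+1}\sum_t(\theta^{(t)}-\theta^\ast)$ plus a noise term $(T+1)^{-1}(pL_{\mathcal{K}})^{-1}\sum_{t\le T}\xi_t$. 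Unrolling this recursion gives
\[
\theta^{(T+1)}-\theta^\ast\approx\sum_{t\le T}w_{t,T}\,(pL_{\mathcal{K}})^{-1}\xi_t ,\qquad w_{t,T}=\sum_{s=t+1}^{T+1}\frac{1}{s(s-1)}\ \text{(roughly)} .
\]
Since $\sum_t w_{t,T}^2\le\sum_s s^{-2}\le\pi^2/6$, the accumulated noise variance stays bounded. This is the mechanism, analogous to \cite{dey2024universality}, that prevents collapse, and it yields the $\ell_2$ rate $\sqrt{1/(pL)}$ with only constant inflation.

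To make this rigorous we carry the induction hypothesis that, for all $s\le T$, $\|\theta^{(s)}-\theta^\ast\|_2\le C_2\sqrt{1/(pL)}$ and $\|\theta^{(s)}-\theta^\ast\|_\infty\le C_\infty\sqrt{\log n/(npL)}$. Boundedness of $\theta^{(s)}$ then keeps $\sigma'$ bounded below, so each $H_t$ is well conditioned. The $\ell_2$ step follows from strong convexity of $\mathcal{L}^{(\le T)}$ on the region $\{\|\theta-\theta^\ast\|_\infty\lesssim1\}$, where the curvature is $\gtrsim(T+1)np$, together with a martingale (Freedman-type) bound on $\|\sum_t w_{t,T}\xi_t\|_2$. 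The Taylor remainders and the fluctuations $H_t-\mathbb{E}H_t$ contribute error factors of order $\frac{1}{\sqrt{np}}(1+\frac{\log n}{\sqrt{np}})+\frac{\sqrt{\log n}}{\sqrt{nL}p}$ per step. These accumulate over the harmonic-type sum to a factor of $\log t$ times that quantity, which is exactly why we need the assumption that $\log t[\cdots]=\mathcal{O}(1)$. Likewise, the first-round terms, which carry weight $1/(t+1)$, are where the condition $\frac{1}{t+1}(\frac{\log n}{np}+\frac{1}{\sqrt{np}})\le c$ enters, to keep the contraction constant below one. Finally, a union bound over $t\lesssim n^3/(p\log n)$ rounds costs only a $\log n$ factor in the tail probabilities, giving the overall $1-\mathcal{O}(n^{-5})$.

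The $\ell_\infty$ bound is the main obstacle. Here we follow the leave-one-out scheme of \cite{chen2019bspectral}. For each item $m$ we build an auxiliary sequence $\theta^{(t),(m)}$ whose data, in every round, replace the comparisons involving $m$ by their expectations under $\theta^{(t),(m)}$. We then run a gradient-descent surrogate argument on each round's loss and prove three things inductively:
\begin{enumerate}
\item[(i)] $\|\theta^{(t)}-\theta^{(t),(m)}\|_2\lesssim\sqrt{\log n/(npL)}$;
\item[(ii)] $|\theta^{(t),(m)}_m-\theta^\ast_m|\lesssim\sqrt{\log n/(npL)}$, using that the $m$-th coordinate equation depends on row $m$ only through data independent of $\theta^{(t),(m)}$;
\item[(iii)] the triangle inequality, which combines (i) and (ii) into the claimed $\ell_\infty$ bound.
\end{enumerate}
The difficulty is temporal dependence. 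The synthetic data of round $t$ depend on $\theta^{(t)}$, which itself depends on row $m$ of earlier rounds. To break this we couple the synthetic draws: we generate $y^{(t,l)}_{ij}=\mathbf{1}\{U^{(t,l)}_{ij}\le\sigma(\cdot)\}$ with shared uniforms $U^{(t,l)}_{ij}$, so that the perturbation introduced by item $m$ propagates only through the Lipschitz map $\theta^{(t)}\mapsto$ data. We then show that this propagation obeys the same averaging recursion, with contraction, as the $\ell_2$ error. We expect controlling this coupled propagation uniformly in $m$ and $t$, with only $\log t$ loss, to be the most delicate part of the argument.
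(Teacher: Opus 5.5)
Your $\ell_2$ mechanism is essentially the paper's. Since $\nabla\mathcal{L}^{(\le t-1)}(\theta^{(t)})=0$, each increment is $\theta^{(t+1)}-\theta^{(t)}\approx-[\nabla^{2}\mathcal{L}^{(\le t)}(\theta^{(t)})]^{\dagger}\nabla\ell_t(\theta^{(t)})$: one round of fresh noise damped by $(t+1)np$. In your linearized model the weights are therefore $w_{t,T}=1/(t+1)$; your $\sum_{s>t}\frac{1}{s(s-1)}$ is a slip. The paper simply telescopes these increments rather than passing through $H_t\approx pL_{\mathcal{K}}$.

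The genuine gap is the $\ell_\infty$ step. Your full-history leave-one-out rests on the claim that, with shared uniforms, item $m$'s influence propagates ``only through the Lipschitz map $\theta^{(t)}\mapsto$ data.'' This is false. The map $\theta\mapsto\mathbf{1}\{U\le\sigma(\theta_j-\theta_i)\}$ is a step function, and for \emph{any} coupling of $X\sim\mathrm{Bernoulli}(a)$ and $Y\sim\mathrm{Bernoulli}(b)$ one has $\mathbb{E}(X-Y)^2=\mathbb{P}(X\neq Y)\ge|a-b|$. So every synthetic round injects centered discrepancy noise whose variance is linear, not quadratic, in $|\delta_i-\delta_j|$, where $\delta=\theta^{(t)}-\theta^{(t),(m)}$.

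Over the $\asymp n^{2}p$ edges, the squared norm of this noise can be as large as $n^{3/2}p\|\delta\|_2/L$. Dividing by $(t+1)np$ and summing the martingale increments gives a recursion of the form $\|\delta\|_2^2\lesssim\frac{\log n}{npL}+\frac{\|\delta\|_2}{\sqrt{n}\,pL}$, whose fixed point is of order $1/(\sqrt{n}\,pL)$. That exceeds $\sqrt{\log n/(npL)}$ whenever $pL\log n\ll 1$, which your hypotheses allow (e.g.\ $L=1$, $p=n^{-1/2}\log^{2}n$). This is not slack in the analysis: once $\delta$ has spread over the graph, the flipped comparisons are genuine noise of that size. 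Hence (i) fails as stated, and (iii) yields only a suboptimal $\ell_\infty$ rate.

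The missing idea is that no cross-round decoupling is needed at all. Because $\nabla\mathcal{L}^{(\le t-1)}(\theta^{(t)})=0$, the whole past enters round $t$ only through $\theta^{(t)}$ and the past Hessian. The paper writes $\theta^{(T+1)}-\theta^{\ast}=\sum_{t}(\overline{\theta}^{(t+1)}-\theta^{(t)})+\sum_t(\theta^{(t+1)}-\overline{\theta}^{(t+1)})$.

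The first sum is a martingale with predictable coefficients $[\nabla^{2}\mathcal{L}^{(\le t)}(\theta^{(t)})]^{\dagger}$ and variance proxy $\asymp\frac{1}{(t+1)^2npL}$. Its $\ell_\infty$ norm is therefore handled coordinatewise by Azuma--Hoeffding plus a union bound, with no leave-one-out at all.

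The second sum is bounded round by round, $\|\theta^{(t+1)}-\overline{\theta}^{(t+1)}\|_\infty\lesssim\frac{1}{t+1}\sqrt{\log n/(npL)}\,[\cdots]$. This uses a leave-one-out that replaces only the round-$t$ comparisons involving $m$ (by their expectations under $\theta^{(t)}$) and conditions on $\mathcal{F}_{t-1}$, so the required independence holds exactly. Summing the $1/(t+1)$ factors is where the condition $\log t[\cdots]=\mathcal{O}(1)$ enters.

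The same iteration-restricted construction, initialized at $\theta^{(T)}$, gives increment bounds of order $\frac{1}{T+1}$ times the one-shot rates. A regularized gradient-descent argument supplies the a priori $\ell_\infty$ localization of $\theta^{(T+1)}$ that your strong-convexity step presupposes.
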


\begin{proof}
For interested readers, we refer to Section~\ref{appendix:proof_outline} for the complete proof.
\end{proof}

Theorem~\ref{thm:iterative_loss} establishes that the iterative MLE achieves error rates of $\mathcal{O}\big(\sqrt{1/pL}\big)$ and $\mathcal{O}\big(\sqrt{\log n/npL}\big)$ under the $\ell_2$ and $\ell_{\infty}$ losses, respectively. Both rates are known to be optimal in the existing ranking literature \citep{chen2019bspectral,chen2022partial,Negahban2017rank,fan2024covariate,fan2025ranking}. Crucially, these results demonstrate that model collapse is not inevitable: even when synthetic data are repeatedly generated and incorporated into training—causing the proportion of real data to vanish—the iterative estimator still achieves optimal statistical rates.

We further highlight that much of the existing literature on model collapse focuses on asymptotic analyses or low-dimensional settings \citep{alemohammad2024self, bertrand2024stability, dohmatob2024model, gerstgrasser2024model, dey2024universality,dohmatob2025strong, barzilai2026models, garg2026preventing}, where the behavior of recursively trained models can often be characterized through asymptotic distributional approximations. In contrast, we consider a high-dimensional regime in which the number of compared items $n$ is allowed to grow, and establish non-asymptotic error bounds, including coordinatewise control of the latent score estimates through the $\ell_{\infty}$ loss. Establishing such guarantees uniformly across iterations requires a more delicate analysis due to the dependence introduced by recursively generated synthetic data. Our proof proceeds through an induction argument that controls the estimation error across successive iterations. We provide a detailed proof outline in \S\ref{appendix:proof_outline} and summarize the main proof structure in Figure~\ref{fig:proof_structure} and Section~\ref{sec:Proof_Sketch_of_Theorem_1}. To the best of our knowledge, Theorem~\ref{thm:iterative_loss} provides the first theoretical characterization of the statistical rates of ranking estimators under iterative synthetic data augmentation.

Next, we discuss the conditions required for Theorem~\ref{thm:iterative_loss}. The assumptions
$$
\log t \left[
\frac{\sqrt{\log n}}{\sqrt{nL}p}
+
\frac{1}{\sqrt{np}}
\left(
1+\frac{\log n}{\sqrt{np}}
\right)
\right]
=\mathcal{O}(1), \quad \frac{1}{t+1}
\left(
\frac{\log n}{np}
+
\frac{1}{\sqrt{np}}
\right)
\leq c
$$
for some fixed constant $c>0$ are relatively mild. In particular, these conditions are satisfied when
$
\sqrt{nL}p \gtrsim (\log n)^{3/2}
\text{ and } 
np \gtrsim (\log n)^2.
$
Although this condition is stricter than the minimal sampling requirement $np \gtrsim \log n$ needed to guarantee the connectivity of an Erd\H{o}s--R\'enyi comparison graph, this gap stems from a technical challenge in controlling the $\ell_{\infty}$ statistical error throughout the iterative augmentation procedure. We elaborate on this point in Section~\ref{subsec:asymptotic} after presenting the non-asymptotic expansion of the estimator.

Regarding the constraint $t \lesssim n^{3}/(p\log n)$, this bound ensures that the desired events continue to hold with high probability even after taking a union bound uniformly across all $t$ iterations. This condition can be relaxed to $t \lesssim n^{\alpha}/(p\log n)$ for any constant $\alpha > 0$, provided that the per-iteration high-probability guarantees for the $\ell_2$ and $\ell_{\infty}$ losses hold with probability at least $1 - \mathcal{O}(n^{-q})$ for a sufficiently large constant $q > 0$. Strengthening the probability bound in this way only changes the constants in the $\ell_2$ and $\ell_{\infty}$ error bounds, while leaving the statistical rates unchanged. For simplicity, we set $\alpha=3$ throughout the rest of the proof.

The next corollary provides the conditions on the recovery of the top-$K$ items when there exists a gap between the true scores of the $K$-th and $(K + 1)$-th items.

\begin{corollary}[Exact top-$K$ recovery]
 Under the conditions of Theorem~\ref{thm:iterative_loss}, suppose that $\theta_{(K)}^{\ast} - \theta_{(K+1)}^{\ast} \ge \Delta_{K} > 0$, where $\theta_{(i)}^{\ast}$ denotes the true score of the item ranked $i$-th for $i \in [n]$. If the sample complexity satisfies
\begin{align*}
    npL \gtrsim \frac{\log n}{\Delta_{K}^2},
\end{align*}
then the estimator $\theta^{(T+1)}$ achieves exact top-$K$ identification with probability at least $1 - \mathcal{O}(n^{-5})$. That is, the estimated top-$K$ set coincides with the true top-$K$ set:
\begin{align*}
    \Big\{ i \in [n] : \theta_{i}^{\ast} \ge \theta_{(K)}^{\ast} \Big\} = \Big\{ i \in [n] : \theta_{i}^{(T+1)} \ge \theta_{(K)}^{(T+1)} \Big\}.
\end{align*}
\end{corollary}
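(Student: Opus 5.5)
The corollary follows from the $\ell_\infty$ bound of Theorem~\ref{thm:iterative_loss} by a deterministic separation argument. Throughout, I will work on the event $\mathcal{A}$ of probability at least $1-\mathcal{O}(n^{-5})$ on which the theorem holds uniformly over the admissible iterations, so that in particular
\begin{align*}
\norm{\theta^{(T+1)}-\theta^{\ast}}_{\infty} \le C_0\sqrt{\frac{\log n}{npL}}
\end{align*}
for some absolute constant $C_0$.

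The sample complexity condition $npL \ge C_1 \log n/\Delta_K^2$ gives $\sqrt{\log n/(npL)} \le \Delta_K/\sqrt{C_1}$. I will choose the constant $C_1$ implicit in $\gtrsim$ large enough that $C_0/\sqrt{C_1} < 1/2$. Then on $\mathcal{A}$ every coordinate satisfies $|\theta^{(T+1)}_i-\theta^{\ast}_i| < \Delta_K/2$.

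Next I will show the estimated scores separate the two groups. Let $S=\{i:\theta^{\ast}_i\ge\theta^{\ast}_{(K)}\}$; because of the gap this set has exactly $K$ elements. For $i\in S$ and $j\notin S$ we have $\theta^{\ast}_i\ge\theta^{\ast}_{(K)}$ and $\theta^{\ast}_j\le\theta^{\ast}_{(K+1)}\le\theta^{\ast}_{(K)}-\Delta_K$. Therefore
\begin{align*}
\theta^{(T+1)}_i-\theta^{(T+1)}_j > \bigl(\theta^{\ast}_i-\Delta_K/2\bigr)-\bigl(\theta^{\ast}_j+\Delta_K/2\bigr)\ge 0 .
\end{align*}
So every item of $S$ has a strictly larger estimated score than every item outside $S$.

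It follows that the $K$ largest entries of $\theta^{(T+1)}$ are exactly the indices in $S$. The $K$-th order statistic $\theta^{(T+1)}_{(K)}$ is then $\min_{i\in S}\theta^{(T+1)}_i$, which is strictly larger than $\max_{j\notin S}\theta^{(T+1)}_j$. Hence $\{i:\theta^{(T+1)}_i\ge\theta^{(T+1)}_{(K)}\}=S$, which is the claimed identity.

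The only points needing care are bookkeeping. First, the iteration $T+1$ must satisfy the conditions of Theorem~\ref{thm:iterative_loss}, so that the uniform high-probability event covers it. Second, the constant in the sample complexity must be tied to the constant $C_0$ of the $\ell_\infty$ bound. The probability statement is inherited directly, since no additional randomness enters the argument.
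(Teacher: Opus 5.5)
Your proposal is correct and follows the same route the paper intends. The paper omits the proof as a direct consequence of the $\ell_\infty$ bound in Theorem~\ref{thm:iterative_loss}: choosing the constant in $npL \gtrsim \log n/\Delta_K^2$ so that $\norm{\theta^{(T+1)}-\theta^{\ast}}_{\infty} < \Delta_K/2$ strictly separates the true top-$K$ set from its complement, and the $1-\mathcal{O}(n^{-5})$ probability is inherited from the theorem's event.
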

\begin{proof}
The proof of this corollary follows directly from Theorem~\ref{thm:iterative_loss} and is therefore omitted.
\end{proof}
\subsection
{Asymptotic Normality of Iterative MLE} \label{subsec:asymptotic}
In this section, we first derive a non-asymptotic expansion of $\theta^{(T+1)}$ for any $T \ge 0$ by decomposing $\theta^{(T+1)} - \theta^{\ast}$ into a leading term that captures the variance of the estimation error and a higher-order residual term. Leveraging this expansion, we then apply the martingale central limit theorem \citep{brown1971martingale,mourrat2013rate,bolthausen1982exact,haeusler1988rate} to establish the asymptotic normality of linear projections of $\theta^{(T+1)}$.

Directly studying the statistical properties of $\theta^{(T+1)}$ is challenging. To address this, following the approach in \citet{fan2024uncertainty}, we approximate $\left\{ \theta^{(t+1)} \right\}_{t = 0}^{T}$ via its quadratic surrogate:
\begin{align} \label{eq:argminLosst_quadratic}
    \overline{\theta}^{(t+1)} := \underset{\theta \in \Theta}{\arg\min} \;\overline{\mathcal{L}}^{(\leq t)}(\theta),
\end{align}
where $\overline{\mathcal{L}}^{(\leq t)}(\theta)$ denotes the quadratic expansion of the loss function $\mathcal{L}^{(\leq t)}(\theta)$ at $\theta^{(t)}$ with
\begin{align*}
    \overline{\mathcal{L}}^{(\leq t)}(\theta) = \mathcal{L}^{(\leq t)}(\theta^{(t)}) + \nabla \mathcal{L}^{(\leq t)}(\theta^{(t)})^{\top} \left( \theta - \theta^{(t)} \right) + \frac{1}{2} \left( \theta - \theta^{(t)} \right)^{\top} \nabla^{2} \mathcal{L}^{(\leq t)}(\theta^{(t)}) \left( \theta - \theta^{(t)} \right).
\end{align*}

According to this definition, $\overline{\theta}^{(t+1)}$ can be equivalently characterized as the solution to the following system of linear equations:
\begin{align*}
    \begin{cases}
        \mathcal{P} \nabla \mathcal{L}^{(\leq t)}(\theta^{(t)}) + \mathcal{P} \nabla^{2} \mathcal{L}^{(\leq t)}(\theta^{(t)}) ( \overline{\theta}^{(t+1)} - \theta^{(t)} ) = \bm{0},
        \\
        \mathcal{P} \overline{\theta}^{(t+1)} = \overline{\theta}^{(t+1)},
    \end{cases}
\end{align*}
where $\mathcal{P} = \boldsymbol{I}_{n} - \frac{1}{n} \boldsymbol{1}_{n} \boldsymbol{1}_{n}^{\top}$ denotes the orthogonal projection matrix onto the parameter space $\Theta$. Moreover, the Hessian satisfies $\mathcal{P} \nabla^{2} \mathcal{L}^{(\leq t)}(\theta^{(t)}) = \nabla^{2} \mathcal{L}^{(\leq t)}(\theta^{(t)}) \mathcal{P} = \nabla^{2} \mathcal{L}^{(\leq t)}(\theta^{(t)})$ and the gradient also enjoys $\mathcal{P} \nabla \mathcal{L}^{(\leq t)}(\theta^{(t)})=\nabla \mathcal{L}^{(\leq t)}(\theta^{(t)})$ according to the definitions. Therefore, the above system can be equivalently written as
\begin{align*}
    \overline{\theta}^{(t+1)} - \theta^{(t)} 
    =
    -\left[ \nabla^{2} \mathcal{L}^{(\leq t)}(\theta^{(t)}) \right]^{\dagger} \nabla \ell_{t}(\theta^{(t)}).
\end{align*} 
Here, $ \nabla^2\mathcal{L}(\cdot) ^{\dagger}$ denotes the Moore--Penrose pseudoinverse of the matrix $\nabla^2\mathcal{L}(\cdot)$. Moreover, by the definition of $\theta^{(t)}$, we have $\nabla\mathcal{L}^{(\le t-1)}(\theta^{(t)}) = \mathbf{0}$, which implies
    $\nabla\mathcal{L}^{(\le t)}(\theta^{(t)}) = \nabla\ell_t(\theta^{(t)}).$
This characterization further confirms that $\overline{\theta}^{(t+1)} \in \Theta$.

Next, we present a non-asymptotic expansion of $\mathbf{c}^{\top} (\theta^{(T+1)} - \theta^{\ast})$ for any direction vector $\mathbf{c} \in \mathbb{R}^n$ by decomposing it into a leading term and an approximation error. Under mild conditions, this approximation error is of a smaller order than the leading term. Leveraging this expansion, we invoke a martingale central limit theorem to derive the asymptotic distribution of linear combinations of the estimation error $\theta^{(T+1)} - \theta^{\ast}$.

\begin{theorem}[Asymptotic normality of MLE] \label{thm:asy_norm}
    Given any $\bm{c} \in \mathbb{R}^{n}$, under the assumptions of Theorem~\ref{thm:iterative_loss}, for all $T \lesssim \frac{n^{3}}{p\log n}$ we have the following decomposition
    \begin{align*}
        \sqrt{L} (\bm{c}^{\top} \theta^{(T+1)} - \bm{c}^{\top} \theta^{\ast}) = \underbrace{\sum_{t=0}^{T}\sqrt{L} (\bm{c}^{\top} \theta^{(t+1)} - \bm{c}^{\top} \overline{\theta}^{(t+1)})}_{\text{approximation error}} + \underbrace{\sum_{t=0}^{T}\sqrt{L} (\bm{c}^{\top} \overline{\theta}^{(t+1)} - \bm{c}^{\top} \theta^{(t)})}_{\text{leading term}},
    \end{align*}
    where
    \begin{align*}
        \left| \frac{\sqrt{L} \sum_{t=0}^{T}\left(\bm{c}^{\top} \theta^{(t+1)} - \bm{c}^{\top} \overline{\theta}^{(t+1)}\right)}{\sqrt{\sum_{t=0}^{T}\bm{c}^{\top} \left[ \nabla^{2} \mathcal{L}^{(\leq t)}(\theta^{(t)}) \right]^{\dagger} \nabla^{2} \ell_{t}(\theta^{(t)}) \left[ \nabla^{2} \mathcal{L}^{(\leq t)}(\theta^{(t)}) \right]^{\dagger} \bm{c}}} \right| 
        \lesssim
        \log T \left[ \frac{\log n}{\sqrt{nL}p} + \sqrt{\frac{\log n}{np}} \left( 1 + \frac{\log n}{\sqrt{np}} \right) \right] \frac{\norm{\bm{c}}_{1}}{\norm{\bm{c}}_{2}}
    \end{align*}
    with probability at least $1 - \mathcal{O}(n^{-5})$ and 
    \begin{align*}
        \sup_{x \in \mathbb{R}} \left| \mathbb{P} \left( \frac{\sqrt{L} \sum_{t=0}^{T} \left( \bm{c}^{\top} \overline{\theta}^{(t+1)} - \bm{c}^{\top} \theta^{(t)} \right)}{\sqrt{\sum_{t=0}^{T}\bm{c}^{\top} \left[ \nabla^{2} \mathcal{L}^{(\leq t)}(\theta^{(t)}) \right]^{\dagger}   \nabla^{2} \ell_{t}(\theta^{(t)})  \left[ \nabla^{2} \mathcal{L}^{(\leq t)}(\theta^{(t)}) \right]^{\dagger} \bm{c}}} \leq x \right) - \mathbb{P} \left( \mathcal{N}(0,1) \leq x \right) \right| = r_{n}
    \end{align*}
    with $r_{n} = o(1)$ as long as $np \gg \log n$. Combining the approximation error and asymptotic distribution together, we further obtain
    \begin{align*}
        & \sup_{x \in \mathbb{R}} \left| \mathbb{P} \left( \frac{\sqrt{L} \left( \bm{c}^{\top} \theta^{(T+1)} - \bm{c}^{\top} \theta^{\ast} \right)}{\sqrt{\sum_{t=0}^{T} \bm{c}^{\top} \left[ \nabla^{2} \mathcal{L}^{(\leq t)}(\theta^{(t)}) \right]^{\dagger} \nabla^{2} \ell_{t}(\theta^{(t)}) \left[ \nabla^{2} \mathcal{L}^{(\leq t)}(\theta^{(t)}) \right]^{\dagger} \bm{c}}} \leq x \right) - \mathbb{P} \left( \mathcal{N}(0,1) \leq x \right) \right|
        \\ \lesssim &
        r_{n} + \log T \left[ \frac{\log n}{\sqrt{nL}p} + \sqrt{\frac{\log n}{np}} \left( 1 + \frac{\log n}{\sqrt{np}} \right) \right] \frac{\norm{\bm{c}}_{1}}{\norm{\bm{c}}_{2}} + n^{-5}.
    \end{align*}
\end{theorem}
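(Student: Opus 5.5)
The decomposition itself is a telescoping identity: $\theta^{(T+1)}-\theta^{\ast}=\sum_{t=0}^{T}(\theta^{(t+1)}-\theta^{(t)})$ with $\theta^{(0)}=\theta^{\ast}$. We insert $\overline{\theta}^{(t+1)}$ in each increment. The work then splits into three parts: bounding the approximation error uniformly over $t$, proving a martingale CLT for the leading term, and combining the two by a standard anti-concentration argument.

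\textbf{Leading term.} From the characterization derived above,
\[
\overline{\theta}^{(t+1)}-\theta^{(t)}=-\big[\nabla^{2}\mathcal{L}^{(\le t)}(\theta^{(t)})\big]^{\dagger}\nabla\ell_{t}(\theta^{(t)}).
\]
Let $\mathcal{F}_{t}$ be the sigma-field generated by the graphs and data up to round $t-1$ together with $\mathcal{G}^{(t)}$. Then $\theta^{(t)}$ and $\nabla^{2}\mathcal{L}^{(\le t)}(\theta^{(t)})$ are $\mathcal{F}_t$-measurable. Because the round-$t$ comparisons are drawn from $\theta^{(t)}$ itself, $\mathbb{E}[\nabla\ell_t(\theta^{(t)})\mid\mathcal{F}_t]=0$. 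Its conditional covariance is $L^{-1}\nabla^{2}\ell_t(\theta^{(t)})$, since the BTL Fisher information equals the Hessian.

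Hence $\xi_t:=-\sqrt{L}\,\bm c^{\top}[\nabla^{2}\mathcal{L}^{(\le t)}]^{\dagger}\nabla\ell_t(\theta^{(t)})$ is a martingale difference sequence. Its conditional variances sum to exactly the normalizer $V_T$ in the statement. Each $\xi_t$ is itself a sum of independent bounded terms over the edges of $\mathcal{E}^{(t)}$, so we can refine the filtration edge by edge within each round and apply a Berry--Esseen-type martingale CLT (e.g.\ \citet{mourrat2013rate} or \citet{haeusler1988rate}). This requires two checks:
\begin{itemize}
\item[(i)] The Lyapunov ratio $\sum_{t,(i,j)}\mathbb{E}|\cdot|^{3}/V_T^{3/2}$ vanishes. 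Each summand carries a weight $([\nabla^2\mathcal L^{(\le t)}]^{\dagger}\bm c)_i-(\cdot)_j$. On the high-probability event of Theorem~\ref{thm:iterative_loss}, together with spectral concentration of the accumulated Laplacian, whose eigenvalues on $\Theta$ are of order $(t+1)np$, this ratio is $O((np)^{-1/2})$ up to log factors. It is therefore $o(1)$ when $np\gg\log n$.
\item[(ii)] The normalized quadratic variation $V_T$ concentrates around a deterministic proxy. This holds because each Laplacian concentrates around $p$ times the weighted complete-graph Laplacian, and $\theta^{(t)}$ stays within $\sqrt{\log n/(npL)}$ of $\theta^{\ast}$ in $\ell_\infty$.
\end{itemize}
Together these give the bound $r_n$.

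\textbf{Approximation error.} Taylor-expand $\nabla\mathcal{L}^{(\le t)}$ around $\theta^{(t)}$ at $\theta^{(t+1)}$, using $\nabla\mathcal{L}^{(\le t)}(\theta^{(t+1)})=0$. This gives
\[
\theta^{(t+1)}-\overline{\theta}^{(t+1)}=-\big[\nabla^{2}\mathcal{L}^{(\le t)}(\theta^{(t)})\big]^{\dagger}\big(\nabla^{2}\mathcal{L}^{(\le t)}(\tilde\theta)-\nabla^{2}\mathcal{L}^{(\le t)}(\theta^{(t)})\big)(\theta^{(t+1)}-\theta^{(t)}).
\]
The Hessian difference is a Laplacian with edge weights of size $|\sigma'(\cdot)-\sigma'(\cdot)|\lesssim\|\theta^{(t+1)}-\theta^{(t)}\|_\infty$. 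The key quantitative input is a \emph{per-step} bound on the increments, sharper than Theorem~\ref{thm:iterative_loss}:
\[
\|\theta^{(t+1)}-\theta^{(t)}\|_\infty\lesssim\frac{1}{t+1}\sqrt{\frac{\log n}{npL}},
\]
which comes from the pseudoinverse scaling $1/((t+1)np)$ applied to a gradient of size $\sqrt{np\log n/L}$. I expect this to be available from the induction proof of Theorem~\ref{thm:iterative_loss}.

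The remainder for step $t$ is quadratic in the increment, with a Laplacian factor $(t+1)^{-1}$. Projected onto $\bm c$, it contributes roughly $\|\bm c\|_1(t+1)^{-1}\cdot\frac{1}{\sqrt{L}}\big[\frac{\log n}{\sqrt{n}p}\cdot\frac{1}{\sqrt{L}}+\sqrt{\frac{\log n}{np}}(\cdots)\big]$. Summing over $t$ produces the factor $\sum_t (t+1)^{-1}\asymp\log T$.

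The $\|\bm c\|_1/\|\bm c\|_2$ ratio arises from two facts. Controlling $\bm c^{\top}[\nabla^2]^{\dagger}v$ for a generic vector $v$ costs $\|\bm c\|_1\|[\nabla^2]^{\dagger}v\|_\infty$. The normalizer satisfies $\sqrt{V_T}\gtrsim\|\bm c\|_2/\sqrt{np}$, which follows from the eigenvalue bounds on $\Theta$ and the fact that $\sum_t(t+1)^{-2}$ is bounded, so the first round dominates.

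\textbf{Main obstacle.} The hard step is the $\ell_\infty$ control of $[\nabla^2\mathcal L^{(\le t)}]^{\dagger}$ applied to the remainder vector, because of the temporal dependence: $\theta^{(t)}$ depends on all previous graphs. I would use leave-one-out sequences $\theta^{(t),(m)}$ together with the crude bound $\|A^{\dagger}\|_{\infty\to\infty}\le\sqrt n\|A^{\dagger}\|_2$ refined through a row-wise expansion. This is where the extra factors $\frac{\log n}{\sqrt{np}}$ appear.

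\textbf{Combination.} The last display follows by combining the two previous bounds. On the event of probability $1-O(n^{-5})$ the approximation error is at most $\delta$; outside it we pay $n^{-5}$. Shifting $x$ by $\pm\delta$ in the Gaussian distribution function costs at most $\delta/\sqrt{2\pi}$.
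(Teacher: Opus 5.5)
Your treatment of the leading term matches the paper's proof: a martingale difference array refined edge by edge within each round, a Lindeberg/Lyapunov check that uses $\|[\nabla^{2}\mathcal{L}^{(\le t)}]^{\dagger}\|_2\lesssim((t+1)np)^{-1}$ against the $t=0$ lower bound on the normalizer, and concentration of the quadratic variation via Hessian Lipschitzness and Laplacian concentration. The final sandwich argument also matches.

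\textbf{The gap is the approximation error.} That part rests entirely on the entrywise bound
\[
\|\theta^{(t+1)}-\overline{\theta}^{(t+1)}\|_{\infty}\lesssim\frac{1}{t+1}\sqrt{\frac{\log n}{npL}}\left[\frac{\sqrt{\log n}}{\sqrt{nL}p}+\frac{1}{\sqrt{np}}\left(1+\frac{\log n}{\sqrt{np}}\right)\right].
\]
Given this bound, the paper needs one line: $|\bm{c}^{\top}v|\le\|\bm{c}\|_1\|v\|_\infty$, then $\sum_{t\le T}(t+1)^{-1}\lesssim\log T$, then divide by the normalizer. You do not derive this bound. The bracket you write down is the target, not the output of your sketch, and the tools you name do not reach it.

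\begin{itemize}
\item Take the Taylor remainder vector (the paper's $\mathbf{R}^{(t)}$), whose entries are $\lesssim(t+1)np\|\theta^{(t+1)}-\theta^{(t)}\|_\infty^2$. Applying $\|A^{\dagger}\|_{\infty\to\infty}\le\sqrt n\|A^{\dagger}\|_2$ to it gives $\sqrt n\|\theta^{(t+1)}-\theta^{(t)}\|_\infty^2$ per step. After normalization this yields only $\frac{\log n}{\sqrt{pL}}\frac{\|\bm{c}\|_1}{\|\bm{c}\|_2}$.
\item Going through $\ell_2$ instead gives $\sqrt{\log n/(pL)}$.
\item Neither bound vanishes for, say, $p=1/2$ and $L=1$, whereas the stated rate does.
\item Conversely, suppose you could prove $\|[\nabla^{2}\mathcal{L}^{(\le t)}(\theta^{(t)})]^{\dagger}\|_{\infty\to\infty}\lesssim((t+1)np)^{-1}$ for this remainder, which depends on the same graphs and comparisons. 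Then your quadratic picture would give $(t+1)^{-2}$ per step and no $\log T$. So it does not explain the $(t+1)^{-1}$ you claim, and proving that $\ell_\infty\to\ell_\infty$ control is exactly what your \emph{row-wise expansion} leaves unspecified.
\end{itemize}

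\textbf{The missing idea is the coordinatewise argument of Step~III.}
\begin{itemize}
\item By Proposition~\ref{prop:minimizer_L_i}, $\theta^{(t+1)}_i$ minimizes the univariate restriction $\mathcal{L}^{(\le t)}|_{\theta^{(t+1)}_{-i}}$. Let $\overline{\theta}^{(t+1,\prime)}_i$ minimize the quadratic surrogate restricted at the same $\theta^{(t+1)}_{-i}$. A one-dimensional Taylor argument then gives the genuinely quadratic piece, $|\theta^{(t+1)}_i-\overline{\theta}^{(t+1,\prime)}_i|\lesssim(t+1)^{-2}\log n/(npL)$ (Lemma~\ref{lemma:diff_quadratic_MLE}).
\item The other piece is $\overline{\theta}^{(t+1,\prime)}_i-\overline{\theta}^{(t+1)}_i=\sum_{j\neq i}(\theta^{(t+1)}_j-\overline{\theta}^{(t+1)}_j)[\nabla^{2}\mathcal{L}^{(\le t)}]_{ij}/[\nabla^{2}\mathcal{L}^{(\le t)}]_{ii}$. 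Lemma~\ref{lemma:diff_quadratic} controls it by inserting the iteration-restricted leave-one-out estimators $\theta^{(t+1)}_{(i)}$ and $\overline{\theta}^{(t+1)}_{(i)}$. These replace the round-$t$ terms involving item $i$ by their expectations, so Bernstein's inequality applies to the round-$t$ Hessian row.
\item This produces a self-bounding inequality with coefficient $\frac{1}{t+1}\big(\frac{\log n}{np}+\frac{1}{\sqrt{np}}\big)$ on $\|\theta^{(t+1)}-\overline{\theta}^{(t+1)}\|_\infty$. It is closed using the hypothesis $\frac{1}{t+1}\big(\frac{\log n}{np}+\frac{1}{\sqrt{np}}\big)\le c$ of Theorem~\ref{thm:iterative_loss} (Lemma~\ref{lemma:diff_ori_qudra_t+1}).
\item The $(t+1)^{-1}$ rate, and hence the $\log T$, comes from these first-order leave-one-out perturbation terms, not from a second-order remainder.
\end{itemize}

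Since the theorem is stated under the assumptions of Theorem~\ref{thm:iterative_loss}, you may simply invoke Lemma~\ref{lemma:diff_ori_qudra_t+1} at this point. What you cannot do is obtain it from the route you sketched.
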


\begin{proof}
    See \S\ref{appendix:asy_norm} for the detailed proof.
\end{proof}

Theorem~\ref{thm:asy_norm} establishes that $\mathbf{c}^{\top} (\theta^{(T+1)} - \theta^{\ast})$ is asymptotically normal provided that
\begin{align}\label{condition1}
    \log T \left[ \frac{\log n}{\sqrt{nL}p} + \sqrt{\frac{\log n}{np}} \left( 1 + \frac{\log n}{\sqrt{np}} \right) \right] \frac{\|\mathbf{c}\|_{1}}{\|\mathbf{c}\|_{2}} \to 0.
\end{align}
This condition is readily satisfied whenever $\mathbf{c}$ is sparse, i.e., supported on a subset of coordinates whose size remains bounded. A direct corollary of Theorem~\ref{thm:asy_norm} is the limiting joint distribution of $\theta^{(T+1)}_{S_{k}} - \theta^{\ast}_{S_{k}}$, obtained via the Cramér--Wold device \citep{cramer1936some}, where $S_{k} \subset [n]$ is any index subset of fixed size $k < \infty$. This result is formalized in Corollary~\ref{cor:asy_norm} below.

\begin{corollary} \label{cor:asy_norm}
    Assume the assumptions of Theorem~\ref{thm:iterative_loss} hold. Then for any fixed $k \in [n]$ and for any fixed $T \lesssim \frac{n^{3}}{p\log n}$, as long as
    \begin{align*}
         \log T \left[ \frac{\log n}{\sqrt{nL}p} + \sqrt{\frac{\log n}{np}} \left( 1 + \frac{\log n}{\sqrt{np}} \right) \right] = o(1),
    \end{align*}
    we have
    \begin{align*}
        \sqrt{L} \left( \left\{ \sum_{t=0}^{T}\left[ \nabla^{2} \mathcal{L}^{(\leq t)}(\theta^{(t)}) \right]^{\dagger} \nabla^{2} \ell_{t}(\theta^{(t)}) \left[ \nabla^{2} \mathcal{L}^{(\leq t)}(\theta^{(t)}) \right]^{\dagger}  \right\}_{S_{k},S_{k}} \right)^{-1/2}  \left( \theta^{(T+1)}_{S_{k}} - \theta^{\ast}_{S_{k}}  \right) \rightsquigarrow  \mathcal{N}(\bm{0}, \boldsymbol{I}_{k}),
    \end{align*}
    where $S_{k}$ is any subset over $[n]$ with size $k$.
\end{corollary}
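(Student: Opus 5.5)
We prove the corollary by combining Theorem~\ref{thm:asy_norm} with the Cramér--Wold device. Write
\[
\Sigma_T := \sum_{t=0}^{T}\left[ \nabla^{2} \mathcal{L}^{(\leq t)}(\theta^{(t)}) \right]^{\dagger} \nabla^{2} \ell_{t}(\theta^{(t)}) \left[ \nabla^{2} \mathcal{L}^{(\leq t)}(\theta^{(t)}) \right]^{\dagger},
\]
and let $\Sigma_{S_k} := (\Sigma_T)_{S_k,S_k}$.

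\textbf{Step 1 (reduction to a scalar statement).} Fix any $\mathbf{a}\in\mathbb{R}^k$ with $\|\mathbf{a}\|_2=1$. It suffices to show that
\[
\sqrt{L}\,\mathbf{a}^\top \Sigma_{S_k}^{-1/2}\bigl(\theta^{(T+1)}_{S_k}-\theta^\ast_{S_k}\bigr)\rightsquigarrow\mathcal{N}(0,1).
\]
Define the direction $\mathbf{c}\in\mathbb{R}^n$ by $\mathbf{c}_{S_k}=\Sigma_{S_k}^{-1/2}\mathbf{a}$ and $\mathbf{c}_{S_k^c}=\mathbf{0}$. Then $\mathbf{c}^\top\Sigma_T\mathbf{c}=\mathbf{a}^\top\mathbf{a}=1$, so the normalizer in Theorem~\ref{thm:asy_norm} equals one. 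The scalar statistic above is therefore exactly the standardized quantity $\sqrt{L}\,\mathbf{c}^\top(\theta^{(T+1)}-\theta^\ast)$ treated there.

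\textbf{Step 2 (controlling the remainder).} Since $\mathbf{c}$ is supported on $k$ coordinates, $\|\mathbf{c}\|_1/\|\mathbf{c}\|_2\le\sqrt{k}=O(1)$. This holds regardless of the conditioning of $\Sigma_{S_k}$, because the ratio is scale-invariant. By the assumed $o(1)$ condition, the remainder bound in Theorem~\ref{thm:asy_norm} becomes $r_n+o(1)+n^{-5}$. The $o(1)$ hypothesis, together with $np\ge1$, also forces $np\gg\log n$, so $r_n=o(1)$ as well.

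\textbf{Step 3 (conclusion and the main obstacle).} The Kolmogorov distance bound now tends to zero, so each one-dimensional projection converges to $\mathcal{N}(0,1)$. Cramér--Wold then yields the joint limit $\mathcal{N}(\mathbf{0},\mathbf{I}_k)$.

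The main obstacle is that $\Sigma_T$, and hence $\mathbf{c}$, are random and data-dependent, whereas Theorem~\ref{thm:asy_norm} is stated for a fixed $\mathbf{c}$. The cleanest route is to note that the theorem's bound is uniform over $\mathbf{c}$ in the class of vectors supported on $S_k$. For the leading martingale term, this requires that the martingale CLT used there is applied with a self-normalization, which is exactly the random normalizer $\sqrt{\mathbf{c}^\top\Sigma_T\mathbf{c}}$.

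Alternatively, if this uniformity is not available, I would replace $\Sigma_{S_k}$ by a deterministic proxy $\Sigma^\star_{S_k}$. This proxy is built from the population Hessians at $\theta^\ast$, using the $\ell_\infty$ bound of Theorem~\ref{thm:iterative_loss} and Hessian concentration. I would then show $\Sigma^{\star\,-1/2}_{S_k}\Sigma_{S_k}\Sigma^{\star\,-1/2}_{S_k}\xrightarrow{p}\mathbf{I}_k$, and conclude by applying Theorem~\ref{thm:asy_norm} with the fixed $\mathbf{c}^\star$ and invoking Slutsky's lemma.

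I would also check that $\Sigma_{S_k}$ is invertible. It is positive definite with high probability, because each $\nabla^2\ell_t$ has spectrum bounded below on $\Theta$ by order $np$ for connected Erd\H{o}s--R\'enyi graphs, and any fixed $k$-subset with $k<n$ has a principal block bounded away from singular.
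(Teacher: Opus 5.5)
Your top-level argument is the paper's own. The paper states the corollary as an immediate consequence of Theorem~\ref{thm:asy_norm} via the Cram\'er--Wold device and gives no further proof. Your choice $\mathbf c_{S_k}=\Sigma_{S_k}^{-1/2}\mathbf a$, the bound $\|\mathbf c\|_1/\|\mathbf c\|_2\le\sqrt k$, and the invertibility check are all fine.

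You are right that this glosses over the data dependence of $\mathbf c$, but your first (``cleanest'') fix does not work. Uniformity in $\mathbf c$ does hold for the approximation-error term. That bound comes from $|\mathbf c^\top v|\le\|\mathbf c\|_1\|v\|_\infty$ together with a high-probability lower bound $\mathbf c^\top\Sigma_T\mathbf c\gtrsim\|\mathcal P\mathbf c\|_2^2/(np)$. It does not help the leading term. A Kolmogorov bound valid for every fixed direction says nothing about a direction $\hat{\mathbf c}$ computed from the same data. Self-normalization handles a random variance, not a random direction. Concretely, $\Sigma_T$ depends on $\theta^{(1)},\dots,\theta^{(T)}$ and $\mathcal G^{(0)},\dots,\mathcal G^{(T)}$. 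So $\hat{\mathbf c}$ is not $\mathcal F_{t-1}$-measurable for $t<T$, and the summands $-\hat{\mathbf c}^\top[\nabla^2\mathcal L^{(\le t)}(\theta^{(t)})]^\dagger\nabla\ell_t(\theta^{(t)})$ are no longer martingale differences.

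Your second route is the correct one, and the proof of Theorem~\ref{thm:asy_norm} already contains its main input. The estimates used there for $u_1$ and $u_2$ are operator-norm bounds multiplied by $\|\mathbf c\|_2^2$. They therefore yield a deterministic $\Sigma^\star$ with $np\,\|\Sigma_T-\Sigma^\star\|_2\xrightarrow{p}0$.

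Now apply Theorem~\ref{thm:asy_norm} to the fixed vectors $\mathbf c^\star$ with $\mathbf c^\star_{S_k}=(\Sigma^\star_{S_k})^{-1/2}\mathbf a$, and then Cram\'er--Wold. This gives $\sqrt L(\Sigma^\star_{S_k})^{-1/2}(\theta^{(T+1)}_{S_k}-\theta^\ast_{S_k})\rightsquigarrow\mathcal N(\mathbf 0,\mathbf I_k)$.

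To pass to the random normalization you need $\Sigma_{S_k}^{-1/2}(\Sigma^\star_{S_k})^{1/2}\xrightarrow{p}\mathbf I_k$. This holds for two reasons:
\begin{itemize}
\item for fixed $k<n$, all eigenvalues of $\Sigma_{S_k}$ and $\Sigma^\star_{S_k}$ are of order $1/(np)$, and the $t=0$ term alone gives the lower bound;
\item $A\mapsto A^{-1/2}$ is Lipschitz on matrices with eigenvalues bounded away from zero, applied here to $np\,\Sigma_{S_k}$ and $np\,\Sigma^\star_{S_k}$.
\end{itemize}

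A minor point in Step~2: read $\log T$ as $1+\log(T+1)$, which is what the harmonic sum in the proof produces. With the literal $\log T$ and $T=1$, the hypothesis is vacuous and does not imply $np\gg\log n$.
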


To the best of our knowledge, no existing literature has investigated the non-asymptotic expansion of the estimators $\theta^{(T)}$ across iterations $T \ge 0$ under our iterative augmentation workflow; our work is the first to address this problem.

We note that \eqref{condition1} imposes a stronger connectivity condition, $\sqrt{nL}p \gtrsim \log n \log T$, compared to the minimal requirement $np \gtrsim \log n$ for connectivity in Erdős–Rényi graphs. This stronger requirement stems from technical challenges in controlling both the higher-order residual terms and the union bound over all $T$ iterations. Relaxing this assumption is a non-trivial task that we leave for future work.
\section{Numerical Results} 
\label{sec:Numerical_Results}

This section presents simulations that investigate the statistical properties of iterative MLE under the synthetic-data augmentation workflow and corroborate our theoretical results.

Specifically, we compare two variants: (i) the unregularized iterative MLE proposed in the main text and (ii) a regularized iterative MLE with a weak regularization parameter $\lambda_T = \frac{T+1}{nL},$ 
which serves as an intermediate step in our theoretical analysis of the unregularized estimator. In particular, we use the regularized estimator to establish that the iterative estimates remain bounded in $\ell_{\infty}$ norm and further show that the unregularized estimator is sufficiently close to its regularized counterpart. We refer to \S\ref{sec:regularized} for further details.

Unless otherwise stated, all reported results are averaged over $200$ independent Monte Carlo replications, and shaded regions in the trajectory plots denote pointwise $95\%$ confidence intervals.

\subsection{Baseline Setting}
We set $n=50$, $p=0.5$, and $L=20$ and keep these parameters fixed throughout all iterations.

Figure~\ref{fig:simulation} summarizes the behavior of the regularized and unregularized estimators over $T\in[50]$ iterations. Figures~\ref{fig:simulation(a)} and~\ref{fig:simulation(b)} show that $\log(\|\hat\theta^{(T)}-\theta^*\|_2)$ and $\log(\|\hat\theta^{(T)}-\theta^*\|_{\infty})$ increase during the first few iterations and then stabilize within a bounded range, indicating that model collapse does not occur in this setting.
 In particular, the statistical errors remain bounded throughout the iterative procedure, and the regularized and unregularized iterative MLEs exhibit closely aligned trajectories.
 
Figure~\ref{fig:simulation(c)} focuses on the top five ranked items, whose estimated scores remain close to their underlying true scores while preserving the correct ordering. Finally, Figure~\ref{fig:simulation(d)} compares the final averaged estimates over 200 Monte Carlo replications with the ground truth across all ranked items, showing that both the estimated ordering and score magnitudes remain well aligned with $\theta^{\ast}$ even as the comparison data become progressively contaminated by synthetic observations.
\begin{figure}[htbp]
    \centering
    \begin{subfigure}[htbp]{0.45\textwidth}
        \centering
        \includegraphics[width=\textwidth]{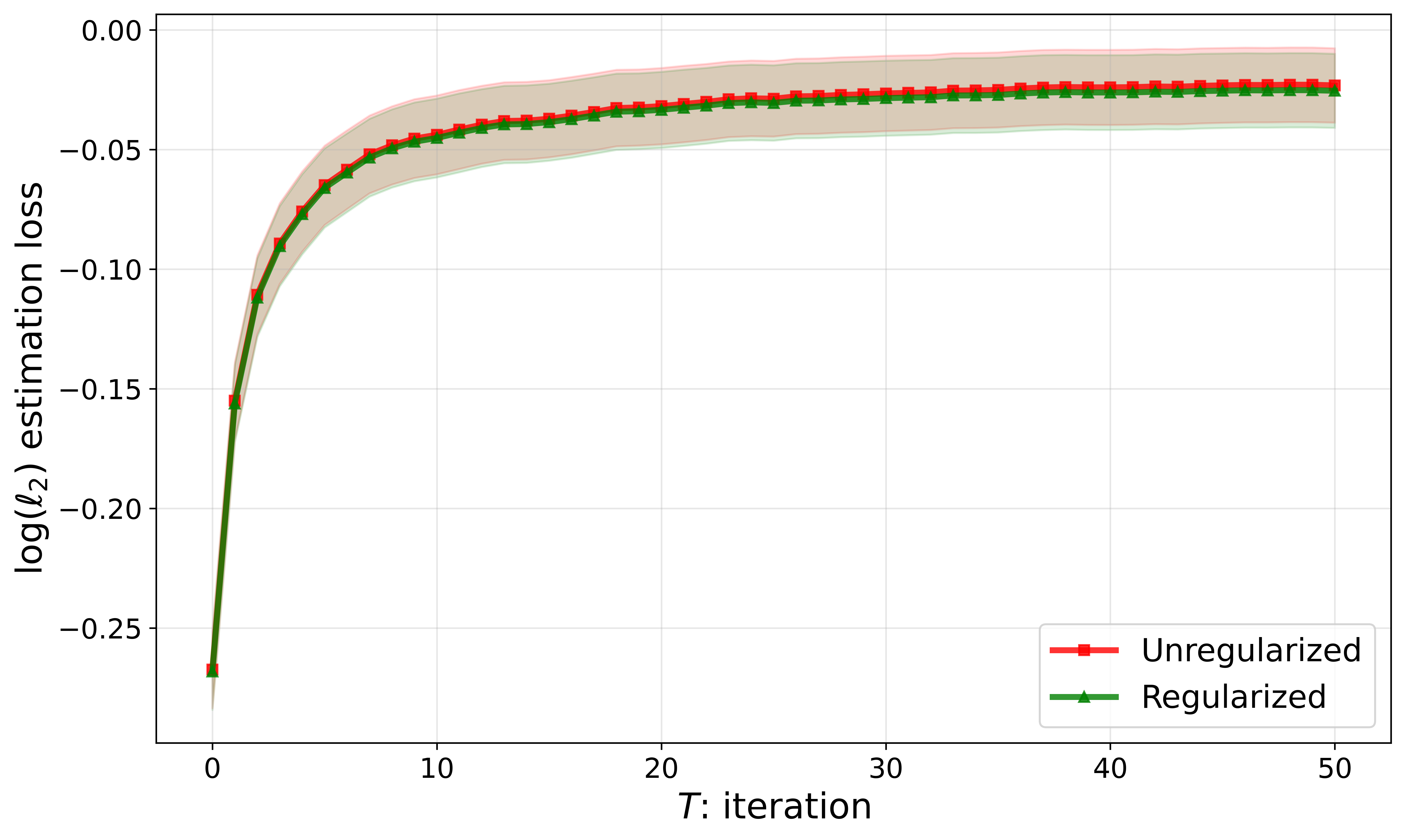}
        \caption{$\log(\|\hat\theta^{(T)}-\theta^*\|_2)$ statistical error vs. iteration $T$}
        \label{fig:simulation(a)}
    \end{subfigure}
    \hfill 
    \begin{subfigure}[htbp]{0.45\textwidth}
        \centering
        \includegraphics[width=\textwidth]{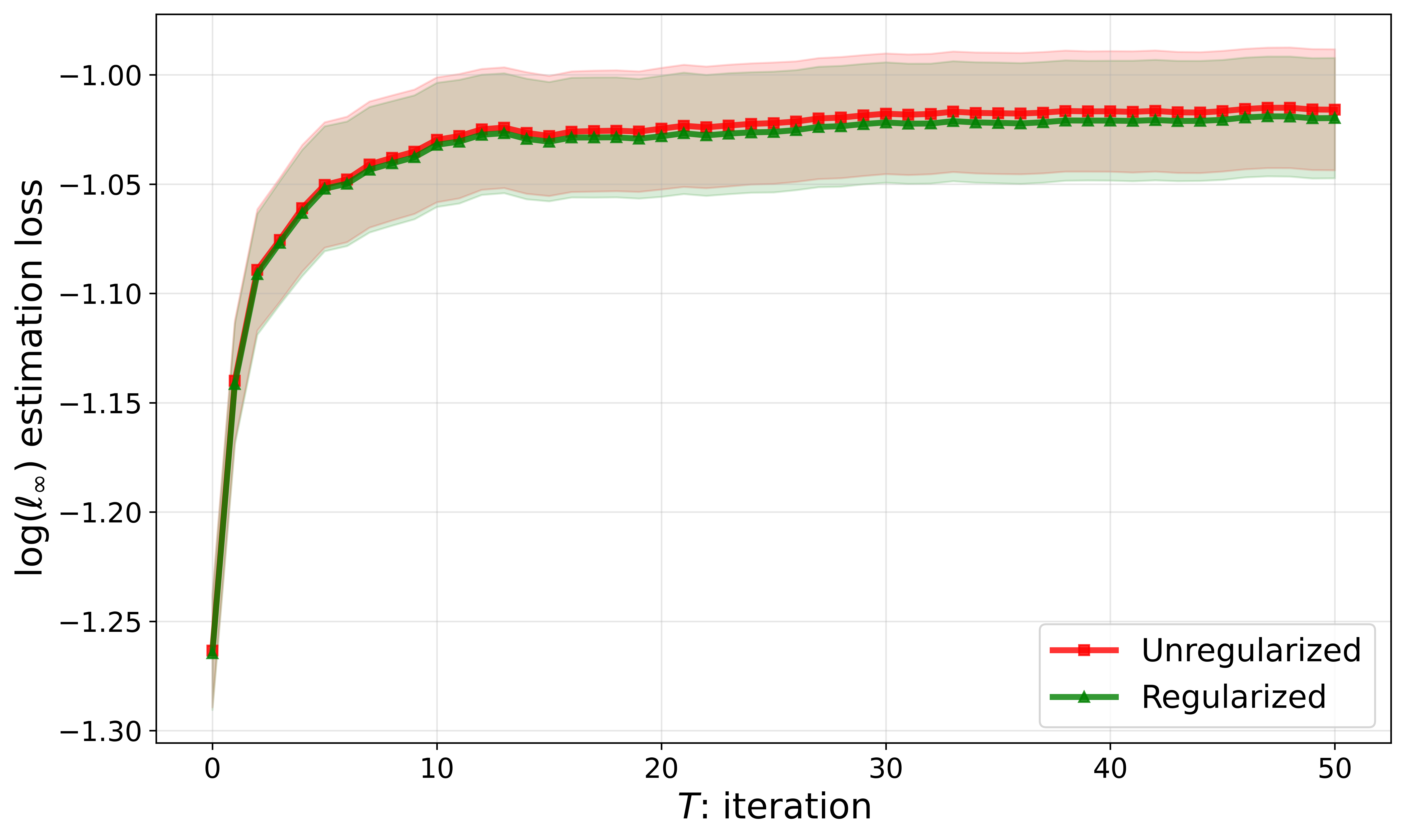}
        \caption{$\log(\|\hat\theta^{(T)}-\theta^*\|_{\infty})$ statistical error vs. iteration $T$}
        \label{fig:simulation(b)}
    \end{subfigure}
    \vspace{1em}
    \begin{subfigure}[htbp]{0.45\textwidth}
        \centering
        \includegraphics[width=\textwidth]{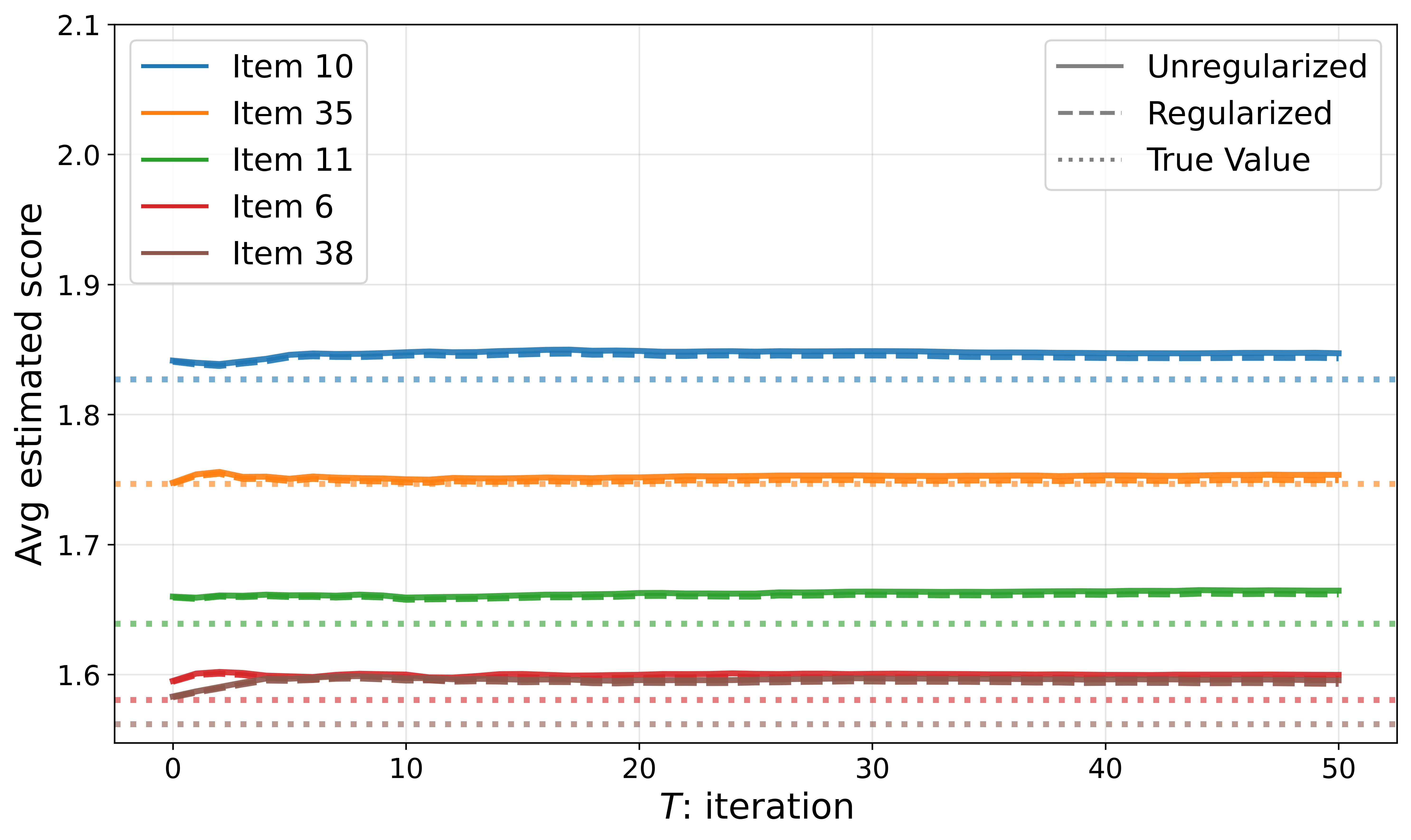}
        \caption{Evolution of the averaged estimated scores for the top five ranked items}
        \label{fig:simulation(c)}
    \end{subfigure}
    \hfill
    \begin{subfigure}[htbp]{0.45\textwidth}
        \centering
        \includegraphics[width=\textwidth]{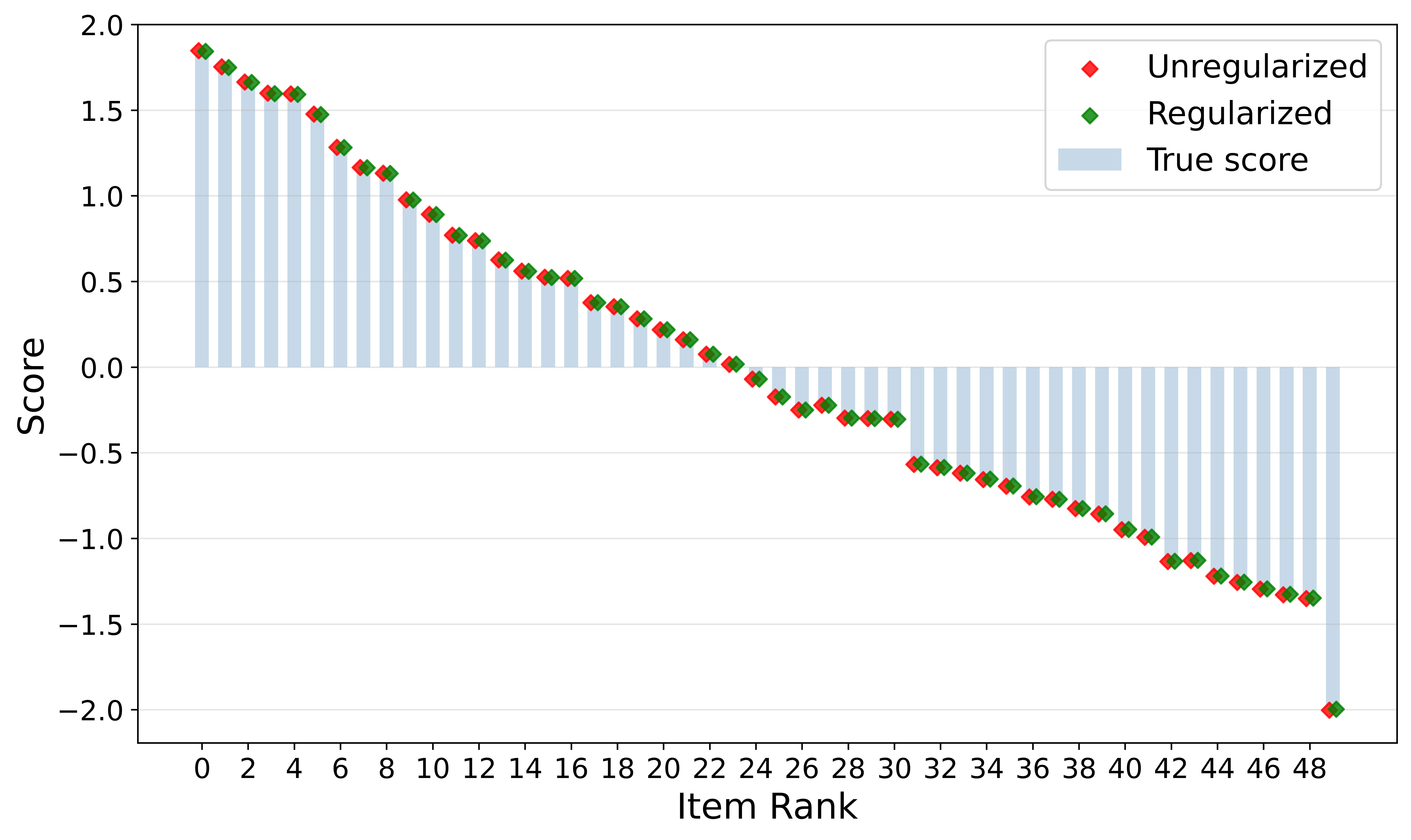}
        \caption{Final estimates $\theta^{(T)}$ for all 50 ranked items}
        \label{fig:simulation(d)}
    \end{subfigure}
    \caption{Empirical performance of the unregularized and regularized iterative MLEs under the synthetic data augmentation workflow, with $n=50$, $T\in[50]$, edge probability $p=0.5$, and $L=20$. Results are averaged over $200$ Monte Carlo simulations.}
    \label{fig:simulation}
\end{figure}
\subsection{Rate of Convergence}
To corroborate the statistical  rates established in Theorem~\ref{thm:iterative_loss}, we empirically examine how the estimation error scales with $1/\sqrt{pL}$. We run the synthetic augment workflow for $T = 50$ rounds and consider $n \in \left\{ 50, 100 \right\}$. To cover a range of $1/\sqrt{pL}$ within $(0,1)$, we examine 6 different $(p,L)$ pairs, which are listed below in Table~\ref{tab:pL_pairs}. For each $(p,L)$ pair, we record the final $\ell_{2}$ and $\ell_{\infty}$ statistical errors at iteration $T = 50$.
\begin{table}[htbp]
    \centering
    \begin{tabular}{c c c c c c c}
        \toprule
        $p$ & 0.8 & 0.5 & 0.6 & 0.3 & 0.4 & 0.3 \\
        \midrule
        $L$ & 50 & 25 & 10 & 10 & 5 & 5 \\
        \bottomrule
    \end{tabular}
    \caption{ $(p, L)$ pairs considered for validating the statistical rates in Theorem~\ref{thm:iterative_loss}.}
    \label{tab:pL_pairs}
\end{table}

The results are shown in Figure~\ref{fig:rate_of_convergence}. For each fixed $n\in \{50,100\}, $ both $\ell_{2}$ and $\ell_{\infty}$ statistical errors exhibit an approximately linear relationship with $1/\sqrt{pL}$ , providing  further support of the statistical rates established in Theorem~\ref{thm:iterative_loss}. As expected, the regularized and unregularized iterative MLEs exhibit nearly identical empirical performance. 
\begin{figure}[htbp]
    \centering
    \begin{subfigure}[b]{0.45\textwidth}
        \centering
        \includegraphics[width=\textwidth]{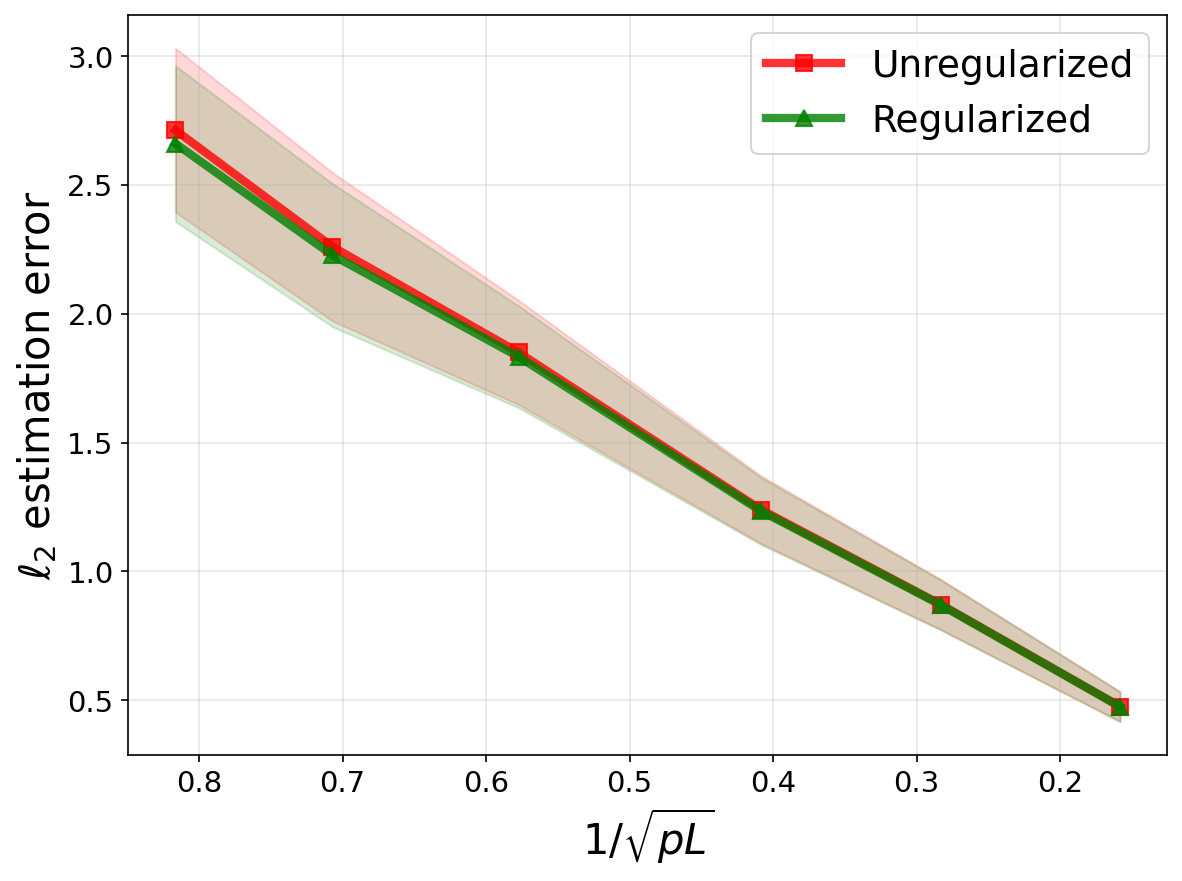}
        \caption{$\ell_2$ estimation error vs.\ $1/\sqrt{pL}$ ($n=50$)}
        \label{fig:rate(a)}
    \end{subfigure}
    \hfill 
    \begin{subfigure}[b]{0.45\textwidth}
        \centering
        \includegraphics[width=\textwidth]{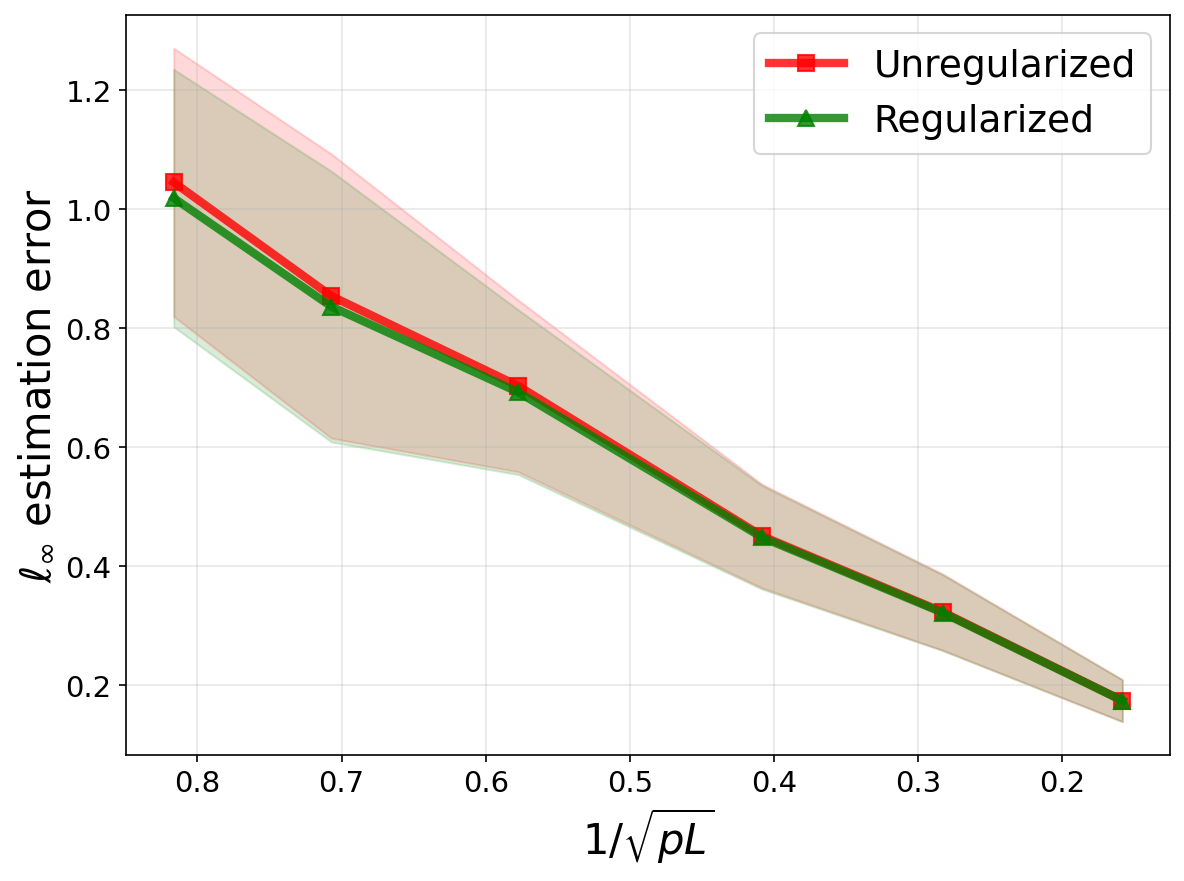}
        \caption{$\ell_\infty$ estimation error vs.\ $1/\sqrt{pL}$ ($n=50$)}
        \label{fig:rate(b)}
    \end{subfigure}
    \vspace{1em}
    \begin{subfigure}[b]{0.45\textwidth}
        \centering
        \includegraphics[width=\textwidth]{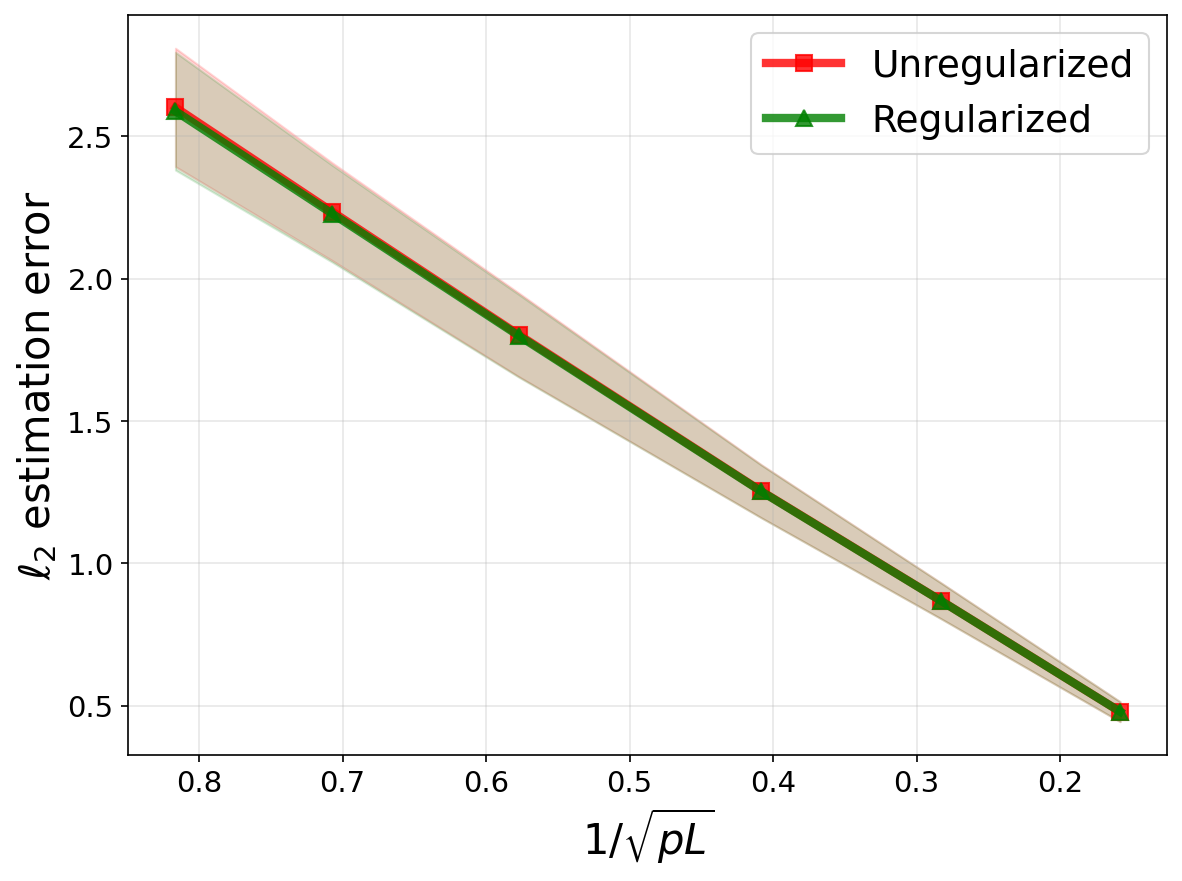}
        \caption{$\ell_2$ estimation error vs.\ $1/\sqrt{pL}$ ($n=100$)}
        \label{fig:rate(c)}
    \end{subfigure}
    \hfill 
    \begin{subfigure}[b]{0.45\textwidth}
        \centering
        \includegraphics[width=\textwidth]{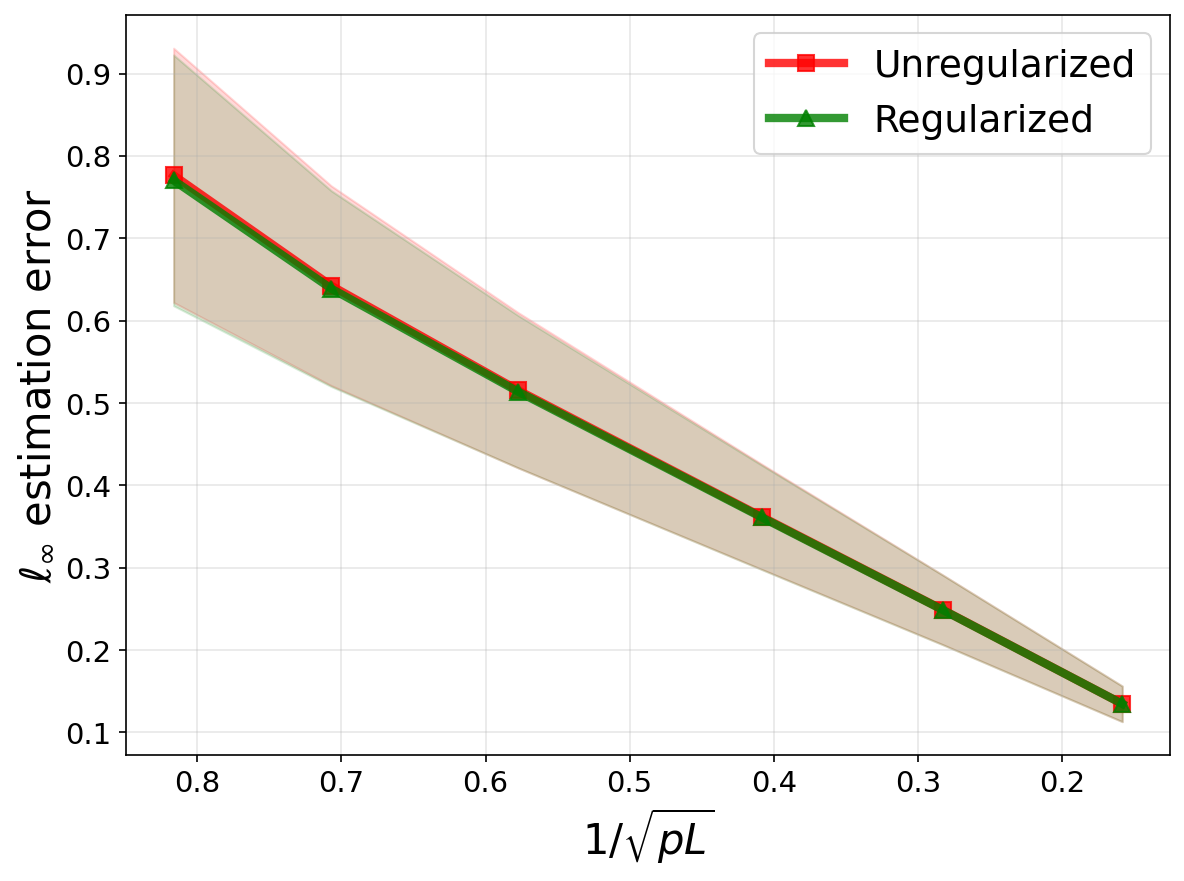}
        \caption{$\ell_\infty$ estimation error vs.\ $1/\sqrt{pL}$ ($n=100$)}
        \label{fig:rate(d)}
    \end{subfigure}
    \caption{Statistical rates of $\norm{\theta^{(T)} - \theta^{\ast}}_2$ and $\norm{\theta^{(T)} - \theta^{\ast}}_\infty$ for the $(p, L)$ pairs listed in Table~\ref{tab:pL_pairs} with $T=50$. The solid red and green lines represent the averaged errors for the unregularized and regularized iterative MLEs, respectively, and the shaded areas indicate their associated standard deviations based on 200 Monte Carlo simulations at iteration $T = 50$. The top row reports results for $n=50$ and the bottom row for $n=100$.}
    \label{fig:rate_of_convergence}
\end{figure}

\subsection{Asymptotic Normality}
Next, we examine the asymptotic normality of the estimators. We define the effective sample size as $n_{a} := n / \log n$,   fix $n = 100$, and consider six $(p,L)$ pairs with $p \in \{6/n_{a}, 12/n_{a}\}$ and $L \in \{2,6,20\}$. For each $(p,L)$, we evaluate the unregularized iterative MLE  at iteration $T = 100$. Figure~\ref{fig:asymptotic_normality} presents the Q-Q plots for checking the normality of $\theta^{(T)}_{1}$, the first coordinate of $\theta^{(T)}$. Across all settings, the empirical quantiles align closely with the reference normal quantiles, providing empirical evidence for the asymptotic normality of $\theta^{(T)}_1$.

\begin{figure}[htbp]
    \centering
    \includegraphics[width=1.0\linewidth]{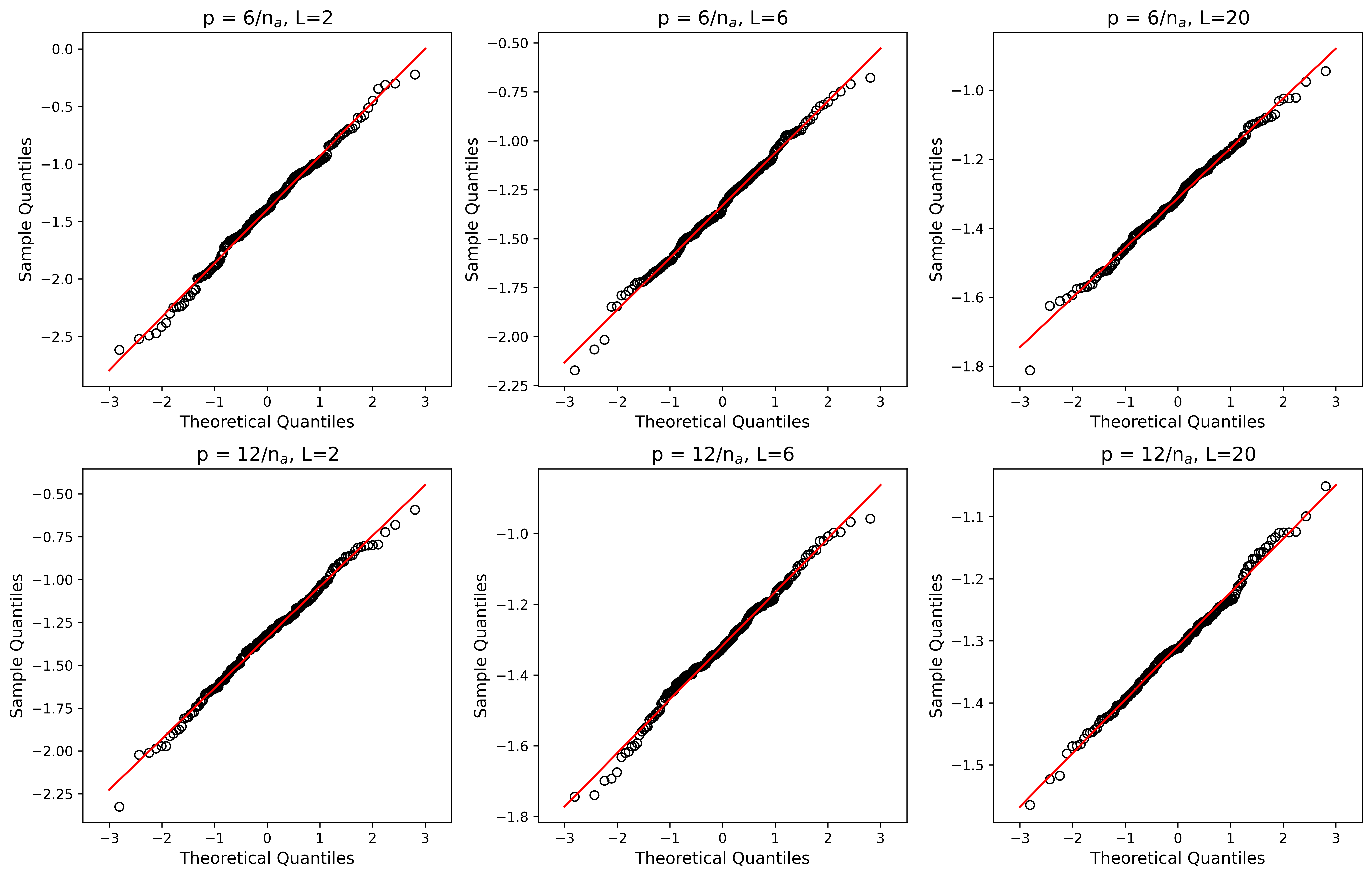}
    \caption{Q-Q Plots for checking the normality of $\left(\theta^{(T)}\right)_{1}$ under varying edge probability $p \in \{6/n_a, 12/n_a\}$ and number of repeated comparisons $L \in \{2,6,20\}$, with fixed $n=100$, effective sample size $n_a := n/\log n$, and iteration $T = 100$. Results are based on 200 Monte Carlo simulations.}
    \label{fig:asymptotic_normality}
\end{figure}


\noindent\textbf{Asymptotic Variance}: Next, we evaluate the asymptotic variance of the estimator. We set $n=200$ and consider two settings, $(p,L)=(0.2,2)$ and $(p,L)=(0.4,10)$. We focus on the linear combination $\bm{c}^{\top}\theta^{(T+1)}$, where $\bm{c}=\boldsymbol{e}_1+\boldsymbol{e}_{200}$ and $\boldsymbol{e}_i$ denotes the $i$th standard basis vector in $\mathbb{R}^{200}$. Based on $300$ Monte Carlo simulations, we plot the following two standardized statistics at iteration $T=50$:

\begin{align*}
    A = \frac{\sqrt{L} \left( \bm{c}^{\top} \theta^{(T+1)} - \bm{c}^{\top} \theta^{\ast} \right)}{\sqrt{V_{n}^{2}}}
    \qquad \text{and} \qquad
    B = \frac{\sqrt{L} \left( \bm{c}^{\top} \theta^{(T+1)} - \bm{c}^{\top} \theta^{\ast} \right)}{\sqrt{\hat{V}_{n}^{2}}},
\end{align*}
where 
\begin{align*}
    V_{n}^{2}
    = &
    \sum_{t=0}^{T} \bm{c}^{\top} \left[ \nabla^{2} \mathcal{L}^{(\leq t)}(\theta^{(t)}) \right]^{\dagger} \nabla^{2} \ell_{t}(\theta^{(t)}) \left[ \nabla^{2} \mathcal{L}^{(\leq t)}(\theta^{(t)}) \right]^{\dagger} \bm{c},
    \\
    \hat{V}_{n}^{2} 
    = & 
    \bm{c}^{\top} \left[ \nabla^{2} \ell_{0}(\theta^{(1)}) \right]^{\dagger} \nabla^{2} \ell_{0}(\theta^{(1)}) \left[ \nabla^{2} \ell_{0}(\theta^{(1)}) \right]^{\dagger} \bm{c} 
    + 
    \sum_{t=1}^{T} \bm{c}^{\top} \left[ \nabla^{2} \mathcal{L}^{(\leq t)}(\theta^{(t)}) \right]^{\dagger} \nabla^{2} \ell_{t}(\theta^{(t)}) \left[ \nabla^{2} \mathcal{L}^{(\leq t)}(\theta^{(t)}) \right]^{\dagger} \bm{c}.
\end{align*}
Here $V_n^2$ denotes the theoretical variance, which requires knowledge of $\theta^{(0)} = \theta^*$ when $t = 0$, while the plug-in estimator $\hat{V}_n^2$ replaces $\theta^{(0)}$ with $\theta^{(1)}$. Figure~\ref{fig:asymptotic_variance} shows that the histograms of both $A$ and $B$ follow closely the standard Gaussian density. The first row corresponds to the sparse regime with $(p, L) = (0.2, 2)$, where, despite the relatively limited amount of data, both standardized statistics remain well approximately by $\mathcal{N}(0,1)$. The second row presents results for $(p, L) = (0.4, 10),$ where we can draw similar conclusions. These results support our theoretical findings in Theorem~\ref{thm:asy_norm}.

\begin{figure}[htbp]
    \centering
    \includegraphics[width=0.8\linewidth]{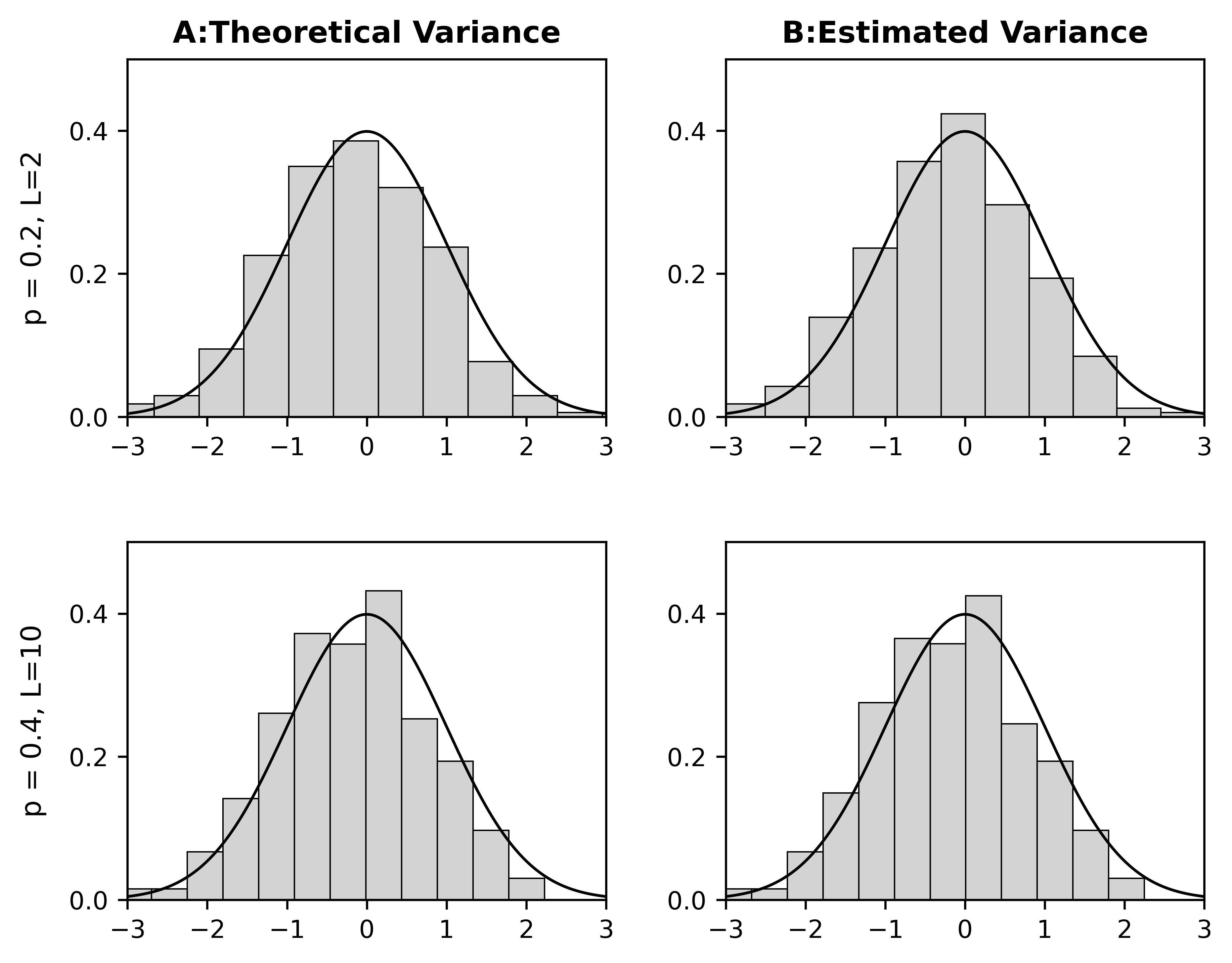}
    \caption{Histograms for assessing the asymptotic variance of $\bm{c}^{\top}\theta^{(T+1)}-\bm{c}^{\top}\theta^{\ast}$. The first row corresponds to $(p,L)=(0.2,2)$, and the second row corresponds to $(p,L)=(0.4,10)$. Results are based on $300$ Monte Carlo simulations with $n=200$ and $T=50$.}
    \label{fig:asymptotic_variance}
\end{figure}

\subsection{Application to Arena Human Preference 140k Dataset}
We apply our proposed method to the Arena Human Preference 140k dataset, available at
\url{https://huggingface.co/datasets/lmarena-ai/arena-human-preference-140k}.
This dataset records pairwise comparison results among $52$ large language models (LLMs) collected through the Arena platform \citep{chiang2024chatbot,searcharena2025}. In total, the dataset comprises $135,634$ human preference votes, each comparing a pair of models with a single declared winner or a draw.

For our analysis, we first remove self-comparisons and all observations whose outcomes are labeled as \texttt{tie} or \texttt{both bad}. This preprocessing step leaves  $98,341$ total number of comparisons among the full set of $52$ models. We then select a subset of $25$ models, including \texttt{Amazon Nova, ChatGPT, Claude, DeepSeek, Gemini, Grok, Hunyuan, Kimi, LLaMA, MiniMax, Mistral, and Qwen}, among others, to exclude models with limited comparisons while ensuring that the leading models are included.
 Restricting to this subset yields $21,685$ comparisons and yields an empirical comparison density $p \approx 0.9333$ and $L$ ranging from $1$ to $274$, see Figure~\ref{fig:real_win_rate_filter_140k}. In the first iteration, we fit the model to this filtered dataset to obtain $\theta^{(1)}$, allowing $L$ to vary across pairs. In subsequent iterations, we study the behavior of the iterative MLE under four fixed $(p,L)$ regimes, $(p,L) \in \{(0.2,2),\ (0.4,5),\ (0.6,10),\ (0.8,20)\}$.

\begin{figure}[htbp]
    \centering
    \includegraphics[width=0.8\linewidth]{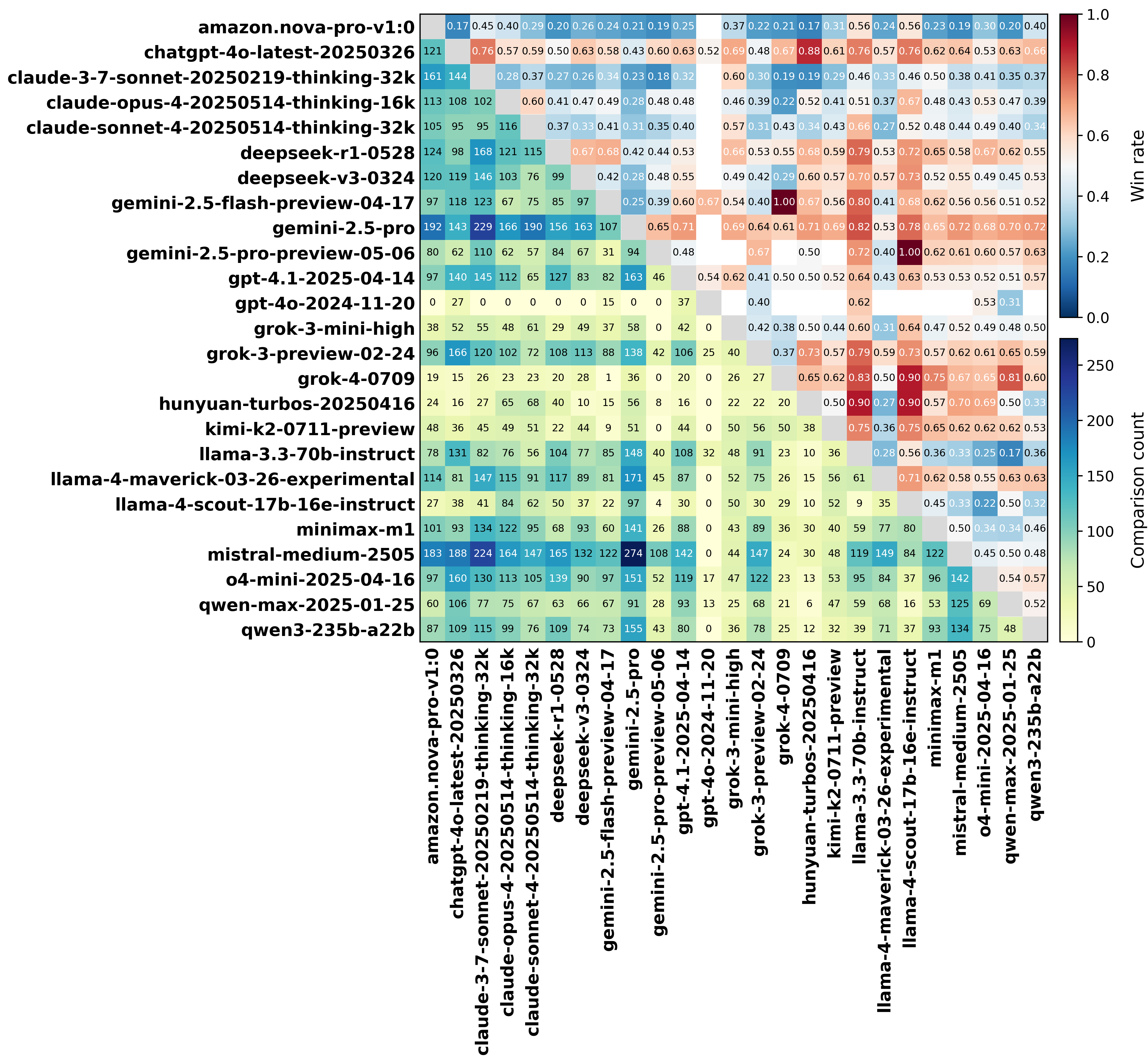}
    \caption{ Win rate (above the diagonal) and comparison count (below the diagonal) between the 25 models in the Arena human preference 140k dataset.}
    \label{fig:real_win_rate_filter_140k}
\end{figure}


Figure~\ref{fig:theta_comparison_both} illustrates the long-run behavior of the two workflows. Under the iterative synthetic data augmentation workflow, the final estimates $\theta^{(T+1)}$ remain close to the initial real-data estimates $\theta^{(1)}$ across all 25 models. This pattern is consistent under all four $(p,L)$ configurations, from the sparse setting $(0.2,2)$ to the dense setting $(0.8,20)$. Specifically, the ranking is essentially preserved after 50 iterations: top models such as \texttt{gemini-2.5-pro} and \texttt{grok-4-0709} remain at the top, while lower-ranked models such as \texttt{amazon.nova-pro-v1:0} remain near the bottom. This stability contrasts with prior concerns about model collapse and suggests that retaining the accumulated real and synthetic data can mitigate progressive degradation over repeated rounds of synthetic augmentation. 

By contrast, the Discard workflow, which retains only the newly generated synthetic data and discards the real data, exhibits substantially greater instability. Once the real data are removed, the estimates $\theta^{(T+1)}$ deviate noticeably from $\theta^{(1)}$, with the discrepancy becoming more pronounced as the synthetic comparison graph becomes sparser. In the most sparse setting, $(p,L)=(0.2,2)$, the final estimates become highly dispersed and no longer preserve the original ranking in a meaningful way.


\begin{figure}[htbp]
    \centering
    \includegraphics[width=\linewidth]{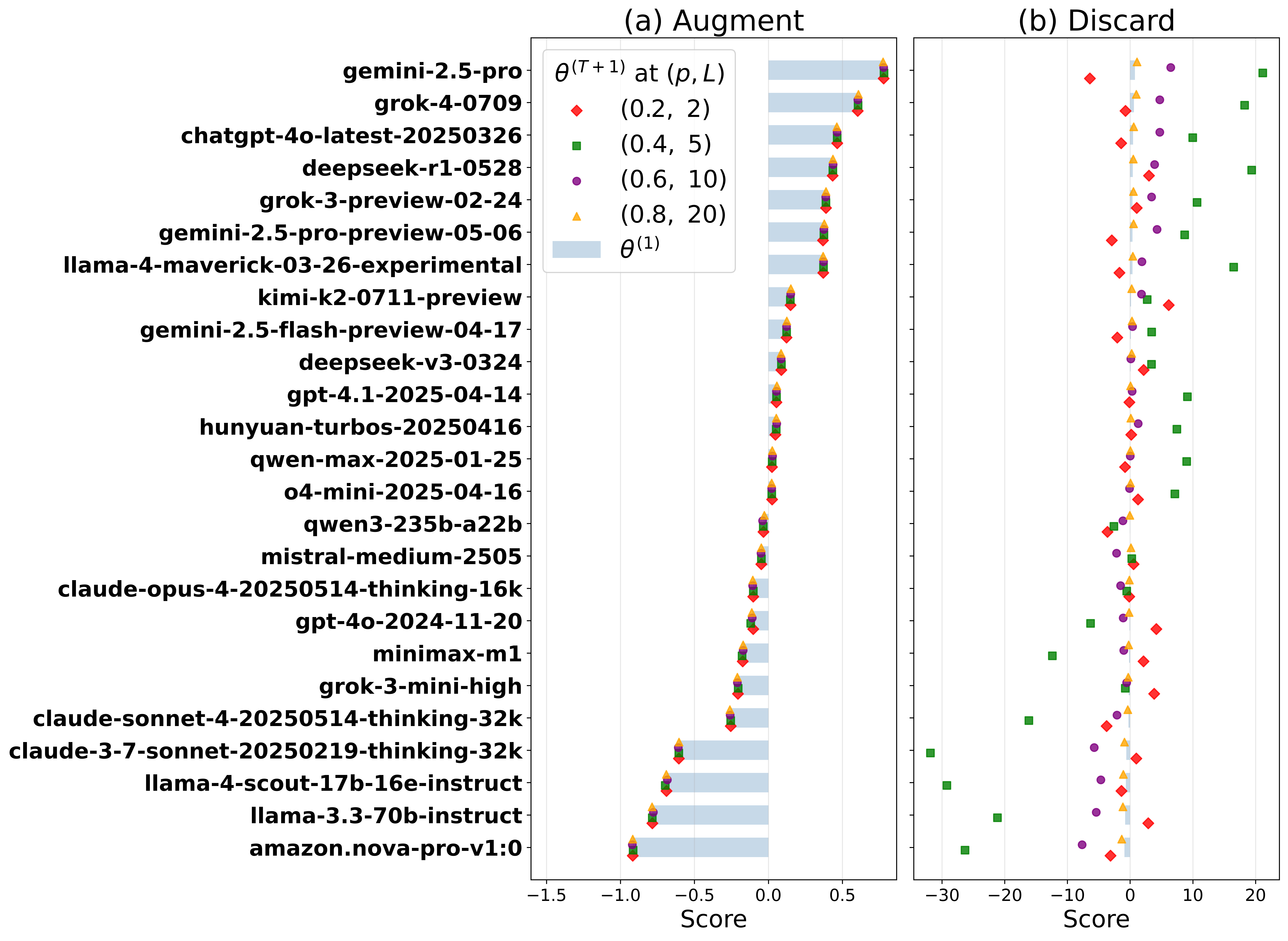}
    \caption{Comparison of BTL score estimates $\theta^{(1)}$ and $\theta^{(T+1)}$
    across 25 selected models from the Arena Human Preference 140k dataset under the augmentation and Discard workflows. Four configurations are shown: $(p, L) = (0.2, 2)$ red diamonds, $(p, L) = (0.4, 5)$ green squares, $(p, L) = (0.6, 10)$ purple circles, and $(p, L) = (0.8, 20)$ orange triangles. Models are ranked by $\theta^{(1)}$ in descending order. Results are averaged over 200 Monte Carlo simulations with fixed $n = 25$ and iteration $T = 50$.}
\label{fig:theta_comparison_both}
\end{figure}


Figure~\ref{fig:score_evolution_both} plots the average estimated scores of the top five models over $50$ iterations. Under the iterative synthetic data augmentation workflow (Figure~\ref{fig:score_evolution_augment}), the trajectories are stable and well separated throughout. After the first round, the scores only changed slightly, and the four $(p,L)$ settings yield  nearly identical trajectories  for each model. In contrast, the discard workflow (Figure~\ref{fig:score_evolution_discard}), exhibits markedly different behavior. The estimated scores increase sharply during the initial iterations and subsequently display increasingly large fluctuations. Even under the denser settings, $(0.6,10)$ and $(0.8,20)$, the trajectories continue to drift rather than stabilize. 

Overall, these results indicate that the iterative synthetic data augmentation workflow maintains stable score trajectories and largely preserves the ranking structure, whereas the Discard workflow exhibits substantial instability over repeated iterations.

\begin{figure}[htbp]
    \centering
    \begin{subfigure}{0.49\textwidth}
        \includegraphics[width=\textwidth]{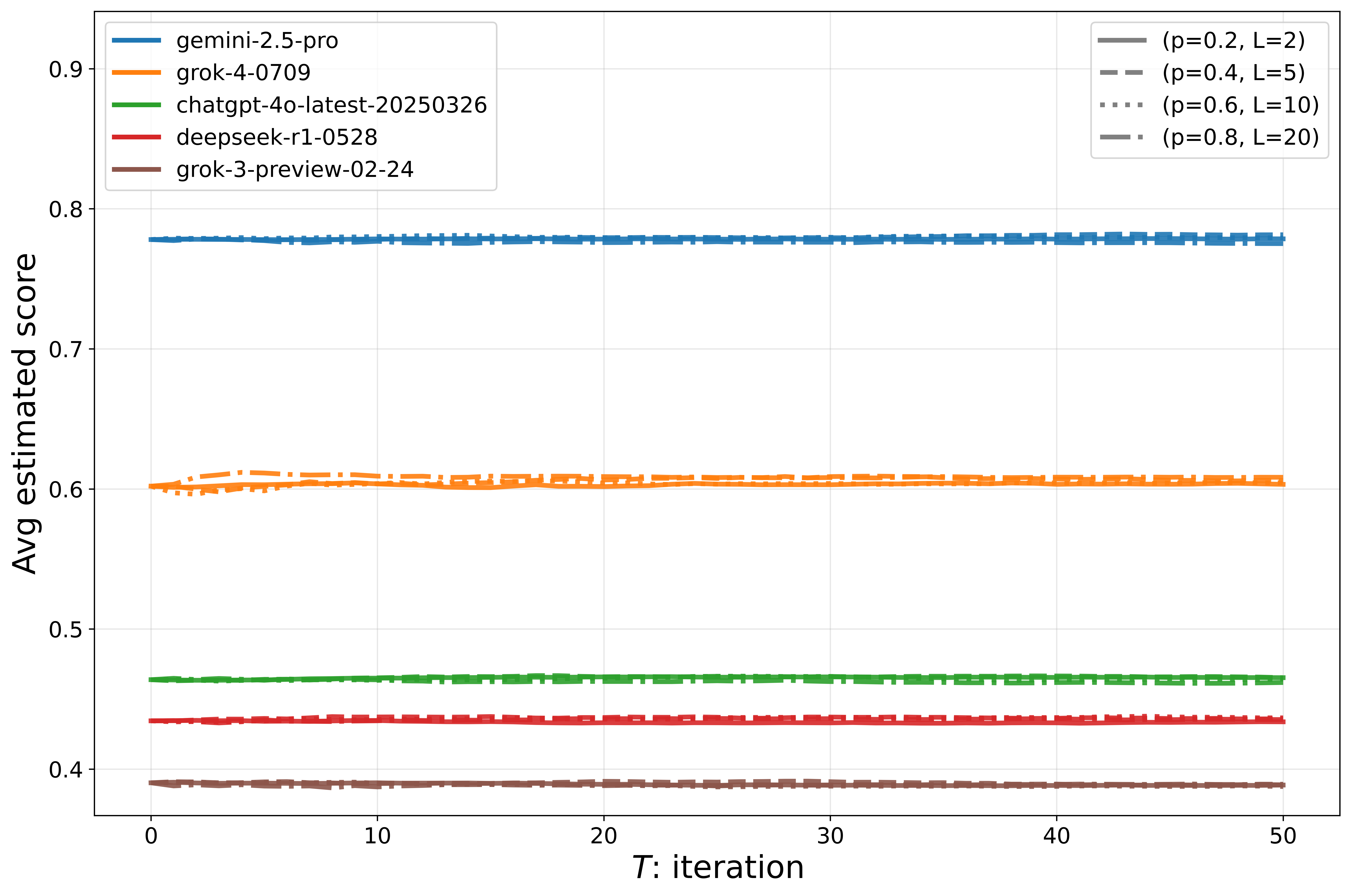}
        \caption{Augmentation workflow}
        \label{fig:score_evolution_augment}
    \end{subfigure}
    \hfill
    \begin{subfigure}{0.49\textwidth}
        \includegraphics[width=\textwidth]{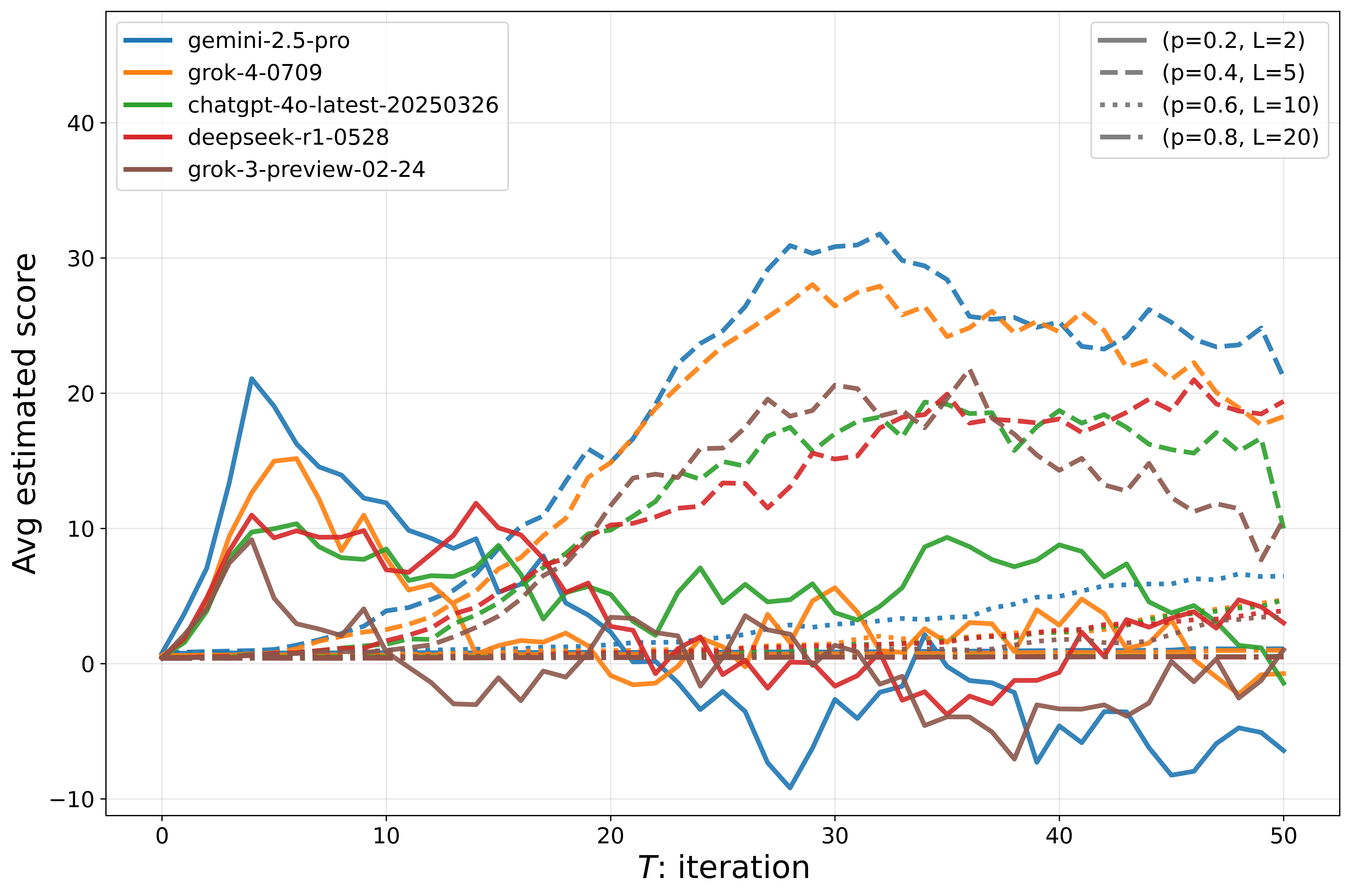}
        \caption{Discard workflow}
        \label{fig:score_evolution_discard}
    \end{subfigure}
    \caption{Score evolution of the top-5 models across 50 iterations under the augmentation and Discard workflows. Colors denote models and line styles denote $(p,L)$ configurations: $(p,L) = (0.2,2)$ solid, $(p,L) = (0.4,5)$ dashed, $(p,L) = (0.6,10)$ dotted, and $(p,L) = (0.8,20)$ dash-dotted. Models are ranked by $\theta^{(1)}$ in descending order. Results are averaged over 200 Monte Carlo simulations with fixed $n = 25$ and
    iteration $T = 50$.}
    \label{fig:score_evolution_both}
\end{figure}

\section{Proof Sketch of Theorem~\ref{thm:iterative_loss}} \label{sec:Proof_Sketch_of_Theorem_1}
In this section, we provide the proof sketch for our main Theorem~\ref{thm:iterative_loss}. The proof proceeds in three steps; see Figure~\ref{fig:proof_structure} and Appendix~\ref{appendix:proof_outline} for a detailed overview. Each step is carried out under the induction hypothesis that the bounds in Theorem~\ref{thm:iterative_loss} hold through iteration $t=T$. We then establish the corresponding bounds at $t=T+1$, thereby closing the induction.
 
\paragraph{Step I: Constant $\ell_\infty$ bound.}

As illustrated in the first panel of Figure~\ref{fig:proof_structure}, we begin with the induction hypothesis that the four bounds hold through iteration $T$, and aim to show that they continue to hold with high probability at iteration $T+1$. We divide the argument into three steps. In \textit{Step~I}, we first show $\norm{\theta^{(T+1)}-\theta^{\ast}}_{\infty} \leq 5$ (see Appendix~\ref{appendix:Step1}). In other words, we show that the iterate $\theta^{(T+1)}$ remains within a constant-sized neighborhood of $\theta^{\ast}$. 

Following a strategy similar to that in \cite{chen2019bspectral,chen2022partial}, we introduce a regularized MLE with a sufficiently small regularization parameter as an intermediate proxy for the unregularized estimator. We analyze the gradient descent trajectory toward the minimizer of this regularized objective starting from $\theta^{\ast}$ and show that the resulting regularized estimator remains bounded with high probability. Since the regularization parameter is chosen to be sufficiently small, we further show that the unregularized estimator stays close to its regularized counterpart, which yields the desired constant-order bound.

A key ingredient in controlling the gradient descent trajectory is a leave-one-out argument, following \cite{chen2019bspectral,chen2022partial,fan2025ranking}. This construction allows us to control the iterates coordinatewise and ensures that no individual coordinate deviates too far from the initial point $\theta^\ast$. Together, these arguments establish the desired $\ell_\infty$ bound and complete \textit{Step~I}.

 
\paragraph{Step II: Optimal rates of $\|\theta^{(T+1)}-\theta^{(T)}\|_2,$ $\|\theta^{(T+1)}-\theta^{(T)}\|_{\infty}$ and $\|\theta^{(T+1)}-\theta^{\ast}\|_{2}$}

In \textit{Step~II}, we control the increment $\theta^{(T+1)}-\theta^{(T)}$ in both the $\ell_2$ and $\ell_\infty$ norms. The argument follows a strategy similar to that in \textit{Step~I}. In particular, we again construct and analyze a gradient descent trajectory. The key difference is that, instead of initializing the trajectory at $\theta^{\ast}$ as in \textit{Step~I}, we initialize it at the previous iterate $\theta^{(T)}$. By the strong convexity established within the bounded region from \textit{Step~I}, the gradient descent iterates converge geometrically to $\theta^{(T+1)}$. At the same time, we show that the entire trajectory remains sufficiently close to its initialization $\theta^{(T)}$. Combining the geometric convergence with this control of the trajectory yields the desired $\ell_2$ and $\ell_\infty$ bounds on the increment $\theta^{(T+1)}-\theta^{(T)}$. We refer interested readers to Section \ref{appendix:Step2} for more details. 

After controlling the increment $\theta^{(T+1)}-\theta^{(T)}$, the bound on $\|\theta^{(T+1)}-\theta^{\ast}\|_{2}$ follows directly from the tower property, together with the induction hypothesis and the results established in the preceding steps. We refer to Section~\ref{appendix:equation_G} for further details.

\paragraph{Step III:~Control $\|\theta^{(T+1)}-\theta^{\ast}\|_{\infty}.$}

Recall that we have the decomposition
\begin{align*}
     \theta^{(T+1)} - \theta^{\ast} = \underbrace{\sum_{t=0}^{T} (\theta^{(t+1)} - \overline{\theta}^{(t+1)})}_{\text{approximation error}} + \underbrace{\sum_{t=0}^{T} ( \overline{\theta}^{(t+1)} - \theta^{(t)})}_{\text{leading term}},
    \end{align*}
Here, $\overline{\theta}^{(t+1)}$, defined in~\eqref{eq:argminLosst_quadratic}, is the minimizer of the quadratic surrogate loss $\overline{\mathcal{L}}^{(\leq t)}$, obtained by locally approximating $\mathcal{L}^{(\leq t)}$ around $\theta^{(t)}$:
\begin{align*}
\overline{\mathcal{L}}^{(\leq t)}(\theta)
=
\mathcal{L}^{(\leq t)}(\theta^{(t)})
+ \nabla \mathcal{L}^{(\leq t)}(\theta^{(t)})^{\top}
\left(\theta-\theta^{(t)}\right)
+ \frac{1}{2}
\left(\theta-\theta^{(t)}\right)^{\top}
\nabla^{2}\mathcal{L}^{(\leq t)}(\theta^{(t)})
\left(\theta-\theta^{(t)}\right).
\end{align*}
Consequently, the increment $\overline{\theta}^{(t+1)}-\theta^{(t)}$ admits the explicit representation $\overline{\theta}^{(t+1)}-\theta^{(t)}
=
-\left[
\nabla^{2}\mathcal{L}^{(\leq t)}(\theta^{(t)})
\right]^{\dagger}
\nabla \ell_t(\theta^{(t)}).$

This explicit representation allows us to directly control the $\ell_{\infty}$ norm of the cumulative leading term,
$\sum_{t=0}^{T}
\left(\overline{\theta}^{(t+1)}-\theta^{(t)}\right),$ 
at the desired order $\mathcal{O}\big(\sqrt{\log n/(npL)}\big)$, by applying the martingale concentration inequality. It therefore remains to show that the cumulative approximation error
$\|\sum_{t=0}^{T}
\left(\theta^{(t+1)}-\overline{\theta}^{(t+1)}\right)\|_{\infty}$
is of a smaller order.

To establish this result, we leverage the technique developed in Theorem~7 of~\cite{fan2024uncertainty}, together with the preceding induction hypotheses. In particular, we show that the approximation error is of smaller order than $\mathcal{O}\big(\sqrt{\log n/(npL)}\big)$ provided that
$\log t \left[
\frac{\sqrt{\log n}}{\sqrt{nL}p}
+
\frac{1}{\sqrt{np}}
\left(
1+\frac{\log n}{\sqrt{np}}
\right)
\right]
=o(1).$\\ This then closes the induction argument.

Altogether, this completes the proof of Theorem~\ref{thm:iterative_loss}. The proof of Theorem~\ref{thm:asy_norm} then follows along similar lines once \textit{Step~III} is established. In particular, the leading term determines the asymptotic variance, while the approximation error is of a smaller order and is therefore asymptotically negligible.

\begin{figure}[htbp]
\centering
\begingroup
\setlength{\abovedisplayskip}{0pt}%
\setlength{\belowdisplayskip}{0pt}%
\setlength{\abovedisplayshortskip}{0pt}%
\setlength{\belowdisplayshortskip}{0pt}%
\scalebox{0.85}{
\begin{tikzpicture}[node distance=0.65cm and 0.35cm,
  aleft/.style={font=\small\itshape, text=red!80!black,
                anchor=east, xshift=-4pt, inner sep=2pt},
  aright/.style={font=\small\itshape, text=red!80!black,
                 anchor=west, xshift=4pt, inner sep=2pt},
]

\node[tbox] (SI){%
  \textcolor{cteal}{\textbf{Step~I (see Appendix~\ref{appendix:Step1})}}
  \begin{align*}
    \norm{\theta^{(T+1)}-\theta^{*}}_{\infty} \leq 5.
  \end{align*}
};

\node[iobox, above=0.65cm of SI, anchor=south] (IN){%
  \parbox{12cm}{%
    \centering
    \textbf{Statistical error bounds at iteration $\bm{t \in [T]}$}
    \begin{alignat*}{2}
      \norm{\theta^{(t)}-\theta^{(t-1)}}_{2}
        &\lesssim \frac{1}{t}\sqrt{\frac{1}{pL}},
      &\qquad
      \norm{\theta^{(t)}-\theta^{(t-1)}}_{\infty}
        &\lesssim \frac{1}{t}\sqrt{\frac{\log n}{npL}},  \\[1pt]
      \norm{\theta^{(t)}-\theta^{*}}_{2}
        &\lesssim \sqrt{\frac{1}{pL}},
      &\qquad
      \norm{\theta^{(t)}-\theta^{*}}_{\infty}
        &\lesssim \sqrt{\frac{\log n}{npL}}.
    \end{alignat*}
  }%
};

\node[wbox, below=0.65cm of SI.south, anchor=north] (SII){%
  \textcolor{cpurple}{\textbf{Step~II (see Appendix~\ref{appendix:Step2})}}
  \begin{alignat*}{2}
    \norm{\theta^{(T+1)}-\theta^{(T)}}_{2}
      \lesssim \frac{1}{T+1}\sqrt{\frac{1}{pL}},
    \qquad
    \norm{\theta^{(T+1)}-\theta^{(T)}}_{\infty}
    \lesssim \frac{1}{T+1}\sqrt{\frac{\log n}{npL}},
    \qquad
    \norm{\theta^{(T+1)}-\theta^{*}}_{2}
      \lesssim \sqrt{\frac{1}{pL}}.
  \end{alignat*}
};

\node[pbox, below=0.65cm of SII.south, anchor=north] (SIII){%
  \textcolor{cpurple}{\textbf{Step~III (see Appendix~\ref{appendix:Step3})}}
  \begin{align*}
    \norm{\theta^{(T+1)}-\theta^{*}}_{\infty}
      \lesssim \sqrt{\frac{\log n}{npL}}.
  \end{align*}
};

\node[iobox, below=0.65cm of SIII, anchor=north] (OUT){%
  \parbox{12cm}{%
    \centering
    \textbf{Statistical error bounds at iteration $\bm{t = T+1}$}%
  }
};

\draw[arr] (IN.south -| SI.north)
    -- node[aleft]  {}
       node[aright] {1st leave-one-out argument}
    (SI.north);

\draw[arr] (SI.south)
    -- node[aright] {2nd leave-one-out argument}
    (SI.south |- SII.north);

\draw[arr] (SII.south)
    -- node[aleft]  {quadratic proxy $\overline{\theta}^{(T+1)}$ (see \S\ref{subsec:asymptotic})}
       node[aright] {3rd leave-one-out argument}
    (SII.south |- SIII.north);

\draw[arr] (SIII.south) 
    -- node[aright] {}
    (SIII.south |- OUT.north);

\coordinate (Lbend) at ($(SII.west)+(-1.2, 0)$);
\draw[arr]
    (SII.west)
    -- (Lbend)
    -- node[aright, pos=0.5] {}
       (Lbend |- OUT.west)
    -- (OUT.west);

\node[below=0.55cm of OUT.south west, anchor=north west,
      font=\small] (LEG){%
  \begin{tikzpicture}
    \draw[draw=cteal, line width=0.5pt, fill=fteal, rounded corners=2pt]
      (0,0) rectangle (0.40,0.22);
    \node[font=\small, anchor=west] at (0.50, 0.11)
      {Stage~1: constant $\ell_\infty$ bound (Step~I)};
    \draw[draw=cpurple, line width=0.5pt, fill=fpurple, rounded corners=2pt]
      (0,-0.38) rectangle (0.40,-0.16);
    \node[font=\small, anchor=west] at (0.50, -0.27)
      {Stage~2: optimal $\ell_\infty$ rate (Steps~II--III)};
  \end{tikzpicture}%
};

\end{tikzpicture}
}
\endgroup

\caption{Proof outline for Theorem~\ref{thm:iterative_loss}.}
\label{fig:proof_structure}
\end{figure}

\section{Discussion}
This paper studies entity ranking under iterative synthetic data augmentation and shows that such a workflow can avoid model collapse under the BTL model. Specifically, we establish that the iterative MLE achieves optimal $\ell_2$ and $\ell_\infty$ statistical rates under a sparse comparison graph, even when the fraction of real data vanishes over iterations. The main technical challenge arises from the temporal dependence induced by iterative retraining. To address this challenge, we develop a coupled induction framework incorporating several leave-one-out constructions, which allows us to control error propagation across iterations and further establish the asymptotic normality of the iterative MLE. Numerical experiments, together with an empirical study on the Arena Human Preference 140K dataset, further support our theoretical findings. To the best of our knowledge, this is the first work to rigorously establish that model collapse can be avoided in ranking problems from a high-dimensional and non-asymptotic perspective.

There are several promising directions for future research. First, it would be valuable to incorporate covariates into the ranking framework \cite{fan2024covariate,li2025efficient,dong2026statistical}. Second, it remains an open question whether model collapse can be avoided under the sparsest graph regime. In this paper, our main theoretical results require $\sqrt{nL}p \gtrsim (\log n)^{3/2}$, which excludes the sparsest regime with $p=\cO(\log n/n)$. Extending our theoretical analysis to this regime remains technically challenging and warrants further investigation. Third, recent work \cite{wu2026does} studies model collapse under a graphical interactive training framework. It would be interesting to extend our analysis to this setting and investigate how AI training and evaluation can be integrated with human preferences collected from heterogeneous and interactive sources.

\clearpage
\bibliographystyle{apalike}
\bibliography{main}
\clearpage
\appendix
\section{Preliminaries} 
\label{sec:Preliminaries}
In this Appendix, we articulate some necessary facts of the gradient and the Hessian of the loss function defined in \eqref{eq:argminLossT}. To begin with, they can be expressed as
\begin{align*}
    \nabla \mathcal{L}^{(\leq T)}(\theta) 
    & := 
    \sum_{t=0}^{T} \nabla \ell_{t}(\theta) 
    =
    \sum_{t=0}^{T} \sum_{(i,j) \in \mathcal{E}^{(t)},\, i>j} \left\{ - y_{j,i}^{(t)}+ \frac{e^{\theta_i}}{e^{\theta_i} + e^{\theta_j}} \right\} (\boldsymbol{e}_{i} - \boldsymbol{e}_{j}),
    \\
    \nabla^{2} \mathcal{L}^{(\leq T)}(\theta) 
    &:= 
    \sum_{t=0}^{T} \nabla^{2} \ell_{t}(\theta) = \sum_{t=0}^{T} \sum_{(i,j) \in \mathcal{E}^{(t)},\, i>j} \frac{e^{\theta_i}e^{\theta_j}}{(e^{\theta_i} + e^{\theta_j})^{2}}(\boldsymbol{e}_{i} - \boldsymbol{e}_{j})(\boldsymbol{e}_{i} - \boldsymbol{e}_{j})^{\top},
\end{align*}
Here $\boldsymbol{e}_{1}, \cdots , \boldsymbol{e}_{n}$ stand for the canonical basis vectors in $\mathbb{R}^{n}$. The gradient vector $\nabla \mathcal{L}^{(\leq T)}(\theta)$ is controlled at the truth $\theta^{\ast}$ as follows.

\begin{lemma} \label{lemma:gradient_T_H}
    For any iteration $T \in \mathbb{N}$, the following event
    \begin{align}
        \mathcal{A}_{0}^{(T)} := \Bigg\lbrace \norm{\nabla \mathcal{L}^{(\le T)}(\theta^{(T)})}_2 = \norm{\nabla \ell_{T}(\theta^{(T)}) }_{2} \lesssim \sqrt{\frac{n^2 p}{L}} \Bigg\rbrace
    \end{align}
    occurs with probability exceeding $1 - \mathcal{O}(n^{-10})$.
\end{lemma}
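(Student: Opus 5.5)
The plan has two parts: first reduce the gradient of the cumulative loss to the gradient of the newest batch alone, then bound that single-batch gradient by conditioning on everything that came before it.

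\emph{The identity.} For $T\ge 1$, $\theta^{(T)}$ minimizes $\mathcal{L}^{(\le T-1)}$ over $\Theta$. Every gradient $\nabla\ell_t(\theta)$ is a sum of vectors $\boldsymbol{e}_i-\boldsymbol{e}_j$, so it lies in $\Theta$. The first-order condition therefore reads $\nabla\mathcal{L}^{(\le T-1)}(\theta^{(T)})=\mathbf{0}$, and hence $\nabla\mathcal{L}^{(\le T)}(\theta^{(T)})=\nabla\ell_T(\theta^{(T)})$. For $T=0$ this identity is trivial, since $\theta^{(0)}=\theta^\ast$ and there is no earlier loss. If existence of the minimizer is needed, it can be taken on the high-probability event that the graph $\mathcal{G}^{(0)}$ is connected.

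\emph{Conditioning.} Let $\mathcal{F}_{T-1}$ be the $\sigma$-field generated by all graphs and comparisons up to round $T-1$. Then $\theta^{(T)}$ is $\mathcal{F}_{T-1}$-measurable. The round-$T$ graph $\mathcal{G}^{(T)}$ and outcomes $y^{(T,l)}$ are fresh: conditionally on $\mathcal{F}_{T-1}$, the graph is an independent $\mathcal{G}(n,p)$, and the outcomes are independent Bernoulli variables with means $\sigma(\theta_i^{(T)}-\theta_j^{(T)})$. This matches the convention $\mathbb{P}(i \text{ beats } j)=e^{\theta_i}/(e^{\theta_i}+e^{\theta_j})$ appearing in the gradient. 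Consequently, conditionally on $\mathcal{F}_{T-1}$ and $\mathcal{G}^{(T)}$,
\[
\nabla\ell_T(\theta^{(T)})=\sum_{(i,j)\in\mathcal{E}^{(T)},\,i>j}\xi_{ij}(\boldsymbol{e}_i-\boldsymbol{e}_j),
\]
where the $\xi_{ij}=\sigma_{ij}-y^{(T)}_{j,i}$ are independent, mean zero, bounded by $1$, and each is an average of $L$ i.i.d. centered Bernoullis, so $\xi_{ij}$ is sub-Gaussian with variance proxy $\lesssim 1/L$. This is exactly the single-round gradient at the true parameter studied in \cite{chen2019bspectral}, with $\theta^\ast$ replaced by $\theta^{(T)}$. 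Crucially, no boundedness of $\theta^{(T)}$ is needed, because the variance bound $\sigma(1-\sigma)\le 1/4$ holds uniformly in the parameter.

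\emph{The norm bound.} The $i$-th coordinate of the gradient is $\sum_{j:(i,j)\in\mathcal{E}^{(T)}}\pm\xi_{ij}$, a sum of $d_i$ independent sub-Gaussian terms. Its variance is $\lesssim d_i/L$, and Hoeffding gives $|[\nabla\ell_T]_i|\lesssim\sqrt{d_i\log n/L}$ with probability $1-O(n^{-11})$. This is coordinatewise too crude, since summing squares would leave an extra $\log n$ factor. Instead I will bound the squared norm directly. Its conditional expectation is
\[
\mathbb{E}\|\nabla\ell_T\|_2^2=\sum_i \mathbb{E}[\nabla\ell_T]_i^2\le \sum_i d_i/(4L)=|\mathcal{E}^{(T)}|/(2L),
\]
and Chernoff on the graph gives $|\mathcal{E}^{(T)}|\lesssim n^2p$ with probability $1-O(n^{-10})$ whenever $np\gtrsim\log n$. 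Concentration of $\|\nabla\ell_T\|_2$ around $\sqrt{\mathbb{E}\|\nabla\ell_T\|_2^2}$ follows by writing $\nabla\ell_T=B^\top\xi$, with $B$ the incidence matrix of $\mathcal{G}^{(T)}$ and $\xi$ a vector of independent sub-Gaussians of scale $1/\sqrt L$. Hanson--Wright, or the Lipschitz concentration of $\xi\mapsto\|B^\top\xi\|_2$ with Lipschitz constant $\|B\|\le\sqrt{2d_{\max}}\lesssim\sqrt{np}$, then yields
\[
\|\nabla\ell_T\|_2\lesssim\sqrt{n^2p/L}+\sqrt{np\log n/L}\lesssim\sqrt{n^2p/L}
\]
with probability $1-O(n^{-10})$. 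Since all these bounds hold conditionally on $\mathcal{F}_{T-1}$ uniformly in the realization, integrating out gives the unconditional probability.

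The only genuinely new issue compared to the single-round result is the dependence of $\theta^{(T)}$ on the past data, and the filtration argument disposes of it. The main point requiring care is verifying the optimality identity, which rests on the projection structure of the gradients. A secondary point is obtaining the sharp $\sqrt{n^2p/L}$ rate rather than one with an extra logarithm, which I expect to handle via the second-moment-plus-Lipschitz-concentration route above.
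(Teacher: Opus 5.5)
Your argument is correct, but the concentration step differs from the paper's. The paper uses a covering argument. With a $1/2$-net $\mathcal{V}$ of the unit ball ($\log|\mathcal{V}|\le C'n$), it bounds $\|\nabla\ell_T(\theta^{(T)})\|_2\le 2\max_{v\in\mathcal{V}}v^\top\nabla\ell_T(\theta^{(T)})$. It then applies scalar Hoeffding, conditional on $\mathcal{G}^{(T)}$, to each $v^\top\nabla\ell_T(\theta^{(T)})=\sum_{i<j}G^{(T)}_{ij}(v_i-v_j)\{\cdot\}$, with variance proxy $v^\top\boldsymbol{L}_{\mathcal{G}}^{(T)}v/L\le\lambda_{\max}(\boldsymbol{L}_{\mathcal{G}}^{(T)})/L$. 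It union-bounds over the $e^{C'n}$ net points and finishes with the matrix-Chernoff bound $\lambda_{\max}(\boldsymbol{L}_{\mathcal{G}}^{(T)})\le\frac{3}{2}np$.

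You instead compute the conditional second moment $\mathbb{E}\|\nabla\ell_T\|_2^2\le|\mathcal{E}^{(T)}|/(2L)$ and show that $\|\nabla\ell_T\|_2$ concentrates around its square root. The net argument needs nothing beyond Hoeffding and loses nothing here. Its entropy cost $n$ coincides with the effective rank $\mathrm{tr}(\boldsymbol{L}_{\mathcal{G}}^{(T)})/\lambda_{\max}(\boldsymbol{L}_{\mathcal{G}}^{(T)})\asymp n$ of an Erd\H{o}s--R\'enyi Laplacian. Your route identifies the leading term exactly and shows the fluctuation is of lower order, at the price of Hanson--Wright or Talagrand-type tools.

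You are also more careful than the paper on two points it takes for granted. The first is the identity $\nabla\mathcal{L}^{(\le T)}(\theta^{(T)})=\nabla\ell_T(\theta^{(T)})$, which the paper introduces with ``Observe that''. The second is that $\theta^{(T)}$ depends on earlier rounds, which the paper handles only implicitly by conditioning on $\mathcal{G}^{(T)}$. Your conditioning on $\mathcal{F}_{T-1}$ is the right justification.

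Two small cautions.

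First, independent sub-Gaussian coordinates alone do not give dimension-free concentration for arbitrary Lipschitz functions. Applying Talagrand's convex-Lipschitz inequality to the averages $\xi_{ij}\in[-1,1]$ would give fluctuations of order $\sqrt{np\log n}$, losing the $1/\sqrt{L}$. To get your $\sqrt{np\log n/L}$, apply it instead to the $L|\mathcal{E}^{(T)}|$ individual Bernoulli outcomes, where the Lipschitz constant is $\sqrt{\lambda_{\max}(\boldsymbol{L}_{\mathcal{G}}^{(T)})/L}$. Alternatively, use Hanson--Wright on $\xi^\top BB^\top\xi$. Its deviation is $\lesssim L^{-1}\bigl(\|\boldsymbol{L}_{\mathcal{G}}^{(T)}\|_F\sqrt{\log n}+\lambda_{\max}(\boldsymbol{L}_{\mathcal{G}}^{(T)})\log n\bigr)$, and since $\|\boldsymbol{L}_{\mathcal{G}}^{(T)}\|_F\lesssim\sqrt{n}\,np$ this is indeed $o(n^2p/L)$.

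Second, connectivity of $\mathcal{G}^{(0)}$ gives uniqueness but not existence of the BTL minimizer. Existence also needs that no subset of items wins every comparison against its complement. This is immaterial here, since the lemma presupposes $\theta^{(T)}$, and the paper's Step~I supplies existence through the constrained-MLE argument.
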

\begin{proof}
    See \S\ref{appendix:gradient_T_H} for a detailed proof.
\end{proof}
Before proceeding to analyzing the Hessian matrix $\nabla^{2} \mathcal{L}^{(\leq T)}(\theta)$, we first illustrate the concentration of the vertex degrees in an Erdo\H{o}s-R\'enyi random graph. 
\begin{lemma}[Degree Concentration] \label{lemma:degree_concentration}
    Suppose that $\mathcal{G}^{(T)} \sim \mathcal{G}_{n,p}$. Let $d^{(T)}_{i}$ be the degree of node $i$, $d^{(T)}_{\operatorname{min}} = \min_{1 \leq i \leq n} d^{(T)}_{i}$ and $d^{(T)}_{\operatorname{max}} = \max_{1 \leq i \leq n} d^{(T)}_{i}$. If $p \geq \frac{c_{0} \log n}{n}$ for some sufficiently large constant $c_{0} > 0$, then the following event
    \begin{align*}
        \mathcal{A}_{1}^{(T)} := \Big\lbrace \frac{1}{2}np \leq d^{(T)}_{\operatorname{min}} \leq d^{(T)}_{\operatorname{max}} \leq \frac{3}{2}np \Big\rbrace
    \end{align*}
    happens with probability exceeding $1 - \mathcal{O}(n^{-10})$ when $n$ is large enough.
\end{lemma}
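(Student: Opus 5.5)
The plan is to prove the bound for each vertex separately and then take a union bound over $i \in [n]$.

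\textbf{Step 1 (law of each degree).} Fix $i \in [n]$. The degree is
\[
d_i^{(T)} = \sum_{j \neq i} G_{ij}^{(T)},
\]
a sum of $n-1$ independent $\operatorname{Bernoulli}(p)$ variables. Its mean is $\mu := (n-1)p$. Because $(n-1)/n \to 1$, for $n$ large enough we have $\tfrac{9}{10} np \le \mu \le np$.

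\textbf{Step 2 (multiplicative Chernoff).} For $\delta \in (0,1)$,
\[
\mathbb{P}\big(|d_i^{(T)} - \mu| \ge \delta \mu\big) \le 2\exp(-\delta^2 \mu / 3).
\]
We choose $\delta$ so that a $\delta$-deviation from $\mu$ stays inside the target window.

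\emph{Lower side.} We need $(1-\delta)\mu \ge np/2$. This holds if $(1-\delta)\cdot \tfrac{9}{10} \ge \tfrac12$, i.e.\ $\delta \le 4/9$.

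\emph{Upper side.} We need $(1+\delta)\mu \le \tfrac32 np$. Since $\mu \le np$, this holds whenever $\delta \le 1/2$.

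So we take $\delta = 2/5$. The failure probability for vertex $i$ is then at most
\[
2\exp\!\big(-\tfrac{4}{75}\cdot \tfrac{9}{10}\, np\big) \le 2\exp\!\big(-\tfrac{3}{70}\, c_0 \log n\big) = 2n^{-3c_0/70}.
\]
Here we used $np \ge c_0 \log n$.

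\textbf{Step 3 (union bound).} Taking a union bound over the $n$ vertices gives a total failure probability of at most $2n^{1-3c_0/70}$. This is $\mathcal{O}(n^{-10})$ once $c_0 \ge 257$. This is the reason the lemma requires $c_0$ to be sufficiently large.

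On the complement of the failure events, every degree lies in $[np/2,\, 3np/2]$. In particular $d^{(T)}_{\min}$ and $d^{(T)}_{\max}$ both satisfy the required bounds, so $\mathcal{A}_1^{(T)}$ holds with probability $1 - \mathcal{O}(n^{-10})$.

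\textbf{Main point of care.} There is no real obstacle here. The only bookkeeping is that the constant $c_0$ must be tied to the target exponent $10$, and that the $(n-1)$ versus $n$ discrepancy is absorbed through the "$n$ large enough" clause.
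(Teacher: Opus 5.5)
Your proof is correct and is the argument the paper intends. The paper omits the proof, citing Bernstein's inequality or multiplicative Chernoff bounds applied to each degree, followed by a union bound over the $n$ vertices. Your constant bookkeeping also checks out: the choice $\delta = 2/5$, the absorption of the $(n-1)$ versus $n$ discrepancy, and the threshold $c_0 \ge 257$ needed to reach the $n^{-10}$ rate are all correct.
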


\begin{proof}
    The proof is a direct consequence of Bernstein’s inequality (or multiplicative Chernoff bounds) and a union bound argument and is hence omitted.
\end{proof}

Throughout the paper, we assume the conditions stated in Lemma~\ref{lemma:degree_concentration} hold. 
In addition, we denote by $\boldsymbol{L}_{\mathcal{G}}^{(T)} = \sum_{(i,j) \in \mathcal{E}^{(T)}, i > j} (\boldsymbol{e}_{i} - \boldsymbol{e}_{j}) (\boldsymbol{e}_{i} - \boldsymbol{e}_{j})^{\top}$ the (unnormalized) Laplacian matrix associated with $\mathcal{G}^{(T)}$. For any matrix $A$, we let
\begin{align}
    \lambda_{\operatorname{min}, \perp} \left( A \right) := \min \lbrace \mu \mid z^{\top}A z \geq \mu \norm{z}_{2}^{2} \text{ for all $z$ with } \boldsymbol{1}^{\top}z = 0 \rbrace,
\end{align}
namely, the smallest eigenvalue when restricted to vectors orthogonal to $\boldsymbol{1}$. We next summarize the smoothness and the strong convexity of the function $\mathcal{L}_{\lambda_{T}}^{\leq (T)}(\theta)$ in Lemma~\ref{lemma:largesteigen} and Lemma~\ref{lemma:smallesteigen}, respectively.

\begin{lemma} \label{lemma:largesteigen}
    Suppose event $\mathcal{A}_{1}^{(T)}$ holds, for any iteration $T \in \mathbb{N}$, one has
    \begin{align}
        \lambda_{\operatorname{max}} \left( \nabla^{2} \mathcal{L}^{\leq (T)}(\theta) \right) \leq (T+1)np, \quad \forall \theta \in \mathbb{R}^{n}.
    \end{align}
\end{lemma}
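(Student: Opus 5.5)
The plan is to bound the Hessian entrywise-by-edge and reduce the eigenvalue estimate to a bound on the largest eigenvalue of a graph Laplacian. Each summand in
\[
\nabla^{2}\mathcal{L}^{(\leq T)}(\theta)=\sum_{t=0}^{T}\sum_{(i,j)\in\mathcal{E}^{(t)},\,i>j}\frac{e^{\theta_i}e^{\theta_j}}{(e^{\theta_i}+e^{\theta_j})^{2}}(\boldsymbol{e}_i-\boldsymbol{e}_j)(\boldsymbol{e}_i-\boldsymbol{e}_j)^{\top}
\]
is positive semidefinite. Its scalar weight satisfies
\[
\frac{e^{\theta_i}e^{\theta_j}}{(e^{\theta_i}+e^{\theta_j})^{2}}=\sigma(\theta_i-\theta_j)\bigl(1-\sigma(\theta_i-\theta_j)\bigr)\le \tfrac14
\]
for every $\theta\in\mathbb{R}^n$, by AM--GM, $(e^{\theta_i}+e^{\theta_j})^2\ge 4e^{\theta_i}e^{\theta_j}$. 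Hence, in the Loewner order, $\nabla^{2}\mathcal{L}^{(\leq T)}(\theta)\preceq \tfrac14\sum_{t=0}^{T}\boldsymbol{L}_{\mathcal{G}}^{(t)}$, uniformly in $\theta$. The largest eigenvalue is then bounded by subadditivity:
\[
\lambda_{\max}\bigl(\nabla^{2}\mathcal{L}^{(\leq T)}(\theta)\bigr)\le \tfrac14\sum_{t=0}^{T}\lambda_{\max}\bigl(\boldsymbol{L}_{\mathcal{G}}^{(t)}\bigr).
\]

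Next I bound the Laplacian of a single round. For any graph Laplacian,
\[
z^\top \boldsymbol{L}_{\mathcal{G}}^{(t)}z=\sum_{(i,j)\in\mathcal{E}^{(t)}}(z_i-z_j)^2\le 2\sum_{(i,j)\in\mathcal{E}^{(t)}}(z_i^2+z_j^2)=2\sum_i d_i^{(t)}z_i^2\le 2d^{(t)}_{\max}\|z\|_2^2 .
\]
Equivalently, by Gershgorin, $\lambda_{\max}(\boldsymbol{L}_{\mathcal{G}}^{(t)})\le 2d^{(t)}_{\max}$. On the degree event of Lemma~\ref{lemma:degree_concentration}, $d^{(t)}_{\max}\le \tfrac32 np$. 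This gives $\lambda_{\max}(\boldsymbol{L}_{\mathcal{G}}^{(t)})\le 3np$, so each round contributes at most $\tfrac34 np\le np$. Summing over the $T+1$ rounds yields the claimed $(T+1)np$.

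The only real subtlety is bookkeeping about events. The statement assumes $\mathcal{A}_1^{(T)}$, but the bound is needed for all rounds $t\le T$. I will therefore interpret the hypothesis as the degree concentration event holding for every $t\in\{0,\dots,T\}$, that is, $\bigcap_{t\le T}\mathcal{A}_1^{(t)}$. This intersection holds with probability $1-\mathcal{O}(Tn^{-10})$ by a union bound, which is consistent with the range $T\lesssim n^3/(p\log n)$ used in Theorem~\ref{thm:iterative_loss}.

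Constants are not an obstacle: there is slack, since $\tfrac14\cdot 2\cdot\tfrac32=\tfrac34<1$. The argument never uses $\theta$ beyond the uniform bound $\sigma(1-\sigma)\le\tfrac14$, so the result holds for all $\theta\in\mathbb{R}^n$ as stated.
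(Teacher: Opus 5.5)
Your proof is correct and follows the paper's argument: bound the logistic weight by $\tfrac14$, dominate the Hessian by $\tfrac14\sum_{t=0}^{T}\boldsymbol{L}_{\mathcal{G}}^{(t)}$, and then use $\lambda_{\max}(\boldsymbol{L}_{\mathcal{G}}^{(t)})\le 2d^{(t)}_{\max}$ together with the degree event for each round. You read the hypothesis as $\bigcap_{t\le T}\mathcal{A}_1^{(t)}$, which is exactly what the paper implicitly uses when it invokes $\mathcal{A}_1^{(t)}$ for every $t$; your bound $d^{(t)}_{\max}\le\tfrac32 np$ is also the one the event actually provides, slightly tighter than the paper's stated $2np$.
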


\begin{proof}
    Note that $\frac{e^{\theta_i}e^{\theta_j}}{(e^{\theta_i} + e^{\theta_j})^{2}} \leq \frac{1}{4}$. It follows immediately from the Hessian $\nabla^{2} \mathcal{L}^{\leq (T)}(\theta)$ that
    \begin{align*}
        \lambda_{\text{max}} \left( \nabla^{2} \mathcal{L}^{\leq (T)}(\theta) \right) \leq \frac{1}{4} \sum_{t=0}^{T} \norm{\boldsymbol{L}_{\mathcal{G}}^{(t)}} \leq \frac{1}{2} \sum_{t=0}^{T} d^{(t)}_{\operatorname{max}},
    \end{align*}
    where $d^{(t)}_{\operatorname{max}}$ is the maximum vertex degree in the graph $\mathcal{G}^{(t)}$. On the event $\mathcal{A}_{1}^{(t)}$ we have $d_{\operatorname{max}} \leq 2np$, which completes the proof.
\end{proof}

\begin{lemma} \label{lemma:smallesteigen}
    Suppose event $\mathcal{A}_{1}^{(T)}$ holds, for any iteration $T \in \mathbb{N}$, with probability exceeding $1 - \mathcal{O}\left(n^{-10} \right)$ one has
    \begin{align}
        \lambda_{\operatorname{min}, \perp} \left( \nabla^{2} \mathcal{L}^{\leq (T)}(\theta) \right) \geq \frac{(T+1)np}{8 \kappa e^{2C}}
    \end{align}
    simultaneously for all $\theta$ obeying $\norm{\theta - \theta^{\ast}}_{\infty} \leq C$ for some $C \geq 0$.
\end{lemma}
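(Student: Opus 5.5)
The plan is to reduce the bound to a spectral-gap statement for the Laplacian of a single Erd\H{o}s--R\'enyi graph, and then sum over the $T+1$ independent rounds. First, fix $\theta$ with $\norm{\theta-\theta^{\ast}}_{\infty}\le C$. Since $\theta^{\ast}\in[\theta_{\min},\theta_{\max}]^n$, every difference satisfies $|\theta_i-\theta_j|\le |\theta_i^{\ast}-\theta_j^{\ast}|+2C\le \log\kappa+2C$, where $\kappa:=e^{\theta_{\max}-\theta_{\min}}$ is the condition number. Then for each edge
\begin{align*}
\frac{e^{\theta_i}e^{\theta_j}}{(e^{\theta_i}+e^{\theta_j})^2}
=\frac{e^{\theta_i-\theta_j}}{(1+e^{\theta_i-\theta_j})^2}
\ge \frac{e^{-|\theta_i-\theta_j|}}{4}
\ge \frac{1}{4\kappa e^{2C}}.
\end{align*}
This lower bound is deterministic and uniform over all $\theta$ in the $\ell_\infty$-ball, so the ``simultaneously for all $\theta$'' part costs nothing. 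It yields the Loewner ordering
\begin{align*}
\nabla^{2}\mathcal{L}^{(\le T)}(\theta)\succeq \frac{1}{4\kappa e^{2C}}\sum_{t=0}^{T}\boldsymbol{L}_{\mathcal{G}}^{(t)} .
\end{align*}
Since every term annihilates $\boldsymbol{1}$, it suffices to show $\lambda_{\min,\perp}\big(\sum_{t=0}^T \boldsymbol{L}^{(t)}_{\mathcal{G}}\big)\ge (T+1)np/2$.

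Second, lower-bound the restricted smallest eigenvalue of the summed Laplacians. The cleanest route is superadditivity: $\lambda_{\min,\perp}(\sum_t A_t)\ge \sum_t \lambda_{\min,\perp}(A_t)$ for PSD matrices sharing the kernel direction $\boldsymbol{1}$. This reduces the claim to showing $\lambda_{\min,\perp}(\boldsymbol{L}^{(t)}_{\mathcal{G}})\ge np/2$ for each $t$. For a single round, write $\boldsymbol{L}^{(t)}_{\mathcal{G}}=D^{(t)}-G^{(t)}$. On $z\perp\boldsymbol{1}$ we have
\begin{align*}
z^\top \boldsymbol{L}^{(t)}_{\mathcal{G}} z \ge d^{(t)}_{\min}\norm{z}_2^2 - z^\top(G^{(t)}-p\boldsymbol{1}\boldsymbol{1}^\top + p I)z \ge \big(d_{\min}^{(t)}-\norm{G^{(t)}-\mathbb{E}G^{(t)}}-p\big)\norm{z}_2^2 .
\end{align*}
On $\mathcal{A}_1^{(t)}$ we have $d_{\min}^{(t)}\ge np/2$. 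The standard spectral-norm bound for sparse random matrices (e.g.\ matrix Bernstein, or the Feige--Ofek/Bandeira--van Handel bound) gives $\norm{G^{(t)}-\mathbb{E}G^{(t)}}\lesssim\sqrt{np}$ with probability $1-\mathcal{O}(n^{-10})$ when $np\ge c_0\log n$. This leaves a gap slightly below $np/2$. To recover constants, I would instead invoke the known result (as in \cite{chen2019bspectral}) that $\lambda_{\min,\perp}(\boldsymbol{L}_{\mathcal{G}})\ge np/2$ with probability $1-\mathcal{O}(n^{-10})$ for $p\ge c_0\log n/n$, with $c_0$ large. Combining gives the constant $\frac{1}{4\kappa e^{2C}}\cdot\frac{(T+1)np}{2}=\frac{(T+1)np}{8\kappa e^{2C}}$.

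The main obstacle is the probability bookkeeping across the $T+1$ rounds. A naive union bound over all rounds gives failure probability $\mathcal{O}((T+1)n^{-10})$, not $\mathcal{O}(n^{-10})$. I would handle this by avoiding per-round events and applying matrix Bernstein/Chernoff directly to the sum $\sum_t \boldsymbol{L}^{(t)}_{\mathcal{G}}$ restricted to $\boldsymbol{1}^\perp$. This sum is a sum of $(T+1)\binom{n}{2}$ independent rank-one PSD terms with mean $(T+1)np\,\mathcal{P}$ and each term of norm at most $2$. The matrix Chernoff lower tail gives
\begin{align*}
\lambda_{\min,\perp}\ge (1-\delta)(T+1)np
\end{align*}
with failure probability $n\exp(-\delta^2(T+1)np/4)\le n^{-10}$ once $np\ge c_0\log n$, uniformly in $T$. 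Taking $\delta=1/2$ yields exactly the needed $(T+1)np/2$ and makes the stated probability hold for each $T$ without any dependence on the number of rounds.
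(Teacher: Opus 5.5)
Your proof is correct, and it differs from the paper's in one step. The first step is exactly the paper's Lemma~\ref{lemma:Hessian_eigenvalue_min2}: the uniform per-edge bound $\frac{e^{\theta_i}e^{\theta_j}}{(e^{\theta_i}+e^{\theta_j})^2}\ge \frac{1}{4\kappa e^{2C}}$ on the $\ell_\infty$-ball gives the Loewner comparison with $\frac{1}{4\kappa e^{2C}}\sum_{t\le T}\boldsymbol{L}_{\mathcal{G}}^{(t)}$.

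The difference is in the probabilistic step. The paper applies matrix Chernoff to each single-round Laplacian (Lemma~\ref{lemma:L_eigenvalue_min2}), getting $\lambda_{\min,\perp}(\boldsymbol{L}_{\mathcal{G}}^{(t)})\ge np/2$ with probability $1-\mathcal{O}(n^{-10})$, and then adds the $T+1$ per-round bounds. You instead apply matrix Chernoff once to $R\big(\sum_{t\le T}\boldsymbol{L}_{\mathcal{G}}^{(t)}\big)R^\top$. That matrix is a sum of $(T+1)\binom{n}{2}$ independent PSD summands of norm at most $2$ with mean $(T+1)np\,\boldsymbol{I}_{n-1}$. This is legitimate because the comparison graphs are i.i.d.\ and do not depend on the synthetic labels.

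Each route buys something different:
\begin{itemize}
\item The paper's route is modular: the same single-graph lemma also supplies the $\lambda_{\max}$ and degree bounds used elsewhere. But summing per-round bounds requires all $T+1$ per-round events at once, so the honest failure probability is $\mathcal{O}((T+1)n^{-10})$. The lemma's stated $1-\mathcal{O}(n^{-10})$ suppresses this factor.
\item Your aggregated bound has failure probability $(n-1)e^{-(T+1)np/16}$. This is $\mathcal{O}(n^{-10})$ uniformly in $T$ once $np\ge c_0\log n$, and it does not use $\mathcal{A}_1^{(T)}$ at all. Because it decays geometrically in $T$, it even survives a union bound over all $T\in\mathbb{N}$ at no extra cost.
\end{itemize}

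You can drop the intermediate detour through $d^{(t)}_{\min}-\norm{G^{(t)}-\mathbb{E}G^{(t)}}$. It is not needed, and its correction term has the wrong sign: on $\boldsymbol{1}^\perp$ one has $z^\top\mathbb{E}G^{(t)}z=-p\norm{z}_2^2$, so the term is $+p$, not $-p$.
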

\begin{proof}
    We refer to \S\ref{appendix:smallesteigen} for the detailed proof.
\end{proof}
\section{Proof Outline of Theorem~\ref{thm:iterative_loss}} 
\label{appendix:proof_outline}
\subsection{The Regularized MLE} \label{sec:regularized}
To understand the statistical error of $\theta^{(T)}$, we begin with analyzing the regularized MLE $\theta^{(T)}_{\lambda}$, defined as the solution of the following optimization problem
\begin{align} \label{eq:argminLossT_regularzied}
    \theta^{(T+1)}_{\lambda} = \arg\min_{\theta \in \Theta} \ \mathcal{L}^{(\leq T)}_{\lambda_{T}}(\theta),
\end{align} 
where $\mathcal{L}_{\lambda_{T}}^{(\leq T)}(\theta) := \mathcal{L}^{(\leq T)}(\theta) + \frac{1}{2} \lambda_{T} \norm{\theta}_{2}^{2}$ for $\lambda_{T} := \frac{T+1}{n}$. 

Next, we consider the standard gradient descent algorithm that is expected to converge to the minimizer $\theta^{(T+1)}_{\lambda}$ at each fixed iteration $T$, and analyze the trajectory of this iterative algorithm instead. The algorithm is stated in Algorithm~\ref{algorithm:regularized_IMLE}. 
Notably, this gradient descent algorithm is not practical since the initial point is set to be $\theta^{\ast}$.  We shall adopt a time-invariant step size rule at fixed iteration $T$:
\begin{align*}
    \eta_{\tau} \equiv \eta := \frac{1}{\lambda_{T} + (T+1)np}, \quad \tau = 0, 1, 2, \cdots.
\end{align*}

\begin{algorithm}[H] 
\caption{Gradient descent for computing the regularized MLE} \label{algorithm:regularized_IMLE}
\begin{algorithmic}[1] 
\Require Parameter space $\Theta \subseteq \mathbb{R}^n$; family of distributions $\{ p_{\theta} \}_{\theta \in \Theta}$ over input space $\mathcal{Y}$; target parameter $\theta^\ast \in \Theta$.
\State Set $\theta^{(0)} := \theta^\ast$. 
\For{$\tau = 0, 1, 2, \dots$}
    \begin{align*}
        \theta^{(\tau+1)} = \theta^{(\tau)} - \eta_{\tau} \nabla \mathcal{L}_{\lambda_{T}}^{(\leq T)} (\theta^{(\tau)}).
    \end{align*}
\EndFor
\end{algorithmic}
\end{algorithm}
\subsection{\texorpdfstring{Leave-One-Out Arguments and Induction Hypotheses for Iteration $t \in [T]$}{Leave-One-Out Arguments and Induction Hypotheses for iteration t in [T]}}
\label{appendix:loo_induction_1T}
We leverage the leave-one-out technique \citep{ma2018implicit, chen2019bspectral,chen2020noisy} and bound the statistical error by induction. More specifically, for any $m \in [n]$, we first consider the leave-one-out sequence $\lbrace \theta^{(\tau, m)} \rbrace_{\tau = 0,1,\cdots}$ such that
\begin{align*}
    \theta^{(\tau+1, m)} 
    = 
    \theta^{(\tau, m)} - \eta \nabla \mathcal{L}_{\lambda_{T}}^{(m,\leq T)} ( \theta^{(\tau, m)}) 
\end{align*}
where $\theta^{(0, m)} = \theta^{(0)} = \theta^{\ast}$ and
\begin{align*}
    \nabla \mathcal{L}_{\lambda_{T}}^{(m,\leq T)} ( \theta^{(\tau, m)})
    & =
    \nabla \mathcal{L}^{(m,\leq T)} ( \theta^{(\tau, m)}) + \lambda_{T} \theta^{(\tau, m)}
    \\ & =
    \sum_{t=0}^{T} \nabla \ell_{t}^{(m)}(\theta^{(\tau, m)}) + \lambda_{T} \theta^{(\tau, m)},
    \\
    \ell_{t}^{(m)} (\theta) 
    & =
    \sum_{(i,j) \in \mathcal{E}_{t},i > j, i \neq m, j \neq m} \left\{ - y_{j,i}^{(t)} (\theta_i - \theta_j) + \log\left(1 + e^{\theta_i - \theta_j} \right) \right\}
    \\ &\quad +
    \sum_{i \in [n] \backslash \lbrace m \rbrace} p \left\{ - \frac{e^{\theta^{(t)}_{i}}}{e^{\theta_{i}^{(t)}} + e^{\theta_{m}^{(t)}}} (\theta_{i} - \theta_{m}) + \log\left(1 + e^{\theta_{i} - \theta_{m}} \right) \right\}.
\end{align*}
Note that $\boldsymbol{1}^{\top}\theta^{\ast} = 0$ implies $\boldsymbol{1}^{\top} \theta^{(\tau)} = \boldsymbol{1}^{\top} \theta^{(\tau, m)} = 0$ for all $\tau = 0, \cdots, T$.

For the sake of clarity,  we first list two main induction hypotheses needed in our analysis:
\begin{itemize}
    \item[1.] The auxiliary regularized iterated updates at iteration $T$:
    \begin{align}
        \max_{m \in [n]} \norm{ \theta^{(\tau, m)} - \theta^{(\tau)}}_{2} 
        & \leq 
        1, 
        \tag{A}\label{eq:A} 
        \\
        \norm{\theta^{(\tau)} - \theta^{\ast}}_{2} 
        & \leq 
        \sqrt{\frac{n}{\log n}}
        \tag{B}\label{eq:B},
        \\
        \max_{m \in [n]} \left | \theta_{m}^{(\tau, m)} - \theta_{m}^{\ast} \right | 
        & \leq 
        1,  
        \tag{C}\label{eq:C}
    \end{align}
    \item[2.] The primary unregularized iterated updates up to iteration $t = 0, 1, \cdots, T$:
    \begin{align}
        \norm{\theta^{(t)} - \theta^{(t)}_{\lambda}}_{2} 
        & 
        \leq D_{1} \sqrt{\frac{1}{n}},
        \tag{D}\label{eq:D}
        \\
        \norm{\theta^{(t)}-\theta^{(t-1)}}_{\infty} 
        & \leq 
        D_{2} \frac{1}{t} \sqrt{\frac{\log n}{npL}}, 
        \tag{E}\label{eq:E}
        \\
        \norm{\theta^{(t)}-\theta^{(t-1)}}_{2} 
        & \leq 
        D_{3} \frac{1}{t} \sqrt{\frac{1}{pL}}, 
        \tag{F}\label{eq:F}
        \\
        \norm{\theta^{(t)} - \theta^{\ast}}_{2} 
        & \leq 
        D_{4} \sqrt{\frac{1}{pL}},
        \tag{G}\label{eq:G}
        \\
        \norm{\theta^{(t)} - \theta^{\ast}}_{\infty} 
        &\leq 
        D_{5} \sqrt{\frac{\log n}{npL}}, \tag{H}\label{eq:H}
    \end{align}
    where $D_{1}, \cdots, D_{5} > 0$ are some sufficiently large constants and we assume $\theta^{(-1)} = \theta^{(0)} = \theta^{(0,1)} = \cdots = \theta^{(0,n)} = \theta^{\ast}$. 
\end{itemize}
It is obvious that \eqref{eq:A}-\eqref{eq:C} hold for $\tau = 0$ and \eqref{eq:D}-\eqref{eq:H} hold for $t = 0$ automatically. For the auxiliary gradient descent, suppose \eqref{eq:A}-\eqref{eq:C} are true for $1, \cdots, \tau$, then we need to show the same conclusions continue to hold for $\tau + 1$ by induction. Then we use the same technique for the primary iteration of \eqref{eq:D}-\eqref{eq:H} for the case of $T+1$.

There are a few direct consequences of \eqref{eq:A}-\eqref{eq:H} that are worth listing, we summarize them as follows
\begin{align}
    \norm{\theta^{(\tau)} - \theta^{\ast}}_{\infty} 
    & \leq  
    2,
    \tag{I}\label{eq:I}
    \\
    \max_{m \in [n]} \norm{\theta^{(\tau, m)}-\theta^{\ast}}_{\infty} 
    &\leq 
    3,  
    \tag{J}\label{eq:J}
    \\
    \max_{m \in [n]} \norm{\theta^{(\tau, m)}-\theta^{\ast}}_{2} 
    &\leq 
    1 + \sqrt{\frac{n}{\log n}}, 
    \tag{K}\label{eq:K}
    \\
    \max_{m \in [n]} \norm{\theta^{(\tau, m)} - \theta^{(t)}}_{\infty} 
    & \leq  
    3 + D_{5} \sqrt{\frac{\log n}{npL}},
    \tag{L}\label{eq:L}
    \\
    \max_{m \in [n]} \norm{\theta^{(\tau, m)} - \theta^{(t)}}_{2}  
    & \leq  
    1 + \sqrt{\frac{n}{\log n}} + D_{4} \sqrt{\frac{1}{pL}}.
    \tag{M}\label{eq:M}
\end{align}
\subsection{\texorpdfstring{Leave-One-Out Arguments and Induction Hypotheses for Iteration $t = T+1$}{Leave-One-Out Arguments and Induction Hypotheses for iteration t = T+1}}
\label{appendix:loo_induction_T+1}
It is worth noting that, to establish the primary induction hypotheses \eqref{eq:E} and \eqref{eq:F} at iteration $T+1$, it is necessary to introduce an iteration-restricted leave-one-out construction, together with a tertiary induction hypothesis that controls the difference between $\theta^{(T+1)}$ and $\theta^{(T)}$, as follows.

At any iteration $T \geq 1$,  we denote $\theta_{0,\lambda}^{(T+1)}$ as the solution of the following minimization problem
\begin{align} \label{eq:Constraint_MLE_3}
    \theta^{(T+1)}_{0, \lambda} := \underset{\boldsymbol{1}^{\top}\theta = 0} {\arg \min} 
    \; \mathcal{L}_{\lambda_{0}}^{(\leq T)}(\theta),
\end{align}
where 
\begin{align*}
    \mathcal{L}_{\lambda_{0}}^{(\leq T)} ( \theta_{0}^{(\upsilon)}) :=\mathcal{L}^{(\leq T)} ( \theta_{0}^{(\upsilon)}) + \frac{\lambda_{0}}{2} \norm{\theta_{0}^{(\upsilon)}}_{2}^{2} = \sum_{t=0}^{T} \nabla \ell_{t}(\theta_{0}^{(\upsilon)})+ \frac{\lambda_{0}}{2} \norm{\theta_{0}^{(\upsilon)}}_{2}^{2}.
\end{align*}
Similarly, we define an auxiliary gradient descent sequence $\lbrace \theta_{0}^{(\upsilon)} \rbrace$ such that 
\begin{align*}
    \theta_{0}^{(\upsilon+1)} 
    &= 
    \theta_{0}^{(\upsilon)} - \eta_{\upsilon} \nabla \mathcal{L}_{\lambda_{0}}^{(\leq T)} ( \theta_{0}^{(\upsilon)}),
\end{align*}
where we shall adopt a time invariant step size rule:
\begin{align*}
    \eta_{\upsilon} \equiv \eta := \frac{1}{\lambda_{T} + (T+1)np} = \frac{1}{T+1}\frac{1}{\lambda_{0}+np}, \quad \upsilon = 0, 1, 2, \cdots.
\end{align*}
For any $m \in [n]$, we then introduce an iteration-restricted leave-one-out  sequence $\lbrace \theta_{0}^{(\upsilon, m)} \rbrace_{\upsilon = 0, 1, \cdots}$ constructed via
\begin{align*}
    \theta_{0}^{(\upsilon+1, m)} 
    = 
    \theta_{0}^{(\upsilon, m)} - \eta \nabla \mathcal{L}_{\lambda_{0}}^{(T \backslash m,\leq T)} ( \theta_{0}^{(\upsilon, m)}),
\end{align*}
where the iteration-restricted leave-one-out loss function $\mathcal{L}_{\lambda_{0}}^{(T \backslash m,\leq T)} ( \theta)$ replaces all likelihood components involving the $m$-th  item with their expected values at iteration $T$ that 
\begin{align*}
    \mathcal{L}_{\lambda_{0}}^{(T \backslash m,\leq T)} ( \theta)
    & := \mathcal{L}^{(\leq T-1)} (\theta) + \ell_{T}^{(m)}(\theta) + \frac{1}{2} \lambda_{0} \norm{\theta}_{2}^{2}
\end{align*}
and $\ell_{T}^{(m)}$ is $\ell_{T}$ with all terms including $m$ deleted that
\begin{align*}
    \ell_{T}^{(m)} (\theta) = &\sum_{(i,j) \in \mathcal{E}_{T},i > j, i \neq m, j \neq m} \left\{ - y_{j,i}^{(T)} (\theta_i - \theta_j) + \log\left(1 + e^{\theta_i - \theta_j} \right) \right\}
    \\ &\quad +
    \sum_{i \in [n] \backslash \lbrace m \rbrace} p \left\{ - \frac{e^{\theta^{(T)}_{i}}}{e^{\theta_{i}^{(T)}} + e^{\theta_{m}^{(T)}}} (\theta_{i} - \theta_{m}) + \log\left(1 + e^{\theta_{i} - \theta_{m}} \right) \right\}.
\end{align*}
Note that $\theta_{0}^{(0, m)} = \theta_{0}^{(0)} = \theta^{(T)}$, and $\boldsymbol{1}^{\top}\theta^{(T)} = 0$ implies $\boldsymbol{1}^{\top} \theta_{0}^{(\upsilon)} = \boldsymbol{1}^{\top} \theta_{0}^{(\upsilon, m)} = 0$ for all $\tau$.
For clarity, we state the tertiary induction hypotheses.

\begin{align}
    \norm{\theta_{0}^{(\upsilon)} - \theta^{(T)}}_{2} 
    & \leq D_{5} \frac{1}{T+1}
    \sqrt{\frac{\log n}{pL}},
    \tag{A1}\label{eq:A1}
    \\
    \max_{m \in [n]} \left| \theta_{0,m}^{(\upsilon,m)} - \theta_{m}^{(T)} \right| 
    & \leq 
    D_{6} \frac{1}{T+1} \sqrt{\frac{\log n}{npL}},
    \tag{B1}\label{eq:B1}
    \\
    \max_{m \in [n]} \norm{\theta_{0}^{(\upsilon)} - \theta_{0}^{(\upsilon,m)}}_{2} 
    & \leq 
    D_{7} \frac{1}{T+1} \sqrt{\frac{\log n}{npL}},
    \tag{C1}\label{eq:C1}
    \\
    \norm{\theta_{0}^{(\upsilon)}-\theta^{(T)}}_{\infty}
    & \leq 
    D_{8} \frac{1}{T+1} \sqrt{\frac{\log n}{npL}},
    \tag{D1}\label{eq:D1}
\end{align}
thus the following two bounds hold
\begin{align}
    \max_{m \in [n]} \norm{\theta_{0}^{(\upsilon,m)}-\theta^{(T)}}_{\infty} 
    & \leq 
    D_{9} \frac{1}{T+1} \sqrt{\frac{\log n}{npL}},
    \tag{E1}\label{eq:E1}
    \\
    \max_{m \in [n]} \norm{\theta_{0}^{(\upsilon,m)}-\theta^{(T)}}_{2}
    & \leq
    D_{10} \frac{1}{T+1} \sqrt{\frac{\log n}{pL}}.
    \tag{F1}\label{eq:F1}
\end{align}
For $\upsilon = 0$, since $\theta_{0}^{0} = \theta_{0}^{(0,1)} = \cdots = \theta_{0}^{(0,n)}$, the hypotheses \eqref{eq:A1}-\eqref{eq:D1} are satisfied trivially. In the following lemmas, we show that, provided \eqref{eq:A1}-\eqref{eq:D1} hold up to step $\upsilon$ and the primary induction hypotheses \eqref{eq:D}-\eqref{eq:H} hold at iteration $T$, the conclusions of \eqref{eq:A1}-\eqref{eq:D1} also hold at step $\upsilon+1$.
\subsection{Step I}
\label{appendix:Step1}
The objective of this step is to show that $\norm{\theta^{(T+1)} - \theta^{\ast}}_{\infty}$ remains bounded by a finite constant with high probability, and then to sharpen this bound by establishing the auxiliary induction hypotheses at $(\tau+1)$-th step. The key observation is that $\theta^{(t+1)}$ should remain close to its regularized counterpart $\theta_{\lambda}^{(t+1)}$ when the regularization parameter is sufficiently small.

\begin{lemma} \label{lemma:equation_D_plus}
    Suppose the auxiliary induction hypotheses \eqref{eq:A}-\eqref{eq:C} hold for $0,\cdots, \tau^{\ast}$-th step and the primary induction hypotheses \eqref{eq:D}-\eqref{eq:H} hold up to iteration $T$, then with probability exceeding $1 - \mathcal{O}(n^{-7})$, one has
    \begin{align*}
        \norm{\theta^{(T+1)} - \theta^{\ast}}_{\infty} \leq 5.
    \end{align*}
\end{lemma}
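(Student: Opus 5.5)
The plan is to pass through the regularized MLE $\theta^{(T+1)}_{\lambda}$ of \eqref{eq:argminLossT_regularzied}. Write
\[
\norm{\theta^{(T+1)}-\theta^{\ast}}_{\infty} \le \norm{\theta^{(T+1)}-\theta^{(T+1)}_{\lambda}}_{2} + \norm{\theta^{(T+1)}_{\lambda}-\theta^{(\tau^{\ast})}}_{2} + \norm{\theta^{(\tau^{\ast})}-\theta^{\ast}}_{\infty}.
\]
Each term will be shown to be at most a constant, with the three constants summing to no more than $5$. The third term is at most $2$ by \eqref{eq:I}, which follows from \eqref{eq:A}--\eqref{eq:C} at step $\tau^{\ast}$: apply \eqref{eq:C} to coordinate $m$, and use \eqref{eq:A} to transfer from $\theta^{(\tau^{\ast},m)}$ to $\theta^{(\tau^{\ast})}$.

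For the second term I use the geometric convergence of Algorithm~\ref{algorithm:regularized_IMLE}. By Lemma~\ref{lemma:largesteigen}, together with the regularizer, $\mathcal{L}^{(\le T)}_{\lambda_T}$ is $(\lambda_T+(T+1)np)$-smooth on $\Theta$ and $\lambda_T$-strongly convex. The step size $\eta=1/(\lambda_T+(T+1)np)$ then gives
\[
\norm{\theta^{(\tau)}-\theta_{\lambda}^{(T+1)}}_2\le (1-\eta\lambda_T)^{\tau}\norm{\theta^{\ast}-\theta^{(T+1)}_{\lambda}}_2 .
\]
The initial distance is polynomially bounded. To see this, compare the objective at the minimizer with its value at $\theta^{\ast}$ and use Lemma~\ref{lemma:gradient_T_H} together with $\mathcal{L}^{(\le T)}\ge 0$, which yields $\norm{\theta^{(T+1)}_\lambda-\theta^\ast}_2\lesssim \mathrm{poly}(n,T)$. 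Since $\eta\lambda_T = 1/(1+n^2p)$, choosing $\tau^{\ast}\asymp n^{2}p\log n$ makes this term at most $1$. This is also why the induction over $\tau$ only has to run up to a polynomial horizon.

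The main work is the first term, bounding unregularized minus regularized. I compare the first-order conditions $\nabla\mathcal{L}^{(\le T)}(\theta^{(T+1)})=0$ and $\nabla\mathcal{L}^{(\le T)}(\theta_\lambda^{(T+1)})+\lambda_T\theta_\lambda^{(T+1)}=0$. Restricted strong convexity along the segment between the two points then gives
\[
\norm{\theta^{(T+1)}-\theta^{(T+1)}_\lambda}_2\le \frac{\lambda_T\norm{\theta^{(T+1)}_\lambda}_2}{\lambda_{\min,\perp}}.
\]
The obstacle is that Lemma~\ref{lemma:smallesteigen} needs $\ell_\infty$-control of both endpoints, and $\theta^{(T+1)}$ is exactly the point we are trying to control.

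To get around this, I first use the bounded point $\theta^{(T+1)}_\lambda$, which satisfies $\norm{\theta_\lambda^{(T+1)}-\theta^\ast}_\infty\le 3$ by the above. With $\norm{\theta^{\ast}}_2\lesssim\sqrt n$, its gradient under the unregularized loss is $\lambda_T\norm{\theta^{(T+1)}_\lambda}_2\lesssim (T+1)/\sqrt n$. I then run a localization argument: restrict to the $\ell_2$-ball of radius $r=1$ around $\theta_\lambda^{(T+1)}$. Inside this ball the $\ell_\infty$ deviation from $\theta^\ast$ is at most $4$, so Lemma~\ref{lemma:smallesteigen} with $C=4$ gives curvature $\gtrsim (T+1)np$. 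By convexity, the unregularized minimizer therefore lies within distance
\[
\lesssim \frac{(T+1)/\sqrt n}{(T+1)np}=\frac{1}{n^{3/2}p}\ll 1
\]
of $\theta^{(T+1)}_\lambda$, and hence inside the ball. Concretely, the objective on the sphere of radius $r$ exceeds its value at the center, so the minimizer cannot lie outside.

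Summing the three pieces gives $\norm{\theta^{(T+1)}-\theta^\ast}_\infty\le 2+1+1\cdot o(1)\le 5$. The failure probability is the union of the events $\mathcal{A}_0^{(t)}$ and $\mathcal{A}_1^{(t)}$ and the event of Lemma~\ref{lemma:smallesteigen}. These are taken over $t\le T$, each with probability $1-O(n^{-10})$, and over the $\tau^{\ast}$ steps; since $T\lesssim n^3/(p\log n)$, the total is $O(n^{-7})$.
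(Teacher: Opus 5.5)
Your proposal is correct and is essentially the paper's argument: the regularized proxy $\theta^{(T+1)}_{\lambda}$, geometric convergence of gradient descent started at $\theta^{\ast}$ combined with \eqref{eq:I}, then a convexity localization that transfers the bound to the unregularized minimizer. The differences are interchangeable—the paper avoids an a priori bound on the initial distance through the self-bounding inequality $\|\theta^{(T+1)}_{\lambda}-\theta^{\ast}\|_{\infty}\le\frac{1}{2}\|\theta^{(T+1)}_{\lambda}-\theta^{\ast}\|_{\infty}+2$ with $\tau^{\ast}=n^{3}$, and localizes via a constrained MLE over the $\ell_{\infty}$-box of radius $5$ around $\theta^{\ast}$ rather than your $\ell_{2}$-ball around $\theta^{(T+1)}_{\lambda}$—and your only slip is probability bookkeeping: a union of $T+1$ events of probability $\mathcal{O}(n^{-10})$ with $T\lesssim n^{3}/(p\log n)$ gives $\mathcal{O}\bigl(n^{-7}/(p\log n)\bigr)$, not $\mathcal{O}(n^{-7})$, when $p\ll 1/\log n$, so apply matrix Chernoff once to the aggregate Laplacian $\sum_{t\le T}\boldsymbol{L}_{\mathcal{G}}^{(t)}$ (failure probability exponentially small in $(T+1)np$) or use sharper per-graph tails.
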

\begin{proof}
    See \S\ref{appendix:equation_D_plus} for a detailed proof outline.
\end{proof}

Lemma~\ref{lemma:equation_D_plus} is arguably the key technical step in the analysis of the MLE, since it provides direct control of the spectrum of $\nabla^{2} \mathcal{L}_{\lambda_{T}}^{(\leq T)}(\theta)$. As an intermediate consequence of Lemma~\ref{lemma:equation_D_plus}, the primary induction hypothesis \eqref{eq:D} for iteration $T+1$ follows immediately, and we omit the proof.
\subsection{Step II}
\label{appendix:Step2}
The goal of this step is to establish the primary induction hypotheses \eqref{eq:E}-\eqref{eq:G}. The main idea is to introduce an iteration-restricted leave-one-out construction initialized at $\theta^{(T)}$, which yields refined control of the increment $\theta^{(T+1)}-\theta^{(T)}$. By expressing this increment as the sum of a deterministic bias term and a centered stochastic fluctuation term, and controlling the two components separately, we show that $\theta^{(T+1)}$ does not deviate substantially from $\theta^{(T)}$ in either the $\ell_{\infty}$- and the $\ell_{2}$-norm. We then use the $\ell_{\infty}$ control to further bound $\norm{\theta^{(T+1)} - \theta^{\ast}}_{2}$. 

\begin{lemma} \label{lemma:equation_EFG}
    Suppose the tertiary induction hypotheses \eqref{eq:A1}-\eqref{eq:D1} hold for $0,\cdots, \upsilon^{\ast}$-th step, the auxiliary induction hypotheses \eqref{eq:A}-\eqref{eq:C} hold for $0,\cdots, \tau^{\ast}$-th step and the primary induction hypotheses \eqref{eq:D}-\eqref{eq:H} hold up to iteration $T$, then with probability exceeding $1 - \mathcal{O}(n^{-5})$, one has
    \begin{align*}
        \norm{\theta^{(T+1)}-\theta^{(T)}}_{\infty} 
        & \leq 
        D_{2} \frac{1}{T+1} \sqrt{\frac{\log n}{npL}},
        \\
        \norm{\theta^{(T+1)}-\theta^{(T)}}_{2} 
        & \leq 
        D_{3} \frac{1}{T+1} \sqrt{\frac{1}{pL}},
        \\
        \norm{\theta^{(T+1)}-\theta^{\ast}}_{2} 
        & \leq 
        D_{4} \sqrt{\frac{1}{pL}}.
    \end{align*}
\end{lemma}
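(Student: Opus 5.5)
The plan is to analyze the auxiliary gradient descent sequence $\{\theta_{0}^{(\upsilon)}\}$ started at $\theta^{(T)}$, which converges to the weakly regularized minimizer $\theta^{(T+1)}_{0,\lambda}$. We first close the tertiary induction \eqref{eq:A1}--\eqref{eq:D1} in $\upsilon$, then pass to the limit $\upsilon\to\infty$, and finally transfer the bounds to $\theta^{(T+1)}$.

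\textbf{Key identity.} Since $\nabla\mathcal{L}^{(\le T-1)}(\theta^{(T)})=\bm 0$, we have $\nabla\mathcal{L}^{(\le T)}(\theta^{(T)})=\nabla\ell_T(\theta^{(T)})$. The driving force of every step is therefore the single-round gradient $\nabla\ell_T(\theta^{(T)})$. Conditionally on the past, it is a centered sum over $\mathcal{E}^{(T)}$, because the round-$T$ data are generated from $\theta^{(T)}$ itself. By Lemma~\ref{lemma:gradient_T_H}, $\|\nabla\ell_T(\theta^{(T)})\|_2\lesssim\sqrt{n^2p/L}$. A Bernstein/Hoeffding bound together with Lemma~\ref{lemma:degree_concentration} gives $\|\nabla\ell_T(\theta^{(T)})\|_\infty\lesssim\sqrt{np\log n/L}$. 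Lemma~\ref{lemma:equation_D_plus} and \eqref{eq:H} keep all relevant points within constant $\ell_\infty$ distance of $\theta^\ast$. Lemmas~\ref{lemma:largesteigen} and \ref{lemma:smallesteigen} then give curvature of order $(T+1)np$ on $\Theta$, and the step size $\eta\asymp 1/((T+1)np)$ yields a contraction factor $1-c/\kappa'$.

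\textbf{$\ell_2$ bound (A1).} We write
\[
\theta_0^{(\upsilon+1)}-\theta^{(T)}=\Bigl(I-\eta\!\int_0^1\!\nabla^2\mathcal{L}_{\lambda_0}^{(\le T)}\Bigr)(\theta_0^{(\upsilon)}-\theta^{(T)})-\eta\bigl(\nabla\ell_T(\theta^{(T)})+\lambda_0\theta^{(T)}\bigr).
\]
The contraction then gives $\|\theta_0^{(\upsilon)}-\theta^{(T)}\|_2\lesssim \|\nabla\ell_T(\theta^{(T)})\|_2/((T+1)np)\lesssim \frac{1}{T+1}\sqrt{1/(pL)}$, where $\lambda_0$ is chosen tiny enough that the regularization bias is negligible.

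\textbf{$\ell_\infty$ bound via leave-one-out (B1)--(D1).} For (C1), we compare $\theta_0^{(\upsilon)}$ with $\theta_0^{(\upsilon,m)}$ exactly as in \cite{chen2019bspectral}. Their gradients differ only in the $m$-th-item terms of round $T$. That difference is bounded in $\ell_2$ by $\sqrt{np\log n/L}$ plus a Lipschitz term involving $|\theta_{0,m}^{(\upsilon,m)}-\theta^{(T)}_m|$. Contraction then yields the $\frac{1}{T+1}\sqrt{\log n/(npL)}$ rate.

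For (B1), we look at the $m$-th coordinate of the leave-one-out update. Here the $m$-terms of round $T$ are replaced by their expectations at $\theta^{(T)}$, so the $m$-th coordinate evolves deterministically given the rest. Its self-curvature is $\gtrsim (T+1)np$, accumulated over all rounds, while its forcing is only $O(np)\cdot\|\theta_0^{(\upsilon,m)}-\theta^{(T)}\|_\infty$-type terms from round $T$. This yields (B1) by a scalar contraction, using (E1) at step $\upsilon$.

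For (D1), we combine (B1) and (C1) through the triangle inequality, noting that \eqref{eq:E1}--\eqref{eq:F1} follow.

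\textbf{Passing to the limit.} Letting $\upsilon\to\infty$ transfers these bounds to $\theta^{(T+1)}_{0,\lambda}$. We then show $\|\theta^{(T+1)}_{0,\lambda}-\theta^{(T+1)}\|_2\lesssim\lambda_0\|\theta^{(T+1)}\|_2/((T+1)np)$ via strong convexity, which is negligible for tiny $\lambda_0$. This gives the first two claims with constants $D_2,D_3$.

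\textbf{Third claim.} We must avoid naively summing the increments: $\sum_t \frac1t\sqrt{1/pL}$ loses a $\log T$ factor. Instead we use the leading-term decomposition $\theta^{(T+1)}-\theta^\ast=\sum_t(\overline\theta^{(t+1)}-\theta^{(t)})+\sum_t(\theta^{(t+1)}-\overline\theta^{(t+1)})$. The leading sum is a martingale, since $\nabla\ell_t(\theta^{(t)})$ is conditionally centered. Its conditional variances are $\asymp\frac{1}{(t+1)^2}\cdot\frac{1}{pL}$, and these sum to $O(1/(pL))$. A Freedman-type inequality then gives $\sqrt{1/(pL)}$. The remainders are quadratic: each is bounded by $\frac{\|\overline\theta^{(t+1)}-\theta^{(t)}\|_\infty\|\cdot\|_2\,np}{(t+1)np}$, which by the increment bounds is $\lesssim\frac{1}{(t+1)^2}\sqrt{\log n/(npL)}\sqrt{1/(pL)}$. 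These remainders are summable and of lower order under the theorem's conditions, so $D_4$ closes the induction.

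\textbf{Main obstacle.} The hardest step is (B1): making the self-curvature of coordinate $m$ across all $T+1$ rounds dominate the round-$T$ perturbation uniformly in $T$ and $\upsilon$. A second difficulty is keeping the martingale union bound over $t\lesssim n^3/(p\log n)$ rounds within probability $1-O(n^{-5})$.
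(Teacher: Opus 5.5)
\textbf{The gap is in (B1).} Your overall architecture matches the paper's: the tertiary induction (A1)--(D1) for gradient descent started at $\theta^{(T)}$, passage to $\theta^{(T+1)}_{0,\lambda}$ and then to $\theta^{(T+1)}$ by strong convexity with a small ridge, and, for the third claim, a martingale leading term plus quadratic remainders summable through $\sum_t (t+1)^{-2}$. The problem is in (B1), the step you flag, and it comes from misidentifying the forcing. In $\mathcal{L}^{(T\backslash m,\le T)}_{\lambda_0}$ only the round-$T$ terms involving $m$ are replaced by expectations. Rounds $0,\dots,T-1$ still act on coordinate $m$ through their actual graphs, and the identity $[\nabla\mathcal{L}^{(\le T-1)}(\theta^{(T)})]_m=0$ removes only the zeroth-order part. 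Write $\delta:=\theta_0^{(\upsilon,m)}-\theta^{(T)}$, $\delta^{+}_m:=\theta^{(\upsilon+1,m)}_{0,m}-\theta^{(T)}_m$ and $N_{m,i}:=\sum_{t<T}G^{(t)}_{m,i}$. The mean value theorem then gives
\[
\delta_m^{+}=\Bigl(1-\eta\lambda_0-\eta\sum_{i\neq m}(N_{m,i}+p)\,w_i\Bigr)\delta_m+\eta\sum_{i\neq m}(N_{m,i}+p)\,w_i\,\delta_i-\eta\lambda_0\,\theta^{(T)}_m ,
\]
with $w_i\asymp 1$. The cross term $\eta\sum_i N_{m,i}w_i\delta_i$ has total weight $\asymp Tnp$, a fraction $T/(T+1)$ of the self-curvature, so it is not an $O(np)$ round-$T$ perturbation.

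Suppose you bound it by $\|\delta\|_\infty\le\max_k|\theta^{(\upsilon,k)}_{0,k}-\theta^{(T)}_k|+2\max_k\|\theta_0^{(\upsilon)}-\theta_0^{(\upsilon,k)}\|_2$, as your sketch suggests. Then the net contraction is only at the round-$T$ rate $\eta np\asymp 1/(T+1)$. The fixed point is of order $T$ times the (C1) bound, i.e.\ $\frac{T}{T+1}\sqrt{\log n/(npL)}$. The factor $1/(T+1)$ in the first claim is lost, and summing such increments would wreck (H).

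The paper keeps these terms and controls them through the $\ell_2$ hypothesis (F1) by Cauchy--Schwarz. Be careful even there. Cauchy--Schwarz honestly gives $\sqrt{d^{(t)}_m}\,\|\delta\|_2\asymp\sqrt{np}\,\|\delta\|_2$ per round, not the $p\sqrt{n}\,\|\delta\|_2$ written in the paper's display. That misses the target by a factor $p^{-1/2}$ on sparse graphs. Closing (B1) needs an additional argument. For instance, you could try to exploit that $\delta$ is, to first order, a linear image of the fresh round-$T$ noise, which is conditionally independent of the past graphs $G^{(t)}_{m,\cdot}$.

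\textbf{Two smaller points.}

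First, you cannot literally let $\upsilon\to\infty$. Each inductive step uses a fresh high-probability event: the Bernstein bound on the $(G^{(T)}_{m,i}-p)$ part of the leave-one-out gradient difference is taken for the current iterate. So you must stop at a finite $\upsilon^\ast$ and union bound, as the paper does. Its ridge-only contraction forces $\upsilon^\ast\asymp(T+1)n^2p\log n$, which is the source of $T\lesssim n^3/(p\log n)$. Your local rate $1-c/\kappa'$ would allow an $O(\log n)$ horizon, provided you then locate the minimizer by a convexity argument.

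Second, the $\ell_2$ rate $\sqrt{1/(pL)}$ for the martingale requires the $\varepsilon$-net over directions used in the paper, with per-direction variance proxy of order $1/(npL)$. A dimension-free Freedman or Pinelis bound on the norm uses your trace quantity $\sum_t (t+1)^{-2}/(pL)$ as variance proxy. At confidence $1-\mathcal{O}(n^{-10})$ it costs an extra $\sqrt{\log n}$, so it would not give the stated rate.
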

\begin{proof}
    See \S\ref{appendix:equation_EFG} for a detailed proof.
\end{proof}
\subsection{Step III}
\label{appendix:Step3}
The main technical challenge here lies in showing that, under suitable regularity conditions, the difference $\overline{\theta}^{(t+1)} - \theta^{(t+1)}$ is negligible compared to $\overline{\theta}^{(t+1)} - \theta^{(t)}$. To address this, we leverage two parallel leave-one-out \citep{chen2019bspectral,chen2021spectral,fan2024uncertainty} constructions on both $\mathcal{L}^{(\le t)}(\cdot)$ and $\overline{\mathcal{L}}^{(\le t)}(\cdot)$ to control the approximation error $\overline{\theta}^{(t+1)} - \theta^{(t+1)}$ in $\ell_{\infty}$-norm. We summarize the upper bound of $\norm{\theta^{(T+1)} - \theta^{\ast}}_{\infty}$ in the following Lemma~\ref{lemma:equation_H}.

\begin{lemma} \label{lemma:equation_H}
    Suppose the auxiliary induction hypotheses \eqref{eq:A}-\eqref{eq:C} hold for $0,\cdots, \tau^{\ast}$-th step and the primary induction hypotheses \eqref{eq:D}-\eqref{eq:H} hold up to iteration $T$, as long as $\frac{\log n}{np} + \frac{1}{t+1} \frac{1}{\sqrt{np}} = \mathcal{O}(1)$ and $\log T \left[ \frac{\sqrt{\log n}}{\sqrt{nL}p} + \frac{1}{\sqrt{np}} \left( 1 + \frac{\log n}{\sqrt{np}} \right) \right] = \mathcal{O}(1)$,  then with probability exceeding $1 - \mathcal{O}(n^{-5})$, one has
    \begin{align*}
        \norm{\theta^{(T+1)} - \theta^{\ast}}_{\infty} 
        &\leq 
        D_{5} \sqrt{\frac{\log n}{npL}}.
    \end{align*}
\end{lemma}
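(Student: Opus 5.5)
We write $\Delta := \theta^{(T+1)}-\theta^{\ast}$ using the telescoping decomposition from \S\ref{subsec:asymptotic}:
\begin{align*}
\theta^{(T+1)}-\theta^{\ast}=\sum_{t=0}^{T}\big(\overline{\theta}^{(t+1)}-\theta^{(t)}\big)+\sum_{t=0}^{T}\big(\theta^{(t+1)}-\overline{\theta}^{(t+1)}\big).
\end{align*}
We bound the leading term and the approximation error separately in $\ell_\infty$.

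\textbf{Leading term.} Here $\overline{\theta}^{(t+1)}-\theta^{(t)}=-[\nabla^2\mathcal{L}^{(\le t)}(\theta^{(t)})]^{\dagger}\nabla\ell_t(\theta^{(t)})$. Conditional on $\mathcal{F}_{t-1}=\sigma(Y^{(\le t-1)},\mathcal{G}^{(\le t-1)})$, the gradient $\nabla\ell_t(\theta^{(t)})$ is centered, since round-$t$ data are drawn from $\theta^{(t)}$. However, the Hessian of $\mathcal{L}^{(\le t)}$ contains $\mathcal{G}^{(t)}$, so the summands are not exactly a martingale difference sequence.

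To fix this, we replace $\nabla^2\mathcal{L}^{(\le t)}(\theta^{(t)})$ by the $\mathcal{F}_{t-1}\vee\sigma(\mathcal{G}^{(t)})$-measurable matrix $H_t$. Because the outcomes $y^{(t)}$ are drawn after the graph, $H_t^{\dagger}\nabla\ell_t(\theta^{(t)})$ is a genuine martingale difference with respect to the filtration that reveals the graph before the outcomes.

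We then control $\max_i|\boldsymbol{e}_i^\top\sum_t H_t^{\dagger}\nabla\ell_t(\theta^{(t)})|$ with Freedman's inequality and a union bound over $i$ and $t\lesssim n^3/(p\log n)$. This needs:
\begin{itemize}
\item a bound on each increment, $\|\boldsymbol{e}_i^\top H_t^{\dagger}\|_1\lesssim 1/((t+1)np)$ up to lower-order corrections. We get this from Lemmas~\ref{lemma:largesteigen}--\ref{lemma:smallesteigen} together with an entrywise approximation of the pseudoinverse of a (weighted) Laplacian by a diagonal degree matrix, using $np\gtrsim\log^2 n$;
\item a bound on the predictable quadratic variation, $\sum_t \frac{(t+1)^{-2}}{npL}\lesssim\frac{1}{npL}$. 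It uses that the conditional variance of $[\nabla\ell_t]_i$ is at most $d_i^{(t)}/(4L)$.
\end{itemize}
Together these give $\sqrt{\log n/(npL)}$ for the leading term. The constant $D_5$ is absorbed because the summable weights $(t+1)^{-2}$ yield a constant independent of $D_5$.

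\textbf{Approximation error.} For each $t$ we bound $\|\theta^{(t+1)}-\overline{\theta}^{(t+1)}\|_\infty$, following Theorem~7 of \cite{fan2024uncertainty}. A Taylor expansion of $\nabla\mathcal{L}^{(\le t)}$ around $\theta^{(t)}$ gives
\begin{align*}
\theta^{(t+1)}-\overline{\theta}^{(t+1)}=-[\nabla^2\mathcal{L}^{(\le t)}(\theta^{(t)})]^{\dagger}\,r_t,
\end{align*}
where $r_t$ is the third-order remainder, with entries $|(r_t)_i|\lesssim\sum_{j}\sum_{s\le t}G^{(s)}_{ij}(\delta_i-\delta_j)^2$ and $\delta=\theta^{(t+1)}-\theta^{(t)}$.

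By Step~II (Lemma~\ref{lemma:equation_EFG}) and \eqref{eq:E}, $\|\delta\|_\infty\lesssim\frac{1}{t+1}\sqrt{\log n/(npL)}$. This gives the crude bound
\begin{align*}
\|r_t\|_\infty\lesssim (t+1)np\cdot\frac{\log n}{(t+1)^2 npL}.
\end{align*}
Using only an $\ell_\infty\to\ell_\infty$ bound on the pseudoinverse loses a factor. We therefore run a leave-one-out comparison on both $\mathcal{L}^{(\le t)}$ and $\overline{\mathcal{L}}^{(\le t)}$ (the iteration-restricted sequences with \eqref{eq:A1}--\eqref{eq:F1}). This decouples coordinate $m$ from the graph $\mathcal{G}^{(t)}$, so the $m$-th row of $H_t^\dagger r_t$ can be bounded by its diagonal part plus a $1/\sqrt{np}$-order off-diagonal contribution.

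This should yield
\begin{align*}
\|\theta^{(t+1)}-\overline{\theta}^{(t+1)}\|_\infty\lesssim\frac{1}{t+1}\Big[\frac{\sqrt{\log n}}{\sqrt{nL}p}+\frac{1}{\sqrt{np}}\Big(1+\frac{\log n}{\sqrt{np}}\Big)\Big]\sqrt{\frac{\log n}{npL}}.
\end{align*}
Summing over $t\le T$ produces the factor $\log T$. The hypothesis $\log T[\cdots]=\mathcal{O}(1)$, with a sufficiently small implied constant, then makes this total at most $\tfrac12 D_5\sqrt{\log n/(npL)}$, which closes the induction. All events are intersected over $t$ by a union bound, giving probability $1-\mathcal{O}(n^{-5})$.

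\textbf{Main obstacle.} The hardest step is the entrywise ($\ell_\infty$) control of $[\nabla^2\mathcal{L}^{(\le t)}(\theta^{(t)})]^{\dagger}$ applied to the random vectors $\nabla\ell_t$ and $r_t$. These vectors depend on $\mathcal{G}^{(t)}$, which also enters the Hessian. Handling this dependence through the leave-one-out decoupling, uniformly in $t$, is where the extra factors $\frac{1}{\sqrt{np}}(1+\log n/\sqrt{np})$ and $\sqrt{\log n}/(\sqrt{nL}p)$ arise. I would first try writing $H_t^{\dagger}=D_t^{-1}+D_t^{-1}(\text{off-diagonal})H_t^{\dagger}$, a Neumann-type expansion with $D_t$ the weighted degree matrix, and bound the second term with the leave-one-out estimates.
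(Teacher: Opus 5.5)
The overall architecture is the paper's: the same telescoping split, a martingale bound for $\sum_t(\overline{\theta}^{(t+1)}-\theta^{(t)})$, and a leave-one-out argument for the approximation error. Write $H_t:=\nabla^2\mathcal{L}^{(\le t)}(\theta^{(t)})$. Reading off row $i$ of $H_t(\theta^{(t+1)}-\overline{\theta}^{(t+1)})=-r_t$ gives exactly the paper's split. The diagonal part is $\theta^{(t+1)}_i-\overline{\theta}^{(t+1,\prime)}_i=-(r_t)_i/[H_t]_{ii}$ (Lemma~\ref{lemma:diff_quadratic_MLE}). The off-diagonal part is $-\sum_{j\neq i}[H_t]_{ij}(\theta^{(t+1)}_j-\overline{\theta}^{(t+1)}_j)/[H_t]_{ii}=\overline{\theta}^{(t+1,\prime)}_i-\overline{\theta}^{(t+1)}_i$, which the two leave-one-out constructions control (Lemma~\ref{lemma:diff_quadratic}).

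\textbf{The leading term.} No replacement of the Hessian is needed. $H_t$ depends only on $\mathcal{G}^{(0)},\dots,\mathcal{G}^{(t)}$ and $\theta^{(t)}$, so it is already $\mathcal{H}_{t-1}\vee\sigma(\mathcal{G}^{(t)})$-measurable.

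More importantly, the ingredient you call necessary, $\|\boldsymbol{e}_i^\top H_t^{\dagger}\|_1\lesssim 1/((t+1)np)$, is the wrong tool.
\begin{itemize}
\item If your Freedman increments are the whole round-$t$ terms $\boldsymbol{e}_i^\top H_t^\dagger\nabla\ell_t(\theta^{(t)})$, then even granting that bound, combining it with $\|\nabla\ell_t(\theta^{(t)})\|_\infty\lesssim\sqrt{np\log n/L}$ gives a range $b\asymp\sqrt{\log n/(npL)}$ at $t=0$. Freedman's $b\log n$ term then exceeds the target by a factor $\log n$.
\item If the increments are single comparisons, their range is already at most $\sqrt{2}\,\|H_t^\dagger\|_2/L$, and no $\ell_1$ bound is needed.
\end{itemize}

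The claimed row bound also contradicts the rest of your plan. It would give $\|H_t^\dagger r_t\|_\infty\le\max_i\|\boldsymbol{e}_i^\top H_t^\dagger\|_1\|r_t\|_\infty\lesssim\|\delta\|_\infty^2\lesssim\log n/((t+1)^2npL)$. The approximation error would then follow crudely, with no leave-one-out at all. That is inconsistent with your own remark that the $\ell_\infty\to\ell_\infty$ route loses a factor. In fact, for polylogarithmic $np$ I would expect an extra factor of order $\log n/\log(np)$. The Green's function of the locally tree-like comparison graph puts mass of order $1/((t+1)np)$ on each distance shell around $i$. So ``up to lower-order corrections'' is not justified.

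\textbf{The fix is the paper's simpler route.} Conditional on $\mathcal{H}_{t-1}$ and $\mathcal{G}^{(t)}$, the round-$t$ increment is a sum of independent bounded terms. By Hoeffding's lemma it is sub-Gaussian with proxy $\frac{1}{L}\boldsymbol{e}_i^\top H_t^\dagger\boldsymbol{L}_{\mathcal{G}}^{(t)}H_t^\dagger\boldsymbol{e}_i\le\lambda_{\max}(\boldsymbol{L}_{\mathcal{G}}^{(t)})\|H_t^\dagger\|_2^2/L\lesssim (t+1)^{-2}/(npL)$. Azuma--Hoeffding for sub-Gaussian martingale differences has no range term, and only the $\ell_2$ operator-norm bounds from Lemmas~\ref{lemma:largesteigen}--\ref{lemma:smallesteigen} enter.

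\textbf{A missing step in the approximation error.} The leave-one-out bound on the off-diagonal term is self-referential. It contains $\frac{1}{t+1}\big(\frac{\log n}{np}+\frac{1}{\sqrt{np}}\big)\|\theta^{(t+1)}-\overline{\theta}^{(t+1)}\|_\infty$, which must be absorbed into the left-hand side. This is exactly where the hypothesis on $\frac{\log n}{np}+\frac{1}{t+1}\frac{1}{\sqrt{np}}$ is used, and it must hold with a sufficiently small constant. Your sketch never invokes that hypothesis.
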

\begin{proof}
    See \S\ref{appendix:equation_H} for a detailed proof.
\end{proof}
\section{Proof of Auxiliary Lemmas in Appendix~\ref{sec:Preliminaries}}
\label{appendix:first}
In this Appendix, we prove detailed proof of aforementioned building blocks.
\subsection{Proof of Lemma~\ref{lemma:gradient_T_H} and Its Corollaries}
\label{appendix:gradient_T_H}
We provide an upper bound for the gradient vector in $\ell_{2}$-norm as stated in Lemma~\ref{lemma:gradient_T_H}.
\begin{proof}
    Observe that
    \begin{align*}
         \left[ \nabla \mathcal{L}^{(\le T)}(\theta^{(T)})\right]_{i} = \left[ \nabla \ell_{T}(\theta^{(T)}) \right]_{i} = \sum_{j \in [n] \backslash \{ i \}} G_{j,i}^{(T)} \left\{ - y_{j,i}^{(T)}+ \frac{e^{\theta_{i}^T}}{e^{\theta_{i}^T} + e^{\theta_{j}^T}} \right\},
    \end{align*}
    one can derive
    \begin{align*}
        \norm{\nabla \ell_{T}(\theta^{(T)})}_{2} = \sqrt{\sum_{i=1}^{n} \left( \sum_{j \in [n] \backslash \{ i \}} G_{j,i}^{(T)} \left\{ - y_{j,i}^{(T)}+ \frac{e^{\theta_{i}^T}}{e^{\theta_{i}^T} + e^{\theta_{j}^T}} \right\} \right)^{2}}.
    \end{align*}
    Let $\mathcal{U} := \left\lbrace u  \in \mathbb{R}^{n}: \sum_{i \in [n]} u_{i}^{2}\leq 1 \right\rbrace$ be the unit ball in $\mathbb{R}^{n}$. then there exists a symmetric $1/2$-net of $\mathcal{V} \subset \mathcal{U}$ such that for any $u \in \mathcal{U}$, there is a $v \in \mathcal{V}$ satisfying $\norm{u-v}_{2} \leq \frac{1}{2}$. moreover, we also have $\log |\mathcal{V}| \leq C^{\prime} n$ for some harmless constant $C^{\prime} > 0$. See Lemma 5.2 in \cite{vershynin2010introduction}.
    
    By Cauchy-Schwarz inequality and dual form of $L_{2}$ norm, for any $u \in \mathcal{U}$, with the corresponding $v \in \mathcal{V}$, we obtain
    \begin{align*}
        u^{\top} \nabla \ell_{T}(\theta^{(T)}) \leq \norm{u}_{2}\norm{\nabla \ell_{T}(\theta^{(T)})}_{2} \leq \norm{\nabla \ell_{T}(\theta^{(T)})}_{2},
    \end{align*}
    and
    \begin{align*}
        u^{\top} \nabla \ell_{T}(\theta^{(T)}) 
        &= 
        v^{\top} \nabla \ell_{T}(\theta^{(T)}) + (u-v)^{\top} \nabla \ell_{T}(\theta^{(T)}) 
        \\ &\leq 
        v^{\top} \nabla \ell_{T}(\theta^{(T)}) + \norm{u-v}_{2}\norm{\nabla \ell_{T}(\theta^{(T)})}_{2} 
        \\ &\leq 
        v^{\top} \nabla \ell_{T}(\theta^{(T)}) + \frac{1}{2} \norm{\nabla \ell_{T}(\theta^{(T)})}_{2},
    \end{align*}
    which further indicate that
    \begin{align*}
        \norm{\nabla \ell_{T}(\theta^{(T)})}_{2} 
        & = 
        \max_{u \in \mathcal{U}} u^{\top} \nabla \ell_{T} (\theta^{(T)}) 
        \\ & \leq
        2 \max_{v \in \mathcal{V}} v^{\top} \nabla \ell_{T} (\theta^{(T)}) 
        \\ & =
        2 \max_{v \in \mathcal{V}} \sum_{i=1}^{n} v_{i} \left( \sum_{j \in [n] \backslash \{ i \}} G_{j,i}^{(T)} \left\{ - y_{j,i}^{(T)}+ \frac{e^{\theta_{i}^T}}{e^{\theta_{i}^T} + e^{\theta_{}^T}} \right\} \right)
        \\ &=
        2 \max_{v \in \mathcal{V}} \sum_{1 \leq i < j \leq n}^{n} G_{j,i}^{(T)} (v_{i} - v_{j}) \left\{ - y_{j,i}^{(T)}+ \frac{e^{\theta_{i}^T}}{e^{\theta_{i}^T} + e^{\theta_{j}^T}} \right\}.
    \end{align*}
    Fix $v \in \mathcal{V}$ and condition on the graph $\mathcal{G}^{(T)}$, we find $G_{j,i}^{(T)} (v_{i} - v_{j}) \left\{ - y_{j,i}^{(T)}+ \frac{e^{\theta_{i}^T}}{e^{\theta_{i}^T} + e^{\theta_{j}^T}} \right\}$ are independent, mean-zero, and bounded. Applying Hoeffding's inequality and union bound on the last line, for any $\epsilon > 0$, we have
    \begin{align*}
        \mathbb{P} \left( \left| v^{\top} \nabla \ell_{T} (\theta^{(T)}) \right| \geq \epsilon \mid \mathcal{G}^{(T)} \right) \leq 2 \exp \left(-\frac{L\,\epsilon^2}{2\sum_{i<j}G_{j,i}^{(T)}(v_i-v_j)^2}\right),
    \end{align*}
    and thus
    \begin{align*}
        \mathbb{P} \left( \max_{v \in \mathcal{V}}\left| v^{\top} \nabla \ell_{T} (\theta^{(T)}) \right| \geq \epsilon \mid \mathcal{G}^{(T)} \right) 
        & \leq 2 
        |\mathcal{V}| \exp \left(-\frac{L\,\epsilon^2}{2  \max_{v \in \mathcal{V}} \sum_{i<j}G_{j,i}^{(T)}(v_i-v_j)^2}\right)
        \\ & \leq 2
        |\mathcal{V}| \exp \left(-\frac{L\,\epsilon^2}{2  \max_{\norm{v}_{2} \leq 1} \sum_{i<j}G_{j,i}^{(T)}(v_i-v_j)^2}\right)
        \\ & = 2
        |\mathcal{V}| \exp \left(-\frac{L\,\epsilon^2}{2  \lambda_{\max} \left( \boldsymbol{L}_{\mathcal{G}}^{(T)} \right)} \right).
    \end{align*}
    With appropriate choice of $\epsilon = C \sqrt{\frac{\left( \log |\mathcal{V}| + a \right) \lambda_{\max} \left( \boldsymbol{L}_{\mathcal{G}}^{(T)} \right)}{L}}$, and rearranging the above inequality, one can achieve
    \begin{align*}
        \mathbb{P} \left( \max_{v \in \mathcal{V}}\left| v^{\top} \nabla \ell_{T} (\theta^{(T)}) \right| \leq C \sqrt{\frac{\left( \log |\mathcal{V}| + a \right) \lambda_{\max} \left( \boldsymbol{L}_{\mathcal{G}}^{(T)} \right)}{L}} \mid \mathcal{G}^{(T)} \right) \geq 1 - 2e^{-a}.
    \end{align*}
    Using $\log|\mathcal{V}|\le C'n$ and setting $a=11\log n$ we can obtain the desired bound for the conclusion by the law of total probability and Lemma \ref{lemma:L_eigenvalue_min2} that
    \begin{align*}
        \norm{\nabla \ell_{T}(\theta^{(T)})}_{2} 
        \lesssim
        \sqrt{\frac{n^{2}p}{L}}
    \end{align*}
    with probability at least $1 - O(n^{-10})$. 
\end{proof}
Once Lemma~\ref{lemma:gradient_T_H} has been established, we have the following two direct corollaries of Lemma~\ref{lemma:gradient_T_H}.
\begin{corollary} \label{cor:sum_gradient_T}
    Suppose event $\mathcal{A}_{1}^{(T)}$ holds, for any iteration $T \in \mathbb{N}$, the following event
    \begin{align*}
        \mathcal{A}_{2}^{(T)}
        := \left\{
        \norm{\sum_{t=0}^{T} \nabla \ell_t(\theta^{(t)})}_{2}
        \lesssim \sqrt{\frac{(T+1)n^{2}p}{L}}
        \right\}
    \end{align*}
    holds with probability at least $1-\mathcal O(n^{-10})$.
\end{corollary}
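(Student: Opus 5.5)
We need to bound $\|\sum_{t=0}^T \nabla \ell_t(\theta^{(t)})\|_2$. The plan is to treat the summands as a martingale difference sequence in $\mathbb{R}^n$ and replace the one-shot Hoeffding bound in the proof of Lemma~\ref{lemma:gradient_T_H} by a vector martingale (Azuma/Freedman-type) bound on a net.

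Let $\mathcal{F}_t$ be the $\sigma$-field generated by all graphs and comparisons up to round $t-1$, together with the graph $\mathcal{G}^{(t)}$. Then $\theta^{(t)}$ is $\mathcal{F}_t$-measurable, because it is the MLE built from data of rounds $0,\dots,t-1$. Conditional on $\mathcal{F}_t$, the outcomes $y^{(t,l)}_{j,i}$ are independent Bernoulli with mean $\sigma(\theta^{(t)}_j-\theta^{(t)}_i)$, which is exactly the centering term in $\nabla\ell_t(\theta^{(t)})$. Hence $\mathbb{E}[\nabla\ell_t(\theta^{(t)})\mid\mathcal{F}_t]=0$, and $\{\nabla\ell_t(\theta^{(t)})\}_{t}$ is a martingale difference sequence with respect to $\{\mathcal{F}_{t+1}\}$. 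Note the graphs $\mathcal{G}^{(t)}$ are independent of everything else, so adjoining them to the filtration is harmless.

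As in the proof of Lemma~\ref{lemma:gradient_T_H}, fix the symmetric $1/2$-net $\mathcal{V}$ of the unit ball with $\log|\mathcal{V}|\le C'n$. This gives
\[
\Bigl\|\sum_{t}\nabla\ell_t(\theta^{(t)})\Bigr\|_2\le 2\max_{v\in\mathcal V}\sum_{t=0}^T v^\top\nabla\ell_t(\theta^{(t)}).
\]
For fixed $v$, write
\[
v^\top\nabla\ell_t(\theta^{(t)})=\sum_{i<j}G^{(t)}_{ij}(v_i-v_j)\,\xi^{(t)}_{ij},
\]
where $\xi^{(t)}_{ij}$ is an average of $L$ centered Bernoullis. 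Conditionally on $\mathcal{F}_t$, each increment is a sum of independent bounded terms and therefore sub-Gaussian with variance proxy $\frac{1}{4L}\sum_{i<j}G^{(t)}_{ij}(v_i-v_j)^2\le \frac{1}{4L}\lambda_{\max}(\boldsymbol{L}_{\mathcal G}^{(t)})$. On $\mathcal{A}_1^{(t)}$ we have $\lambda_{\max}(\boldsymbol{L}_{\mathcal G}^{(t)})\le 2d^{(t)}_{\max}\le 3np$.

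The variance proxy is random, so to handle it I would work on the good event directly. Truncate each increment by multiplying it by $\mathbb{1}\{\mathcal{A}_1^{(t)}\}$. This indicator is $\mathcal{F}_t$-measurable, so the truncated increments still form a martingale difference sequence, each with deterministic conditional sub-Gaussian proxy $3np/(4L)$. Iterating the conditional moment generating function bound (the Azuma--Hoeffding argument for conditionally sub-Gaussian increments) gives, for each fixed $v$,
\[
\mathbb{P}\Bigl(\Bigl|\sum_{t=0}^T v^\top\nabla\ell_t\,\mathbb 1\{\mathcal A_1^{(t)}\}\Bigr|\ge\epsilon\Bigr)\le 2\exp\Bigl(-\frac{2L\epsilon^2}{3(T+1)np}\Bigr).
\]
Take $\epsilon=C\sqrt{(T+1)np(C'n+11\log n)/L}\asymp\sqrt{(T+1)n^2p/L}$ and union bound over $\mathcal V$. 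The truncated sum then exceeds this level with probability at most $2e^{-11\log n}\le n^{-10}$.

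Finally, I remove the truncation. On $\bigcap_{t\le T}\mathcal{A}_1^{(t)}$ the truncated and untruncated sums coincide. The statement assumes $\mathcal{A}_1^{(T)}$, and I interpret it as the joint degree event over rounds $0,\dots,T$, which is what the proof of Lemma~\ref{lemma:largesteigen} also implicitly uses. If instead one must pay for the failure probability of that joint event, it is at most $(T+1)n^{-10}$, which would be absorbed using the range $T\lesssim n^3/(p\log n)$ by raising the exponent in Lemma~\ref{lemma:degree_concentration}, since Bernstein's inequality yields any polynomial rate.

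The only delicate point is making sure the conditional variance proxy is deterministic, and the $\mathcal{F}_t$-measurable truncation is what handles that. Everything else mirrors the proof of Lemma~\ref{lemma:gradient_T_H} verbatim.
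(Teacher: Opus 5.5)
Your proposal is correct and follows essentially the same route as the paper: a $1/2$-net of the unit ball, Azuma--Hoeffding applied to the martingale $\sum_{t} v^\top\nabla\ell_t(\theta^{(t)})$ with variance proxy controlled by $v^\top\boldsymbol{L}_{\mathcal G}^{(t)}v/L$, a union bound over the net, and the bound $O((T+1)np)$ on the relevant Laplacian eigenvalues from the degree events. Your truncation by the $\mathcal{F}_t$-measurable indicator $\boldsymbol{1}\{\mathcal{A}_1^{(t)}\}$, with $\mathcal{G}^{(t)}$ included in the filtration, is a cleaner justification of the graph-dependent variance proxy that the paper's Azuma step uses without comment.
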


\begin{proof}
    Define the $\sigma$-field
    \begin{align*}
        \mathcal{H}_{s}:=\sigma \left\langle \{G^{(t)}\}_{t=0}^{s},\{Y^{(t)}\}_{t=0}^{s} \right\rangle, \quad \forall s \in [T].
    \end{align*}
    For $v\in\mathbb{R}^n$, let
    \begin{align*}
        S_v
        &:= v^\top S_{T} = v^\top \sum_{t=0}^{T} \nabla \ell_t(\theta^{(t)})
        \\
        &= \sum_{t=0}^{T}\sum_{1\le j<i\le n} G^{(t)}_{i,j}\,(v_i{-}v_j)
        \left\{\frac{e^{\theta^{(t)}_{i}}}{e^{\theta^{(t)}_{i}}+e^{\theta^{(t)}_{j}}} - y^{(t)}_{j,i}\right\},
        \quad y^{(t)}_{j,i}:=\frac{1}{L}\sum_{l=1}^{L}y^{(t,l)}_{j,i}.
    \end{align*}
    Then $\left\lbrace \left( v^{\top}S_{s}, \mathcal{H}_{s} \right): s \in [T] \right\rbrace$ is a martingale difference array such that $v^{\top}S_{s}$ is bounded for all $s \in [T]$. Azuma-Hoeffding's inequality therefore yields, for any $\epsilon>0$,
    \begin{align*}
        \mathbb{P}\big(|S_v|\ge \epsilon\big)
        \ \le\ 2 \exp\!\left(-\frac{L\,\epsilon^2}{2\sum_{t=0}^{T}\sum_{i>j} G^{(t)}_{i,j}(v_i{-}v_j)^2}\right).
    \end{align*}
    Let $\mathcal{U}:=\{u\in\mathbb R^n:\sum_{i=1}^{n}u_i^2\le 1\}$ and fix a symmetric $1/2$-net $\mathcal{V}\subset\mathcal{U}$ such that $\log|\mathcal{V}|\le C' n$. By the dual characterization of the $\ell_2$-norm,
    \begin{align*}
        \norm{\sum_{t=0}^{T} \nabla \ell_t(\theta^{(t)})}_2 \leq 2 \max_{v\in\mathcal{V}} S_v
        = 2 \max_{v\in\mathcal{V}} |S_v|.
    \end{align*}
    Applying the union bound, we obtain for any $a>0$,
    \begin{align*}
        \mathbb{P}\!\left(\max_{v\in\mathcal{V}}|S_v|\ge \epsilon \right)
        & \leq 
        2 |\mathcal{V}|\exp\!\left(-\frac{L\,\epsilon^2}{2 \max_{v\in\mathcal{V}}\sum_{t=0}^{T}\sum_{i>j}G^{(t)}_{i,j}(v_i{-}v_j)^2}\right)
        \\ & \leq
        2 |\mathcal{V}|\exp\!\left(-\frac{L\,\epsilon^2}{2 \max_{\norm{v}_{2} \leq 1}\sum_{t=0}^{T}\sum_{i>j}G^{(t)}_{i,j}(v_i{-}v_j)^2}\right)
        \\ &=
        2 |\mathcal{V}|\exp\!\left(-\frac{L\,\epsilon^2}{2\lambda_{\max}\!\left(\sum_{t=0}^{T} L_{\mathcal{G}}^{(t)}\right)}\right).
    \end{align*}
    Choose
    \begin{align*}
        \epsilon \;=\; C\,
        \sqrt{\frac{\big(\log|\mathcal{V}|+a\big)\ \lambda_{\max}\!\left(\sum_{t=0}^{T} L_{\mathcal{G}}^{(t)}\right)}{L}},
    \end{align*}
    with a sufficiently large absolute constant $C>0$, which implicates
    \begin{align*}
        \mathbb{P} \left( \norm{\sum_{t=0}^{T}\nabla\ell_t(\theta^{(t)})}_2
        \le C^{\prime\prime}\sqrt{\frac{(T{+}1)\,n^{2}p}{L}}
        \right)
        \geq 1-\mathcal O(n^{-10}),
    \end{align*}
    after absorbing numerical constants into $C''>0$, which proves the claim.

\end{proof}

\begin{corollary} \label{cor:gradient_T_true_1}
    Suppose event $\mathcal{A}_{1}^{(T)}$ holds, for any iteration $T \in \mathbb{N}$, the following event
    \begin{align}
        \mathcal{A}_{3}^{(T)} := \Bigg\lbrace \norm{\nabla \mathcal{L}^{(\leq T)}(\theta^{\ast})}_{2} \lesssim  \sqrt{\frac{ (T+1)n^{2}p}{L}} + np \sum_{t=0}^{T} \norm{\theta^{(t)}-\theta^\ast}_2   \Bigg\rbrace
    \end{align}
    occurs with probability exceeding $1 - \mathcal{O}(n^{-10})$.
\end{corollary}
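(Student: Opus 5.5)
The plan is to compare $\nabla \mathcal{L}^{(\leq T)}(\theta^{\ast})$ with the sum $\sum_{t=0}^{T}\nabla \ell_t(\theta^{(t)})$, which Corollary~\ref{cor:sum_gradient_T} already controls. Write
\begin{align*}
    \nabla \mathcal{L}^{(\leq T)}(\theta^{\ast})
    = \sum_{t=0}^{T} \nabla \ell_t(\theta^{(t)})
    + \sum_{t=0}^{T} \left[ \nabla \ell_t(\theta^{\ast}) - \nabla \ell_t(\theta^{(t)}) \right].
\end{align*}
On the event $\mathcal{A}_{2}^{(T)}$, the first sum has $\ell_2$ norm $\lesssim \sqrt{(T+1)n^2p/L}$ with probability $1-\mathcal{O}(n^{-10})$. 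This produces the first term of the claimed bound.

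For the second sum, each $\ell_t$ is smooth with Hessian $\nabla^2\ell_t(\theta)=\sum_{(i,j)\in\mathcal{E}^{(t)}} \frac{e^{\theta_i}e^{\theta_j}}{(e^{\theta_i}+e^{\theta_j})^2}(\boldsymbol{e}_i-\boldsymbol{e}_j)(\boldsymbol{e}_i-\boldsymbol{e}_j)^\top \preceq \frac14 \boldsymbol{L}_{\mathcal{G}}^{(t)}$, uniformly in $\theta$. By the mean value theorem in integral form,
\begin{align*}
    \nabla \ell_t(\theta^{\ast}) - \nabla \ell_t(\theta^{(t)})
    = \int_0^1 \nabla^2 \ell_t\big(\theta^{(t)} + s(\theta^{\ast}-\theta^{(t)})\big)\,ds \,(\theta^{\ast}-\theta^{(t)}).
\end{align*}
Hence its $\ell_2$ norm is at most $\frac14\lambda_{\max}(\boldsymbol{L}_{\mathcal{G}}^{(t)})\norm{\theta^{(t)}-\theta^\ast}_2 \le \frac12 d^{(t)}_{\max}\norm{\theta^{(t)}-\theta^\ast}_2$. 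Under the degree-concentration events $\mathcal{A}_1^{(t)}$ of Lemma~\ref{lemma:degree_concentration}, $d^{(t)}_{\max}\le \frac32 np$, so the difference is bounded by $np\norm{\theta^{(t)}-\theta^\ast}_2$ up to a constant. The triangle inequality then gives the second term $np\sum_{t}\norm{\theta^{(t)}-\theta^\ast}_2$.

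The only point needing care is the probability bookkeeping. The bound is deterministic on $\mathcal{A}_2^{(T)}\cap\bigcap_{t\le T}\mathcal{A}_1^{(t)}$. The corollary assumes $\mathcal{A}_1^{(T)}$, and the preceding corollary's constant implicitly uses degree control at all rounds. A union bound over $t\le T$ costs a factor $T+1$. Since $T\lesssim n^3/(p\log n)$, this is absorbed by choosing the exponent in Lemma~\ref{lemma:degree_concentration} large enough, in the same way the paper absorbs the union over iterations elsewhere. Because the random $\theta^{(t)}$ enter only through a deterministic Lipschitz bound, no further concentration is needed, and the main (mild) obstacle is just this uniform-in-$t$ degree control.
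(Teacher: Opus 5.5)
Your proposal is correct and is essentially the paper's proof: the same split of $\nabla\mathcal{L}^{(\leq T)}(\theta^{\ast})$ into $\sum_{t=0}^{T}\nabla\ell_t(\theta^{(t)})$, controlled by Corollary~\ref{cor:sum_gradient_T}, plus a remainder bounded using $\sigma'\le 1/4$ and $\lambda_{\max}(\boldsymbol{L}_{\mathcal{G}}^{(t)})\le 2d^{(t)}_{\max}\lesssim np$ (the paper applies the scalar mean value theorem edge by edge, while you use its integral form). Your remark that degree control is needed at every round $t\le T$, not only on $\mathcal{A}_1^{(T)}$, makes explicit a union bound the paper leaves implicit; it costs nothing, since the Chernoff exponent in Lemma~\ref{lemma:degree_concentration} can be raised by enlarging $c_0$.
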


\begin{proof}
    Observe that
    \begin{align*}
        \lambda_{T} = \frac{T+1}{n} = (T+1) \lambda_{0}
    \end{align*}
    and
    \begin{align} \label{eq:nabla_decomp}
        \nabla \mathcal{L}^{(\leq T)}(\theta^{\ast}) 
        &= 
        \sum_{t=0}^{T} \nabla l_t(\theta^{\ast}) 
        \\ &=
        \sum_{t=0}^{T} \sum_{(i,j) \in \mathcal{E}_t,\, i>j} \left\{ - y_{j,i}^{(t)}+ \pi_{ij}(\theta^{\ast})\right\} (\boldsymbol{e}_{i} - \boldsymbol{e}_{j})
        \\ &= 
        N_{T} + B_{T},
    \end{align}
    where
    \begin{align*}
        N_{T} &:= \sum_{t=0}^{T} \sum_{(i,j) \in \mathcal{E}_t,\, i>j} \left\{ - y_{j,i}^{(t)}+ \pi_{ij}(\theta^{(t)})\right\} (\boldsymbol{e}_{i} - \boldsymbol{e}_{j}) = \sum_{t=0}^{T} \nabla \ell_{t}(\theta^{(t)}) , 
        \\
        B_{T} &:= \sum_{t=1}^{T} \sum_{(i,j) \in \mathcal{E}_t,\, i>j} \left\{ \pi_{ij}(\theta^{\ast}) - \pi_{ij}(\theta^{(t)})\right\} (\boldsymbol{e}_{i} - \boldsymbol{e}_{j}), \quad \pi_{ij}(\theta^{\ast}) = \pi_{ij}(\theta^{(0)}).
    \end{align*}
    Lemma \ref{cor:sum_gradient_T} indicates that 
    \begin{align*}
        \norm{N_{T}}_{2} = \norm{\sum_{t=0}^{T} \nabla \ell_{t}(\theta^{(t)})}_{2} \lesssim \sqrt{\frac{(T+1) n^{2}p}{L}}.
    \end{align*}
    For each fixed $t$, by the mean-value theorem,
    \begin{align*}
        \pi_{ij}(\theta^\ast)-\pi_{ij}(\theta^{(t)}) = \sigma'(\xi_{ij}^{(t)})\big[(\theta^\ast_i-\theta^\ast_j)-(\theta^{(t)}_{i}-\theta^{(t)}_{j})\big] = \sigma'(\xi_{ij}^{(t)})\,( \theta^\ast-\theta^{(t)})^{\top} \left( \mathbf e_i-\mathbf e_j \right),
    \end{align*}
    for some $\xi_{ij}^{(t)}$ on the segment between $( \theta^{(t)}_{i}-\theta^{(t)}_{j} )$ and $( \theta^\ast_i-\theta^\ast_j )$. Then we have
    \begin{align*}
    B_T=\sum_{t=0}^T \sum_{(i,j)\in E_t,\ i>j}\sigma'(\xi_{ij}^{(t)})(\mathbf e_i-\mathbf e_j)(\mathbf e_i-\mathbf e_j)^\top \, (\theta^\ast-\theta^{(t)}).
    \end{align*}
    It  follows immediately from $0\le\sigma'(\cdot)\le \frac{1}{4}$ and $\norm{L_{\mathcal G}^{(t)}} \leq 2 d^{(t)}_{\operatorname{max}} \leq \frac{3np}{2}$ on the event $\mathcal{A}_{1}^{(T)}$ that
    \begin{align*}
        \norm{B_T}_2 \le \frac14\sum_{t=0}^T \norm{L_{\mathcal G}^{(t)}} \,\norm{\theta^{(t)}-\theta^\ast}_2 \leq \frac{3}{8} np \sum_{t=0}^{T} \norm{\theta^{(t)}-\theta^\ast}_2.
    \end{align*}
    This completes the proof.
\end{proof}
\subsection{Proof of Lemma~\ref{lemma:smallesteigen}}
\label{appendix:smallesteigen}
We provide the proof of Lemma~\ref{lemma:smallesteigen} by providing the lower bounds for $\lambda_{\operatorname{min}, \perp} \left( \nabla^{2} \mathcal{L}^{\leq (T)}(\theta) \right)$ and $\lambda_{\operatorname{min}, \perp} \left( L_{\mathcal{G}}^{(T)} \right)$, respectively.
\begin{lemma} \label{lemma:Hessian_eigenvalue_min2}
    For all $\theta \in \mathbb{R}^{n}$ such that $\norm{\theta - \theta^{\ast}}_{\infty} \leq C$ for some $C \geq 0$, we have
    \begin{align}
        \lambda_{\operatorname{min}, \perp} \left( \nabla^{2} \mathcal{L}^{\leq (T)}(\theta) \right) \geq \frac{1}{4 \kappa e^{2C}} \sum_{t=0}^{T} \lambda_{\operatorname{min}, \perp} \left( \boldsymbol{L}_{\mathcal{G}}^{(t)} \right)
    \end{align}
    for any iteration $T \in \mathbb{N}$.
\end{lemma}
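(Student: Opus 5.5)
The plan is to bound the Hessian from below termwise and reduce everything to the Laplacians $\boldsymbol{L}_{\mathcal{G}}^{(t)}$. Recall that
\begin{align*}
    \nabla^{2} \mathcal{L}^{(\leq T)}(\theta) = \sum_{t=0}^{T} \sum_{(i,j) \in \mathcal{E}^{(t)},\, i>j} \frac{e^{\theta_i}e^{\theta_j}}{(e^{\theta_i} + e^{\theta_j})^{2}}(\boldsymbol{e}_{i} - \boldsymbol{e}_{j})(\boldsymbol{e}_{i} - \boldsymbol{e}_{j})^{\top}.
\end{align*}
Here $\kappa$ is the usual condition number $e^{\theta_{\max}-\theta_{\min}}$, with $\theta^\ast_i\in[\theta_{\min},\theta_{\max}]$.

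\textbf{Step 1: uniform lower bound on the edge weights.} I will bound each weight uniformly over all edges. Write the weight as
\begin{align*}
    w_{ij}(\theta)=\frac{e^{\theta_i}e^{\theta_j}}{(e^{\theta_i}+e^{\theta_j})^{2}}=\frac{e^{-|\theta_i-\theta_j|}}{(1+e^{-|\theta_i-\theta_j|})^2}\geq \frac{1}{4}e^{-|\theta_i-\theta_j|}.
\end{align*}
If $\norm{\theta-\theta^\ast}_\infty\le C$, then the triangle inequality gives
\begin{align*}
    |\theta_i-\theta_j|\le |\theta_i^\ast-\theta_j^\ast|+2C\le (\theta_{\max}-\theta_{\min})+2C = \log\kappa+2C.
\end{align*}
Hence $w_{ij}(\theta)\ge \frac{1}{4\kappa e^{2C}}$ for every pair $(i,j)$, independently of $t$ and of the graph.

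\textbf{Step 2: Loewner comparison and passage to $\lambda_{\min,\perp}$.} Each summand $(\boldsymbol{e}_i-\boldsymbol{e}_j)(\boldsymbol{e}_i-\boldsymbol{e}_j)^\top$ is positive semidefinite. Step 1 therefore yields
\begin{align*}
    \nabla^{2}\mathcal{L}^{(\le T)}(\theta)\succeq \frac{1}{4\kappa e^{2C}}\sum_{t=0}^{T}\boldsymbol{L}_{\mathcal{G}}^{(t)}.
\end{align*}
Now take any $z$ with $\boldsymbol{1}^\top z=0$. Then
\begin{align*}
    z^\top \nabla^{2}\mathcal{L}^{(\le T)}(\theta) z \ge \frac{1}{4\kappa e^{2C}}\sum_{t} z^\top \boldsymbol{L}_{\mathcal{G}}^{(t)} z \ge \frac{1}{4\kappa e^{2C}}\sum_t \lambda_{\min,\perp}(\boldsymbol{L}_{\mathcal{G}}^{(t)})\norm{z}_2^2,
\end{align*}
where the last inequality uses the definition of $\lambda_{\min,\perp}$ applied to each Laplacian separately. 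Taking the infimum over such $z$ gives the claim. The bound is uniform in $\theta$ over the $\ell_\infty$ ball, so it holds simultaneously for all such $\theta$.

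\textbf{Main obstacle.} There is no real obstacle; the only points needing care are fixing the convention for $\kappa$ and ensuring the weight bound uses $\ell_\infty$ control rather than $\ell_2$ control. The substantive work for Lemma~\ref{lemma:smallesteigen} lies instead in the companion bound $\lambda_{\min,\perp}(\boldsymbol{L}_{\mathcal{G}}^{(t)})\ge np/2$ for Erd\H{o}s--R\'enyi graphs with $np\gtrsim\log n$. That bound follows from matrix concentration: the expected Laplacian restricted to $\boldsymbol{1}^\perp$ equals $np\,\boldsymbol{I}$, and the deviation is $\lesssim\sqrt{np\log n}$ with probability $1-\mathcal{O}(n^{-10})$. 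Combined with the present lemma, it gives the constant $1/(8\kappa e^{2C})$ stated in Lemma~\ref{lemma:smallesteigen}.
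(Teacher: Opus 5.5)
Your proof is correct and follows the paper's argument. You obtain the same uniform lower bound $\frac{1}{4\kappa e^{2C}}$ on the edge weights $\frac{e^{\theta_i}e^{\theta_j}}{(e^{\theta_i}+e^{\theta_j})^{2}}$ from $|\theta_i-\theta_j|\le \log\kappa+2C$, and then compare with $\sum_{t}\boldsymbol{L}_{\mathcal{G}}^{(t)}$. Your Loewner-order and Rayleigh-quotient step on $\boldsymbol{1}^{\perp}$ is actually a more precise justification of the reduction that the paper attributes loosely to Weyl's inequality.
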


\begin{proof}
    By Weyl's inequality, it suffices to prove that
    \begin{align*}
        \underset{1 \leq i, j \leq n}{\min} \frac{e^{\theta_i}e^{\theta_j}}{(e^{\theta_i} + e^{\theta_j})^{2}} \geq \frac{1}{4\kappa e^{2C}}
    \end{align*}
    for all $\theta \in \mathbb{R}^{n}$ obeying $\norm{\theta - \theta^{\ast}}_{\infty} \leq C$.

    For pair $(i,j)$, without loss of generality we assume $\theta_{i} < \theta_{j}$, then obtain
    \begin{align*}
        \frac{e^{\theta_i}e^{\theta_j}}{(e^{\theta_i} + e^{\theta_j})^{2}} = \frac{e^{\theta_i-\theta_j}}{(1 + e^{\theta_i-\theta_j})^{2}} = \frac{e^{-|\theta_i-\theta_j|}}{(1 + e^{-|\theta_i-\theta_j|})^{2}} \geq \frac{1}{4} e^{-|\theta_i-\theta_j|}.
    \end{align*}
    On the other hand, it holds that
    \begin{align*}
        |\theta_i-\theta_j| 
        &\leq |\theta_{i}-\theta^{\ast}_{i}| + |\theta^{\ast}_{i}-\theta^{\ast}_{j}| + |\theta^{\ast}_{j}-\theta_{j}|
        \\ &\leq \theta^{\ast}_{\operatorname{max}} - \theta^{\ast}_{\operatorname{min}} + 2 \norm{\theta-\theta^{\ast}}_{\infty}
        \\ &\leq 
        \log \kappa + 2C.
    \end{align*}
    There, we obtain
    \begin{align*}
        e^{-|\theta_i-\theta_j|} \geq \frac{1}{4\kappa e^{2C}},
    \end{align*}
    which completes the proof.
\end{proof}

\begin{lemma} \label{lemma:L_eigenvalue_min2}
    Let $\mathcal{G}^{(T)} \overset{i.i.d.}{\sim} \mathcal{G}_{n,p}$ for any iteration $T \in \mathbb{N}$, and suppose that $p > \frac{c_{0} \log n}{n}$ for some sufficiently large constant $c_{0} > 0$. Then one has
    \begin{align}
         \mathbb{P} \left( \frac{3}{2}np \geq \lambda_{\operatorname{max}}\left( L_{\mathcal{G}}^{(T)} \right) \geq \lambda_{\operatorname{min}, \perp} \left( L_{\mathcal{G}}^{(T)} \right) \geq \frac{1}{2}np \right) \geq 1 - \mathcal{O} \left( n^{-10} \right).
    \end{align}
\end{lemma}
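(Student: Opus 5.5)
The plan is to prove the two sides of the spectral sandwich separately. The upper bound reduces to a maximum degree bound. The lower bound comes from writing the Laplacian as its expectation plus a centered random matrix and controlling the fluctuation by matrix concentration.

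\emph{Upper bound.} For any $z$,
\[
z^{\top}L_{\mathcal{G}}^{(T)}z=\sum_{(i,j)\in\mathcal{E}^{(T)}}(z_i-z_j)^2 \le 2\sum_{(i,j)\in\mathcal{E}^{(T)}}(z_i^2+z_j^2)=2\sum_i d_i^{(T)} z_i^2 .
\]
Hence $\lambda_{\max}(L_{\mathcal{G}}^{(T)})\le 2d^{(T)}_{\max}$. Combined with Lemma~\ref{lemma:degree_concentration} this only gives $3np$, so for the stated constant $\tfrac32 np$ we need a sharper route. We use the same decomposition as below: $\lambda_{\max}\le \lambda_{\max}(\mathbb{E}L)+\|L-\mathbb{E}L\|\le np+o(np)$.

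\emph{Lower bound via the expectation.} Since the edge indicators are i.i.d.\ Bernoulli$(p)$,
\[
\mathbb{E}L_{\mathcal{G}}^{(T)}=p\,(n\boldsymbol{I}_n-\boldsymbol{1}\boldsymbol{1}^{\top}).
\]
This matrix equals $np$ on $\boldsymbol{1}^{\perp}$. The deviation splits as $L_{\mathcal{G}}^{(T)}-\mathbb{E}L_{\mathcal{G}}^{(T)}=(D-\mathbb{E}D)-(A-\mathbb{E}A)$, where $D$ is the degree matrix and $A$ the adjacency matrix.

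\emph{Controlling the two deviation terms.} The diagonal term is handled by Bernstein's inequality with a union bound over the $n$ vertices. This gives $\max_i|d_i-(n-1)p|\lesssim\sqrt{np\log n}$ with probability $1-\mathcal{O}(n^{-11})$, which is $\le \epsilon\,np$ once $np\ge c_0\log n$ with $c_0$ large. The off-diagonal term is a symmetric matrix with independent centered entries above the diagonal, bounded by $1$ and of variance $\le p$. Matrix Bernstein yields $\|A-\mathbb{E}A\|\lesssim\sqrt{np\log n}+\log n$ with probability $1-\mathcal{O}(n^{-10})$; the sharper bound $\sqrt{np}$ is also available but not needed. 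Under $p\ge c_0\log n/n$ both terms are at most $\tfrac14 np$ for large $c_0$.

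\emph{Conclusion.} Weyl's inequality, applied to the quadratic form restricted to $\boldsymbol{1}^{\perp}$, then gives
\[
\lambda_{\min,\perp}\ge np-\tfrac12np=\tfrac12np,\qquad \lambda_{\max}\le np+\tfrac12np=\tfrac32np .
\]
The main obstacle is making the constants in the matrix Bernstein step explicitly small relative to $np$. This is where the assumption that $c_0$ is sufficiently large is used, together with choosing the deviation level $t\asymp\sqrt{np\log n}$ so that the tail bound reaches $n^{-10}$.
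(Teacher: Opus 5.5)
Your proof is correct, but it takes a different route from the paper. The paper compresses the Laplacian to $\boldsymbol{1}^{\perp}$ with a partial isometry $R^{(T)}$. It writes $R^{(T)}L_{\mathcal{G}}^{(T)}{R^{(T)}}^{\top}$ as a sum of independent PSD summands $E_{i,j}^{(T)}$ with $\|E_{i,j}^{(T)}\|=2$ and expectation $pn\boldsymbol{I}_{n-1}$. It then applies the multiplicative matrix Chernoff inequality with $t=\tfrac12$ and $t=\tfrac32$, which gives both tails at once with bounds of the form $(n-1)c^{pn/2}$ for explicit $c<1$. The compression is essential there: on $\mathbb{R}^n$ the expectation is singular, so the lower Chernoff tail would be vacuous.

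You argue additively instead: scalar Bernstein plus a union bound for $D-\mathbb{E}D$, matrix Bernstein for $A-\mathbb{E}A$, then Weyl on the quadratic form over $\boldsymbol{1}^{\perp}$. No compression is needed because $L$, $\mathbb{E}L$ and their difference all annihilate $\boldsymbol{1}$. Your constants close: pick the deviation level $C\sqrt{np\log n}$ with $C$ large enough for an $n^{-10}$ tail, then take $c_0\ge 16C^2$.

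The paper's route is shorter and exploits positive semidefiniteness with a single tool. Yours produces the operator-norm bound $\|L_{\mathcal{G}}^{(T)}-\mathbb{E}L_{\mathcal{G}}^{(T)}\|\lesssim\sqrt{np\log n}$, which controls the whole restricted spectrum at once. It is the same type of estimate the paper later invokes for $\nabla^{2}\ell_t(\theta^{\ast})$ in the proof of Theorem~\ref{thm:asy_norm}.

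One small correction: in your upper-bound paragraph, $\|L-\mathbb{E}L\|$ is not $o(np)$ when $np\asymp\log n$. It is at most $\tfrac12 np$ only because $c_0$ is large, and that is what your final step actually uses. Your preliminary degree bound $\lambda_{\max}\le 2d_{\max}$ is therefore unnecessary.
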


\begin{proof}
    Let $R^{(T)}$ be any $(n-1) \times n$ partial isometry with orthonormal rows such that the row space is $\boldsymbol{1}^{\perp} := \lbrace v \in \mathbb{R}^{n} \mid \boldsymbol{1}^{\top}v = 0 \rbrace$ at any iteration $T \in \mathbb{N}$. Then, it holds that
    \begin{enumerate}
        \item $R^{(T)} {R^{(T)}}^{\top} = \boldsymbol{I}_{n-1}, \quad R^{(T)} \, \boldsymbol{1} = 0;$
        \item $\norm{R^{(T)} v}_{2} = \norm{v}_{2}, \quad {R^{(T)}}^{\top}{R^{(T)}}v = v, \quad \forall v \in \boldsymbol{1}^{\perp};$
        \item $\norm{L^{(T)}_{\mathcal{G}} } = \norm{ R^{(T)}L^{(T)}_{\mathcal{G}}{R^{(T)}}^{\top} }, \quad \lambda_{\operatorname{min}, \perp} \left( L_{\mathcal{G}}^{(T)} \right) = \lambda_{\operatorname{min}} \left( R^{(T)}L^{(T)}_{\mathcal{G}}{R^{(T)}}^{\top} \right);$
        \item $\mathbb{E} \Big\lbrack L^{(T)}_{\mathcal{G}} \Big\rbrack = p \sum_{1 \leq i < j \leq n} (\boldsymbol{e}_{i} - \boldsymbol{e}_{j}) (\boldsymbol{e}_{i} - \boldsymbol{e}_{j})^{\top} = p \left( n\boldsymbol{I}_{n} - \boldsymbol{1} \boldsymbol{1}^{\top} \right).$
    \end{enumerate}

    Let $E^{(T)}_{i,j} := R^{(T)}(\boldsymbol{e}_{i} - \boldsymbol{e}_{j}) (\boldsymbol{e}_{i} - \boldsymbol{e}_{j})^{\top} {R^{(T)}}^{\top}$ for $(i,j) \in \mathcal{E}_{T}$ and $i > j$. Then we have 
    \begin{enumerate}
        \item $R^{(T)}L^{(T)}_{\mathcal{G}}{R^{(T)}}^{\top} = \sum_{(i,j) \in \mathcal{E}_{T},i > j} E^{(T)} _{i,j};$
        \item $E^{(T)} _{i,j} \succeq 0, \quad \norm{ E^{(T)} _{i,j}} = \norm{(\boldsymbol{e}_{i} - \boldsymbol{e}_{j}) (\boldsymbol{e}_{i} - \boldsymbol{e}_{j})^{\top}} = 2;$
        \item $\lambda_{\operatorname{min}} \left( \mathbb{E} \Big\lbrack R^{(T)}L^{(T)}_{\mathcal{G}}{R^{(T)}}^{\top} \Big\rbrack \right) = \lambda_{\operatorname{min}} \left( p R^{T} \left( n\boldsymbol{I}_{n} - \boldsymbol{1} \boldsymbol{1}^{\top} \right) {R^{(T)}}^{\top} \right) = \lambda_{\operatorname{min}} \left( pn \boldsymbol{I}_{n-1} \right) = pn.$
    \end{enumerate}
    By the matrix Chernoff inequality (Tropp, 2015), we obtain, for $t \in (0,1)$,
    \begin{align*}
        \mathbb{P} \left( \lambda_{\operatorname{min}} \left( R^{(T)}L^{(T)}_{\mathcal{G}}{R^{(T)}}^{\top} \right) \leq t \cdot \lambda_{\operatorname{min}} \left( \mathbb{E} \Big\lbrack R^{(T)}L^{(T)}_{\mathcal{G}}{R^{(T)}}^{\top} \Big\rbrack \right) \right) \leq (n-1) \Big\lbrack \frac{e^{t-1}}{t^{t}} \Big\rbrack^{\lambda_{\operatorname{min}} \left( \mathbb{E} \Big\lbrack R^{(T)}L^{(T)}_{\mathcal{G}}{R^{(T)}}^{\top} \Big\rbrack \right)/2}.
    \end{align*}
    Let $t = \frac{1}{2}$, we have
    \begin{align*}
        \mathbb{P} \left( \lambda_{\operatorname{min}, \perp} \left( L_{\mathcal{G}}^{(T)} \right) \leq \frac{1}{2}pn \right) \leq (n-1) \Big\lbrack \frac{e^{-\frac{1}{2}}}{\frac{1}{2}^{\frac{1}{2}}} \Big\rbrack^{\frac{pn}{2}}.
    \end{align*}
    Similarly,
    \begin{align*}
        \mathbb{P} \left( \lambda_{\operatorname{max}} \left( L_{\mathcal{G}}^{(T)} \right) \geq \frac{3}{2}pn \right) \leq (n-1) \Bigl[ \frac{e^{\frac{1}{2}}}{\frac{3}{2}^{\frac{3}{2}}} \Bigr]^{\frac{pn}{2}}.
    \end{align*}
    As a result, if $p > \frac{c_{0} \log n}{n}$ for some $c_{0} > 0$, we have 
    \begin{align*}
        \mathbb{P} \left( \frac{3}{2}pn \geq \lambda_{\operatorname{max}}\left( L_{\mathcal{G}}^{(T)} \right) \geq \lambda_{\operatorname{min}, \perp} \left( L_{\mathcal{G}}^{(T)} \right) \geq \frac{1}{2}pn \right) \geq  1 - \mathcal{O} \left( n^{-10} \right).
    \end{align*}
    This completes the proof.
\end{proof}

By combining Lemma~\ref{lemma:Hessian_eigenvalue_min2} with Lemma~\ref{lemma:L_eigenvalue_min2}, we reach the conclusion of Lemma~\ref{lemma:smallesteigen}.
\section{Proof of Inference Results in Appendix~\ref{appendix:proof_outline}}
\subsection{Proof Outline of Lemma~\ref{lemma:equation_D_plus} in Step~I}
\label{appendix:equation_D_plus}
Our first step is to establish the following lemma that $\norm{\theta^{(T+1)} - \theta^{\ast}}_{\infty}$ is bounded by a constant with high probability even though the MLE has no constraint or regularization. The lemma also shows that the unregularized estimator $\theta^{(T+1)}$ is sufficiently close to its regularized counterpart $\theta^{(T+1)}_{\lambda}$.

Leveraging the leave-one-out sequences, our want to establish that, for any fixed $T$, the iterates $\{\theta^{(\tau)}\}_{\tau=0,1,\cdots}$ remain close to $\theta^{\ast}$ in the $\ell_{\infty}$-norm. Suppose the auxiliary induction hypotheses \eqref{eq:A}--\eqref{eq:C} hold for $0,\cdots, \tau$-th step and the primary induction hypotheses \eqref{eq:D}--\eqref{eq:H} hold up to iteration $T$, then with probability exceeding $1 - \mathcal{O}(n^{-10})$, we verify the induction hypotheses \eqref{eq:A}--\eqref{eq:C} at $(\tau+1)$-step one by one.
\subsubsection[Proof of Auxiliary Induction Hypothesis (A)]{Proof of Auxiliary Induction Hypothesis \eqref{eq:A}}
\label{appendix:equation_A}
We first establish that the gradient of the loss function $\ell_{t}(\cdot)$ does not deviate substantially, in the $\ell_2$-norm, from the gradient of its leave-one-out counterpart.
\begin{lemma} \label{lemma:gradient_t_diff_LOO}
    The following event holds
    \begin{align*}
        \mathcal A_{4}^{(T)}
        := \left\{ 
        \norm{\nabla \ell_{t}(\theta^{(\tau, m)}) - \nabla \ell_{t}^{(m)}(\theta^{(\tau, m)})}_{2} 
        \lesssim 
        \sqrt{\frac{np \log n}{L}} + \sqrt{np \log n} \norm{\theta^{(\tau, m)} - \theta^{(t)}}_{\infty} \right \}
    \end{align*}
    with probability at least $1-\mathcal O(n^{-10})$.
\end{lemma}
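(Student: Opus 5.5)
The plan is to compute the difference $\nabla \ell_t(\theta) - \nabla \ell_t^{(m)}(\theta)$ explicitly at $\theta = \theta^{(\tau,m)}$. By construction, $\ell_t$ and $\ell_t^{(m)}$ share every term not involving item $m$. The difference therefore only contains edges incident to $m$. Write $\pi_{im}(\theta) := e^{\theta_i}/(e^{\theta_i}+e^{\theta_m})$. Then
\begin{align*}
\nabla \ell_t(\theta) - \nabla \ell_t^{(m)}(\theta)
= \sum_{i\neq m} \Big\{ G^{(t)}_{im}\big(\pi_{im}(\theta) - y^{(t)}_{m,i}\big) - p\big(\pi_{im}(\theta) - \pi_{im}(\theta^{(t)})\big)\Big\}(\boldsymbol{e}_i - \boldsymbol{e}_m),
\end{align*}
with the appropriate sign convention for $i<m$. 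Add and subtract $G^{(t)}_{im}\pi_{im}(\theta^{(t)})$. This splits the difference into three vectors:
\begin{align*}
u_1 &:= \sum_{i\neq m} G^{(t)}_{im}\big(\pi_{im}(\theta^{(t)}) - y^{(t)}_{m,i}\big)(\boldsymbol{e}_i-\boldsymbol{e}_m),\\
u_2 &:= \sum_{i\neq m} \big(G^{(t)}_{im}-p\big)\big(\pi_{im}(\theta) - \pi_{im}(\theta^{(t)})\big)(\boldsymbol{e}_i-\boldsymbol{e}_m),\\
u_3 &:= 0 \text{ (the } p\text{-terms cancel exactly once we use } G^{(t)}_{im}=p+(G^{(t)}_{im}-p)\text{)}.
\end{align*}
Each vector has a nonzero $i$-th coordinate for $i\neq m$ and an $m$-th coordinate equal to minus the sum of the others.

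For $u_1$, condition on $\mathcal G^{(t)}$ and on $\theta^{(t)}$. The latter is measurable with respect to rounds before $t$, and $y^{(t)}$ is fresh given it. The $i$-th coordinates are then independent, mean zero, and bounded by $G^{(t)}_{im}$ with variance $\lesssim G^{(t)}_{im}/L$. Their squares sum to $\lesssim d_m^{(t)}\cdot(\text{coordinatewise bound})$, and each coordinate is $\lesssim \sqrt{\log n/L}$ by Hoeffding. Hence the off-$m$ part is $\lesssim \sqrt{np\log n/L}$ on $\mathcal A_1^{(t)}$. The $m$-th coordinate is a sum of $d_m^{(t)}\le 3np/2$ independent averages, which Hoeffding bounds by $\sqrt{np\log n/L}$. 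A union bound over $m,t$ (at most polynomially many) keeps the probability at $1-\mathcal O(n^{-10})$.

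For $u_2$, the Lipschitz bound $|\sigma'|\le 1/4$ gives $|\pi_{im}(\theta)-\pi_{im}(\theta^{(t)})|\le \tfrac12\|\theta-\theta^{(t)}\|_\infty$. The off-$m$ coordinates then contribute at most $\big(\sum_i (G_{im}-p)^2\big)^{1/2}\|\theta-\theta^{(t)}\|_\infty \lesssim \sqrt{np}\,\|\theta-\theta^{(t)}\|_\infty$.

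The $m$-th coordinate is the \textbf{main obstacle}. The crude triangle bound would give $np\|\cdot\|_\infty$, which is too large. The key point is that $\theta^{(\tau,m)}$ is constructed without using the edges $\{G^{(t)}_{im}\}_i$ at the current round, so those edges are independent of the weights $a_i := \pi_{im}(\theta^{(\tau,m)})-\pi_{im}(\theta^{(t)})$. Conditional on the weights, Bernstein gives
\begin{align*}
\Big|\sum_i (G^{(t)}_{im}-p)a_i\Big| \lesssim \sqrt{np\log n}\,\max_i|a_i| + \log n\,\max_i|a_i| \lesssim \sqrt{np\log n}\,\|\theta^{(\tau,m)}-\theta^{(t)}\|_\infty,
\end{align*}
using $np\gtrsim\log n$.

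One point needs care. The loss $\ell_t^{(m)}$ also involves earlier rounds, and $\theta^{(t)}$ depends on $G^{(s)}$ for $s<t$ only. So I must check that the leave-one-out sequence for round $t$ indeed discards $G^{(t)}_{\cdot m}$ and $y^{(t)}_{\cdot m}$ (and, for the later definition, the $T\backslash m$ version). If independence of $\theta^{(\tau,m)}$ from the round-$t$ edges at $m$ is unclear, I would fall back on an $\epsilon$-net over the weight vector. That fallback costs only log factors, since $a$ ranges over a bounded set of dimension $n$, but it would then give $\sqrt{n\cdot np}$, which is too weak. So the independence route is the one I commit to.

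Combining the three pieces, and taking a union bound over $m\in[n]$, over $\tau$ up to a polynomial horizon, and over $t\le T$, yields the event $\mathcal A_4^{(T)}$ with probability $1-\mathcal O(n^{-10})$.
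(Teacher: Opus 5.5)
Your decomposition and estimates are the same as the paper's. It splits the difference into $\boldsymbol{u}^m$ (your $u_1$) and $\boldsymbol{v}^m$ (your $u_2$), bounds $\boldsymbol{u}^m$ coordinatewise by Hoeffding, and bounds $\boldsymbol{e}_m^\top\boldsymbol{v}^m$ by a scalar Bernstein inequality with weights controlled by the Lipschitz property of the sigmoid.

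The gap is the independence claim, which you flag and then commit to without checking. For the Step~I leave-one-out sequence it is false for every $t<T$. That sequence runs gradient descent on $\sum_{s=0}^{T}\ell^{(m)}_s+\frac{\lambda_T}{2}\|\theta\|_2^2$, and each $\ell^{(m)}_s$ depends on $\theta^{(s)}$ in two ways. Explicitly, $\theta^{(s)}$ enters through the expected-value terms $p\,\pi_{im}(\theta^{(s)})$. Implicitly, the synthetic comparisons $y^{(s)}_{j,i}$ among pairs not involving $m$ are drawn from $\theta^{(s)}$.

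For $s>t$, $\theta^{(s)}$ is the MLE fitted on data that include $(G^{(t)}_{\cdot m},y^{(t)}_{\cdot m})$. So the weights $a_i=\pi_{im}(\theta^{(\tau,m)})-\pi_{im}(\theta^{(t)})$ are functions of the very edges $\{G^{(t)}_{im}\}_i$ you sum over, and the conditional Bernstein step is not justified. Your remark that $\theta^{(t)}$ uses only rounds before $t$ is correct but beside the point: the dependence enters through $\theta^{(t+1)},\dots,\theta^{(T)}$. Without independence your argument gives only the crude bound $np\|\theta^{(\tau,m)}-\theta^{(t)}\|_\infty$. In Step~I that $\ell_\infty$ distance is merely $O(1)$, so this bound cannot close hypothesis (A).

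Independence does hold for $t=T$. Conditional on rounds $0,\dots,T-1$ and on the round-$T$ data not involving $m$, $\theta^{(\tau,m)}$ is deterministic. The same is true for the iteration-restricted sequence $\theta_0^{(\upsilon,m)}$ of Step~II, where the lemma is invoked only with $t=T$.

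The paper's own proof applies Bernstein to $\boldsymbol{e}_m^\top\boldsymbol{v}^m$ exactly as you do and never addresses this dependence, so following it does not fill the gap. Covering $t<T$ needs an extra ingredient. One option is a trajectory-level leave-one-out that also removes the round-$t$ data of item $m$ from $\theta^{(t+1)},\dots,\theta^{(T)}$ and from the synthetic data they generate, together with a bound on the resulting perturbation of the $a_i$. Another is an estimate of $\boldsymbol{e}_m^\top\boldsymbol{v}^m$ that does not use independence.
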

\begin{proof}
See \S\ref{appendix:gradient_t_diff_LOO}.
\end{proof}
Once Lemma~\ref{lemma:gradient_t_diff_LOO} is established, we prove the auxiliary induction hypothesis \eqref{eq:A} as follows.

\begin{proof}
For any $m \in [n]$, by definition we know that
\begin{align*}
    \theta^{(\tau+1)} - \theta^{(\tau + 1, m)} 
    &= 
    \theta^{(\tau)} - \eta \nabla \mathcal{L}_{\lambda_{T}}^{(\le T)}(\theta^{(\tau)}) - \Big\lbrack \theta^{(\tau,m)} - \eta \nabla \mathcal{L}_{\lambda_{T}}^{(m,\le T)}(\theta^{(\tau,m)}) \Big\rbrack 
    \\ &=
    \theta^{(\tau)} - \eta \nabla \mathcal{L}_{\lambda_{T}}^{(\le T)}(\theta^{(\tau)}) - \Big\lbrack \theta^{(\tau, m)} - \eta \nabla \mathcal{L}_{\lambda_{T}}^{(\le T)}(\theta^{(\tau, m)}) \Big\rbrack 
    \\ & \quad- 
    \eta \left( \nabla \mathcal{L}_{\lambda_{T}}^{(\le T)}(\theta^{(\tau, m)}) - \nabla \mathcal{L}_{\lambda_{T}}^{m,\le T)}(\theta^{(\tau,m)})  \right)
    \\ &=
    \Big\lbrack \boldsymbol{I}_{n} - \eta \nabla^{2} \mathcal{L}_{\lambda_{T}}^{(\le T)}(\theta^{(\tau)}(\xi)) \Big\rbrack \left( \theta^{(\tau)} - \theta^{(\tau,m)} \right) 
    \\ & \quad - 
    \eta \left( \nabla \mathcal{L}_{\lambda_{T}}^{(\le T)}(\theta^{(\tau, m)}) - \nabla \mathcal{L}_{\lambda_{T}}^{(m,\le T)}(\theta^{(\tau,m)})  \right)
    \\ &=
    \Big\lbrack (1-\eta\lambda_{T})\boldsymbol{I}_{n} - \eta \sum_{t=0}^{T} \nabla^{2} \ell_{t} ( \theta^{(\tau)}(\xi) ) \Big\rbrack \left( \theta^{(\tau)} - \theta^{(\tau, m)} \right) 
    \\ & \quad - 
    \eta \sum_{t=0}^{T} \left( \nabla \ell_{t}(\theta^{(\tau, m)}) - \nabla \ell_{t}^{(m)}(\theta^{(\tau, m)}) \right),
\end{align*}
where $\theta^{(\tau)}(\xi)$ is a convex combination of $\theta^{(\tau)}$ and $\theta^{(\tau, m)}$. By \eqref{eq:I} and \eqref{eq:J}, we thus have $\norm{\theta^{(\tau)}(\xi) - \theta^{\ast}}_{\infty} \leq 3$, and we can combine this with Lemma \ref{lemma:smallesteigen} to obtain the bound
\begin{align*}
    \norm{\Big\lbrack \boldsymbol{I}_{n} - \eta \nabla^{2} \mathcal{L}_{\lambda_{T}}^{(\le T)}(\theta^{(\tau)}(\xi))  \Big\rbrack \left( \theta^{(\tau)} - \theta^{(\tau, m)} \right) }_{2}  
    & \leq \norm{\Big\lbrack \boldsymbol{I}_{n} - c_{1}(T+1) \eta np \Big\rbrack \left( \theta^{(\tau)} - \theta^{(\tau, m)} \right)}_{2}
    \\ &\leq 
    \left( 1 - c_{1}(T+1) \eta np \right) \cdot \norm{\theta^{(\tau)} - \theta^{(\tau, m)}}_{2}
\end{align*}
for some constant $c_{1}>0$. 

Note that by Lemma \ref{lemma:gradient_t_diff_LOO} and \eqref{eq:L}, for $t>0$, we have 
\begin{align*}
   \norm{\nabla \ell_{t}(\theta^{(\tau, m)}) - \nabla \ell_{t}^{(m)}(\theta^{(\tau, m)})}_{2} 
   & \leq 
   C_{1}\sqrt{\frac{np \log n}{L}} + C_{1}\sqrt{np \log n} \norm{\theta^{(\tau, m)} - \theta^{(t)}}_{\infty}
   \\ & \leq
   C_{1}\sqrt{\frac{np \log n}{L}} + C_{1}\sqrt{np \log n} \left( 3 + D_{5} \sqrt{\frac{\log n}{npL}} \right)
   \\ & = 
   C_{1} \sqrt{np\log n} \left( \sqrt{\frac{1}{L}} + 3 + D_{5} \sqrt{\frac{\log n}{npL}} \right), \quad \forall t = 1, \cdots, T
\end{align*}
and
\begin{align*}
    \norm{\nabla \ell_{0}(\theta^{(\tau, m)}) - \nabla \ell_{0}^{(m)}(\theta^{(\tau, m)})}_{2} 
    & \leq 
    C_{1}\sqrt{\frac{np \log n}{L}} + C_{1}\sqrt{np \log n} \norm{\theta^{(\tau, m)} - \theta^{\ast}}_{\infty}
    \\ & \leq
     C_{1} \sqrt{np\log n} \left( \sqrt{\frac{1}{L}} + 3 \right).
\end{align*}
Telescopic sum and Cauchy-Schwarz inequality further yield
\begin{align*}
    \norm{\nabla \mathcal{L}_{\lambda_{T}}^{(\le T)}(\theta^{(\tau, m)}) - \nabla \mathcal{L}_{\lambda_{T}}^{(m,\le T)}(\theta^{(\tau,m)})}_{2}
    & = 
    \norm{\sum_{t=0}^{T} \left( \nabla \ell_{t}(\theta^{(\tau, m)}) - \nabla \ell_{t}^{(m)}(\theta^{(\tau, m)}) \right)}_{2}
    \\ & \leq
    \sum_{t=0}^{T} \norm{\nabla \ell_{t}(\theta^{(\tau, m)}) - \nabla \ell_{t}^{(m)}(\theta^{(\tau, m)})}_{2} 
    \\ & \leq 
    C_{1} (T+1) \sqrt{np\log n} \left( \sqrt{\frac{1}{L}} + 3 + D_{5} \sqrt{\frac{\log n}{npL}} \right).
\end{align*}
Combine the above results with \eqref{eq:A} together yields 
\begin{align*}
   \norm{\theta^{(\tau+1)} - \theta^{(\tau + 1, m)}}_{2} 
   & \leq
   \left( 1 - c_{1}(T+1) \eta np \right) \cdot \norm{\theta^{(\tau)} - \tau^{(\tau, m)}}_{2} 
   \\ & \quad + 
   \eta C_{1} (T+1) \sqrt{np\log n} \left( \sqrt{\frac{1}{L}} + 3 + D_{5} \sqrt{\frac{\log n}{npL}} \right)
   \\ & \leq \left( 1 - c_{1} \eta_{\ast} np \right) + \eta_{\ast} C_{1} \sqrt{np\log n} \left( \sqrt{\frac{1}{L}} + 3 + D_{5} \sqrt{\frac{\log n}{npL}} \right)
   \\ & \leq 1,
\end{align*}
where the last inequality requires that $\sqrt{np\log n} \left( \sqrt{\frac{1}{L}} + 3 + D_{5} \sqrt{\frac{\log n}{npL}} \right) \leq c_{1} np$, which is implied by the condition that $p \geq \frac{c_{0} \log n}{n}$ for some sufficiently large $c_{0} > 0$. We thus have proved \eqref{eq:A} for $\tau + 1$.
\end{proof}
\subsubsection[Proof of Auxiliary Induction Hypothesis (B)]{Proof of Auxiliary Induction Hypothesis \eqref{eq:B}}
\label{appendix:equation_B}
\begin{proof}
By definition we have
\begin{align*}
    \theta^{(\tau+1)} - \theta^{\ast} 
    &= 
    \theta^{(\tau)} - \eta \nabla \mathcal{L}_{\lambda_{T}}^{(\le T)}(\theta^{(\tau)}) - \theta^{\ast}
    \\ &=
    \theta^{(\tau)} - \eta \nabla \mathcal{L}_{\lambda_{T}}^{(\le T)}(\theta^{(\tau)}) - \Big\lbrack \theta^{\ast} - \eta \nabla \mathcal{L}_{\lambda_{T}}^{(\le T)}(\theta^{\ast}) \Big\rbrack - \eta \nabla \mathcal{L}_{\lambda_{T}}^{(\le T)}(\theta^{\ast})
    \\ &=
    \Big\lbrack \boldsymbol{I}_{n} - \eta \nabla^{2} \mathcal{L}_{\lambda_{T}}^{(\le T)}(\theta^{(\tau)}(\zeta)) \Big\rbrack \left( \theta^{(\tau)} - \theta^{\ast} \right) - \eta \nabla \mathcal{L}_{\lambda_{T}}^{(\le T)}(\theta^{\ast}),
\end{align*}
where $\theta^{(\tau)}(\zeta)$ is a convex combination of $\theta^{(\tau)}$ and $\theta^{\ast}$. Similarly we use Lemma \ref{lemma:smallesteigen} to obtain the bound by \eqref{eq:I} that 
\begin{align*}
    \norm{\Big\lbrack \boldsymbol{I}_{n} - \eta \nabla^{2} \mathcal{L}_{\lambda_{T}}^{(\le T)}(\theta^{(\tau)}(\zeta))  \Big\rbrack \left( \theta^{(\tau)} - \theta^{\ast} \right) }_{2}  
    \leq
    \left( 1 - c_{2}(T+1) \eta np \right) \cdot \norm{\theta^{(\tau)} - \theta^{\ast}}_{2}
\end{align*}
for some constant $c_{2}>0$.

It follows immediately from Corollary \ref{cor:gradient_T_true_1} that there exists some constants $c_{3},c_{4}>0$ such that
\begin{align*}
    \norm{\nabla \mathcal{L}^{(\leq T)}_{\lambda_{T}}(\theta^{\ast})}_{2} 
    & \leq
    \norm{\nabla \mathcal{L}^{(\leq T)}(\theta^{\ast})}_{2} + \lambda_{T} \norm{\theta^{\ast}}_{2}
    \\ & \leq 
    c_{3} \sqrt{\frac{ (T+1)n^{2}p}{L}} + c_{4} np \sum_{t=0}^{T} \norm{\theta^{(t)}-\theta^\ast}_{2} + \lambda_{T} \norm{\theta^{\ast}}_{2}
\end{align*}
Combine the above together with \eqref{eq:B} and \eqref{eq:G} indicates that 
\begin{align*}
    \norm{\theta^{(\tau+1)}-\theta^{\ast}}_{2} 
    & \leq 
    \left( 1 - c_{2}(T+1) \eta np \right) \cdot \norm{\theta^{(\tau)} - \theta^{\ast}}_{2}
    + 
    c_{3} \eta \sqrt{\frac{ (T+1)n^{2}p}{L}}
    \\ & \quad + 
    c_{4} \eta np \sum_{t=0}^{T} \norm{\theta^{(t)}-\theta^\ast}_{2} + \eta \lambda_{T} \norm{\theta^{\ast}}_{2}
    \\ & \leq 
    \left( 1 - c_{2} \eta_{\ast} np \right) \norm{\theta^{(\tau)} - \theta^{\ast}}_{2} + c_{3} \eta_{\ast} \sqrt{\frac{n^{2}p}{ (T+1)L}}
    \\ & \quad + 
    c_{4} \eta_{\ast} np \norm{\theta^{(t)}-\theta^\ast}_{2} + \eta_{\ast} \lambda_{0} \norm{\theta^{\ast}}_{2}
    \\ & \leq
    \left( 1 - c_{2} \eta_{\ast} np \right) \sqrt{\frac{n}{\log n}} + c_{3} \eta_{\ast} \sqrt{\frac{n^{2}p}{ (T+1)L}}
    \\ & \quad +
    c_{4} \eta_{\ast} np D_{4} \sqrt{\frac{1}{pL}} + \eta_{\ast} \frac{\log \kappa}{\sqrt{n}}
    \\ & \leq
    \sqrt{\frac{n}{\log n}},
\end{align*}
where the last inequality is due to $\sqrt{\frac{n^{2}p}{(T+1)L}} + np \sqrt{\frac{1}{pL}} + \frac{\log \kappa}{\sqrt{n}} \leq c_{2}np\sqrt{\frac{n}{\log n}}$ by the choice of $\eta$ and $\lambda_{T}$. Hence, \eqref{eq:B} holds for $\tau + 1$.
\end{proof}
\subsubsection[Proof of Auxiliary Induction Hypothesis (C)]{Proof of Auxiliary Induction Hypothesis \eqref{eq:C}}
\label{appendix:equation_C}
\begin{proof}
For any $m \in [n]$, we observe that
\begin{align*}
    \theta_{m}^{(\tau+1,m)} - \theta_{m}^{\ast} 
    &= 
    \theta_{m}^{(\tau, m)} - \eta \left[ \nabla \mathcal{L}_{\lambda_{T}}^{(m, \le T)}(\theta^{(\tau,m)}) \right]_{m} - \theta_{m}^{\ast}
    \\ &=
    \theta_{m}^{(\tau, m)} - \theta_{m}^{\ast} - \eta \left[ \sum_{t=0}^{T} \sum_{i \in [n] \backslash \lbrace m \rbrace} p \left\{ \frac{e^{\theta^{(t)}_{i}}}{e^{\theta_{i}^{(t)}} + e^{\theta_{m}^{(t)}}} - \frac{e^{\theta_i^{(\tau, m)}}}{e^{\theta_i^{(\tau, m)}} + e^{\theta_{m}^{(\tau, m)}}} \right\}  \right] - \eta \lambda_{T} \theta^{(\tau, m)}_{m},
\end{align*}
where the last line follows by the construction of $\mathcal{L}_{\lambda_{T}}^{(m, \leq T)}$. Apply the mean value theorem to obtain
\begin{align*}
    \frac{e^{\theta^{(t)}}_{i}}{e^{\theta_{i}^{(t)}} + e^{\theta_{m}^{(t)}}} - \frac{e^{\theta_{i}^{(\tau, m)}}}{e^{\theta_i^{(\tau, m)}} + e^{\theta_{m}^{(\tau, m)}}} 
    &= 
    \frac{1}{1 + e^{\theta^{(t)}_{m}- \theta^{(t)}_{i}}} - \frac{1}{1 + e^{\theta_{m}^{(\tau, m)}-\theta_{i}^{(\tau, m)}}}
    \\ &=
    -\frac{e^{\xi_{i}^{(t)}}}{(1+e^{\xi_{i}^{(t)}})^{2}} \left[ \theta^{(t)}_{m} - \theta^{(t)}_{i} - \left( \theta_{m}^{(\tau, m)}-\theta_{i}^{(\tau, m)} \right) \right],
\end{align*}
where $\xi_{i}^{(t)}$ is some real number lying between $\theta^{(t)}_{m}- \theta^{(t)}_{i}$ and $\theta_{m}^{(\tau, m)}-\theta_{i}^{(\tau, m)}$. 

Rearranging the above results yields
\begin{align*}
    \theta_{m}^{(\tau+1,m)} - \theta_{m}^{\ast} 
    &=
    \theta_{m}^{(\tau, m)} - \theta_{m}^{\ast} + \eta p \left[ \sum_{t=0}^{T} \sum_{i \in [n] \backslash \lbrace m \rbrace} \frac{e^{\xi_{i}^{(t)}}}{(1+e^{\xi_{i}^{(t)}})^{2}} \left[ \theta^{(t)}_{m} - \theta^{(t)}_{i} - \left( \theta_{m}^{(\tau, m)}-\theta_{i}^{(\tau, m)} \right) \right]  \right] - \eta \lambda_{T} \theta^{(\tau, m)}_{m}
    \\ &=
    \left( 1 - \eta \lambda_{T} - \eta p \sum_{i \in [n] \backslash \{ m \}} \frac{e^{\xi_{i}^{(0)}}}{(1+e^{\xi_{i}^{(0)}})^{2}}  \right) \left( \theta_{m}^{(\tau, m)} - \theta_{m}^{\ast} \right) - \eta \lambda_{T} \theta^{\ast}_{m}
    \\ & \quad + 
    \eta p \sum_{i \in [n] \backslash \{ m \}} \frac{e^{\xi_{i}^{(0)}}}{(1+e^{\xi_{i}^{(0)}})^{2}} \left( \theta_{i}^{(\tau, m)} - \theta_{i}^{\ast} \right) 
    \\ & \quad 
    - 
    \eta p  \sum_{t=1}^{T} \sum_{i \in [n] \backslash \lbrace m \rbrace} \frac{e^{\xi_{i}^{(t)}}}{(1+e^{\xi_{i}^{(t)}})^{2}} \left( \theta_{m}^{(\tau, m)}-\theta^{(t)}_{m} \right) 
    \\ & \quad + 
    \eta p \sum_{t=1}^{T} \sum_{i \in [n] \backslash \lbrace m \rbrace} \frac{e^{\xi_{i}^{(t)}}}{(1+e^{\xi_{i}^{(t)}})^{2}} \left( \theta_{i}^{(\tau, m)} - \theta^{(t)}_{i}  \right) 
    \\ &=
    \left( 1 - \eta \lambda_{T} - \eta p \sum_{i \in [n] \backslash \{ m \}} \frac{e^{\xi_{i}^{(0)}}}{(1+e^{\xi_{i}^{(0)}})^{2}}  \right) \left( \theta_{m}^{(\tau, m)} - \theta_{m}^{\ast} \right) - \eta \lambda_{T} \theta^{\ast}_{m}
    \\ & \quad + 
    \eta p \sum_{i \in [n] \backslash \{ m \}} \frac{e^{\xi_{i}^{(0)}}}{(1+e^{\xi_{i}^{(0)}})^{2}} \left( \theta_{i}^{(\tau, m)} - \theta_{i}^{\ast} \right) 
    \\ & \quad - 
    \eta p \sum_{t=1}^{T} \sum_{i \in [n] \backslash \lbrace m \rbrace} \frac{e^{\xi_{i}^{(t)}}}{(1+e^{\xi_{i}^{(t)}})^{2}} \left( \theta_{m}^{(\tau, m)} - \theta^{\ast}_{m} \right) 
    \\ & \quad + 
    \eta p \sum_{t=1}^{T} \sum_{i \in [n] \backslash \lbrace m \rbrace} \frac{e^{\xi_{i}^{(t)}}}{(1+e^{\xi_{i}^{(t)}})^{2}}   \left( \theta^{(t)}_{m} - \theta^{\ast}_{m} \right)
    \\ & \quad + 
    \eta p \sum_{t=1}^{T} \sum_{i \in [n] \backslash \lbrace m \rbrace} \frac{e^{\xi_{i}^{(t)}}}{(1+e^{\xi_{i}^{(t)}})^{2}} \left( \theta_{i}^{(\tau, m)} - \theta^{(t)}_{i} \right)
    \\ &=
    \left( 1 - \eta \lambda_{T} - \eta p \sum_{t=0}^{T} \sum_{i \in [n] \backslash \{ m \}} \frac{e^{\xi_{i}^{(t)}}}{(1+e^{\xi_{i}^{(t)}})^{2}}  \right) \left( \theta_{m}^{(\tau, m)} - \theta_{m}^{\ast} \right) - \eta \lambda_{T} \theta^{\ast}_{m}
    \\ & \quad +
    \eta p \sum_{t=0}^{T} \sum_{i \in [n] \backslash \lbrace m \rbrace} \frac{e^{\xi_{i}^{(t)}}}{(1+e^{\xi_{i}^{(t)}})^{2}}   \left( \theta^{(t)}_{m} -\theta^{\ast}_{m} \right)
    \\ & \quad + 
    \eta p \sum_{t=0}^{T} \sum_{i \in [n] \backslash \lbrace m \rbrace} \frac{e^{\xi_{i}^{(t)}}}{(1+e^{\xi_{i}^{(t)}})^{2}} \left( \theta_{i}^{(\tau, m)} - \theta^{(t)}_{i} \right),
\end{align*}
and hence
\begin{align*}
    \left| \theta_{m}^{(\tau+1,m)} - \theta_{m}^{\ast} \right| 
    & \leq 
    \left| 1 - \eta \lambda_{T} - \eta p \sum_{t=0}^{T} \sum_{i \in [n] \backslash \{ m \}} \frac{e^{\xi_{i}^{(t)}}}{(1+e^{\xi_{i}^{(t)}})^{2}}  \right| \cdot \left| \theta_{m}^{(\tau, m)} - \theta_{m}^{\ast} \right| + \eta \lambda_{T} \left| \theta_{m}^{\ast} \right|
    \\ \quad & +
    \frac{\eta p}{4} \sum_{t=0}^{T} \sum_{i \in [n] \backslash \lbrace m \rbrace} \left| \theta_{m}^{\ast} - \theta_{m}^{(t)} \right|
    +
    \frac{\eta p}{4} \sum_{t=0}^{T} \sum_{i \in [n] \backslash \lbrace m \rbrace} \left| \theta_{i}^{(\tau, m)} - \theta_{i}^{(t)} \right|
    \\ & \leq
    \left| 1 - \eta \lambda_{T} - \eta p \sum_{t=0}^{T} \sum_{i \in [n] \backslash \{ m \}} \frac{e^{\xi_{i}^{(t)}}}{(1+e^{\xi_{i}^{(t)}})^{2}}  \right| \cdot \left| \theta_{m}^{(\tau, m)} - \theta_{m}^{\ast} \right| + \eta \lambda_{T} \norm{\theta^{\ast}}_{\infty}
    \\ \quad & +
    \frac{\eta p}{4} n \sum_{t=0}^{T} \norm{\theta^{\ast} - \theta^{(t)}}_{\infty} + \frac{\eta p}{4} \sqrt{n} \sum_{t=0}^{T} \norm{\theta^{(\tau, m)} - \theta^{(t)}}_{2}.
\end{align*} 
Here, the first inequality comes from the triangle inequality and elementary inequality $\frac{e^{\xi_{i}^{(t)}}}{(1+\xi_{i}^{(t)})^{2}} \leq \frac{1}{4}$ for any $\xi_{i}^{(t)} \in \mathbb{R}$, whereas the second relation holds owing to the Cauchy-Schwarz inequality, namely 
\begin{align*}
    \sum_{t=0}^{T} \sum_{i \in [n] \backslash \lbrace m \rbrace} \left| \theta_{m}^{\ast} - \theta_{m}^{(t)} \right|
    & \leq
    n \sum_{t=0}^{T} \norm{\theta^{\ast} - \theta^{t}}_{\infty},
    \\
    \sum_{t=0}^{T} \sum_{i \in [n] \backslash \lbrace m \rbrace} \left| \theta_{i}^{(\tau, m)} - \theta_{i}^{(t)} \right|
    & \leq
    \sqrt{n} \sum_{t=0}^{T} \norm{\theta^{(\tau, m)} - \theta^{(t)}}_{2},
\end{align*}
and we also obtain that
\begin{align*}
    1 - \eta \lambda_{T} - \eta p \sum_{t=0}^{T} \sum_{i \in [n] \backslash \{ m \}} \frac{e^{\xi_{i}^{(t)}}}{(1+e^{\xi_{i}^{(t)}})^{2}}  \geq 1 - \eta \lambda_{T} - \eta p (T+1)\frac{(n-1)}{4} \geq 1 - \eta \left\lbrack \lambda_{T} + (T+1)np \right\rbrack = 0.
\end{align*}
To further upper bound $\left| \theta_{m}^{\tau+1, m} - \theta_{m}^{\ast} \right|$, it suffices to obtain a lower bound on $\frac{e^{\xi_{i}^{(t)}}}{(1+\xi_{i}^{(t)})^{2}}$. Toward this, we can guarantee $\left|\xi_{i}^{(t)} - \theta^{(t)}_{m} + \theta^{(t)}_{i} \right| \leq\left| \theta^{(t)}_{m} - \theta_{m}^{(\tau, m)} \right| + \left| \theta^{(t)}_{i} - \theta_{i}^{(\tau, m)} \right| \leq 6 + 2 D_{5} \sqrt{\frac{\log n}{npL}}$ by \eqref{eq:L}, which implies $\norm{\xi_{i}^{t}}_{\infty}$ is bounded. We then have $\min_{0 \leq t \leq T} \sum_{i \in [n] \backslash \{ m \}} \frac{e^{\xi_{i}^{(t)}}}{(1+e^{\xi_{i}^{(t)}})^{2}} \geq c_{5} n$ for some constant $c_{5} > 0$. Combine the above bounds with \eqref{eq:H} and \eqref{eq:M} thus yields
\begin{align*}
    \left| \theta_{m}^{(\tau+1,m)} - \theta_{m}^{\ast} \right| 
    & \leq
    (1 - \eta \lambda_{T} - c_{5} \eta (T+1) np) \left| \theta_{m}^{(\tau, m)} - \theta_{m}^{\ast} \right| + \eta \lambda_{T} \norm{\theta^{\ast}}_{\infty}
    \\ & \quad +
    \frac{\eta p}{4} n \sum_{t=0}^{T} \norm{\theta^{\ast} - \theta^{(t)}}_{\infty} + \frac{\eta p}{4} \sqrt{n} \sum_{t=0}^{T} \norm{\theta^{(\tau, m)} - \theta^{(t)}}_{2}
    \\ & \leq 
    (1 - c_{5} \eta (T+1) np) \left| \theta_{m}^{(\tau, m)} - \theta_{m}^{\ast} \right| + \eta \lambda_{T} \norm{\theta^{\ast}}_{\infty}
    \\ & \quad +
    \eta p n \sum_{t=0}^{T} \norm{\theta^{\ast} - \theta^{(t)}}_{\infty} + \eta p \sqrt{n} \sum_{t=0}^{T} \norm{\theta^{(\tau, m)} - \theta^{(t)}}_{2}
    \\ & \leq 
    (1 - c_{5} \eta_{\ast} np) + \eta_{\ast} \frac{\log \kappa}{n}
    \\ & \quad +
    \eta_{\ast} p n \left( D_{5} \sqrt{\frac{\log n}{npL}} \right)
    +
    \eta_{\ast} p \sqrt{n} \left( 1 + \sqrt{\frac{n}{\log n}} + D_{4} \sqrt{\frac{1}{pL}} \right)
    \\ & \leq
    1,
\end{align*}
where the last inequality is because of $\eta_{\ast} p n \left( D_{5} \sqrt{\frac{\log n}{npL}} \right) + \eta_{\ast} p \sqrt{n} \left( 1 + \sqrt{\frac{n}{\log n}} + D_{4} \sqrt{\frac{1}{pL}} \right) + \eta\lambda_{T}\norm{\theta^{\ast}}_{\infty} \leq c_{5}\eta_{\ast}np$. We thus have proved \eqref{eq:C} for $\tau+1$.
\end{proof}
\subsubsection{Proof of Lemma~\ref{lemma:equation_D_plus}} \label{appendix:equation_D_plus_plus}
To summarize, we have shown that \eqref{eq:A}-\eqref{eq:C} hold for all $\tau \leq \tau^{\ast}$ with probability at least $1 - \mathcal{O}(\tau^{\ast} n^{-10})$. Moreover, \eqref{eq:I} holds for all $\tau \leq \tau^{\ast}$, and hence $\norm{\theta^{(\tau^{\ast})} - \theta^{\ast}}_{\infty} \leq 2$. Invoking the convergence properties of gradient descent for strongly convex objectives \citep{bubeck2015convex}, we first obtain the following lemma, which provides a finite $\ell_{\infty}$ statistical error bound on $\theta^{(t+1)}_{\lambda}$ around $\theta^{\ast}$.
\begin{lemma} \label{lemma:inner_reg_bound}
    Suppose the auxiliary induction hypotheses \eqref{eq:A}-\eqref{eq:C} hold for $0,\cdots, \tau^{\ast}$-th step and the primary induction hypotheses \eqref{eq:D}-\eqref{eq:H} hold up to iteration $T$, then for any $t = 0, \cdots, T$, we have
    \begin{align*}
        \norm{\theta^{(t+1)}_{\lambda} - \theta^{\ast}}_{\infty} \leq 4
    \end{align*}
    with probability at least $1 - \mathcal{O}(n^{-7})$.
\end{lemma}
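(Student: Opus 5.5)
The plan is to transfer the $\ell_\infty$ control of the gradient descent iterates $\theta^{(\tau)}$ of Algorithm~\ref{algorithm:regularized_IMLE}, run on the regularized loss $\mathcal{L}^{(\le t)}_{\lambda_t}$ and started from $\theta^\ast$, to their limit $\theta^{(t+1)}_\lambda$. This uses convergence of gradient descent on a smooth, strongly convex objective. Fix $t\in\{0,\dots,T\}$ and consider this gradient descent sequence.

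The first step is to establish that the auxiliary hypotheses \eqref{eq:A}--\eqref{eq:C} hold for $\tau\le\tau^\ast$. Their consequence \eqref{eq:I} then gives $\|\theta^{(\tau^\ast)}-\theta^\ast\|_\infty\le 2$. The one-step arguments above are stated for the index $T$, but they apply verbatim to each $t\le T$, since they only use \eqref{eq:D}--\eqref{eq:H} up to $t$. A union bound over $t\le T\lesssim n^3/(p\log n)$, with $\tau^\ast$ at most polynomial in $n$, keeps the total failure probability at $\mathcal{O}(n^{-7})$, given that each step fails with probability $\mathcal{O}(n^{-10})$.

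The second step uses the structure of $\mathcal{L}^{(\le t)}_{\lambda_t}$. Its Hessian is at least $\lambda_t I$ on all of $\mathbb{R}^n$, so it is $\lambda_t$-strongly convex. By Lemma~\ref{lemma:largesteigen} it is $(\lambda_t+(t+1)np)$-smooth, and the step size is $\eta=1/(\lambda_t+(t+1)np)$. The standard gradient descent bound \citep{bubeck2015convex} then gives
\[
\|\theta^{(\tau)}-\theta^{(t+1)}_\lambda\|_2\le\bigl(1-\eta\lambda_t\bigr)^{\tau}\|\theta^\ast-\theta^{(t+1)}_\lambda\|_2 .
\]
Here $\eta\lambda_t=\lambda_0/(\lambda_0+np)\asymp 1/(n^2p)$. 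To bound the initial distance, compare $\mathcal{L}_{\lambda_t}$ at $\theta^{(t+1)}_\lambda$ and at $0$: since $\mathcal{L}_{\lambda_t}(0)\lesssim (t+1)n^2p$, this gives $\|\theta^{(t+1)}_\lambda\|_2^2\lesssim n^3p$, which is polynomial in $n$.

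The third step chooses $\tau^\ast\asymp n^5$, or any polynomial large enough that $(1-\eta\lambda_t)^{\tau^\ast}\,\mathrm{poly}(n)\le 1$. This yields $\|\theta^{(\tau^\ast)}-\theta^{(t+1)}_\lambda\|_\infty\le\|\theta^{(\tau^\ast)}-\theta^{(t+1)}_\lambda\|_2\le 1$. The triangle inequality then gives $\|\theta^{(t+1)}_\lambda-\theta^\ast\|_\infty\le 2+1=3\le 4$.

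The main obstacle is the tension between the weak strong convexity and the length of the induction. The strong convexity is very weak: the restricted eigenvalue bound of Lemma~\ref{lemma:smallesteigen} holds only in the $\ell_\infty$ ball, so globally we can use only $\lambda_t$. As a result, $\tau^\ast$ must be polynomially large, and \eqref{eq:A}--\eqref{eq:C} must survive that many steps. This is affordable because the per-step events (Lemma~\ref{lemma:gradient_t_diff_LOO} and Corollary~\ref{cor:gradient_T_true_1}) are statements about the data rather than the iterate. Equivalently, I would intersect them once over a fine net, so the failure probability does not actually accumulate in $\tau$. If that is delicate, a fallback is to use restricted strong convexity on the $\ell_\infty$ ball of radius $3$, where Lemma~\ref{lemma:smallesteigen} applies along the whole trajectory. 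That gives contraction rate $1-c$, so $\tau^\ast\asymp\log n$ suffices.
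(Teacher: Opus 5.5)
Your bound argument is correct and has the same skeleton as the paper's. Gradient descent from $\theta^\ast$ on the $\lambda_t$-strongly convex, $(\lambda_t+(t+1)np)$-smooth objective contracts toward $\theta^{(t+1)}_\lambda$ at rate $1-\eta\lambda_t=1-1/(1+n^2p)$, and the triangle inequality with \eqref{eq:I} at step $\tau^\ast$ finishes.

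The one real difference is how you handle the unknown initial distance $\|\theta^\ast-\theta^{(t+1)}_\lambda\|_2$. The paper never bounds it in advance. It writes it as at most $\sqrt{n}\,\|\theta^{(t+1)}_\lambda-\theta^\ast\|_\infty$, takes $\tau^\ast=n^3$ so that $(1-\eta\lambda_t)^{\tau^\ast}\sqrt{n}\le 1/2$, and absorbs: $x\le x/2+2$ gives $x\le 4$, which is where the constant $4$ comes from. You instead get a polynomial a priori bound from $\tfrac{\lambda_t}{2}\|\theta^{(t+1)}_\lambda\|_2^2\le\mathcal{L}^{(\le t)}_{\lambda_t}(\theta^{(t+1)}_\lambda)\le\mathcal{L}^{(\le t)}_{\lambda_t}(0)$. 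This uses nonnegativity of the logistic loss (true because $y^{(s)}_{j,i}\in[0,1]$; you should say so) and an edge count from the degree events. The absorption trick is shorter and needs no extra facts; your route is more explicit and gives the constant $3$.

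Your probability bookkeeping does not work as written. A union bound over $t\le T\lesssim n^3/(p\log n)$ and $\tau\le\tau^\ast\asymp n^5$ steps, each failing with probability $\mathcal{O}(n^{-10})$, gives $\mathcal{O}(T\tau^\ast n^{-10})$, not $\mathcal{O}(n^{-7})$.

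Your proposed remedy is also not right. The per-step events are not iterate-free: the bound on $\boldsymbol{v}^m$ in Lemma~\ref{lemma:gradient_t_diff_LOO} is a Bernstein bound whose weights depend on $\theta^{(\tau,m)}$. A net over iterates would need exponentially small tails, which that bound does not supply.

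The paper's $n^{-7}$ is exactly $\tau^\ast n^{-10}$ with $\tau^\ast=n^3$, for a single $t$. Your contraction only needs $\tau^\ast\gtrsim n^2p\log n$, so take $\tau^\ast\asymp n^3$. Strictly, the lemma takes \eqref{eq:A}--\eqref{eq:C} as hypotheses, so you need not re-derive them at all.

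Finally, your fallback is circular if it means contracting $\|\theta^{(\tau)}-\theta^{(t+1)}_\lambda\|_2$ via Lemma~\ref{lemma:smallesteigen}. That requires the segment to $\theta^{(t+1)}_\lambda$ to lie in the $\ell_\infty$ ball, which is the conclusion you are trying to prove. It can be rescued by contracting $\|\nabla\mathcal{L}^{(\le t)}_{\lambda_t}(\theta^{(\tau)})\|_2$ along consecutive iterates, which do stay in the ball by \eqref{eq:I}. You then convert to distance at the end using the global factor $1/\lambda_t\le n$.
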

\begin{proof}
See \S\ref{appendix:inner_reg_bound}.
\end{proof}

Next, we establish a finite $\ell_{\infty}$-norm bound for $\norm{\theta^{(t+1)} - \theta^{\ast}}_{\infty}$, which is a key step for controlling the spectrum of the Hessian matrix $\nabla^{2} \mathcal{L}^{(\leq T)}(\cdot)$ when exploring the statistical properties of the MLE. In addition, we show that the two estimators $\theta^{(t+1)}_{\lambda}$ and $\theta^{(t+1)}$ are sufficiently close.
\begin{lemma} \label{lemma:inner_unreg_bound}
    Suppose the auxiliary induction hypotheses \eqref{eq:A}-\eqref{eq:C} hold for $0,\cdots, \tau^{\ast}$-th step and the primary induction hypotheses \eqref{eq:D}-\eqref{eq:H} hold up to iteration $T$, then for any $t = 0, \cdots, T$, we have
    \begin{align*}
        \norm{\theta^{(t+1)} - \theta^{\ast}}_{\infty} \leq 5
    \end{align*}
    with probability at least $1 - \mathcal{O}(n^{-7})$.
\end{lemma}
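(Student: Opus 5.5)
Lemma~\ref{lemma:inner_reg_bound} already gives $\norm{\theta^{(t+1)}_{\lambda}-\theta^{\ast}}_{\infty}\le 4$. So it suffices to show $\norm{\theta^{(t+1)}-\theta^{(t+1)}_{\lambda}}_{2}\le 1$, because $\norm{\cdot}_{\infty}\le\norm{\cdot}_{2}$. The regularization parameter $\lambda_t=(t+1)/n$ is tiny compared with the curvature $(t+1)np$, so the unregularized minimizer should sit within an $\mathcal{O}(1/\sqrt{n})$ ball of the regularized one. This is also the content of hypothesis \eqref{eq:D}.

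\textbf{Step 1 (first-order optimality).} Fix $t\le T$. Write $\mathcal{L}=\mathcal{L}^{(\le t)}$ and $\theta_\lambda=\theta^{(t+1)}_{\lambda}$. The optimality conditions on $\Theta$ give $\mathcal{P}\nabla\mathcal{L}(\theta_\lambda)+\lambda_t\theta_\lambda=\mathbf{0}$, since $\theta_\lambda\in\Theta$ already; hence $\norm{\nabla\mathcal{L}(\theta_\lambda)}_2=\lambda_t\norm{\theta_\lambda}_2$. Bound $\norm{\theta_\lambda}_2\le\norm{\theta^\ast}_2+\sqrt{n}\,\norm{\theta_\lambda-\theta^\ast}_\infty\le \sqrt{n}(\log\kappa+4)$. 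This gives $\norm{\nabla\mathcal{L}(\theta_\lambda)}_2\lesssim (t+1)/\sqrt{n}$.

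\textbf{Step 2 (localized strong convexity).} Let $R:=\{\theta\in\Theta:\norm{\theta-\theta_\lambda}_2\le 1\}$. Every $\theta\in R$ satisfies $\norm{\theta-\theta^\ast}_\infty\le 5$. Lemma~\ref{lemma:smallesteigen} with $C=5$ (on $\bigcap_{s\le t}\mathcal{A}_1^{(s)}$) then gives $\lambda_{\min,\perp}(\nabla^2\mathcal{L}(\theta))\ge \mu:=(t+1)np/(8\kappa e^{10})$ on $R$.

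Consider the convex function $g(\theta)=\mathcal{L}(\theta)$ restricted to $\Theta$. For any $\theta\in\Theta$ with $\norm{\theta-\theta_\lambda}_2=1$, take $\bar\theta=\theta_\lambda+u$ with $\norm{u}_2=1$. A Taylor expansion along the segment, which lies inside $R$, gives
\begin{align*}
g(\bar\theta)\ge g(\theta_\lambda)-\norm{\nabla\mathcal{L}(\theta_\lambda)}_2+\tfrac{\mu}{2}.
\end{align*}
By Step 1, $\norm{\nabla\mathcal{L}(\theta_\lambda)}_2\lesssim (t+1)/\sqrt{n}$, which is much smaller than $\mu/2\asymp(t+1)np$ once $np\gtrsim\log n$. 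Hence $g>g(\theta_\lambda)$ on the whole sphere $\partial R$.

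By convexity of $g$, the global minimizer $\theta^{(t+1)}$ over $\Theta$ must lie in the interior of $R$. Otherwise, the segment from $\theta_\lambda$ to $\theta^{(t+1)}$ crosses $\partial R$ at a point with value greater than $g(\theta_\lambda)\ge g(\theta^{(t+1)})$, which contradicts convexity along that segment. Strong convexity inside $R$ also shows the minimizer exists and is unique in $\Theta$. Running the same comparison with radius $r=2\norm{\nabla\mathcal{L}(\theta_\lambda)}_2/\mu$ in place of $1$ sharpens the conclusion to
\begin{align*}
\norm{\theta^{(t+1)}-\theta_\lambda}_2\lesssim \frac{1}{np\sqrt{n}}\le D_1\sqrt{1/n},
\end{align*}
which recovers \eqref{eq:D}.

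\textbf{Step 3 (conclude).} Combine the two bounds: $\norm{\theta^{(t+1)}-\theta^\ast}_\infty\le 4+\norm{\theta^{(t+1)}-\theta_\lambda}_2\le 5$. The failure probability is $\mathcal{O}(n^{-7})$, coming from Lemma~\ref{lemma:inner_reg_bound} and the degree and eigenvalue events $\mathcal{A}_1^{(s)}$ for $s\le t$ via a union bound.

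\textbf{Main obstacle.} The delicate point is Step 2. Strong convexity is only available locally, in an $\ell_\infty$ neighbourhood of $\theta^\ast$, while $\theta^{(t+1)}$ is a priori unconstrained and might not even exist. The convexity and boundary-comparison argument is what transfers the local curvature into a global localization. It must be run on $\Theta$ (the $\mathbf{1}^{\perp}$ directions), because the Hessian of $\mathcal{L}$ is singular in the direction $\mathbf{1}$; restricting to $\Theta$ is what makes $\lambda_{\min,\perp}$ the relevant quantity.
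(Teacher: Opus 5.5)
Your proposal is correct and follows essentially the paper's route. Both arguments use $\nabla\mathcal{L}^{(\le t)}(\theta^{(t+1)}_{\lambda})=-\lambda_t\theta^{(t+1)}_{\lambda}$ to make the gradient at the regularized solution small, combine this with Lemma~\ref{lemma:inner_reg_bound} and the local strong convexity of Lemma~\ref{lemma:smallesteigen}, and then use convexity to pass the localization to the unregularized minimizer. The only difference is the shape of the localizing region. The paper minimizes over $\{\mathbf{1}^{\top}\theta=0,\ \norm{\theta-\theta^{\ast}}_{\infty}\le 5\}$ and shows that this constrained minimizer is interior, whereas you compare values on the boundary of an $\ell_2$ ball of radius $1$ around $\theta^{(t+1)}_{\lambda}$, which also gives existence and uniqueness of $\theta^{(t+1)}$ directly.
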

\begin{proof}
See \S\ref{appendix:inner_unreg_bound}.
\end{proof}
\subsection{Proof Outline of Lemma~\ref{lemma:equation_EFG} in Step~II}
\label{appendix:equation_EFG}
Our second step is to show that $\theta^{(T+1)}$ does not deviate substantially from $\theta^{(T)}$ in either the $\ell_2$- or the $\ell_\infty$-norm, which then allows us to derive the optimal rate for $\norm{\theta^{(T+1)}-\theta^\ast}_2$.

Analogous to Step~I, we introduce an iteration-restricted leave-one-out construction and use it to show that, for any fixed $T$, the iterates $\{\theta^{(\upsilon)}\}_{\upsilon=0,1,\ldots}$ remain close to $\theta^{(T)}$ in the $\ell_\infty$-norm. Assuming that the tertiary induction hypotheses \eqref{eq:A1}--\eqref{eq:D1} hold up to $\upsilon$-step, and that the primary induction hypotheses \eqref{eq:D}--\eqref{eq:H} hold at iteration $T$, we then verify the tertiary induction hypotheses at$(\upsilon+1)$-step, one by one, with high probability.

\subsubsection[Proof of Tertiary Induction Hypothesis (A1)]{Proof of Tertiary Induction Hypothesis \eqref{eq:A1}}
\label{appendix:equation_A1}
\begin{proof}
In view of the gradient update rule, by the fundamental theorem of calculus, we have
\begin{align*}
    \theta_{0}^{(\upsilon+1)} - \theta^{(T)}
    & =
    \theta_{0}^{(\upsilon)} - \eta \nabla \mathcal{L}_{\lambda_{0}}^{(\leq T)} ( \theta_{0}^{(\upsilon)}) - \theta^{(T)}
    \\ &=
    \theta_{0}^{(\upsilon)} - \eta \nabla \mathcal{L}_{\lambda_{0}}^{(\leq T)} ( \theta_{0}^{(\upsilon)}) - \left[ \theta^{(T)}  - \eta \nabla \mathcal{L}_{\lambda_{0}}^{(\leq T)} ( \theta^{(T)}) \right] - \eta \nabla \mathcal{L}_{\lambda_{0}}^{(\leq T)} ( \theta^{(T)})
    \\ & =
    \left\{ \boldsymbol{I_{n}} - \eta \int_{0}^{1} \nabla^{2} \mathcal{L}_{\lambda_{0}}^{(\leq T)}(\theta_{0}^{(\upsilon)}(\iota)) d \iota \right\} (\theta_{0}^{(\upsilon)}-\theta^{(T)}) - \eta \nabla \mathcal{L}_{\lambda_{0}}^{(\leq T)} ( \theta^{(T)}),
\end{align*}
where we denote $\theta_{0}^{(\upsilon)}(\iota) := \theta^{(T)} + \iota (\theta_{0}^{(\upsilon)}-\theta^{(T)})$ for $\iota \in [0,1]$. Combining the induction hypothesis \eqref{eq:H} and \eqref{eq:D1}, one can see that for all $0 \leq \iota \leq 1$,
\begin{align*}
    \norm{\theta_{0}^{(\upsilon)}(\iota) - \theta^{\ast}}_{\infty}
    & \leq
    \norm{\theta_{0}^{(\upsilon)}(\iota) - \theta^{(T)}}_{\infty} + \norm{\theta^{(T)}-\theta^{\ast}}_{\infty}
    \\ & \leq
    \norm{\theta_{0}^{(\upsilon)}- \theta^{(T)}}_{\infty} + \norm{\theta^{(T)}-\theta^{\ast}}_{\infty}
    \\ & \leq
    D_{8} \frac{1}{T+1} \sqrt{\frac{\log n}{npL}} + D_{5} \sqrt{\frac{\log n}{npL}} \leq \epsilon
\end{align*}
for any sufficiently small $\epsilon > 0$.

This together with Lemma \ref{lemma:largesteigen} and Lemma \ref{lemma:smallesteigen} reveals that for any $0 \leq \iota \leq 1$,
\begin{align*}
    \frac{T+1}{10 \kappa}np + \lambda_{0} \leq \frac{(T+1)np}{8 \kappa e^{2\epsilon}} + \lambda_{0} \leq \lambda_{\min,\perp} \left( \nabla^{2} \mathcal{L}_{\lambda_{0}}^{(\leq T)}(\theta_{0}^{(\upsilon)}(\iota)) \right) \leq \lambda_{\max} \left( \nabla^{2} \mathcal{L}_{\lambda_{0}}^{(\leq T)}(\theta_{0}^{(\upsilon)}(\iota)) \right) \leq \lambda_{0} + (T+1)np,
\end{align*}
where the first inequality holds as long as $\epsilon > 0$ is small enough. Denoting $\boldsymbol{A}^{(\upsilon)} = \int_{0}^{1} \nabla^{2} \mathcal{L}_{\lambda_{0}}^{(\leq T)}(\theta_{0}^{(\upsilon)}(\iota)) d \iota$ and using the triangle inequality with $\boldsymbol{1}_{n}^{\top} (\theta_{0}^{(\upsilon)}-\theta^{(T)})=0$, we can derive 
\begin{align*}
    \norm{\theta_{0}^{(\upsilon+1)}-\theta^{(T)}}_{2}
    & \leq
    \norm{( \boldsymbol{I}_{n} - \eta \boldsymbol{A}^{(\upsilon)} ) (\theta_{0}^{(\upsilon)}-\theta^{(T)})}_{2} + \eta \norm{\nabla \mathcal{L}_{\lambda_{0}}^{(\leq T)} ( \theta^{(T)})}_{2} 
    \\ & \leq
    \max \left\{ \left| 1 - \eta \lambda_{\min,\perp}(\boldsymbol{A}) \right|, \left| 1 - \eta \lambda_{\max,\perp}(\boldsymbol{A}) \right| \right\} \norm{\theta_{0}^{(\upsilon)}-\theta^{(T)}}_{2} + \eta \norm{\nabla \mathcal{L}_{\lambda_{0}}^{(\leq T)} ( \theta^{(T)})}_{2} 
    \\ & \leq
    \left( 1 - \frac{T+1}{10 \kappa} \eta np \right) \norm{\theta_{0}^{(\upsilon)}-\theta^{(T)}}_{2} + \eta \norm{\nabla \mathcal{L}^{(\leq T)} ( \theta^{(T)})}_{2} + \eta \lambda_{0} \norm{\theta^{(T)} - \theta^{\ast}}_{2} + \eta \lambda_{0} \norm{\theta^{\ast}}_{2}
    \\ & \leq
    \left( 1 - \frac{T+1}{10 \kappa} \eta np \right) D_{5} \frac{1}{T+1} \sqrt{\frac{\log n}{pL}} + \eta c_{8} \sqrt{\frac{n^{2}p}{L}} + \eta \lambda_{0} D_{4} \sqrt{\frac{1}{pL}} + \eta \lambda_{0} \sqrt{n} \log \kappa
    \\ & \leq
    D_{5} \frac{1}{T+1} \sqrt{\frac{\log n}{pL}}
\end{align*}
for some constant $D_{5}, c_{8} > 0$. Here, the fourth inequality use of the facts of Lemma \ref{lemma:gradient_T_H} with hypotheses \eqref{eq:A1} and \eqref{eq:G}. The last line holds due to $\sqrt{\frac{n^{2}p}{L}} + \lambda_{0} \sqrt{\frac{1}{pL}} + \lambda_{0} \sqrt{n} \log \kappa = o( \sqrt{\frac{n^{2}p \log n}{L}})$. Here, we succeed to prove \eqref{eq:A1} for $\upsilon+1$.
\end{proof}
\subsubsection[Proof of Tertiary Induction Hypothesis (B1)]{Proof of Tertiary Induction Hypothesis \eqref{eq:B1}}
\label{appendix:equation_B1}
\begin{proof}
Note that
\begin{align*}
    \theta_{0,m}^{(\upsilon+1,m)} - \theta_{m}^{(T)} 
    &= 
    \theta_{0, m}^{(\upsilon, m)} - \eta \left[ \nabla \mathcal{L}_{\lambda_{0}}^{(T \backslash  m, \le T)}(\theta_{0}^{(\upsilon,m)}) \right]_{m} - \theta_{m}^{(T)}
    \\ &=
    \theta_{0,m}^{(\upsilon, m)} -\theta_{m}^{(T)} - \eta \lambda_{0} \theta^{(\upsilon, m)}_{m}
    \\ & -
    \eta \sum_{i \in [n] \backslash \lbrace m \rbrace}  p \left\{ \frac{e^{\theta_{i}^{(T)}}}{e^{\theta_{i}^{(T)}} + e^{\theta_{m}^{(T)}}} - \frac{e^{\theta_{0,i}^{(\upsilon, m)}}}{e^{\theta_{0,i}^{(\upsilon, m)}} + e^{\theta_{0,m}^{(\upsilon, m)}}} \right\} 
    \\ & -
    \eta \sum_{t=0}^{T-1} \sum_{i \in [n] \backslash \{ m \}} G_{m,i}^{(t)} \left\{y_{m,i}^{(t)} - \frac{e^{\theta_{0,i}^{(\upsilon, m)}}}{e^{\theta_{0,i}^{(\upsilon, m)}} + e^{\theta_{0,m}^{(\upsilon, m)}}} \right\} 
    \\ &=
    \left( 1 - \eta \lambda_{0} \right) \left( \theta_{0,m}^{(\upsilon, m)} - \theta_{m}^{(T)} \right) - \eta \lambda_{0} \theta^{(T)}_{m}
    \\ & -
    \eta \sum_{i \in [n] \backslash \lbrace m \rbrace}  p \left\{ \frac{e^{\theta_{i}^{(T)}}}{e^{\theta_{i}^{(T)}} + e^{\theta_{m}^{(T)}}} - \frac{e^{\theta_{0,i}^{(\upsilon, m)}}}{e^{\theta_{0,i}^{(\upsilon, m)}} + e^{\theta_{0,m}^{(\upsilon, m)}}} \right\} 
    \\ & -
    \eta \sum_{t=0}^{T-1} \sum_{i \in [n] \backslash \{ m \}} G_{m,i}^{(t)} \left\{y_{m,i}^{(t)} - \frac{e^{\theta_{i}^{(T)}}}{e^{\theta_{i}^{(T)}} + e^{\theta_{m}^{(T)}}} \right\} 
    \\ & -
    \eta \sum_{t=0}^{T-1} \sum_{i \in [n] \backslash \{ m \}} G_{m,i}^{(t)} \left\{ \frac{e^{\theta_{i}^{(T)}}}{e^{\theta_{i}^{(T)}} + e^{\theta_{m}^{(T)}}} - \frac{e^{\theta_{0,i}^{(\upsilon, m)}}}{e^{\theta_{0,i}^{(\upsilon, m)}} + e^{\theta_{0,m}^{(\upsilon, m)}}} \right\}
    \\ &=
    \left( 1 - \eta \lambda_{0} \right) \left( \theta_{0,m}^{(\upsilon, m)} - \theta_{m}^{(T)} \right) - \eta \lambda_{0} \theta^{(T)}_{m}
    \\ & -
    \eta \sum_{i \in [n] \backslash \lbrace m \rbrace}  p \left\{ \frac{e^{\theta_{i}^{(T)}}}{e^{\theta_{i}^{(T)}} + e^{\theta_{m}^{(T)}}} - \frac{e^{\theta_{0,i}^{(\upsilon, m)}}}{e^{\theta_{0,i}^{(\upsilon, m)}} + e^{\theta_{0,m}^{(\upsilon, m)}}} \right\} 
    \\ & -
    \eta \sum_{t=0}^{T-1} \sum_{i \in [n] \backslash \{ m \}} G_{m,i}^{(t)} \left\{ \frac{e^{\theta_{i}^{(T)}}}{e^{\theta_{i}^{(T)}} + e^{\theta_{m}^{(T)}}} - \frac{e^{\theta_{0,i}^{(\upsilon, m)}}}{e^{\theta_{0,i}^{(\upsilon, m)}} + e^{\theta_{0,m}^{(\upsilon, m)}}} \right\}
\end{align*}
where the last equality follows by the construction of $\left[ \nabla\mathcal{L}^{(\leq T-1)}(\theta^{(T)}) \right]_{m} = 0$. Apply the mean value theorem to obtain
\begin{align*}
    \frac{e^{\theta_{i}^{(T)}}}{e^{\theta_{i}^{(T)}} + e^{\theta_{m}^{(T)}}} - \frac{e^{\theta_{0,i}^{(\upsilon, m)}}}{e^{\theta_{0,i}^{(\upsilon, m)}} + e^{\theta_{0,m}^{(\upsilon, m)}}}
    &= 
    \frac{1}{1 + e^{\theta^{(T)}_{m}- \theta^{(T)}_{i}}} - \frac{1}{1 + e^{\theta_{0,m}^{(\upsilon, m)}-\theta_{0,i}^{(\upsilon, m)}}}
    \\ &=
    -\frac{e^{\xi_{0,i}^{(T)}}}{(1+e^{\xi_{0,i}^{(T)}})^{2}} \left[ \theta^{(T)}_{m} - \theta^{(T)}_{i} - \left( \theta_{0,m}^{(\upsilon, m)}-\theta_{0,i}^{(\upsilon, m)} \right) \right],
\end{align*}
where $\xi_{0,i}^{(T)}$ is some real number lying between $\theta^{(T)}_{m}- \theta^{(T)}_{i}$ and $\theta_{0,m}^{(\upsilon, m)}-\theta_{0,i}^{(\upsilon, m)}$. 

Rearranging the above results yields
\begin{align*}
    \theta_{0,m}^{(\upsilon+1,m)} -\theta_{m}^{(T)}  
    &=
    \left( 1 - \eta \lambda_{0} \right) \left( \theta_{0,m}^{(\upsilon, m)} - \theta_{m}^{(T)} \right) - \eta \lambda_{0} \theta^{(T)}_{m}
    \\ & - 
    \eta p \sum_{i\in[n] \backslash \{ m\}} \frac{e^{\xi_{0,i}^{(T)}}}{(1+e^{\xi_{0,i}^{(T)}})^{2}} \left[ \left( \theta_{0,m}^{(\upsilon, m)}-\theta_{0,i}^{(\upsilon, m)} \right) - \left( \theta^{(T)}_{m} - \theta^{(T)}_{i} \right) \right]
    \\ & -
    \eta \sum_{t=0}^{T-1} \sum_{i\in[n] \backslash \{ m\}} G^{(t)}_{m,i} \frac{e^{\xi_{0,i}^{(T)}}}{(1+e^{\xi_{0,i}^{(T)}})^{2}} \left[ \left( \theta_{0,m}^{(\upsilon, m)}-\theta_{0,i}^{(\upsilon, m)} \right) - \left( \theta^{(T)}_{m} - \theta^{(T)}_{i} \right) \right]
    \\ & = 
    \left[ 1 - \eta \lambda_{0} - 
    \eta p \sum_{i\in[n] \backslash \{ m\}} \frac{e^{\xi_{0,i}^{(T)}}}{(1+e^{\xi_{0,i}^{(T)}})^{2}} - \eta \sum_{t=0}^{T-1} \sum_{i\in[n] \backslash \{ m\}} G^{(t)}_{m,i}  \frac{e^{\xi_{0,i}^{(T)}}}{(1+e^{\xi_{0,i}^{(T)}})^{2}} \right] \left( \theta_{0,m}^{(\upsilon, m)} - \theta_{m}^{(T)} \right) 
    \\ & + 
    \eta p \sum_{i\in[n] \backslash \{ m\}} \frac{e^{\xi_{0,i}^{(T)}}}{(1+e^{\xi_{0,i}^{(T)}})^{2}} \left( \theta_{0,i}^{(\upsilon, m)} - \theta_{i}^{(T)} \right)
    \\ & + 
    \eta \sum_{t=0}^{T-1} \sum_{i\in[n] \backslash \{ m\}} G^{(t)}_{m,i}  \frac{e^{\xi_{0,i}^{(T)}}}{(1+e^{\xi_{0,i}^{(T)}})^{2}} \left( \theta_{0,i}^{(\upsilon, m)} - \theta_{i}^{(T)} \right)
    \\ & - 
    \eta \lambda_{0} \theta^{(T)}_{m},
\end{align*}
and hence
\begin{align*}
    \left| \theta_{0,m}^{(\upsilon+1,m)} - \theta_{m}^{(T)}   \right| 
    & \leq
    \left| 1 - \eta \lambda_{0} - 
    \eta p \sum_{i\in[n] \backslash \{ m\}} \frac{e^{\xi_{0,i}^{(T)}}}{(1+e^{\xi_{0,i}^{(T)}})^{2}} - \eta \sum_{t=0}^{T-1} \sum_{i\in[n] \backslash \{ m\}} G^{(t)}_{m,i}  \frac{e^{\xi_{0,i}^{(T)}}}{(1+e^{\xi_{0,i}^{(T)}})^{2}} \right| \cdot \left| \theta_{0,m}^{(\upsilon, m)} - \theta_{m}^{(T)} \right|
    \\ & + 
    \frac{\eta p}{4} \sum_{i\in[n] \backslash \{ m\}}  \left| \theta_{0,i}^{(\upsilon, m)} - \theta_{i}^{(T)} \right|
    + 
    \frac{\eta}{4} \sum_{t=0}^{T-1} \sum_{i\in[n] \backslash \{ m\}} G^{(t)}_{m,i}  \left| \theta_{0,i}^{(\upsilon, m)} - \theta_{i}^{(T)} \right|
    + 
    \eta \lambda_{0} \left| \theta^{(T)}_{m} \right|
    \\ & \leq
    \left|1 - \eta \lambda_{0} - 
    \eta p \sum_{i\in[n] \backslash \{ m\}} \frac{e^{\xi_{0,i}^{(T)}}}{(1+e^{\xi_{0,i}^{(T)}})^{2}} - \eta \sum_{t=0}^{T-1} \sum_{i\in[n] \backslash \{ m\}} G^{(t)}_{m,i}  \frac{e^{\xi_{0,i}^{(T)}}}{(1+e^{\xi_{0,i}^{(T)}})^{2}} \right| \cdot \left| \theta_{0,m}^{(\upsilon, m)} - \theta_{m}^{(T)} \right|
    \\ & + 
    \frac{\eta p\sqrt{n}}{4} \norm{\theta_{0}^{(\upsilon, m)} - \theta^{(T)}}_{2}
    + 
    \frac{3\eta p\sqrt{n}}{8} \sum_{t=0}^{T-1}  \norm{\theta_{0}^{(\upsilon, m)} - \theta^{(T)}}_{2}
    + 
    \eta \lambda_{0} \norm{\theta^{(T)}}_{\infty}
\end{align*}

Here, the first inequality comes from the triangle inequality and elementary inequality $\frac{e^{\xi_{0,i}^{(T)}}}{(1+\xi_{0,i}^{(T)})^{2}} \leq \frac{1}{4}$ for any $\xi_{0,i}^{(T)} \in \mathbb{R}$, whereas the second relation holds owing to $\sum_{i \in [n] \backslash \lbrace m \rbrace} \left| \theta^{(\upsilon, m)}_{0,i} - \theta_{i}^{(T)}  \right| \leq \sqrt{n} \norm{ \theta^{(\upsilon, m)}_{0} - \theta^{(T)}}_{2}$. We also obtain that
\begin{align*}
    & 1 - \eta \lambda_{0} - 
    \eta p \sum_{i\in[n] \backslash \{ m\}} \frac{e^{\xi_{0,i}^{(T)}}}{(1+e^{\xi_{0,i}^{(T)}})^{2}} - \eta \sum_{t=0}^{T-1} \sum_{i\in[n] \backslash \{ m\}} G^{(t)}_{m,i}  \frac{e^{\xi_{0,i}^{(T)}}}{(1+e^{\xi_{0,i}^{(T)}})^{2}}
    \\ & \geq
    1 - \eta \lambda_{0} - 
    \eta p \cdot \frac{1}{4} (n-1) - \frac{1}{4} \eta T \cdot \frac{3}{2} np
    \\ & \geq 
    1 - \eta \left[ \lambda_{0} + (T+1) np \right]
    \\ & > 
    0.
\end{align*}
To further upper bound $\left| \theta_{0,m}^{(\upsilon+1, m)} - \theta_{m}^{(T)} \right|$, it suffices to obtain a lower bound on $\frac{e^{\xi_{0,i}^{(T)}}}{(1+\xi_{0,i}^{(T)})^{2}}$. Toward this, we can guarantee $\left|\xi_{0,i}^{(T)} - \theta^{(T)}_{0,m} + \theta^{(T)}_{0,i} \right| \leq\left| \theta^{(T)}_{m} - \theta_{0,m}^{(\upsilon, m)} \right| + \left| \theta^{(T)}_{i} - \theta_{0,i}^{(\upsilon, m)} \right| \leq 2D_{9} \frac{1}{T+1} \sqrt{\frac{\log n}{npL}}$ by \eqref{eq:E1}, which implies $\norm{\xi_{0,i}^{T}}_{\infty}$ is bounded. We then have $\frac{e^{\xi_{0,i}^{(T)}}}{(1+e^{\xi_{0,i}^{(T)}})^{2}} \geq c_{9}$ for some constant $c_{9} > 0$. Combine the above bounds with \eqref{eq:B1}, \eqref{eq:E1}, \eqref{eq:H} and Lemma \ref{lemma:degree_concentration} thus yields
\begin{align*}
    \left| \theta_{0,m}^{(\upsilon+1,m)} - \theta_{m}^{(T)}   \right| 
    & \leq
    \left( 1 - \eta \lambda_{0} - 
    \eta p \sum_{i\in[n] \backslash \{ m\}} \frac{e^{\xi_{0,i}^{(T)}}}{(1+e^{\xi_{0,i}^{(T)}})^{2}} - \eta \sum_{t=0}^{T-1} \sum_{i\in[n] \backslash \{ m\}} G^{(t)}_{m,i}  \frac{e^{\xi_{0,i}^{(T)}}}{(1+e^{\xi_{0,i}^{(T)}})^{2}} \right) \cdot \left| \theta_{0,m}^{(\upsilon, m)} - \theta_{m}^{(T)} \right|
    \\ & + 
    \frac{\eta p\sqrt{n}}{4} \norm{\theta_{0}^{(\upsilon, m)} - \theta^{(T)}}_{2}
    + 
    \frac{3\eta p\sqrt{n}}{8} \sum_{t=0}^{T-1}  \norm{\theta_{0}^{(\upsilon, m)} - \theta^{(T)}}_{2}
    + 
    \eta \lambda_{0} \norm{\theta^{(T)}}_{\infty}
    \\ & \leq 
    \left( 1 - c_{9} \eta np - \frac{1}{2}c_{9}T \eta np \right) \left| \theta_{0,m}^{(\upsilon, m)} - \theta_{m}^{(T)} \right|
    + 
    \eta p \sqrt{n} \sum_{t=0}^{T}  \norm{\theta_{0}^{(\upsilon, m)} - \theta^{(T)}}_{2}
    \\ & + 
    \eta \lambda_{0} \norm{\theta^{(\ast)}}_{\infty}
    + 
    \eta \lambda_{0} \norm{\theta^{(T)} - \theta^{\ast}}_{\infty}
    \\ & \leq 
    \left( 1 - \frac{1}{2}c_{9} (T+1) \eta np \right) D_{6} \frac{1}{T+1} \sqrt{\frac{\log n}{npL}}
    + 
    \eta p \sqrt{n} D_{10} \sqrt{\frac{\log n}{pL}} 
    \\ & + 
    \eta \lambda_{0} \log \kappa + \eta \lambda_{0} D_{5} \sqrt{\frac{\log n}{npL}}
    \\ & \leq
    D_{6} \frac{1}{T+1} \sqrt{\frac{\log n}{npL}}
\end{align*}
as long as $\frac{1}{2}c_{9}D_{6} \gg D_{10}$. We thus have proved \eqref{eq:B1} for $\upsilon+1$.
\end{proof}
\subsubsection[Proof of Tertiary Induction Hypothesis (C1)]{Proof of Tertiary Induction Hypothesis \eqref{eq:C1}}
\label{appendix:equation_C1}
\begin{proof}
We observe that 
\begin{align*}
    \theta_{0}^{(\upsilon+1)} - \theta_{0}^{(\upsilon + 1, m)} 
    &= 
    \theta_{0}^{(\upsilon)} - \eta \nabla \mathcal{L}_{\lambda_{0}}^{(\le T)}(\theta_{0}^{(\upsilon)}) - \Big\lbrack \theta_{0}^{(\upsilon,m)} - \eta \nabla \mathcal{L}_{\lambda_{0}}^{(T \backslash m,\le T)}(\theta_{0}^{(\upsilon,m)}) \Big\rbrack 
    \\ &=
    \theta_{0}^{(\upsilon)} - \eta \nabla \mathcal{L}_{\lambda_{0}}^{(\le T)}(\theta_{0}^{(\upsilon)}) - \Big\lbrack \theta_{0}^{(\upsilon, m)} - \eta \nabla \mathcal{L}_{\lambda_{0}}^{(\le T)}(\theta_{0}^{(\upsilon, m)}) \Big\rbrack 
    \\ & \quad- 
    \eta \left( \nabla \mathcal{L}_{\lambda_{0}}^{(\le T)}(\theta_{0}^{(\upsilon, m)}) - \nabla \mathcal{L}_{\lambda_{0}}^{(T \backslash m,\le T)}(\theta_{0}^{(\upsilon,m)})  \right)
    \\ &=
    \Big\lbrack \boldsymbol{I}_{n} - \eta \nabla^{2} \mathcal{L}_{\lambda_{0}}^{(\le T)}(\theta_{0}^{(\upsilon)}(\xi)) \Big\rbrack \left( \theta_{0}^{(\upsilon)} - \theta_{0}^{(\upsilon,m)} \right)
    \\ & \quad - 
    \eta \left( \nabla \ell_{T}(\theta_{0}^{(\upsilon, m)}) - \nabla \ell_{T}^{(m)}(\theta_{0}^{(\upsilon, m)}) \right),
\end{align*}
where $\theta_{0}^{(\upsilon)}(\xi)$ is a convex combination of $\theta_{0}^{(\upsilon)}$ and $\theta_{0}^{(\upsilon, m)}$. By \eqref{eq:D1}, \eqref{eq:E1} and \eqref{eq:H}, we thus have $\norm{\theta_{0}^{(\upsilon)}(\xi) - \theta^{\ast}}_{\infty} \leq \norm{\theta_{0}^{(\upsilon)} -\theta^{(T)}}_{\infty} + \norm{\theta_{0}^{(\upsilon,m)} -\theta^{(T)}}_{\infty} + \norm{\theta^{(T)}-\theta^{\ast}}_{\infty} \leq D_{8} \frac{1}{T+1} \sqrt{\frac{\log n}{npL}} + D_{9} \frac{1}{T+1} \sqrt{\frac{\log n}{npL}} + D_{5} \sqrt{\frac{\log n}{npL}}$, and we can combine this with Lemma \ref{lemma:smallesteigen} to obtain the bound
\begin{align*}
    \norm{\Big\lbrack \boldsymbol{I}_{n} - \eta \nabla^{2} \mathcal{L}_{\lambda_{0}}^{(\le T)}(\theta_{0}^{(\upsilon)}(\xi))  \Big\rbrack \left( \theta_{0}^{(\upsilon)} - \theta_{0}^{(\upsilon, m)} \right) }_{2}  
    & \leq \norm{\Big\lbrack \boldsymbol{I}_{n} - c_{10}(T+1) \eta np} \Big\rbrack \left( \theta_{0}^{(\upsilon)} - \theta_{0}^{(\upsilon, m)} \right)_{2}
    \\ &\leq 
    \left( 1 - c_{10}(T+1) \eta np \right) \cdot \norm{\theta_{0}^{(\upsilon)} - \theta_{0}^{(\upsilon, m)}}_{2}
\end{align*}
for some constant $c_{10}>0$. 

Note that by Lemma \ref{lemma:gradient_t_diff_LOO} and \eqref{eq:E1}, for $t>0$, we have 
\begin{align*}
   \norm{\nabla \ell_{T}(\theta_{0}^{(\upsilon, m)}) - \nabla \ell_{T}^{(m)}(\theta_{0}^{(\upsilon, m)})}_{2} 
   & \leq 
   C_{1}\sqrt{\frac{np \log n}{L}} + C_{1}\sqrt{np \log n} \norm{\theta_{0}^{(\upsilon, m)} - \theta^{(T)}}_{\infty} 
   \\ & = 
   C_{1}\sqrt{\frac{np \log n}{L}} + C_{1} D_{9} \sqrt{np \log n} \sqrt{\frac{\log n}{npL}}
   \\ & \leq 
   C_{1} \sqrt{\frac{np\log n}{L}} \left( 1 + D_{9} \sqrt{\frac{\log n}{np}} \right).
\end{align*}
Combine the above results with \eqref{eq:C1} together yields 
\begin{align*}
   \norm{\theta_{0}^{(\upsilon+1)} - \theta_{0}^{(\upsilon + 1, m)}}_{2} 
   & \leq
   \left( 1 - c_{10}(T+1) \eta np \right) \cdot \norm{\theta_{0}^{(\upsilon)} - \theta_{0}^{(\upsilon, m)}}_{2} 
   \\ & \quad + 
   \eta C_{1} \sqrt{\frac{np\log n}{L}} \left( 1 + D_{9} \sqrt{\frac{\log n}{np}} \right)
   \\ & \leq 
   \left( 1 - c_{10}(T+1) \eta np \right) D_{7} \frac{1}{T+1} \sqrt{\frac{\log n}{npL}}
   \\ & \quad + 
   \eta C_{1} \sqrt{\frac{np\log n}{L}} \left( 1 + D_{9} \sqrt{\frac{\log n}{np}} \right)
   \\ & \leq 
   D_{7} \frac{1}{T+1} \sqrt{\frac{\log n}{npL}},
\end{align*}
where the last inequality requires that $c_{10}D_{7} \gg C_{1}$, which is implied by the condition that $p \geq \frac{c_{0} \log n}{n}$ for some sufficiently large $c_{0} > 0$. We thus have proved \eqref{eq:C1} for $\upsilon + 1$.    
\end{proof}
\subsubsection[Proof of Tertiary Induction Hypothesis (D1)]{Proof of Tertiary Induction Hypothesis \eqref{eq:D1}}
\label{appendix:equation_D1}
\begin{proof}
Consider any $m \in [n]$, it is easily seen from the triangle inequality that
\begin{align*}
    \left| \theta_{0,m}^{(\upsilon+1)}-\theta_{m}^{(T)} \right|
    & \leq
    \left| \theta_{0,m}^{(\upsilon+1)}-\theta_{0,m}^{(\upsilon+1,m)} \right| + \left| \theta_{m}^{(T)}-\theta_{0,m}^{(\upsilon+1,m)} \right|
    \\ & \leq
    \norm{\theta^{(\upsilon+1)}_{0}-\theta^{(\upsilon+1,m)}_{0}}_{2} + \max_{m \in [n]} \left| \theta_{0,m}^{(\upsilon+1,m)} - \theta_{m}^{(T)} \right| 
    \\ & \leq
    D_{7} \frac{1}{T+1} \sqrt{\frac{\log n}{npL}} + D_{6} \frac{1}{T+1} \sqrt{\frac{\log n}{npL}}
    \\ & \leq
    D_{8} \frac{1}{T+1} \sqrt{\frac{\log n}{npL}},
\end{align*}
with the proviso that $D_{8} \geq D_{7} + D_{6}$. Therefore, we proved \eqref{eq:D1} for $\upsilon+1$.
\end{proof}
\subsubsection{Auxiliary Lemmas}

To summarize, we have shown that \eqref{eq:A1}-\eqref{eq:D1} hold for all $\upsilon \leq \upsilon^{\ast}$ with probability at least $1-\mathcal{O}(\upsilon^{\ast} n^{-10})$. Note that \eqref{eq:E1} holds for all $\upsilon \leq \upsilon^{\ast}$ as well and we thus have $\norm{\theta_{0}^{(\upsilon^{\ast})}-\theta^{(T)}}_{\infty} \leq D_{8} \frac{1}{T+1} \sqrt{\frac{\log n}{npL}}$. We then obtain the following lemmas based on the convergence properties of gradient descent to derive the $\ell_{\infty}$ and $\ell_{2}$ bounds on the regularized counterpart $\theta_{0,\lambda}^{(T+1)}$.

\begin{lemma} \label{lemma:auxil_reg_bound}
    Suppose the tertiary induction hypotheses \eqref{eq:A1}-\eqref{eq:D1} hold for $0,\cdots, \upsilon^{\ast}$-th step and the primary induction hypotheses \eqref{eq:D}-\eqref{eq:H} hold at iteration $T$,  then we have
    \begin{align*}
        \norm{\theta^{(T+1)}_{0,\lambda} - \theta^{(T)}}_{\infty} \leq 
        D_{8} \frac{2}{T+1} \sqrt{\frac{\log n}{npL}}
    \end{align*}
    with probability at least $1 - \mathcal{O}(n^{-5})$.
\end{lemma}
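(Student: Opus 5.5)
The plan is the same route as Lemma~\ref{lemma:inner_reg_bound}: compare the final gradient-descent iterate $\theta_{0}^{(\upsilon^{\ast})}$ with its limit $\theta_{0,\lambda}^{(T+1)}$, then use the triangle inequality
\begin{align*}
    \norm{\theta^{(T+1)}_{0,\lambda} - \theta^{(T)}}_{\infty}
    \leq
    \norm{\theta^{(T+1)}_{0,\lambda} - \theta_{0}^{(\upsilon^{\ast})}}_{2}
    + \norm{\theta_{0}^{(\upsilon^{\ast})} - \theta^{(T)}}_{\infty}.
\end{align*}
The second term is at most $D_{8}\frac{1}{T+1}\sqrt{\frac{\log n}{npL}}$ by \eqref{eq:D1}, which holds for all $\upsilon\le\upsilon^{\ast}$ on the high-probability event already established. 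So it remains to show that the first term is at most the same quantity once $\upsilon^{\ast}$ is chosen large enough but still polynomial in $n$.

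\textbf{Step 1: geometric convergence of gradient descent.} The objective $\mathcal{L}^{(\leq T)}_{\lambda_{0}}$ is $\lambda_{0}$-strongly convex on $\Theta$ because of the ridge term. It is also $(\lambda_{0}+(T+1)np)$-smooth by Lemma~\ref{lemma:largesteigen}. With step size $\eta \le 1/(\lambda_{0}+(T+1)np)$, the standard result \citep{bubeck2015convex} gives
\begin{align*}
    \norm{\theta_{0}^{(\upsilon)} - \theta^{(T+1)}_{0,\lambda}}_{2}
    \leq
    (1-\eta\lambda_{0})^{\upsilon}\,\norm{\theta^{(T)} - \theta^{(T+1)}_{0,\lambda}}_{2}.
\end{align*}
All iterates stay in $\Theta$, so the $\boldsymbol{1}$ direction causes no trouble.

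\textbf{Step 2: crude bound on the initial distance.} Strong convexity gives $\norm{\theta^{(T)} - \theta^{(T+1)}_{0,\lambda}}_{2} \le \lambda_{0}^{-1}\norm{\nabla \mathcal{L}^{(\leq T)}_{\lambda_{0}}(\theta^{(T)})}_{2}$. Since $\nabla\mathcal{L}^{(\le T-1)}(\theta^{(T)})=\mathbf{0}$, this gradient equals $\nabla \ell_{T}(\theta^{(T)}) + \lambda_{0}\theta^{(T)}$. Lemma~\ref{lemma:gradient_T_H} bounds the first piece, and \eqref{eq:H} bounds the second by $\lambda_0\sqrt{n}(\log\kappa+1)$. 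Since $\lambda_{0}=1/n$, the initial distance is at most a polynomial in $n$, namely $\lesssim n\sqrt{n^{2}p/L}+\sqrt n$.

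\textbf{Step 3: choice of $\upsilon^{\ast}$.} The contraction factor is $1-\eta\lambda_0 = 1-\frac{1}{(T+1)(1+n^2p)}$. Choosing $\upsilon^{\ast} \asymp (T+1) n^{2}p \log n$ makes $(1-\eta\lambda_{0})^{\upsilon^{\ast}}\le n^{-c}$ for a large constant $c$. Then the first term is $\le n^{-c}\cdot\mathrm{poly}(n) \le D_{8}\frac{1}{T+1}\sqrt{\frac{\log n}{npL}}$, using $L\le d_0 n^{d_1}$ and $T \lesssim n^{3}/(p\log n)$ so that the target rate is only polynomially small.

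\textbf{Main obstacle: the probability accounting.} The tertiary induction holds with probability $1-\mathcal{O}(\upsilon^{\ast}n^{-10})$, and $\upsilon^{\ast}$ may be as large as $(T+1)n^2p\log n$, which can reach $n^{5}$. A plain union bound over $\upsilon$ would then lose too much. To avoid this, the plan is to note that the only random events invoked in each step of \eqref{eq:A1}--\eqref{eq:D1} (Lemmas~\ref{lemma:gradient_T_H}, \ref{lemma:degree_concentration}, \ref{lemma:gradient_t_diff_LOO}) do not depend on $\upsilon$: they involve the fixed data, together with a union over $m$ and over the $\ell_\infty$-balls used in the leave-one-out gradient bound. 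The induction is therefore deterministic on one event of probability $1-\mathcal{O}(n^{-5})$. Adding the two bounds then gives the stated $2D_8$ constant.
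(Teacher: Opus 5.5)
Your proof is correct, but it handles the first term of the triangle inequality differently from the paper.

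\textbf{How the paper does it.} The paper never bounds the initial distance in advance. It writes $\norm{\theta^{(T+1)}_{0,\lambda}-\theta^{(T)}}_2\le\sqrt n\,\norm{\theta^{(T+1)}_{0,\lambda}-\theta^{(T)}}_\infty$, so the optimization-error term becomes $(1-\cdot)^{\upsilon^\ast}\sqrt n$ times the very quantity being bounded. It then takes $\upsilon^\ast=\lceil(1+(T+1)n^2p)\log(2\sqrt n)\rceil$, which makes that coefficient at most $1/2$, and absorbs it into the left-hand side. The factor $2$ in $2D_8$ comes from this absorption, not from adding $D_8+D_8$.

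\textbf{How your route differs.} You bound $\norm{\theta^{(T)}-\theta^{(T+1)}_{0,\lambda}}_2\le\lambda_0^{-1}\norm{\nabla\ell_T(\theta^{(T)})+\lambda_0\theta^{(T)}}_2$ using strong convexity, Lemma~\ref{lemma:gradient_T_H} and \eqref{eq:H}, and then push the geometric factor down to $n^{-c}$. This is valid, but it needs ingredients the self-bounding trick avoids:
\begin{itemize}
\item the gradient bound;
\item $L\le d_0n^{d_1}$, $T\lesssim n^3/(p\log n)$ and $p\gtrsim\log n/n$, so that the target rate is only polynomially small;
\item a constant $c$ in $\upsilon^\ast$ that depends on $d_1$.
\end{itemize}
In exchange you get an explicit bound on the gradient-descent error that does not refer back to the unknown quantity. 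Your contraction factor $1-\eta\lambda_0=1-\frac{1}{(T+1)(1+n^2p)}$ is the accurate one for the paper's step size. The paper's $1-\frac{\lambda_0}{\lambda_0+(T+1)np}$ differs from it only by a $1+\mathcal{O}(1/(n^2p))$ factor in the exponent, which is harmless.

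\textbf{Correction to your last paragraph.} The union bound over $\upsilon$ is not an obstacle. Under $T\lesssim n^3/(p\log n)$ you have $\upsilon^\ast\lesssim(T+1)n^2p\log n\lesssim n^5$ (your larger constant $c$ changes this only by a constant factor). Each step of \eqref{eq:A1}--\eqref{eq:D1} fails with probability $\mathcal{O}(n^{-10})$, so a plain union bound gives exactly the stated $1-\mathcal{O}(n^{-5})$. That is the paper's accounting, and it is the reason for the constraint on $T$.

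Your alternative claim, that the per-step events do not depend on $\upsilon$, is not correct as stated. The Bernstein bound on the $\boldsymbol{v}^m$ part of Lemma~\ref{lemma:gradient_t_diff_LOO} uses coefficients determined by the current leave-one-out iterate, and these change with $\upsilon$. Making that bound uniform over an $\ell_\infty$-ball would only give order $np$, not $\sqrt{np\log n}$, times the $\ell_\infty$ distance, which is too weak for the induction. Since the lemma already assumes \eqref{eq:A1}--\eqref{eq:D1} through step $\upsilon^\ast$, drop that paragraph and cite the union bound instead.
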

\begin{proof}
See \S\ref{appendix:auxil_reg_bound}.
\end{proof}

\begin{lemma} \label{lemma:auxil_reg_bound_2}
    Suppose the tertiary induction hypotheses \eqref{eq:A1}-\eqref{eq:D1} hold for $0,\cdots, \upsilon^{\ast}$-th step and the primary induction hypotheses \eqref{eq:D}-\eqref{eq:H} hold at iteration $T$,  then we have
    \begin{align*}
        \norm{\theta^{(T+1)}_{0,\lambda} - \theta^{(T)}}_{2} \leq 
        D_{11} \frac{1}{T+1} \sqrt{\frac{1}{pL}}
    \end{align*}
    with probability at least $1 - \mathcal{O}(n^{-5})$.
\end{lemma}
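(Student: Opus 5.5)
The key observation is that the $\ell_\infty$ control from Lemma~\ref{lemma:auxil_reg_bound} alone only yields $\sqrt{n}\cdot\frac{1}{T+1}\sqrt{\frac{\log n}{npL}} = \frac{1}{T+1}\sqrt{\frac{\log n}{pL}}$, which is off by a $\sqrt{\log n}$ factor. To remove it, I will use a direct strong-convexity argument around $\theta^{(T)}$ rather than passing through the gradient-descent iterates.

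First I localize. By Lemma~\ref{lemma:auxil_reg_bound} together with \eqref{eq:H}, both $\theta^{(T+1)}_{0,\lambda}$ and $\theta^{(T)}$ lie within $\epsilon$ of $\theta^\ast$ in $\ell_\infty$ for any small constant $\epsilon$, and so does every point on the segment between them. Lemma~\ref{lemma:smallesteigen} then gives
\[
\lambda_{\min,\perp}\bigl(\nabla^2\mathcal{L}^{(\le T)}_{\lambda_0}\bigr)\ge \frac{(T+1)np}{10\kappa}
\]
uniformly on that segment. Both points satisfy $\boldsymbol{1}^\top\theta=0$. Writing the first-order optimality condition $\nabla\mathcal{L}^{(\le T)}_{\lambda_0}(\theta^{(T+1)}_{0,\lambda})=0$ on $\Theta$ and applying the integral form of the mean value theorem, I get
\[
\norm{\theta^{(T+1)}_{0,\lambda}-\theta^{(T)}}_2\le \frac{10\kappa}{(T+1)np}\,\norm{\nabla\mathcal{L}^{(\le T)}_{\lambda_0}(\theta^{(T)})}_2 .
\]

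Next I bound the gradient at $\theta^{(T)}$. Since $\nabla\mathcal{L}^{(\le T-1)}(\theta^{(T)})=0$ by the definition of $\theta^{(T)}$, the gradient reduces to
\[
\nabla\ell_T(\theta^{(T)})+\lambda_0\theta^{(T)}.
\]
Lemma~\ref{lemma:gradient_T_H} bounds the first term by $\sqrt{n^2p/L}$, with no logarithmic factor, on the event $\mathcal{A}_0^{(T)}$. This is the crucial point: the fresh round-$T$ data are conditionally independent given $\theta^{(T)}$, so the $\epsilon$-net and Hoeffding argument applies without any union over leave-one-out indices. For the second term,
\[
\lambda_0\norm{\theta^{(T)}}_2\le \lambda_0\bigl(\norm{\theta^\ast}_2+\norm{\theta^{(T)}-\theta^\ast}_2\bigr)\lesssim \tfrac{1}{n}\bigl(\sqrt{n}\log\kappa + D_4/\sqrt{pL}\bigr),
\]
using \eqref{eq:G}. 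This is negligible compared with $\sqrt{n^2p/L}$ whenever $L\lesssim n^{d_1}$ and $np\gtrsim\log n$; this is the only place the growth condition on $L$ enters.

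Combining the two steps gives
\[
\norm{\theta^{(T+1)}_{0,\lambda}-\theta^{(T)}}_2\lesssim \frac{\kappa}{(T+1)np}\sqrt{\frac{n^2p}{L}}=\frac{\kappa}{T+1}\sqrt{\frac{1}{pL}} ,
\]
and $D_{11}$ absorbs $\kappa$ and the constants. The probability bound $1-\mathcal{O}(n^{-5})$ comes from intersecting $\mathcal{A}_0^{(T)}$, $\mathcal{A}_1^{(t)}$ for all $t \le T$, and the event of Lemma~\ref{lemma:auxil_reg_bound}.

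The main obstacle is justifying the localization, that is, ensuring the whole segment stays in the $\ell_\infty$ region where strong convexity holds. This rests entirely on Lemma~\ref{lemma:auxil_reg_bound} and on the smallness $\frac{1}{T+1}\sqrt{\log n/(npL)}\le\epsilon$, which follows from $np\gtrsim\log n$. If that lemma is only available for the gradient-descent iterate $\theta_0^{(\upsilon^\ast)}$, I would first pass to the limit $\upsilon^\ast\to\infty$, using geometric convergence of gradient descent on the strongly convex objective, to transfer the bound to the minimizer $\theta^{(T+1)}_{0,\lambda}$.
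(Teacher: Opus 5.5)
Your argument is essentially the paper's. Both localize the segment between $\theta^{(T)}$ and $\theta^{(T+1)}_{0,\lambda}$ via Lemma~\ref{lemma:auxil_reg_bound} and \eqref{eq:H}, get curvature of order $(T+1)np$ from Lemma~\ref{lemma:smallesteigen}, and bound $\nabla\mathcal{L}^{(\le T)}_{\lambda_0}(\theta^{(T)})=\nabla\ell_T(\theta^{(T)})+\lambda_0\theta^{(T)}$ through Lemma~\ref{lemma:gradient_T_H}. The only cosmetic difference is that the paper compares objective values (feasibility of $\theta^{(T)}$ plus a second-order Taylor expansion) instead of your first-order condition with the integral mean value theorem, which costs a factor of $2$.

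One caveat, shared with the paper's proof: absorbing $\lambda_0\norm{\theta^\ast}_2$, generically of order $\log\kappa/\sqrt{n}$, into $\sqrt{n^2p/L}$ requires $L\lesssim n^{3}p/\log^{2}\kappa$. This does not follow from $L\lesssim n^{d_1}$ for an arbitrary exponent $d_1$, so your remark that the polynomial growth condition on $L$ makes this term negligible is too optimistic.
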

\begin{proof}
See \S\ref{appendix:auxil_reg_bound_2}.
\end{proof}

\begin{lemma} \label{lemma:auxil_reg_true_bound_2}
    Suppose the tertiary induction hypotheses \eqref{eq:A1}-\eqref{eq:D1} hold for $0,\cdots, \upsilon^{\ast}$-th step and the primary induction hypotheses \eqref{eq:D}-\eqref{eq:H} hold at iteration $T$,  then we have
    \begin{align*}
        \norm{\theta^{(T+1)} - \theta^{(T+1)}_{0,\lambda}}_{2} \leq 
        D_{12} \frac{1}{T+1} \sqrt{\frac{\log n}{npL}}
    \end{align*}
    with probability at least $1 - \mathcal{O}(n^{-5})$.
\end{lemma}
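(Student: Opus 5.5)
The plan is to compare the two minimizers through the first-order optimality conditions and strong convexity. Both $\theta^{(T+1)}$ and $\theta^{(T+1)}_{0,\lambda}$ lie in $\Theta$, and they are stationary points of closely related objectives. We have $\nabla \mathcal{L}^{(\le T)}(\theta^{(T+1)}) = \bm{0}$, and
\begin{align*}
\nabla \mathcal{L}^{(\le T)}(\theta^{(T+1)}_{0,\lambda}) + \lambda_0 \theta^{(T+1)}_{0,\lambda} = \bm{0}
\end{align*}
after projecting by $\mathcal{P}$. Since $\mathcal{P}\theta = \theta$ on $\Theta$, no extra projection term appears.

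Subtracting the two conditions and using the fundamental theorem of calculus gives
\begin{align*}
\boldsymbol{A}\,\big(\theta^{(T+1)}_{0,\lambda} - \theta^{(T+1)}\big) = -\lambda_0 \theta^{(T+1)}_{0,\lambda},
\qquad
\boldsymbol{A} := \int_0^1 \nabla^2 \mathcal{L}^{(\le T)}\big(\theta^{(T+1)} + \iota(\theta^{(T+1)}_{0,\lambda}-\theta^{(T+1)})\big)\, d\iota .
\end{align*}
Every point on this segment stays within a constant $\ell_\infty$-distance of $\theta^{\ast}$. For $\theta^{(T+1)}_{0,\lambda}$ this follows from Lemma~\ref{lemma:auxil_reg_bound} together with \eqref{eq:H}. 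For $\theta^{(T+1)}$ it follows from Lemma~\ref{lemma:equation_D_plus}, which gives $\norm{\theta^{(T+1)}-\theta^\ast}_\infty \le 5$. Lemma~\ref{lemma:smallesteigen} then yields $\lambda_{\min,\perp}(\boldsymbol{A}) \gtrsim (T+1)np$. The difference vector lies in $\boldsymbol{1}^\perp$, so we may invert $\boldsymbol{A}$ on that subspace and obtain
\begin{align*}
\norm{\theta^{(T+1)} - \theta^{(T+1)}_{0,\lambda}}_2 \lesssim \frac{\lambda_0 \norm{\theta^{(T+1)}_{0,\lambda}}_2}{(T+1)np}.
\end{align*}

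Next we bound $\norm{\theta^{(T+1)}_{0,\lambda}}_2$ by $\norm{\theta^\ast}_2 + \norm{\theta^{(T)}-\theta^\ast}_2 + \norm{\theta^{(T+1)}_{0,\lambda}-\theta^{(T)}}_2$. The first term is at most $\sqrt{n}\log\kappa$ by boundedness of the scores. The second is controlled by \eqref{eq:G}, and the third by Lemma~\ref{lemma:auxil_reg_bound_2}. Altogether this is $\lesssim \sqrt{n}$. Since $\lambda_0 = 1/n$, we arrive at
\begin{align*}
\norm{\theta^{(T+1)} - \theta^{(T+1)}_{0,\lambda}}_2 \lesssim \frac{1}{T+1}\cdot\frac{1}{\sqrt{n}\, np}.
\end{align*}

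It remains to check that $\frac{1}{\sqrt n\, np} \lesssim \sqrt{\frac{\log n}{npL}}$. This is equivalent to $L \lesssim n^2 p \log n$. The main obstacle is exactly here: the term $\lambda_0\norm{\theta^\ast}_2 \asymp n^{-1/2}$ may not be small relative to $\sqrt{\log n/(npL)}$ when $L$ is very large. To handle this, I would follow the paper's freedom in choosing $\lambda$, as in \cite{chen2019bspectral}. Concretely, I would use $\lambda_0 \asymp \frac{1}{n}\sqrt{\frac{np}{L}}$, or equivalently exploit that $\lambda$ is only an analytical device; the notation $\lambda_T = \frac{T+1}{nL}$ in \S\ref{sec:Numerical_Results} already points this way.

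Otherwise, the $L \le d_0 n^{d_1}$ assumption together with the stated sampling conditions must be invoked to absorb the term. Under either choice, the earlier tertiary bounds \eqref{eq:A1}--\eqref{eq:D1} continue to hold, since they only required $\lambda_0$ to be negligible. The constant $D_{12}$ is then chosen to absorb the implied constants. A secondary point to verify is the $\ell_\infty$ localization of $\theta^{(T+1)}$ itself. This is needed before the Hessian lower bound can be applied along the whole segment, and it is supplied by Step~I.
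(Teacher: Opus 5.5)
Your argument is the same as the paper's. The paper compares function values, $\mathcal{L}^{(\le T)}(\theta^{(T+1)}_{0,\lambda})\ge\mathcal{L}^{(\le T)}(\theta^{(T+1)})$, and uses a second-order Taylor expansion with curvature from Lemma~\ref{lemma:smallesteigen}, where you use the integrated-Hessian identity. Both routes reach $\norm{\theta^{(T+1)}-\theta^{(T+1)}_{0,\lambda}}_2\lesssim\lambda_0\norm{\theta^{(T+1)}_{0,\lambda}}_2/((T+1)np)$, and both bound $\norm{\theta^{(T+1)}_{0,\lambda}}_2$ by the same triangle inequality through $\theta^{(T)}$ and $\theta^\ast$.

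The obstruction you flag is real, and the paper does not resolve it. Its final line asserts $\frac{\sqrt{n}\log\kappa}{c_{13}(T+1)n^{2}p}\le D_{12}\frac{1}{T+1}\sqrt{\frac{\log n}{npL}}$. This is equivalent to $L\lesssim n^{2}p\log n$, which follows from none of the hypotheses.

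It is also not slack in the upper bound. Your identity gives $\norm{\theta^{(T+1)}-\theta^{(T+1)}_{0,\lambda}}_2\ge\lambda_0\norm{\theta^{(T+1)}_{0,\lambda}}_2/\norm{\boldsymbol{A}}_2\gtrsim\norm{\theta^{(T+1)}_{0,\lambda}}_2/((T+1)n^{2}p)$ by Lemma~\ref{lemma:largesteigen}. Under the lemma's hypotheses, \eqref{eq:G} and Lemma~\ref{lemma:auxil_reg_bound} give $\norm{\theta^{(T+1)}_{0,\lambda}-\theta^\ast}_2=o(\sqrt{n})$, so $\norm{\theta^{(T+1)}_{0,\lambda}}_2$ is typically $\asymp\sqrt{n}$. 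Hence, with the paper's $\lambda_0=1/n$, the claimed rate has the wrong order once $L\gg n^{2}p\log n$. Neither your proposal nor the paper proves the lemma as stated; it needs either an extra upper bound on $L$ or a different proxy.

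Of your two remedies, the second fails. The assumption $L\le d_0 n^{d_1}$ permits any polynomial $L$, so it cannot absorb $\lambda_0\norm{\theta^\ast}_2$.

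The first remedy does fix the bias. Any $\lambda_0\lesssim\sqrt{p\log n/L}$, such as $1/(nL)$ or your $\frac{1}{n}\sqrt{np/L}$, makes this term $\mathcal{O}\big(\frac{1}{T+1}\sqrt{\frac{\log n}{npL}}\big)$. It also removes the weaker hidden restrictions of the form $L\lesssim n^{3}p$ (up to logs) that the $\lambda_0\theta^{(T)}$ terms create in the proofs of \eqref{eq:A1}, \eqref{eq:B1} and Lemma~\ref{lemma:auxil_reg_bound_2}.

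It is not free, however, contrary to your remark that only negligibility of $\lambda_0$ matters. Lemma~\ref{lemma:auxil_reg_bound} passes from the gradient-descent iterate to the minimizer using only $\lambda_0$-strong convexity, so it needs $\upsilon^\ast\asymp\frac{(T+1)np}{\lambda_0}\log n$ steps. The union bound over those steps is what produces $T\lesssim n^{3}/(p\log n)$, and it must be redone with per-step failure probability $n^{-q}$, where $q$ depends on $d_1$. That is where the polynomial bound on $L$ is genuinely used, as in \cite{chen2019bspectral}, not in absorbing the bias.

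A cleaner repair keeps $\lambda_0=1/n$ and $\upsilon^\ast$ unchanged: centre the ridge at the previous iterate, using $\frac{\lambda_0}{2}\norm{\theta-\theta^{(T)}}_2^2$. Your identity then reads $\boldsymbol{A}(\theta^{(T+1)}_{0,\lambda}-\theta^{(T+1)})=-\lambda_0(\theta^{(T+1)}_{0,\lambda}-\theta^{(T)})$. This gives $\norm{\theta^{(T+1)}-\theta^{(T+1)}_{0,\lambda}}_2\lesssim\frac{1}{(T+1)n^{2}p}\norm{\theta^{(T+1)}_{0,\lambda}-\theta^{(T)}}_2$, which is negligible for every $L$. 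The $\lambda_0\theta^{(T)}$ terms in \eqref{eq:A1}--\eqref{eq:B1} also disappear.
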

\begin{proof}
See \S\ref{appendix:auxil_reg_true_bound_2}.
\end{proof}

Next, We introduce the following useful fact that bounds the difference between two pseudo-inverse matrices.
\begin{lemma} \label{lemma:inverse_diff_L2}
    Let $A, B \in \mathbb{R}^{d \times d}$ be positive definite matrices, then
    \begin{align*}
        \norm{A^{-1} - B^{-1}}_{2} \leq \frac{\norm{A-B}_{2}}{\lambda_{\operatorname{min}}(A)\lambda_{\operatorname{min}}(B)}
    \end{align*}
\end{lemma}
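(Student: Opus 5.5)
The plan is to use the resolvent identity $A^{-1}-B^{-1}=A^{-1}(B-A)B^{-1}$, which holds for any pair of invertible matrices. Since $A$ and $B$ are positive definite, both are invertible, and the identity can be checked by expanding: $A^{-1}(B-A)B^{-1}=A^{-1}BB^{-1}-A^{-1}AB^{-1}=A^{-1}-B^{-1}$.

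Taking spectral norms and applying submultiplicativity gives
\begin{align*}
\norm{A^{-1}-B^{-1}}_{2}\le \norm{A^{-1}}_{2}\,\norm{A-B}_{2}\,\norm{B^{-1}}_{2}.
\end{align*}
For a symmetric positive definite matrix, the spectral norm of the inverse equals the reciprocal of the smallest eigenvalue. Hence $\norm{A^{-1}}_{2}=1/\lambda_{\min}(A)$ and $\norm{B^{-1}}_{2}=1/\lambda_{\min}(B)$, and substituting these yields the claim.

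No step here is a genuine obstacle; the argument is a single identity plus norm submultiplicativity. The one point to state explicitly is that positive definiteness is understood to include symmetry, so that the spectral norm of each inverse is governed by the smallest eigenvalue.

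In the later application the lemma will be used for pseudoinverses of Hessians restricted to $\boldsymbol{1}^{\perp}$, where they are not invertible on all of $\mathbb{R}^{n}$. The same argument will go through there after compressing by the partial isometry $R$ of Lemma~\ref{lemma:L_eigenvalue_min2}: for a Laplacian-type $H$ with $H\boldsymbol{1}=0$, one has $H^{\dagger}=R^{\top}(RHR^{\top})^{-1}R$. In that form, $\lambda_{\min}$ is replaced by $\lambda_{\min,\perp}$.
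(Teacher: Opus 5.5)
Your proof is correct and matches the paper's argument exactly: the identity $A^{-1}-B^{-1}=A^{-1}(B-A)B^{-1}$, submultiplicativity of the spectral norm, and $\|A^{-1}\|_{2}=1/\lambda_{\min}(A)$. Your remark that symmetry must be understood as part of ``positive definite'' is a sensible clarification that the paper leaves implicit, and the same goes for your compression via $R$ for the pseudoinverse version.
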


\begin{proof}
    \begin{align*}
        \norm{A^{-1} - B^{-1}}_{2} = \norm{A^{-1} (B-A) B^{-1}}_{2} \leq \norm{A^{-1}}_{2} \norm{A-B}_{2} \norm{B^{-1}}_{2} = \frac{\norm{A-B}_{2}}{\lambda_{\operatorname{min}}(A)\lambda_{\operatorname{min}}(B)}.
    \end{align*}
\end{proof}

\begin{lemma} \label{lemma:pseudoinverse_diff_L2}
    Let $A,B\in\mathbb{R}^{n\times n}$ be symmetric positive semi-definite matrices with
    $\mathrm{Null}(A)=\mathrm{Null}(B)=\mathrm{span}\{\boldsymbol{1}\}$.
    Let $\Pi := I - \frac{1}{n}\boldsymbol{1}\boldsymbol{1}^{\top}$ be the orthogonal projector onto
    $1^\perp := \{x\in\mathbb{R}^n:\boldsymbol{1}^{\top}x=0\}$, then
    \begin{align*}
        \big \|A^{\dagger}- B^{\dagger}\big\|_{2}
        \leq
        \frac{\big\|\Pi(A-B)\Pi\big\|_{2}}
        {\lambda_{\min,\perp}(A)\lambda_{\min,\perp}(B)}
        \leq
        \frac{\|A-B\|_{2}}
        {\lambda_{\min,\perp}(A) \lambda_{\min,\perp}(B)} .
    \end{align*}
\end{lemma}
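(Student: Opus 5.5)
The plan is to reduce to the restricted invertible case on $\mathbf{1}^{\perp}$ and then reuse the resolvent identity from Lemma~\ref{lemma:inverse_diff_L2}.

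\textbf{Step 1: eigenstructure on $\mathbf{1}^{\perp}$.} Since $A$ and $B$ are symmetric positive semidefinite with null space exactly $\mathrm{span}\{\mathbf{1}\}$, each has range equal to $\mathbf{1}^{\perp}$. Consequently $A = \Pi A \Pi$ and $B = \Pi B \Pi$. Diagonalize $A$ in an orthonormal eigenbasis containing $\mathbf{1}/\sqrt{n}$. This shows that $A^{\dagger}$ acts as the inverse of $A|_{\mathbf{1}^{\perp}}$ on $\mathbf{1}^{\perp}$ and vanishes on $\mathbf{1}$. The same holds for $B$. It follows that
\[
A A^{\dagger} = A^{\dagger} A = \Pi, \qquad B B^{\dagger} = B^{\dagger} B = \Pi, \qquad \|A^{\dagger}\|_2 = 1/\lambda_{\min,\perp}(A), \qquad \|B^{\dagger}\|_2 = 1/\lambda_{\min,\perp}(B).
\]
Here $\lambda_{\min,\perp}$ is the smallest eigenvalue of the restriction. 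It is positive precisely because the null space contains nothing beyond $\mathbf{1}$.

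\textbf{Step 2: pseudoinverse identity.} Write
\[
A^{\dagger}(B-A)B^{\dagger} = A^{\dagger}BB^{\dagger} - A^{\dagger}AB^{\dagger} = A^{\dagger}\Pi - \Pi B^{\dagger} = A^{\dagger} - B^{\dagger}.
\]
The last equality uses $A^{\dagger}\Pi = A^{\dagger}$ and $\Pi B^{\dagger} = B^{\dagger}$, both of which hold because the ranges and row spaces of $A^{\dagger}$ and $B^{\dagger}$ lie in $\mathbf{1}^{\perp}$. Since $A^{\dagger} = A^{\dagger}\Pi$ and $B^{\dagger} = \Pi B^{\dagger}$, we may insert projections to get
\[
A^{\dagger} - B^{\dagger} = A^{\dagger}\,\Pi(B-A)\Pi\,B^{\dagger}.
\]

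\textbf{Step 3: the bounds.} Submultiplicativity together with Step 1 gives the first inequality:
\[
\|A^{\dagger} - B^{\dagger}\|_2 \le \frac{\|\Pi(A-B)\Pi\|_2}{\lambda_{\min,\perp}(A)\,\lambda_{\min,\perp}(B)}.
\]
The second inequality follows from $\|\Pi\|_2 \le 1$.

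The only point needing care is Step 1, namely justifying $AA^{\dagger} = \Pi$. This is the orthogonal projector onto $\mathrm{range}(A)$, and $\mathrm{range}(A) = \mathrm{Null}(A)^{\perp} = \mathbf{1}^{\perp}$ by symmetry. This is exactly where the hypothesis that $A$ and $B$ have the same null space enters. Without it the identity in Step 2 would fail, because $AA^{\dagger}$ and $BB^{\dagger}$ would be different projectors.
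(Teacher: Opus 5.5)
Your proof is correct and is essentially the paper's argument. The paper omits the proof as a direct consequence of the identity $A^{-1}-B^{-1}=A^{-1}(B-A)B^{-1}$ from Lemma~\ref{lemma:inverse_diff_L2}, applied to the restrictions of $A,B$ to $\mathbf{1}^{\perp}$; your identity $A^{\dagger}-B^{\dagger}=A^{\dagger}\Pi(B-A)\Pi B^{\dagger}$, together with $\|A^{\dagger}\|_2=1/\lambda_{\min,\perp}(A)$, is exactly that reduction written out in pseudoinverse form. As a side remark, since $A=\Pi A\Pi$ and $B=\Pi B\Pi$, you have $\Pi(A-B)\Pi=A-B$, so the second inequality is in fact an equality.
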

\begin{proof}
Lemma~\ref{lemma:pseudoinverse_diff_L2} then follows as a direct consequence of Lemma~\ref{lemma:inverse_diff_L2}, and its proof is therefore omitted.
\end{proof}
\subsubsection[Proof of Primary Induction Hypothesis (E) and (F)]{Proof of Primary Induction Hypothesis \eqref{eq:E} and \eqref{eq:F}}
\label{appendix:equation_E}
We can now verify the primary induction hypotheses \eqref{eq:E} and \eqref{eq:F} as immediate consequences of the three lemmas stated above.
\begin{proof}
since
\begin{align*}
    \norm{\theta^{(T+1)}- \theta^{(T)}}_{\infty} 
    & \leq
    \norm{\theta^{(T+1)}_{0, \lambda}-\theta^{(T)} }_{\infty} + \norm{\theta^{(T+1)} -\theta^{(T+1)}_{0, \lambda}}_{2} 
    \\ & \leq 
    2D_{8} \frac{1}{T+1} \sqrt{\frac{\log n}{npL}} +D_{12} \frac{1}{T+1} \sqrt{\frac{\log n}{npL}}
    \\ & \leq
    D_{2} \frac{1}{T+1} \sqrt{\frac{\log n}{npL}}
\end{align*}
and
\begin{align*}
    \norm{\theta^{(T+1)} - \theta^{(T)}}_{2} 
    & \leq
    \norm{\theta^{(T+1)}_{0, \lambda}-\theta^{(T)} }_{2} + \norm{\theta^{(T+1)} -\theta^{(T+1)}_{0, \lambda}}_{2} 
    \\ & \leq 
    D_{11} \frac{1}{T+1} \sqrt{\frac{1}{pL}} + D_{12} \frac{1}{T+1} \sqrt{\frac{\log n}{npL}}
    \\ & \leq
    D_{3} \frac{1}{T+1} \sqrt{\frac{1}{pL}}
\end{align*}
as long as $D_{2} \geq D_{12} + 2D_{8}$ and $D_{3} \geq D_{11} + D_{12}$. Here, we showed \eqref{eq:E} and \eqref{eq:F} for $T+1$.
\end{proof}
\subsubsection[Proof of Primary Induction Hypothesis (G)]{Proof of Primary Induction Hypothesis \eqref{eq:G}}
\label{appendix:equation_G}
To bound the increment $\theta^{(T+1)} - \theta^{(T)}$, we start from its telescopic decomposition and split it into two contributions: a deterministic bias term induced by the perturbation of the pseudo-inverse Hessian, and a centered stochastic fluctuation term that can be represented as a martingale. The bias term is controlled using operator-norm bounds for $\big[\nabla^{2}\mathcal{L}^{(\le t)}(\theta^{(t)})\big]^{\dagger}$ together with the $\ell_{\infty}$-control on $\|\theta^{(t+1)} - \theta^{(t)}\|_{\infty}$, whereas the martingale fluctuation term is handled via Azuma--Hoeffding’s inequality. 
\begin{proof}
Observe that
\begin{align*}
    0
    = 
    \nabla \mathcal{L}^{(\leq t)}(\theta^{(t+1)})
    &=
    \nabla \mathcal{L}^{(\leq t)}(\theta^{(t)}) + \nabla^{2} \mathcal{L}^{(\leq t)}(\theta^{(t)}) \left( \theta^{(t+1)}-\theta^{(t)} \right) + \mathbf{U}^{(t)} \left( \theta^{(t+1)}-\theta^{(t)} \right),
    \end{align*}
where 
\begin{align*}
    \norm{\mathbf{U}^{(t)}}_{2} 
    & =
    \sup_{\norm{\boldsymbol{v}_{1}}_{2} = \norm{\boldsymbol{v}_{2}}_{2} = 1, \boldsymbol{v}_{1},\boldsymbol{v}_{2} \neq 0} \boldsymbol{v}_{1}^{\top} \mathbf{U}^{(t)} \boldsymbol{v}_{2} 
    \\ &=
    \sup_{\norm{\boldsymbol{v}_{1}}_{2} = \norm{\boldsymbol{v}_{2}}_{2} = 1, \boldsymbol{v}_{1},\boldsymbol{v}_{2} \neq 0} \left| \frac{1}{2} \sum_{k=0}^{t} \nabla^{3}\ell_{k}(\theta^{(k)}(\iota)) \left( \boldsymbol{v}_{1}, \boldsymbol{v}_{2}, \theta^{(t+1)}-\theta^{(t)} \right) \right|
    \\ & =
    \sup_{\norm{\boldsymbol{v}_{1}}_{2} = \norm{\boldsymbol{v}_{2}}_{2} = 1, \boldsymbol{v}_{1},\boldsymbol{v}_{2} \neq 0} \left| \frac{1}{2} \sum_{k=0}^{t}\sum_{(i,j) \in \mathcal{E}_{k}}  \sigma^{\prime\prime}(\theta^{(k)}_{i}(\iota)-\theta^{(k)}_{j}(\iota))(\boldsymbol{e}_{i} - \boldsymbol{e}_{j})^{\otimes 3} \left( \boldsymbol{v}_{1}, \boldsymbol{v}_{2}, \theta^{(t+1)}-\theta^{(t)} \right) \right|
    \\ & \lesssim
    \left| \sum_{k=0}^{t}\sum_{(i,j) \in \mathcal{E}_{k}} (\boldsymbol{v}_{1,i}-\boldsymbol{v}_{1,j})(\boldsymbol{v}_{2,i}-\boldsymbol{v}_{2,j})(\theta_{i}^{(t+1)}-\theta_{j}^{(t+1)}) \right|
    \\ & \lesssim 
    \sum_{k=0}^{t} \norm{\boldsymbol{v}_{1}}_{\boldsymbol{L}_{\mathcal{G}}^{(k)}} \norm{\boldsymbol{v}_{2}}_{\boldsymbol{L}_{\mathcal{G}}^{(k)}} \cdot \max_{i,j \in [n]} \left| \theta^{(t+1)}_{i}-\theta^{(t)}_{j} \right|
    \\ & \leq
    \sum_{k=0}^{t} \lambda_{\operatorname{max}} \left( \boldsymbol{L}_{\mathcal{G}}^{(k)} \right) \cdot \norm{\theta^{(t+1)}-\theta^{(t)}}_{\infty}
    \\ & \lesssim
    (t+1)np \norm{\theta^{(t+1)}-\theta^{(t)}}_{\infty}.
\end{align*}

Recall that
\begin{align*}
    \norm{\theta^{(T+1)}-\theta^{\ast}}_{2} 
    &=
    \norm{\sum_{t=0}^{T} \theta^{(t+1)}-\theta^{(t)}}_{2}
    \\ & =
    \norm{\sum_{t=0}^{T} \Big\lbrack \nabla^{2} \mathcal{L}^{(\le t)}(\theta^{(t)}) + \mathbf{U}^{(t)} \Big\rbrack^{\dagger} \nabla \ell_{t}(\theta^{(t)})}_{2}
    \\ & =
    \norm{\sum_{t=0}^{T} \left\{ \Big\lbrack \nabla^{2} \mathcal{L}^{(\le t)}(\theta^{(t)}) + \mathbf{U}^{(t)}\Big\rbrack^{\dagger} - \Big\lbrack \nabla^{2} \mathcal{L}^{(\le t)}(\theta^{(t)})\Big\rbrack^{\dagger} \right\} \nabla \ell_{t}(\theta^{(t)})}_{2}
    \\ & \quad + 
    \norm{\sum_{t=0}^{T} \Big\lbrack \nabla^{2} \mathcal{L}^{(\le t)}(\theta^{(t)}) \Big\rbrack^{\dagger} \nabla \ell_{t}(\theta^{(t)})}_{2}.
\end{align*}
For the first term, by Lemma \ref{lemma:smallesteigen} and Lemma \ref{lemma:pseudoinverse_diff_L2}, we derive
\begin{align*}
    \norm{\Big\lbrack \nabla^{2} \mathcal{L}^{(\le t)}(\theta^{(t)}) + \mathbf{U}^{(t)}\Big\rbrack^{\dagger} - \Big\lbrack \nabla^{2} \mathcal{L}^{(\le t)}(\theta^{(t)})\Big\rbrack^{\dagger}}_{2} 
    &=
    \norm{\Big\lbrack \nabla^{2} \mathcal{L}^{(\le t)}(\theta^{(t)}(\iota))\Big\rbrack^{\dagger} - \Big\lbrack \nabla^{2} \mathcal{L}^{(\le t)}(\theta^{(t)})\Big\rbrack^{\dagger}}_{2} 
    \\ & \leq \frac{\norm{\mathbf{U}^{(t)}}_{2}}{\lambda_{\min, \perp} \left( \nabla^{2} \mathcal{L}^{(\le t)}(\theta^{(t)}(\iota)) \right) \lambda_{\min, \perp} \left( \nabla^{2} \mathcal{L}^{(\le t)}(\theta^{(t)}) \right)}
    \\ & \leq
    \frac{(t+1)np \norm{\theta^{(t+1)}-\theta^{(t)}}_{\infty}}{c_{14}(t+1)^{2}n^{2}p^{2}}
    \\ & =
    \frac{\norm{\theta^{(t+1)}-\theta^{(t)}}_{\infty}}{c_{14}(t+1)np}
\end{align*}
for some constant $c_{14} > 0$.

Combining the above bounds with Lemma \ref{lemma:gradient_T_H} and \eqref{eq:E}, there exists a constant $c_{15} > 0$ such that 
\begin{align*}
    & \norm{\sum_{t=0}^{T} \left\{ \Big\lbrack \nabla^{2} \mathcal{L}^{(\le t)}(\theta^{(t)}) + \mathbf{U}^{(t)}\Big\rbrack^{\dagger} - \Big\lbrack \nabla^{2} \mathcal{L}^{(\le t)}(\theta^{(t)})\Big\rbrack^{\dagger} \right\} \nabla \ell_{t}(\theta^{(t)})}_{2}
    \\ & \leq  
    \sum_{t=0}^{T} \norm{\Big\lbrack \nabla^{2} \mathcal{L}^{(\le t)}(\theta^{(t)}) + \mathbf{U}^{(t)}\Big\rbrack^{\dagger} - \Big\lbrack \nabla^{2} \mathcal{L}^{(\le t)}(\theta^{(t)})\Big\rbrack^{\dagger}}_{2} \norm{\nabla \ell_{t}(\theta^{(t)})}_{2}
    \\ & \leq 
    \sum_{t=0}^{T} \frac{\norm{\theta^{(t+1)}-\theta^{(t)}}_{\infty}}{c_{14}(t+1)np} \norm{\nabla \ell_{t}(\theta^{(t)})}_{2}
    \\ & \leq 
    c_{15} \sum_{t=0}^{T} \frac{1}{(t+1)^{2}} \frac{1}{np} \sqrt{\frac{\log n}{npL}} \sqrt{\frac{n^{2}p}{L}}
    \\ & \leq 
    c_{15} \sqrt{\frac{1}{pL}} \sqrt{\frac{\log n}{npL}}
\end{align*}
due to $\sum_{t=0}^{T}\frac{1}{(t+1)^{2}} \leq \frac{\pi^{2}}{6}$.

For the second term, define the $\sigma$-field
 \begin{align*}
    \mathcal{H}_{s}:=\sigma \left\langle \{G^{(t)}\}_{t=0}^{s},\{\theta^{(t)}\}_{t=0}^{s} \right\rangle, \quad \forall s \in [T].
\end{align*}
For $v \in \mathbb{R}^{n}$ with $\norm{v}_{2} \leq 1$, let
\begin{align*}
    S_{v} 
    & := 
    v^{\top} S_{T} 
    = 
    v^{\top} \sum_{t=0}^{T} \Big\lbrack \nabla^{2} \mathcal{L}^{(\le t)}(\theta^{(t)})\Big\rbrack^{\dagger} \nabla \ell_{t}(\theta^{(t)})
    \\ & =
    \sum_{t=0}^{T} \sum_{1\le j<i\le n} v^{\top}\Big\lbrack \nabla^{2} \mathcal{L}^{(\le t)}(\theta^{(t)})\Big\rbrack^{\dagger}(\boldsymbol{e}_i - \boldsymbol{e}_j) \cdot G^{(t)}_{i,j}
    \left\{\frac{e^{\theta^{(t)}_{i}}}{e^{\theta^{(t)}_{i}}+e^{\theta^{(t)}_{j}}} - y^{(t)}_{j,i}\right\},
    \quad y^{(t)}_{j,i}:=\frac{1}{L}\sum_{l=1}^{L}y^{(t,l)}_{j,i}.
\end{align*}
Then $\left\lbrace \left( v^{\top}S_{s}, \mathcal{H}_{s} \right): s \in [T] \right\rbrace$ is a martingale such that $v^{\top}S_{s}$ is bounded for all $s \in [T]$. Azuma-Hoeffding's inequality therefore yields, for any $\epsilon > 0$,
\begin{align*}
    \mathbb{P}\big(|S_v|\ge \epsilon \big)
    \ \le\ 2 \exp\!\left(-\frac{L\,\epsilon^2}{2 \sum_{t=0}^{T} \sum_{i>j} G^{(t)}_{i,j}\left( v^{\top}\Big\lbrack \nabla^{2} \mathcal{L}^{(\le t)}(\theta^{(t)})\Big\rbrack^{\dagger} (\boldsymbol{e}_i - \boldsymbol{e}_j) \right)^2}\right).
\end{align*}
With Lemma \ref{lemma:largesteigen} and Lemma \ref{lemma:smallesteigen}, the quadratic form in the denominator can be written as
\begin{align*}
    & \sum_{t=0}^{T}\sum_{i>j} G^{(t)}_{i,j} \left( v^{\top}\Big\lbrack \nabla^{2} \mathcal{L}^{(\le t)}(\theta^{(t)})\Big\rbrack^{\dagger} (\boldsymbol{e}_i - \boldsymbol{e}_j) \right)^2
    \\ & =
    v^{\top} \Big\lbrack \nabla^{2} \mathcal{L}^{(\le t)}(\theta^{(t)})\Big\rbrack^{\dagger} \sum_{t=0}^{T} \sum_{i>j}  G^{(t)}_{i,j} (\boldsymbol{e}_i - \boldsymbol{e}_j) (\boldsymbol{e}_i - \boldsymbol{e}_j)^{\top} \Big\lbrack \nabla^{2} \mathcal{L}^{(\le t)}(\theta^{(t)})\Big\rbrack^{\dagger} v
    \\ & =
    \sum_{t=0}^{T} v^{\top} \Big\lbrack \nabla^{2} \mathcal{L}^{(\le t)}(\theta^{(t)})\Big\rbrack^{\dagger} \boldsymbol{L}_{\mathcal{G}}^{(t)} \Big\lbrack \nabla^{2} \mathcal{L}^{(\le t)}(\theta^{(t)})\Big\rbrack^{\dagger} v 
    \\ & \leq
    \sum_{t=0}^{T} \lambda_{\text{max}} \left( \boldsymbol{L}_{\mathcal{G}}^{(t)} \right) \norm{ \Big\lbrack \nabla^{2} \mathcal{L}^{(\le t)}(\theta^{(t)})\Big\rbrack^{\dagger} }_{2}^{2} \norm{v}_{2}^{2}
    \\ & \lesssim
    \sum_{t=0}^{T} \frac{np}{(t+1)^{2}n^{2}p^{2}} 
    \\ & \lesssim
    \frac{1}{np}.
\end{align*}
Let $\mathcal{U}:=\{u\in\mathbb R^n:\sum_{i=1}^{n}u_i^2\le 1\}$ and fix a symmetric $1/2$-net $\mathcal{V}\subset\mathcal{U}$ such that $\log|\mathcal{V}|\le C' n$. As usual,
\begin{align*}
    \norm{ \sum_{t=0}^{T} \Big\lbrack \nabla^{2} \mathcal{L}^{(\le t)}(\theta^{(t)})\Big\rbrack^{\dagger} \nabla \ell_{t}(\theta^{(t)}) }_{2} = \sup_{\norm{v}_{2} \leq 1} \left| S_{v} \right| \leq 2 \max_{v \in \mathcal{V}} \left| S_{v} \right|.
\end{align*}
Apply the union bound, for any $a > 0$ and any $\epsilon > 0$, we have
\begin{align*}
    \mathbb{P} \left(\max_{v\in\mathcal{V}}|S_v|\ge \epsilon \right)
    & \leq 
    2 |\mathcal{V}|\exp\!\left(-\frac{L\,\epsilon^2}{2 \max_{v\in\mathcal{V}} \sum_{t=0}^{T} \sum_{i>j} G^{(t)}_{i,j}\left( v^{\top}\Big\lbrack \nabla^{2} \mathcal{L}^{(\le t)}(\theta^{(t)})\Big\rbrack^{\dagger} (\boldsymbol{e}_i - \boldsymbol{e}_j) \right)^2}\right)
    \\ & \leq
    2 |\mathcal{V}|\exp\!\left(-\frac{L\,\epsilon^2}{2 \max_{\norm{v}_{2} \leq 1} \sum_{t=0}^{T} \sum_{i>j} G^{(t)}_{i,j}\left( v^{\top}\Big\lbrack \nabla^{2} \mathcal{L}^{(\le t)}(\theta^{(t)})\Big\rbrack^{\dagger} (\boldsymbol{e}_i - \boldsymbol{e}_j) \right)^2}\right).
\end{align*}
Choose 
\begin{align*}
    \epsilon = c_{16} \sqrt{\frac{\left( \log |\mathcal{V}| + 11\log n \right)\sum_{t=0}^{T} \lambda_{\text{max}} \left( \boldsymbol{L}_{\mathcal{G}}^{(t)} \right) \norm{ \Big\lbrack \nabla^{2} \mathcal{L}^{(\le t)}(\theta^{(t)})\Big\rbrack^{\dagger} }_{2}^{2}}{L}}
\end{align*}
with a sufficiently large absolute constant $c_{16}>0$, which indicates that
\begin{align*}
    \norm{ \sum_{t=0}^{T} \Big\lbrack \nabla^{2} \mathcal{L}^{(\le t)}(\theta^{(t)})\Big\rbrack^{\dagger} \nabla \ell_{t}(\theta^{(t)}) }_{2} \leq c_{17} \sqrt{\frac{1}{pL}}
\end{align*}
with probability at least $1 - \mathcal{O}(n^{-10})$ and a constant $c_{17} > 0$.
Putting the bounds of above two terms together,
\begin{align*}
    \norm{\theta^{(T+1)}-\theta^{\ast}}_{2}  \leq c_{17} \sqrt{\frac{1}{pL}} + c_{15} \sqrt{\frac{1}{pL}} \sqrt{\frac{\log n}{npL}} \leq D_{4} \sqrt{\frac{1}{pL}},
\end{align*}
where $D_{4} \gg c_{15} + c_{17}$ is sufficiently large and $npL \gtrsim \log n$.

Therefore, \eqref{eq:G} holds for $T + 1$.
\end{proof}
\subsection{Proof Outline of Lemma~\ref{lemma:equation_H} in Step~III}
\label{appendix:equation_H}
In this Appendix, we outline the proof of Lemma~\ref{lemma:equation_H} by combining the arguments developed in the previous three steps. 
\subsubsection{Useful Auxiliary Facts}
We begin by collecting several building blocks and useful auxiliary facts.
\begin{lemma} \label{lemma:hessian_dif_L2}
    The following event holds
    \begin{align*}
        \mathcal{A}_{4}^{(T)}
        := \left\{ \norm{\nabla^{2} \mathcal{L}^{(\leq T)}(\theta^{\prime})-\nabla^{2} \mathcal{L}^{(\leq T)}(\theta^{\prime\prime})}_{2} \lesssim (T+1)np \norm{\theta^{\prime}-\theta^{\prime\prime}}_{\infty}
         \right\}
    \end{align*}
    with probability at least $1-\mathcal{O}(n^{-10})$.
\end{lemma}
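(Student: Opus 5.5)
The bound is in fact deterministic on the degree event, so the plan is to show that it holds on $\bigcap_{t=0}^{T}\mathcal{A}_{1}^{(t)}$. Write
\begin{align*}
    \nabla^{2} \mathcal{L}^{(\leq T)}(\theta^{\prime})-\nabla^{2} \mathcal{L}^{(\leq T)}(\theta^{\prime\prime})
    =
    \sum_{t=0}^{T} \sum_{(i,j)\in\mathcal{E}^{(t)},\, i>j} \bigl[\sigma^{\prime}(\theta^{\prime}_i-\theta^{\prime}_j)-\sigma^{\prime}(\theta^{\prime\prime}_i-\theta^{\prime\prime}_j)\bigr](\boldsymbol{e}_i-\boldsymbol{e}_j)(\boldsymbol{e}_i-\boldsymbol{e}_j)^{\top}.
\end{align*}
Here we use that $\frac{e^{\theta_i}e^{\theta_j}}{(e^{\theta_i}+e^{\theta_j})^2}=\sigma^{\prime}(\theta_i-\theta_j)$.

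The key observation is that $\sigma^{\prime\prime}$ is bounded; indeed $|\sigma^{\prime\prime}|\le 1/(6\sqrt3)<1/4$. By the mean value theorem, each scalar coefficient $w_{ij}^{(t)}$ therefore satisfies
\begin{align*}
    |w_{ij}^{(t)}|
    \le \tfrac14\bigl|(\theta^{\prime}_i-\theta^{\prime\prime}_i)-(\theta^{\prime}_j-\theta^{\prime\prime}_j)\bigr|
    \le \tfrac12\norm{\theta^{\prime}-\theta^{\prime\prime}}_{\infty}.
\end{align*}

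Next, for any unit vector $v$, bound the quadratic form as
\begin{align*}
    |v^{\top}(\cdot)v| \le \sum_t\sum_{(i,j)} |w_{ij}^{(t)}|(v_i-v_j)^2 \le \tfrac12\norm{\theta^{\prime}-\theta^{\prime\prime}}_{\infty}\sum_{t=0}^{T} v^{\top}\boldsymbol{L}_{\mathcal{G}}^{(t)}v.
\end{align*}
Since the difference matrix is symmetric, its spectral norm is $\sup_{\norm{v}_2=1}|v^{\top}(\cdot)v|$. It is therefore at most $\tfrac12\norm{\theta^{\prime}-\theta^{\prime\prime}}_{\infty}\sum_{t}\lambda_{\max}(\boldsymbol{L}_{\mathcal{G}}^{(t)})$. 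This is the same device used for $\mathbf{U}^{(t)}$ in the proof of \eqref{eq:G}.

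Finally, invoke Lemma~\ref{lemma:L_eigenvalue_min2}, or equivalently $\lambda_{\max}(\boldsymbol{L}_{\mathcal{G}}^{(t)})\le 2d^{(t)}_{\max}\le 3np$ on $\mathcal{A}_{1}^{(t)}$ from Lemma~\ref{lemma:degree_concentration}. This gives the bound $\lesssim (T+1)np\norm{\theta^{\prime}-\theta^{\prime\prime}}_{\infty}$ simultaneously for all $\theta^{\prime},\theta^{\prime\prime}$.

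The only delicate point is the probability accounting. A union bound over $t\le T$ gives failure probability $\mathcal{O}((T+1)n^{-10})$. Under the standing restriction $T\lesssim n^{3}/(p\log n)$ this is not literally $\mathcal{O}(n^{-10})$. I would handle this in one of two ways. One option is to strengthen the per-round degree/Laplacian event to probability $1-\mathcal{O}(n^{-q})$ for large $q$, as the paper remarks is possible with only constant changes. The other is to note that the rounds share one cumulative graph sequence already controlled in the main induction, so the event is contained in the global high-probability event used there. No stochastic argument beyond this is needed, since uniformity in $(\theta^{\prime},\theta^{\prime\prime})$ comes free from the deterministic Lipschitz bound.
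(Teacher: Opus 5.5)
Your proof is correct and follows the same route as the paper. Both write the Hessian difference as a Laplacian-type sum with coefficients $\sigma'(\theta'_i-\theta'_j)-\sigma'(\theta''_i-\theta''_j)$, bound these coefficients using the Lipschitz continuity of $\sigma'$, and conclude from $\lambda_{\max}(\boldsymbol{L}_{\mathcal{G}}^{(t)})\lesssim np$ on the degree events. Your version adds two clarifications the paper skips: an explicit quadratic-form argument, using that each rank-one term is PSD, for the step the paper compresses into a single $\lesssim$; and the observation that the $(T+1)$-fold union bound means the stated $1-\mathcal{O}(n^{-10})$ actually requires strengthening the per-round degree bound, which is harmless since it only costs a larger constant $c_0$.
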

\begin{proof}
See \S\ref{appendix:hessian_dif_L2}.
\end{proof}

\begin{lemma} \label{lemma:diff_qua_T+1_true}
    With probability exceeding $1 - \mathcal{O}(n^{-10})$, one has
    \begin{align*}
        \norm{\sum_{t=0}^{T}\overline{\theta}^{(t+1)} - \theta^{(t)} }_{\infty} \leq D_{13} \sqrt{\frac{\log n}{npL}}.
    \end{align*}
\end{lemma}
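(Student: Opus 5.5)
The plan is to use the explicit representation from \S\ref{subsec:asymptotic}:
\begin{align*}
    \sum_{t=0}^{T}\left(\overline{\theta}^{(t+1)} - \theta^{(t)}\right) = -\sum_{t=0}^{T} \left[ \nabla^{2}\mathcal{L}^{(\leq t)}(\theta^{(t)}) \right]^{\dagger} \nabla \ell_{t}(\theta^{(t)}).
\end{align*}
We control each coordinate $m\in[n]$ separately by a martingale concentration bound and then take a union bound over $m$.

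First we set up the martingale. Take the filtration $\mathcal{H}_{s}$ generated by $\{G^{(t)}\}_{t\le s+1}$ and $\{Y^{(t)}\}_{t\le s}$. Conditionally on $\mathcal{H}_{t-1}$ and $G^{(t)}$, the matrix $\nabla^{2}\mathcal{L}^{(\leq t)}(\theta^{(t)})$ is measurable: it depends only on the graphs and on $\theta^{(t)}$, not on the outcomes $Y^{(t)}$. Moreover, $\nabla \ell_{t}(\theta^{(t)}) = \sum_{(i,j)\in\mathcal{E}^{(t)}} \{\sigma(\theta^{(t)}_i-\theta^{(t)}_j) - y^{(t)}_{j,i}\}(\boldsymbol{e}_i-\boldsymbol{e}_j)$ has conditionally independent, mean-zero summands. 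Each summand is an average of $L$ Bernoulli variables and so is $1/(4L)$-sub-Gaussian. Hence for $v = \boldsymbol{e}_m$, the partial sums $S^{(m)}_{s} := \sum_{t\le s}\boldsymbol{e}_m^{\top}[\nabla^{2}\mathcal{L}^{(\leq t)}(\theta^{(t)})]^{\dagger}\nabla\ell_t(\theta^{(t)})$ form a martingale with bounded increments, exactly as in the proof of \eqref{eq:G}.

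Next we apply Azuma--Hoeffding (or Freedman). Writing $H_t := \nabla^{2}\mathcal{L}^{(\leq t)}(\theta^{(t)})$, the conditional variance proxy is
\begin{align*}
    \frac{1}{L}\sum_{t=0}^{T} \boldsymbol{e}_m^{\top} H_t^{\dagger} \boldsymbol{L}_{\mathcal{G}}^{(t)} H_t^{\dagger} \boldsymbol{e}_m \le \frac{1}{L}\sum_{t=0}^{T} \lambda_{\max}(\boldsymbol{L}_{\mathcal{G}}^{(t)})\, \|H_t^{\dagger}\|_2^{2} \lesssim \frac{1}{L}\sum_{t=0}^{T}\frac{np}{(t+1)^2 n^2p^2} \lesssim \frac{1}{npL}.
\end{align*}
This uses Lemma~\ref{lemma:L_eigenvalue_min2} for $\lambda_{\max}(\boldsymbol{L}_{\mathcal{G}}^{(t)})\le \tfrac32 np$. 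It also uses Lemma~\ref{lemma:smallesteigen} with $C=5$, which is valid because \eqref{eq:H} and Step~I give $\|\theta^{(t)}-\theta^{\ast}\|_\infty\le 5$, and yields $\lambda_{\min,\perp}(H_t)\gtrsim (t+1)np$. The series $\sum_t (t+1)^{-2}\le \pi^2/6$ is what removes any dependence on $T$.

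To turn this almost-sure bound into a valid Azuma argument, we work on the intersection of the high-probability events $\mathcal{A}_1^{(t)}$ and the induction bounds for $t\le T$. We do this through a stopped (truncated) martingale: set the increment to zero once the event fails. Since the events are predictable, the truncated process is still a martingale. Choosing the deviation level $\epsilon = D_{13}\sqrt{\log n/(npL)}$ then gives failure probability $2\exp(-c D_{13}^2 \log n)\le n^{-11}$ per coordinate. A union bound over $m\in[n]$, together with the $\mathcal{O}(Tn^{-10})$ cost of the events, gives the claim.

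The main obstacle is the predictability and measurability bookkeeping. The Hessian $H_t$ uses $\theta^{(t)}$, which depends on $Y^{(t-1)}$, and on $G^{(t)}$. So the filtration must reveal $G^{(t)}$ before $Y^{(t)}$, that is, increments are indexed by outcome rounds. We also need the eigenvalue bound on $H_t$ to hold on a predictable event rather than a posterior one. I would handle this by defining the stopping time $\varsigma := \min\{t: \|\theta^{(t)}-\theta^\ast\|_\infty>5 \text{ or } \mathcal{A}_1^{(t)} \text{ fails}\}$ and working with $S^{(m)}_{s\wedge \varsigma}$. The union bound over $t\lesssim n^3/(p\log n)$ is absorbed by the $n^{-10}$ rates, as the paper notes.
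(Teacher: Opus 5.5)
Your proposal is correct and follows the paper's proof essentially step for step. You fix a coordinate and treat $\boldsymbol{e}_m^{\top}[\nabla^{2}\mathcal{L}^{(\leq t)}(\theta^{(t)})]^{\dagger}\nabla\ell_t(\theta^{(t)})$ as a conditionally sub-Gaussian martingale difference with proxy $\lesssim \lambda_{\max}(\boldsymbol{L}_{\mathcal{G}}^{(t)})\|H_t^{\dagger}\|_2^{2}/L \asymp 1/((t+1)^2npL)$, sum using $\sum_t (t+1)^{-2}<\infty$, apply Azuma--Hoeffding, and take a union bound over $m\in[n]$.

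Your filtration, which reveals $G^{(t)}$ before $Y^{(t)}$, and your predictable truncation are cleaner than the paper's version: the paper conditions on $\mathcal{H}_t$ for the MGF bound and uses the eigenvalue bounds as if they were deterministic. One caveat, which the paper shares, is that the failure probability of your truncation events ($\mathcal{O}(Tn^{-10})$ plus the induction events) is not $\mathcal{O}(n^{-10})$. The stated $1-\mathcal{O}(n^{-10})$ therefore holds only on those events, not unconditionally.
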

\begin{proof}
See \S\ref{appendix:diff_qua_T+1_true}.
\end{proof}

\begin{lemma} \label{lemma:diff_ori_qudra_t+1_l2}
    Suppose the auxiliary induction hypotheses \eqref{eq:A}-\eqref{eq:C} hold for $0,\cdots, \tau^{\ast}$-th step and the primary induction hypotheses \eqref{eq:D}-\eqref{eq:H} hold up to iteration $T$. Then for any $t = 0, \cdots, T$, one has
    \begin{align*}
        \norm{\overline{\theta}^{(t+1)} - \theta^{(t+1)}}_{2} \leq D_{14} \frac{1}{(t+1)^{2}} \sqrt{\frac{\log n}{n}} \frac{1}{pL}
    \end{align*}
    with probability exceeding $1 - \mathcal{O}(n^{-10})$.
\end{lemma}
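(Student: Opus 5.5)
Throughout, write $H_t := \nabla^{2}\mathcal{L}^{(\le t)}(\theta^{(t)})$ and $\Delta_t := \theta^{(t+1)}-\theta^{(t)}$. The plan is a second-order Taylor comparison between the exact minimizer $\theta^{(t+1)}$ and its quadratic surrogate $\overline{\theta}^{(t+1)}$; both are centered at $\theta^{(t)}$.

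\textbf{Step 1 (exact identity).} As in the proof of \eqref{eq:G}, the first-order condition together with $\nabla\mathcal{L}^{(\le t)}(\theta^{(t)})=\nabla\ell_t(\theta^{(t)})$ gives
\begin{align*}
0=\nabla\ell_t(\theta^{(t)}) + H_t\,\Delta_t + \mathbf{U}^{(t)}\Delta_t .
\end{align*}
The surrogate satisfies $\overline{\theta}^{(t+1)}-\theta^{(t)} = -H_t^{\dagger}\nabla\ell_t(\theta^{(t)})$. Both $\Delta_t$ and the surrogate increment lie in $\mathbf{1}^{\perp}$, and $H_t^{\dagger}H_t=\mathcal{P}$ on this subspace. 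Subtracting the two relations therefore gives the exact identity
\begin{align*}
\theta^{(t+1)}-\overline{\theta}^{(t+1)} = -H_t^{\dagger}\,\mathbf{U}^{(t)}\,\Delta_t .
\end{align*}

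\textbf{Step 2 (bounding the pieces).} The claim then reduces to
\begin{align*}
\|\theta^{(t+1)}-\overline{\theta}^{(t+1)}\|_2 \le \|H_t^{\dagger}\|_2\,\|\mathbf{U}^{(t)}\Delta_t\|_2 .
\end{align*}
By Step~I (Lemma~\ref{lemma:equation_D_plus}) and \eqref{eq:H}, we have $\|\theta^{(t)}-\theta^{\ast}\|_\infty\le 5$. Lemma~\ref{lemma:smallesteigen} then gives $\|H_t^{\dagger}\|_2\lesssim 1/((t+1)np)$.

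For the remainder, I will not use the crude operator bound $\|\mathbf{U}^{(t)}\|_2\lesssim (t+1)np\|\Delta_t\|_\infty$. Instead I will exploit the structure $\mathbf{U}^{(t)}\Delta_t=\sum_{k\le t}\sum_{(i,j)\in\mathcal{E}_k} r_{ij}\,(\Delta_{t,i}-\Delta_{t,j})^{2}(\mathbf{e}_i-\mathbf{e}_j)$, where $|r_{ij}|\lesssim 1$ because $\sigma''$ is bounded. Its $i$-th coordinate is at most
\begin{align*}
\sum_k\sum_{j}G^{(k)}_{ij}(\Delta_{t,i}-\Delta_{t,j})^2 \;\lesssim\; \|\Delta_t\|_\infty\sum_k\sum_jG^{(k)}_{ij}|\Delta_{t,i}-\Delta_{t,j}| .
\end{align*}
Summing squares over $i$ and applying Cauchy--Schwarz with the degree bound $d^{(k)}_{\max}\le \tfrac32 np$ (Lemma~\ref{lemma:degree_concentration}) yields
\begin{align*}
\|\mathbf{U}^{(t)}\Delta_t\|_2 \lesssim (t+1)\,np\,\|\Delta_t\|_\infty\,\|\Delta_t\|_2 .
\end{align*}

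\textbf{Step 3 (combining).} Multiplying the two bounds,
\begin{align*}
\|\theta^{(t+1)}-\overline{\theta}^{(t+1)}\|_2 \lesssim \|\Delta_t\|_\infty\|\Delta_t\|_2 .
\end{align*}
By the hypotheses \eqref{eq:E}--\eqref{eq:F} up to iteration $T$, and Lemma~\ref{lemma:equation_EFG} for $t=T$, this is at most
\begin{align*}
\frac{1}{(t+1)^2}\sqrt{\frac{\log n}{npL}}\sqrt{\frac{1}{pL}}=\frac{1}{(t+1)^2}\sqrt{\frac{\log n}{n}}\frac{1}{pL},
\end{align*}
which is the claimed bound. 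Union bounds over $t\le T$ for the degree and eigenvalue events give the stated probability.

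The main obstacle is Step~2. The naive operator-norm estimate of $\mathbf{U}^{(t)}$ loses a factor of $\sqrt{n}$, and the bound closes only if $\mathbf{U}^{(t)}\Delta_t$ is treated as a quadratic remainder in $\Delta_t$, so that one $\ell_\infty$ factor and one $\ell_2$ factor appear. A further technicality is that the intermediate points $\theta^{(k)}(\iota)$ must stay in the $\ell_\infty$-ball where $\sigma''$ is bounded and Lemma~\ref{lemma:smallesteigen} applies. Step~I supplies exactly this.
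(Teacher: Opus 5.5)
Your proof is correct and is essentially the paper's argument. It uses the same exact identity $\theta^{(t+1)}-\overline{\theta}^{(t+1)}=-H_t^{\dagger}\mathbf{U}^{(t)}\Delta_t$ (the paper's $-H_t^{\dagger}\mathbf{R}^{(t)}$), the bound $\|H_t^{\dagger}\|_2\lesssim 1/((t+1)np)$, the remainder estimate $\|\mathbf{U}^{(t)}\Delta_t\|_2\lesssim (t+1)np\,\|\Delta_t\|_\infty\|\Delta_t\|_2$, and then \eqref{eq:E}--\eqref{eq:F}.

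Your commentary needs one correction: the operator-norm route does not lose a factor of $\sqrt{n}$. The paper bounds $\|\mathbf{R}^{(t)}\|_2$ by the operator norm of $\int_0^1\bigl(\nabla^{2}\mathcal{L}^{(\le t)}(\theta^{(t)}+\delta\Delta_t)-H_t\bigr)\,d\delta$ times $\|\Delta_t\|_2$. By Lemma~\ref{lemma:hessian_dif_L2} that operator norm is $\lesssim (t+1)np\|\Delta_t\|_\infty$, so the paper obtains exactly your product. Your coordinatewise Cauchy--Schwarz computation is therefore a valid re-derivation of the same estimate, not a necessary refinement. Also, since $\sigma''$ is globally bounded, no $\ell_\infty$-ball condition on the intermediate points is needed for it.
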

\begin{proof}
See \S\ref{appendix:diff_ori_qudra_t+1_l2}.
\end{proof}
\subsubsection{Preliminary Results on the Leave-One-Out Quadratic Loss}
We denote by $\Psi = \left\{ \boldsymbol{e}_{i} - \boldsymbol{e}_{j}: i,j \in [n] \right\}$ and $\Psi^{\perp} = \text{span} \{ \boldsymbol{1}_{n} \}$. Intuitively, we can always move any $\theta \in \mathbb{R}^{n}$ along a direction $\bm{v} \in \Psi^{\perp}$ that does not change both $\mathcal{L}^{(\leq t)}(\cdot)$ and $\overline{\mathcal{L}}^{(\leq t)}(\cdot)$ until it satisfies the identifiability constraint $\Theta = \left\{ \theta \in \mathbb{R}^{n}: \boldsymbol{1}_{n}^{\top} \theta = 0 \right\}$. By the definition of $\mathcal{L}^{(\leq t)}(\cdot)$ and $\overline{\mathcal{L}}^{(\leq t)}(\cdot)$ we know that for any $\theta \in \mathbb   {R}^{n}$ and $\bm{v} = c \boldsymbol{1} \in \Psi^{\perp}$ for some $c \in \mathbb{R}$, we have
\begin{align*}
    \mathcal{L}^{(\leq t)}(\theta) = \mathcal{L}^{(\leq t)}(\theta + \bm{v}) \quad \text{and} \quad \overline{\mathcal{L}}^{(\leq t)}(\theta) = \overline{\mathcal{L}}^{(\leq t)}(\theta + \bm{v}). 
\end{align*}

Formally, we have the following lemma.

\begin{lemma} \label{lemma:theta_shift}
    For any $\theta \in \mathbb{R}^{n}$, there exists a $\bm{v} \in \Psi^{\perp}$ such that $\theta + \bm{v} \in \Theta$.
\end{lemma}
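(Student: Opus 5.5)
The plan is to take the explicit shift $\bm{v} := -\frac{1}{n}(\boldsymbol{1}_{n}^{\top}\theta)\,\boldsymbol{1}_{n}$, which is the projection construction $\theta + \bm{v} = \mathcal{P}\theta$, where $\mathcal{P} = \boldsymbol{I}_{n} - \frac{1}{n}\boldsymbol{1}_{n}\boldsymbol{1}_{n}^{\top}$ is the orthogonal projector onto $\Theta$ already introduced in \S\ref{subsec:asymptotic}.

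First, $\bm{v}$ is a scalar multiple $c\,\boldsymbol{1}_{n}$ with $c = -\frac{1}{n}\boldsymbol{1}_{n}^{\top}\theta \in \mathbb{R}$. Hence $\bm{v} \in \operatorname{span}\{\boldsymbol{1}_{n}\} = \Psi^{\perp}$.

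Second, the shifted vector satisfies the constraint:
\begin{align*}
\boldsymbol{1}_{n}^{\top}(\theta + \bm{v}) = \boldsymbol{1}_{n}^{\top}\theta - \frac{1}{n}(\boldsymbol{1}_{n}^{\top}\theta)(\boldsymbol{1}_{n}^{\top}\boldsymbol{1}_{n}) = \boldsymbol{1}_{n}^{\top}\theta - \boldsymbol{1}_{n}^{\top}\theta = 0,
\end{align*}
using $\boldsymbol{1}_{n}^{\top}\boldsymbol{1}_{n} = n$. Thus $\theta + \bm{v} \in \Theta$.

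There is no real obstacle here. The only thing to note for later use is that, as remarked before the lemma, both $\mathcal{L}^{(\leq t)}$ and $\overline{\mathcal{L}}^{(\leq t)}$ depend on $\theta$ only through the differences $\theta_i - \theta_j$. This dependence is immediate for $\mathcal{L}^{(\leq t)}$. For $\overline{\mathcal{L}}^{(\leq t)}$, the linear and quadratic terms involve $\nabla\mathcal{L}^{(\leq t)}(\theta^{(t)})$ and $\nabla^{2}\mathcal{L}^{(\leq t)}(\theta^{(t)})$, and these annihilate $\boldsymbol{1}_{n}$ by the explicit formulas in Appendix~\ref{sec:Preliminaries}. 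Consequently the shift leaves both losses unchanged, so any minimizer over $\mathbb{R}^{n}$ can be projected into $\Theta$ without loss.
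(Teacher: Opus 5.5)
Your proof is correct and is the same as the paper's: both take $\bm{v} = c\,\boldsymbol{1}_{n}$ with $c = -\frac{1}{n}\boldsymbol{1}_{n}^{\top}\theta$ and check $\boldsymbol{1}_{n}^{\top}(\theta+\bm{v}) = 0$ using $\boldsymbol{1}_{n}^{\top}\boldsymbol{1}_{n} = n$. Your closing remark on the shift-invariance of $\mathcal{L}^{(\leq t)}$ and $\overline{\mathcal{L}}^{(\leq t)}$ is accurate, but the lemma itself does not need it.
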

\begin{proof}
See \S\ref{appendix:theta_shift}.
\end{proof}

We adopt the following notations for a given vector $\rm{x} \in \mathbb{R}^{n}$,
\begin{align*}
    \mathcal{L}^{(\leq t)}|_{\rm{x}_{-i}}(x_{i}) = \mathcal{L}^{(\leq t)} (\breve{\theta})|_{\breve{\theta}_{i} = \rm{x}_{i}, \breve{\theta}_{-i} = \rm{x}_{-i}} \quad \text{and} \quad \overline{\mathcal{L}}^{(\leq t)}|_{\rm{x}_{-i}}(x_{i}) = \overline{\mathcal{L}}^{(\leq t)} (\breve{\theta})|_{\breve{\theta}_{i} = \rm{x}_{i}, \breve{\theta}_{-i} = \rm{x}_{-i}} 
\end{align*}
which act as the marginal loss of $\mathcal{L}^{(\leq t)}(\cdot)$ and $\overline{\mathcal{L}}^{(\leq t)}(\cdot)$, respectively, in the $i$-th coordinate given other coordinates fixed. With Lemma \ref{lemma:theta_shift}, we have the following proposition.

\begin{proposition} \label{prop:minimizer_L_i}
    For $i \in [n]$, $\theta_{i}^{(t+1)}$ is the minimizer of the univariate function $\mathcal{L}^{(\leq t)}|_{\theta_{-i}^{(t+1)}}(x_{i})$.
\end{proposition}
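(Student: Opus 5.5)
The plan is to exploit the fact that $\theta^{(t+1)}$ is a global minimizer of $\mathcal{L}^{(\leq t)}$ over $\Theta$ and that $\mathcal{L}^{(\leq t)}$ is invariant under shifts along $\boldsymbol{1}_n$. This upgrades constrained minimality over $\Theta$ to unconstrained minimality over $\mathbb{R}^n$. Unconstrained minimality in turn gives coordinatewise minimality.

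First, I claim that $\theta^{(t+1)}$ minimizes $\mathcal{L}^{(\leq t)}$ over all of $\mathbb{R}^n$. Take any $\theta \in \mathbb{R}^n$. By Lemma~\ref{lemma:theta_shift} there is a $\bm{v} = c\boldsymbol{1}_n \in \Psi^{\perp}$ with $\theta + \bm{v} \in \Theta$. Every summand of $\mathcal{L}^{(\leq t)}$ depends on $\theta$ only through the differences $\theta_i - \theta_j$, so the shift leaves the loss unchanged:
\begin{align*}
    \mathcal{L}^{(\leq t)}(\theta) = \mathcal{L}^{(\leq t)}(\theta + \bm{v}) \geq \mathcal{L}^{(\leq t)}(\theta^{(t+1)}).
\end{align*}
The inequality holds because $\theta^{(t+1)}$ minimizes over $\Theta$, by \eqref{eq:argminLossT}.

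Second, fix $i \in [n]$ and any $x_i \in \mathbb{R}$. Let $\breve{\theta}$ agree with $\theta^{(t+1)}$ off coordinate $i$ and have $i$-th entry $x_i$. Then $\breve\theta$ need not lie in $\Theta$, which is exactly why the first step is needed. Applying the first step to $\breve\theta$ gives
\begin{align*}
    \mathcal{L}^{(\leq t)}|_{\theta^{(t+1)}_{-i}}(x_i) = \mathcal{L}^{(\leq t)}(\breve{\theta}) \geq \mathcal{L}^{(\leq t)}(\theta^{(t+1)}) = \mathcal{L}^{(\leq t)}|_{\theta^{(t+1)}_{-i}}(\theta^{(t+1)}_i).
\end{align*}
This shows that $\theta^{(t+1)}_i$ minimizes the univariate marginal, which is the claim.

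The only real subtlety is the first step: the constrained minimizer must also be an unconstrained one, since coordinatewise perturbations leave $\Theta$. This is handled entirely by the shift invariance and Lemma~\ref{lemma:theta_shift}.

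One might also remark that the marginal is strictly convex in $x_i$ whenever node $i$ has at least one edge in some $\mathcal{G}^{(s)}$, $s \le t$, which holds on $\mathcal{A}_1^{(s)}$. The minimizer is then unique and characterized by $[\nabla\mathcal{L}^{(\leq t)}(\theta^{(t+1)})]_i = 0$, but this is not needed for the statement. Existence of $\theta^{(t+1)}$ is taken as given, as in \eqref{eq:argminLossT}.
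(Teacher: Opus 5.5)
Your proposal is correct and follows essentially the same route as the paper: the paper argues by contradiction, taking a coordinatewise perturbation $\boldsymbol{w}$ of $\theta^{(t+1)}$, shifting it into $\Theta$ via Lemma~\ref{lemma:theta_shift}, and using shift invariance of $\mathcal{L}^{(\leq t)}$ to contradict the minimality of $\theta^{(t+1)}$ over $\Theta$. You make the same argument directly, by first noting that $\theta^{(t+1)}$ is an unconstrained global minimizer, which is only a cosmetic reorganization.
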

\begin{proof}
See \S\ref{appendix:minimizer_L_i}.
\end{proof}

Fix the $n-1$ coordinates by $\theta^{(t+1)}_{-i}$, we define $\overline{\theta}_{i}^{(t+1,\prime)}$ as the minimizer of $\overline{\mathcal{L}}^{(\leq t)}|_{\theta_{-i}^{(t+1)}}(x_{i})$. Plugging $\theta_{-i}^{(t+1)}(z) = \theta^{(t)} + \left( z-\theta^{(t)}_{i} \right) \boldsymbol{e}_{i} + \sum_{j \in [n] \backslash \{i\}} \left(\theta^{(t+1)}_{j} - \theta^{(t)}_{j} \right) \boldsymbol{e}_{j}$ into $\overline{\mathcal{L}}^{(\leq t)}(\cdot)$ yields
\begin{align*}
    \overline{\mathcal{L}}^{(\leq t)}(\theta_{-i}^{(t+1)}(z))
    & =
    \mathcal{L}^{(\leq t)}(\theta^{(t)}) + \left(\theta_{-i}^{(t+1)}(z)-\theta^{(t)}\right)^{\top} \nabla \mathcal{L}^{(\leq t)}(\theta^{(t)}) 
    \\ & \quad + 
    \frac{1}{2} \left(\theta_{-i}^{(t+1)}(z)-\theta^{(t)}\right)^{\top} \nabla^{2} \mathcal{L}^{(\leq t)}(\theta^{(t)})\left(\theta_{-i}^{(t+1)}(z)-\theta^{(t)}\right)
    \\ & =
    D_{13} + \left(z-\theta^{(t)}_{i}\right) \left[ \nabla \mathcal{L}^{(\leq t)}(\theta^{(t)}) \right]_{i} + \frac{1}{2} \left(z-\theta^{(t)}_{i}\right)^{2} \left[\nabla^{2} \mathcal{L}^{(\leq t)}(\theta^{(t)})\right]_{i,i} 
    \\ & \quad + 
    \left(z-\theta^{(t)}_{i}\right) \sum_{j \in [n] \backslash \{i\}}  \left(\theta^{(t+1)}_{j} - \theta^{(t)}_{j}\right) \left[\nabla^{2} \mathcal{L}^{(\leq t)}(\theta^{(t)})\right]_{i,j}
\end{align*}
for some constant $D_{13}$. First-order condition $\frac{\partial}{\partial z}\overline{\mathcal{L}}^{(\leq t)}(\theta_{-i}^{(t+1)}(z)) = 0$ further gives
\begin{align} \label{eq:diff_quadratic_prime_true}
    \overline{\theta}_{i}^{(t+1,\prime)} = \theta_{i}^{(t)} - \frac{\left[ \nabla \mathcal{L}^{(\leq t)}(\theta^{(t)}) \right]_{i} + \sum_{j \in [n] \backslash \{i\}} \left(\theta^{(t+1)}_{j} - \theta^{(t)}_{j}\right) \left[\nabla^{2} \mathcal{L}^{(\leq t)} (\theta^{(t)})\right]_{i,j}  }{\left[\nabla^{2} \mathcal{L}^{(\leq t)}(\theta^{(t)})\right]_{i,i}}.
\end{align}
Similar to Proposition \ref{prop:minimizer_L_i}, we have the following proposition for $\overline{\mathcal{L}}^{(\leq t)}|_{\overline{\theta}_{-i}^{(t+1)}}(x_{i})$. 

\begin{proposition} \label{prop:minimizer_quadratic_L_i}
    For $i \in [n]$, $\overline{\theta}_{i}^{(t+1)}$ is the minimizer of the univariate function $\overline{\mathcal{L}}^{(\leq t)}|_{\overline{\theta}_{-i}^{(t+1)}}$.
\end{proposition}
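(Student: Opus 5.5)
The plan is to argue exactly as for Proposition~\ref{prop:minimizer_L_i}, using the convexity of the quadratic surrogate $\overline{\mathcal{L}}^{(\leq t)}$ together with the shift invariance along $\Psi^{\perp}$.

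First, $\overline{\theta}^{(t+1)}$ is a stationary point of $\overline{\mathcal{L}}^{(\leq t)}$ on all of $\mathbb{R}^n$, and not merely on $\Theta$. By the characterization following \eqref{eq:argminLosst_quadratic}, $\overline{\theta}^{(t+1)}$ solves $\mathcal{P}\nabla \mathcal{L}^{(\leq t)}(\theta^{(t)}) + \mathcal{P}\nabla^{2}\mathcal{L}^{(\leq t)}(\theta^{(t)})(\overline{\theta}^{(t+1)}-\theta^{(t)}) = \bm{0}$. Since both the gradient and the Hessian at $\theta^{(t)}$ lie in the range of $\mathcal{P}$, the projection can be dropped. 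This gives
\begin{align*}
\nabla \overline{\mathcal{L}}^{(\leq t)}(\overline{\theta}^{(t+1)}) = \nabla \mathcal{L}^{(\leq t)}(\theta^{(t)}) + \nabla^{2}\mathcal{L}^{(\leq t)}(\theta^{(t)})\bigl(\overline{\theta}^{(t+1)}-\theta^{(t)}\bigr) = \bm{0}.
\end{align*}
Equivalently, one can use Lemma~\ref{lemma:theta_shift} and the invariance $\overline{\mathcal{L}}^{(\leq t)}(\theta+\bm{v})=\overline{\mathcal{L}}^{(\leq t)}(\theta)$ for $\bm{v}\in\Psi^{\perp}$. Any $\theta\in\mathbb{R}^n$ can be shifted into $\Theta$ without changing the loss, so the minimizer over $\Theta$ is also a global minimizer over $\mathbb{R}^n$. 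Its full gradient therefore vanishes.

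Second, restrict to the $i$-th coordinate. The univariate function
\begin{align*}
z\mapsto \overline{\mathcal{L}}^{(\leq t)}|_{\overline{\theta}^{(t+1)}_{-i}}(z)
\end{align*}
is a quadratic in $z$ with leading coefficient $\tfrac12[\nabla^{2}\mathcal{L}^{(\leq t)}(\theta^{(t)})]_{i,i}$. This diagonal entry equals $\sum_{s\le t}\sum_{j} G^{(s)}_{ij}\,\sigma'(\theta^{(t)}_i-\theta^{(t)}_j)$. It is strictly positive because, on $\mathcal{A}_1^{(t)}$, every degree is at least $np/2>0$, and because $\sigma'>0$ when $\theta^{(t)}$ lies in the bounded region guaranteed by the induction hypothesis \eqref{eq:H}. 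Hence the function is strictly convex. Its derivative at $z=\overline{\theta}^{(t+1)}_i$ equals $[\nabla \overline{\mathcal{L}}^{(\leq t)}(\overline{\theta}^{(t+1)})]_i = 0$ by the first step. A strictly convex univariate function whose derivative vanishes at a point attains its unique minimum there, which gives the claim.

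The only step needing care is the first, namely justifying that the constrained minimizer has zero full gradient and not just zero projected gradient. I would handle it by the shift argument of Lemma~\ref{lemma:theta_shift}, mirroring the proof of Proposition~\ref{prop:minimizer_L_i}. The positivity of the diagonal Hessian entry holds on the high-probability events already in force, so it is routine.
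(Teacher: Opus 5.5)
Your argument is correct, but your main line is more analytic than the paper's. The paper proves the proposition purely by contradiction and shift invariance. If some $z$ made $\overline{\mathcal{L}}^{(\leq t)}|_{\overline{\theta}^{(t+1)}_{-i}}(z)$ strictly smaller than its value at $\overline{\theta}^{(t+1)}_i$, take the vector obtained by replacing the $i$-th coordinate of $\overline{\theta}^{(t+1)}$ with $z$ and shift it back into $\Theta$ by Lemma~\ref{lemma:theta_shift}. That vector would have strictly smaller surrogate loss than $\overline{\theta}^{(t+1)}$, contradicting \eqref{eq:argminLosst_quadratic}.

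This is exactly your ``equivalently'' remark in the first step. Once you know $\overline{\theta}^{(t+1)}$ is a global minimizer of $\overline{\mathcal{L}}^{(\leq t)}$ over $\mathbb{R}^n$, coordinatewise minimality follows at once, so your second step is not needed for the claim as stated.

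The paper's route is deterministic and uses nothing about the surrogate beyond invariance along $\boldsymbol{1}$. Your route goes through the unprojected first-order system and the sign of $[\nabla^{2}\mathcal{L}^{(\leq t)}(\theta^{(t)})]_{i,i}$. It additionally gives uniqueness of the coordinatewise minimizer on the degree event. It also produces directly the closed form \eqref{eq:diff_quadratic_true} that the paper writes down right after the proposition.

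One small correction: strict positivity of $\sigma'$ does not require the induction hypothesis \eqref{eq:H}, since $\sigma'(x)=\sigma(x)(1-\sigma(x))>0$ for every real $x$. Boundedness of $\theta^{(t)}$ only matters for a quantitative lower bound of order $(t+1)np$ on the diagonal entry, which this proposition does not need.
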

\begin{proof}
See \S\ref{appendix:minimizer_quadratic_L_i}.
\end{proof}

Similarly, we obtain
\begin{align} \label{eq:diff_quadratic_true}
    \overline{\theta}_{i}^{(t+1)} = \theta_{i}^{(t)} - \frac{\left[ \nabla \mathcal{L}^{(\leq t)}(\theta^{(t)}) \right]_{i} + \sum_{j \in [n] \backslash \{i\}} \left(\overline{\theta}^{(t+1)}_{j} - \theta^{(t)}_{j}\right) \left[\nabla^{2} \mathcal{L}^{(\leq t)} (\theta^{(t)})\right]_{i,j}  }{\left[\nabla^{2} \mathcal{L}^{(\leq t)}(\theta^{(t)})\right]_{i,i}}.
\end{align}

In order to bound $\left| \theta^{(t+1)}_{i} - \overline{\theta}^{(t+1)}_{i} \right|$, we are about to bound $\left| \overline{\theta}^{(t+1, \prime)}_{i} - \overline{\theta}^{(t+1)}_{i} \right|$ and $\left| \theta^{(t+1)}_{i} - \overline{\theta}^{(t+1, \prime)}_{i} \right|$ separately. However, before providing an upper bound for $\left| \overline{\theta}^{(t+1, \prime)}_{i} - \overline{\theta}^{(t+1)}_{i} \right|$, we first include several auxiliary results which are helpful to the proof of Lemma \ref{lemma:diff_quadratic}. Recall that we have constructed the iteration-restricted leave-one-out loss function $\mathcal{L}^{(t \backslash m,\leq t)} ( \theta)$ that replaces all likelihood components involving the $m$-th  item with their expected values at iteration $t$ that 
\begin{align*}
    \mathcal{L}^{(t \backslash m,\leq t)} ( \theta)
    & := \mathcal{L}^{(\leq t-1)} (\theta) + \ell_{t}^{(m)}(\theta)
\end{align*}
and $\ell_{t}^{(m)}$ is $\ell_{t}$ with all terms including $m$ deleted that
\begin{align*}
    \ell_{t}^{(m)} (\theta) = &\sum_{(i,j) \in \mathcal{E}_{t},i > j, i \neq m, j \neq m} \left\{ - y_{j,i}^{(t)} \left(\theta_i - \theta_j\right) + \log\left(1 + e^{\theta_i - \theta_j} \right) \right\}
    \\ &\quad +
    \sum_{i \in [n] \backslash \lbrace m \rbrace} p \left\{ - \frac{e^{\theta^{(t)}_{i}}}{e^{\theta_{i}^{(t)}} + e^{\theta_{m}^{(t)}}} \left(\theta_{i} - \theta_{m}\right) + \log\left(1 + e^{\theta_{i} - \theta_{m}} \right) \right\}.
\end{align*}
Here we define $\theta^{(t+1)}_{(i)}$ as
\begin{align*} 
    \theta^{(t+1)}_{(i)} := \underset{\theta \in \Theta}{\arg\min} \;\mathcal{L}^{(t \backslash i, \leq t)} ( \theta)
\end{align*}
and let $\overline{\theta}^{(t+1)}_{(i)}$ be the solution of the following linear equations
\begin{align*}
    \begin{cases}
        \mathcal{P}\nabla \mathcal{L}^{(t \backslash i, \leq t)}(\theta^{(t)}) + \mathcal{P}\nabla^{2} \mathcal{L}^{(t \backslash i, \leq t)}(\theta^{(t)}) \left( \overline{\theta}^{(t+1)}_{(i)} - \theta^{(t)}\right) = \bm{0};
        \\
        \mathcal{P} \overline{\theta}^{(t+1)}_{(i)} = \overline{\theta}^{(t+1)}_{(i)},
    \end{cases}
\end{align*}
which reduces to
\begin{align*}
    \overline{\theta}^{(t+1)}_{(i)} - \theta^{(t)} 
    = 
    -\left[ \nabla^{2} \mathcal{L}^{(t \backslash i, \leq t)}(\theta^{(t)}) \right]^{\dagger} \nabla \mathcal{L}^{(t \backslash i, \leq t)}(\theta^{(t)}).
\end{align*} 

\begin{lemma} \label{lemma:iterative_t_LOO}
    Under the assumptions of Theorem \ref{thm:iterative_loss}, for $i \in [n]$, with probability at least $1 - \mathcal{O}(n^{-5})$ we have
    \begin{align*}
        \norm{\theta^{(t+1)}_{(i)} - \theta^{(t)}}_{\infty}
        & \leq 
        D_{15} \frac{1}{t+1} \sqrt{\frac{\log n}{npL}},
        \\
        \norm{\theta^{(t+1)}_{(i)} - \theta^{(t)}}_{2} 
        & \leq 
        D_{16} \frac{1}{t+1} \sqrt{\frac{1}{pL}},
        \\
        \norm{\theta^{(t+1)}_{(i)} - \theta^{(t+1)}}_{2}
        & \leq
        D_{17} \frac{1}{t+1} \sqrt{\frac{\log n}{npL}},
        \\
        \norm{\theta^{(t+1)}_{(i)} - \overline{\theta}^{(t+1)}_{(i)}}_{2} 
        & \leq
        D_{18} \frac{1}{t+1} \frac{\log n}{\sqrt{n}pL}, 
    \end{align*}
    for some constants $D_{15}, D_{16}, D_{17}, D_{18} > 0$.
\end{lemma}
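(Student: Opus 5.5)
The plan is to treat $\theta^{(t+1)}_{(i)}$ exactly as the iteration-restricted leave-one-out minimizer of Step~II. It is the limit of the gradient-descent sequence $\theta_{0}^{(\upsilon,i)}$ with $\lambda_0\to 0$, started at $\theta^{(t)}$. The first two bounds should therefore follow by rerunning the arguments behind Lemmas~\ref{lemma:auxil_reg_bound}, \ref{lemma:auxil_reg_bound_2} and \ref{lemma:auxil_reg_true_bound_2}, using the loss $\mathcal{L}^{(t\backslash i,\le t)}$ in place of $\mathcal{L}^{(\le t)}$.

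\textbf{First two bounds.} The key observation is $\nabla\mathcal{L}^{(t\backslash i,\le t)}(\theta^{(t)}) = \nabla\ell_t^{(i)}(\theta^{(t)})$, because $\nabla\mathcal{L}^{(\le t-1)}(\theta^{(t)})=0$. This vector has the same $\ell_2$ order $\sqrt{n^2p/L}$ as in Lemma~\ref{lemma:gradient_T_H}: the expected-value terms for item $i$ contribute zero at $\theta^{(t)}$. The Hessian of the modified loss is also sandwiched between $(t+1)np/C$ and $C(t+1)np$ on $\mathbf 1^\perp$ in an $\ell_\infty$-ball around $\theta^\ast$; here I use Lemmas~\ref{lemma:largesteigen}--\ref{lemma:smallesteigen}, with the $i$-th row replaced by its mean, which changes eigenvalues by $O(\sqrt{np\log n})$. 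Strong convexity then gives $\|\theta^{(t+1)}_{(i)}-\theta^{(t)}\|_2\lesssim \frac{1}{t+1}\sqrt{1/(pL)}$. For the $\ell_\infty$ bound I would reuse the (B1)--(D1) induction, which is already phrased for $\mathcal{L}^{(T\backslash m,\le T)}$ and yields an $\ell_\infty$ rate $\frac{1}{t+1}\sqrt{\log n/(npL)}$ for its minimizer. A second layer of leave-one-out for coordinates $m\neq i$ is not needed at the stated order: the coordinate-$i$ bound comes from (B1), and the remaining coordinates are handled through Lemma~\ref{lemma:auxil_reg_bound} combined with the third bound below.

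\textbf{Third bound.} I compare the two minimizers through first-order conditions:
\[
0=\nabla\mathcal{L}^{(\le t)}(\theta^{(t+1)}_{(i)})-\nabla\mathcal{L}^{(\le t)}(\theta^{(t+1)})+\big[\nabla\ell_t-\nabla\ell_t^{(i)}\big](\theta^{(t+1)}_{(i)}).
\]
By the mean value theorem and strong convexity on $\mathbf 1^\perp$ (both points lie within $o(1)$ of $\theta^\ast$ in $\ell_\infty$ by Step~I and the first bound),
\[
\|\theta^{(t+1)}_{(i)}-\theta^{(t+1)}\|_2\lesssim \frac{\|\nabla\ell_t(\theta^{(t+1)}_{(i)})-\nabla\ell_t^{(i)}(\theta^{(t+1)}_{(i)})\|_2}{(t+1)np}.
\]
Lemma~\ref{lemma:gradient_t_diff_LOO} bounds the numerator by $\sqrt{np\log n/L}+\sqrt{np\log n}\,\|\theta^{(t+1)}_{(i)}-\theta^{(t)}\|_\infty$, and the second piece is of lower order by the first bound. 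This gives $\frac{1}{t+1}\sqrt{\log n/(npL)}$. Lemma~\ref{lemma:gradient_t_diff_LOO} requires independence of the $i$-th edges and outcomes at round $t$ from $\theta^{(t+1)}_{(i)}$. That independence holds by construction, since $\mathcal{L}^{(t\backslash i,\le t)}$ does not use round-$t$ data involving $i$ and $\theta^{(t)}$ is $\mathcal H_{t-1}$-measurable.

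\textbf{Fourth bound.} This one mirrors Lemma~\ref{lemma:diff_ori_qudra_t+1_l2}. Set $\Delta=\theta^{(t+1)}_{(i)}-\theta^{(t)}$ and apply a Taylor expansion with the remainder $\mathbf U$ controlled as in the proof of (G), i.e.\ by Lemma~\ref{lemma:hessian_dif_L2}:
\[
\|\theta^{(t+1)}_{(i)}-\overline\theta^{(t+1)}_{(i)}\|_2\le \big\|[\nabla^2\mathcal{L}^{(t\backslash i,\le t)}(\theta^{(t)})]^\dagger\big\|\,\|\mathbf U\|\,\|\Delta\|_2\lesssim \frac{1}{(t+1)np}\,(t+1)np\,\|\Delta\|_\infty\,\frac{\|\Delta\|_2}{(t+1)np}\cdot np.
\]
Tracking constants, this is $\lesssim \|\Delta\|_\infty\|\Delta\|_2$ times $O(1)$. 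With the first two bounds it gives $\frac{1}{(t+1)^2}\frac{\sqrt{\log n}}{\sqrt n\,pL}$, which is at most the stated $\frac{1}{t+1}\frac{\log n}{\sqrt n pL}$.

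\textbf{Main obstacle.} The step I expect to be hardest is the $\ell_\infty$ bound for $\theta^{(t+1)}_{(i)}$ in the first display. The gradient-descent trajectory must stay in the $\ell_\infty$ neighborhood of $\theta^{(t)}$ uniformly over $i$ and over $t\le T$. I would handle this by a union bound over $i\in[n]$ and $t$, which costs the $n^{-5}$ probability, together with the stated conditions relating $t$, $n$ and $p$.
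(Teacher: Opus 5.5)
Your overall strategy is the one the paper gestures at: it omits this proof, saying it mimics the earlier Step~II--III arguments. As written, though, your argument is circular.

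\textbf{The circularity and what is missing.} In the third bound you place the segment $[\theta^{(t+1)},\theta^{(t+1)}_{(i)}]$ in the strong-convexity region ``by Step~I and the first bound''. You also discard the term $\sqrt{np\log n}\,\|\theta^{(t+1)}_{(i)}-\theta^{(t)}\|_\infty$ ``by the first bound''. Yet your first bound, for the coordinates $m\neq i$, is obtained from Lemma~\ref{lemma:auxil_reg_bound} ``combined with the third bound''. The underlying missing ingredient is an a priori localization of the \emph{unregularized} leave-one-out minimizer $\theta^{(t+1)}_{(i)}$. You need it in an $\ell_\infty$-ball around $\theta^\ast$ where $\nabla^2\mathcal{L}^{(t\backslash i,\le t)}\gtrsim(t+1)np$ on $\mathbf{1}^\perp$. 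The gradient sequence $\theta_0^{(\upsilon,i)}$ runs with the fixed $\lambda_0=1/n$, so it converges to the regularized leave-one-out minimizer, not to $\theta^{(t+1)}_{(i)}$. Rerunning Lemma~\ref{lemma:auxil_reg_true_bound_2} to bridge the two needs exactly this localization. For the full loss, that proof takes it from Step~I (regularized descent plus the constrained-MLE trick of Lemma~\ref{lemma:inner_unreg_bound}). You supply no Step-I analogue for $\mathcal{L}^{(t\backslash i,\le t)}$.

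\textbf{How to break the loop.} Prove the third bound first, using only Step~I and \eqref{eq:E} at $t+1$; both are available from Steps~I--II when this lemma is invoked.
Let $f=\mathcal{L}^{(t\backslash i,\le t)}$, and let $\mu\asymp(t+1)np$ be its strong-convexity constant on $\Theta\cap\{\|\theta-\theta^\ast\|_\infty\le 6\}$. Then $\nabla f(\theta^{(t+1)})=-[\nabla\ell_t-\nabla\ell_t^{(i)}](\theta^{(t+1)})$, which splits into two parts.
\begin{itemize}
\item The noise part, evaluated at $\theta^{(t)}$, is $\lesssim\sqrt{np\log n/L}$ by Hoeffding.
\item The $(G^{(t)}_{ij}-p)$ part needs no independence. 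It is crudely $\lesssim np\|\theta^{(t+1)}-\theta^{(t)}\|_\infty\lesssim\frac{1}{t+1}\sqrt{np\log n/L}$ by \eqref{eq:E}.
\end{itemize}
Take $r=4\|\nabla f(\theta^{(t+1)})\|_2/\mu\lesssim\frac{1}{t+1}\sqrt{\log n/(npL)}=o(1)$. By Step~I, the $\ell_2$-ball of radius $r$ around $\theta^{(t+1)}$ lies in the strong-convexity region. On its boundary within $\Theta$ one has $f>f(\theta^{(t+1)})$. Convexity therefore traps $\theta^{(t+1)}_{(i)}$ inside the ball, and $\|\theta^{(t+1)}_{(i)}-\theta^{(t+1)}\|_2\le\|\nabla f(\theta^{(t+1)})\|_2/\mu$, which is the third bound.

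Alternatively, you can keep your evaluation at $\theta^{(t+1)}_{(i)}$ and absorb the $\ell_\infty$ term. Use $\|\theta^{(t+1)}_{(i)}-\theta^{(t)}\|_\infty\le\|\theta^{(t+1)}_{(i)}-\theta^{(t+1)}\|_2+\|\theta^{(t+1)}-\theta^{(t)}\|_\infty$; the resulting coefficient $\frac{1}{t+1}\sqrt{\log n/(np)}$ is small. That route still needs the localization step above.

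With the third bound in hand, the first and second bounds follow from the triangle inequality with \eqref{eq:E} and \eqref{eq:F}. No rerun of the Step~II induction is needed. Your fourth bound then goes through, although its display is garbled. The correct chain is $\|[\nabla^2 f(\theta^{(t)})]^\dagger\|\,\|\mathbf{R}\|$ with $\|\mathbf{R}\|\lesssim(t+1)np\,\|\Delta\|_\infty\|\Delta\|_2$, which gives $\|\Delta\|_\infty\|\Delta\|_2$.
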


The proof of Lemma~\ref{lemma:iterative_t_LOO} closely parallels the previous argument and is therefore omitted. In particular, Lemma~\ref{lemma:iterative_t_LOO} can be established by mimicking the proof of the bounds on $\norm{\overline{\theta}^{(t+1)} - \theta^{(t)}}_{\infty}$, $\norm{\overline{\theta}^{(t+1)} - \theta^{(t)}}_{2}$, and $\norm{\theta^{(t+1)} - \overline{\theta}^{(t+1)}}_{2}$.

\begin{lemma} \label{lemma:supp_loss_t}
    For $i \in [n]$ and $t \in \mathbb{N}$, with probability at least $1 - \mathcal{O}(n^{-10})$ we have
    \begin{align*}
        \left| \left[\nabla \mathcal{L}^{(\leq t)} (\theta^{(t)})\right]_{i} \right| 
        & \leq 
        D_{19} \sqrt{\frac{np \log n}{L}},
        \\
        \sum_{j \in [n] \backslash \{i\}}\left[\nabla^{2} \mathcal{L}^{(\leq t)} (\theta^{(t)})\right]_{i,j}^{2}
        & \leq
        D_{20} (t+1)^{2}np,
        \\
        \sum_{j \in [n] \backslash \{i\}} \left| \left[\nabla^{2} \mathcal{L}^{(\leq t)} (\theta^{(t)})\right]_{i,j} \right|
        & \leq
        D_{21} (t+1) np, 
        \\
        \left| y_{j,i}^{(t)} - \mathbb{E} \left[ y_{j,i}^{(t)} \right] \right| 
        & \leq 
        D_{22} \sqrt{\frac{\log n}{L}}, \text{ for any } i, j \in [n], i \neq j,
    \end{align*}
    for some constants $D_{19}, D_{20}, D_{21}, D_{22} > 0$.
\end{lemma}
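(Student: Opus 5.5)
Each of the four bounds in Lemma~\ref{lemma:supp_loss_t} can be proved separately, on the intersection of the degree-concentration events $\mathcal{A}_{1}^{(s)}$, $s\le t$ (Lemma~\ref{lemma:degree_concentration}), with a union bound over $i$, $j$ and $s\le t$ at the end. Throughout we use $t\lesssim n^{3}/(p\log n)$, so that a per-event failure probability of $n^{-c}$ with $c$ large survives the union bound. The two Hessian bounds are deterministic on these events. The gradient bound and the bound on $y$ are Hoeffding-type concentration bounds.

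\textbf{Hessian rows.} Every Hessian weight satisfies $\sigma'(\cdot)\le 1/4$. For $j\neq i$, $|[\nabla^{2}\mathcal{L}^{(\le t)}(\theta^{(t)})]_{i,j}|\le \frac14\sum_{s=0}^{t}G^{(s)}_{i,j}$. Summing over $j$ gives at most $\frac14\sum_{s}d^{(s)}_{i}\le \frac38 (t+1)np$ on $\bigcap_s\mathcal{A}_1^{(s)}$, which is the third bound. For the second bound, write $N_{ij}:=\sum_{s}G^{(s)}_{i,j}\le t+1$. Then
\[
\sum_{j\neq i}[\nabla^2\mathcal{L}^{(\le t)}]_{i,j}^2\le \tfrac1{16}\sum_j N_{ij}^2\le \tfrac{t+1}{16}\sum_j N_{ij}\le \tfrac{3}{32}(t+1)^2np .
\]
No randomness beyond the degree events is needed.

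\textbf{Entries of $y$.} Each $y^{(t)}_{j,i}$ is an average of $L$ Bernoulli variables that are conditionally independent given $\theta^{(t)}$ and the graph. Hoeffding's inequality conditionally on $\mathcal{H}_{t-1}$ gives $|y^{(t)}_{j,i}-\mathbb{E}[y^{(t)}_{j,i}\mid \mathcal{H}_{t-1}]|\le \sqrt{(c\log n)/(2L)}$ with probability $1-2n^{-c}$. A union bound over the $n^2$ pairs gives the fourth bound, where $\mathbb{E}$ is read as the conditional mean $\sigma(\theta^{(t)}_i-\theta^{(t)}_j)$.

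\textbf{Gradient entry.} Since $\nabla\mathcal{L}^{(\le t-1)}(\theta^{(t)})=0$, we have $[\nabla\mathcal{L}^{(\le t)}(\theta^{(t)})]_i=[\nabla\ell_t(\theta^{(t)})]_i=\sum_{j\neq i}G^{(t)}_{j,i}\{\sigma(\theta^{(t)}_i-\theta^{(t)}_j)-y^{(t)}_{j,i}\}$. Condition on $\mathcal{H}_{t-1}$ and on $G^{(t)}$. The summands, over $l\in[L]$ and $j$, are then $d^{(t)}_iL$ independent centered variables, each bounded by $1/L$. Hoeffding gives a deviation of at most $\sqrt{c\, d^{(t)}_i\log n/L}\lesssim\sqrt{np\log n/L}$ with probability $1-2n^{-c}$. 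Taking a union bound over $i$ and integrating out the conditioning yields the first bound with $D_{19}$ absolute.

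\textbf{Main obstacle.} The only delicate point is that $\theta^{(t)}$ depends on all earlier data. Conditioning on $\mathcal{H}_{t-1}$, with the fresh graph and fresh labels at round $t$ drawn independently of it, resolves this. We must also make sure the degree event is used for $G^{(t)}$, which is independent of $\theta^{(t)}$.
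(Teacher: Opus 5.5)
Your proposal is correct and follows essentially the same route as the paper. The two Hessian-row bounds come from degree concentration together with $\bigl(\sum_{s} G^{(s)}_{i,j}\bigr)^2\le (t+1)\sum_{s} G^{(s)}_{i,j}$. The gradient entry and $y^{(t)}_{j,i}$ are handled by conditional concentration and a union bound, with $\mathbb{E}$ read as the conditional mean $\sigma(\theta^{(t)}_i-\theta^{(t)}_j)$, exactly as the paper later uses it. Two small refinements: you use Hoeffding where the paper uses Bernstein, which drops its side condition $npL\gtrsim\log n$, and your explicit conditioning on $\mathcal{H}_{t-1}$ and $G^{(t)}$ and union bound over rounds $s\le t$ make rigorous points the paper leaves implicit.
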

\begin{proof}
See \S\ref{appendix:supp_loss_t}.
\end{proof}

\begin{lemma} \label{lemma:diff_quadratic}
    Under the assumptions of Theorem \ref{thm:iterative_loss}, for $i \in [n]$, with probability at least $1 - \mathcal{O}(n^{-5})$ we have
    \begin{align*}
        \left| \overline{\theta}^{(t+1, \prime)}_{i} - \overline{\theta}^{(t+1)}_{i} \right| 
        & \leq 
        D_{23} \frac{1}{t+1} \frac{1}{\sqrt{npL}} \frac{\log n}{\sqrt{nL}p} + D_{23} \frac{1}{t+1} \frac{1}{np} \sqrt{\frac{\log n}{L}} \left( 1 + \frac{\log n}{\sqrt{np}} \right) 
        \\ & + 
        D_{23} \frac{1}{t+1} \left( \frac{\log n}{np} + \frac{1}{\sqrt{np}} \right) \norm{\overline{\theta}^{(t+1)} - \theta^{(t+1)}}_{\infty}
    \end{align*}
    for some constant $D_{23} > 0$.
\end{lemma}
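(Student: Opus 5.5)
Subtracting \eqref{eq:diff_quadratic_true} from \eqref{eq:diff_quadratic_prime_true} gives an exact identity in which the gradient terms cancel:
\begin{align*}
    \overline{\theta}^{(t+1,\prime)}_{i} - \overline{\theta}^{(t+1)}_{i}
    =
    \frac{\sum_{j \neq i} \left[\nabla^{2} \mathcal{L}^{(\leq t)}(\theta^{(t)})\right]_{i,j} \left(\overline{\theta}^{(t+1)}_{j} - \theta^{(t+1)}_{j}\right)}{\left[\nabla^{2} \mathcal{L}^{(\leq t)}(\theta^{(t)})\right]_{i,i}}.
\end{align*}
The plan is to bound the numerator and divide by the denominator. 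The diagonal satisfies $[\nabla^{2}\mathcal{L}^{(\leq t)}(\theta^{(t)})]_{i,i} \gtrsim (t+1)np$. This follows from degree concentration (Lemma~\ref{lemma:degree_concentration}) together with $\|\theta^{(t)}-\theta^\ast\|_\infty \lesssim 1$ from \eqref{eq:H}, which bounds the sigmoid weights below.

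The main difficulty is the numerator. The naive bound $\sum_j |H_{ij}|\,\|\overline{\theta}^{(t+1)}-\theta^{(t+1)}\|_\infty \le D_{21}(t+1)np\,\|\cdot\|_\infty$ from Lemma~\ref{lemma:supp_loss_t} gives only a factor $O(1)$, not the required $\frac{1}{t+1}\bigl(\frac{\log n}{np}+\frac{1}{\sqrt{np}}\bigr)$. I would therefore introduce the leave-one-out pair $\theta^{(t+1)}_{(i)}$ and $\overline{\theta}^{(t+1)}_{(i)}$ and decompose
\begin{align*}
    \overline{\theta}^{(t+1)} - \theta^{(t+1)}
    =
    \bigl(\overline{\theta}^{(t+1)} - \overline{\theta}^{(t+1)}_{(i)}\bigr)
    + \bigl(\overline{\theta}^{(t+1)}_{(i)} - \theta^{(t+1)}_{(i)}\bigr)
    + \bigl(\theta^{(t+1)}_{(i)} - \theta^{(t+1)}\bigr).
\end{align*}
For the two outer terms I would use Cauchy--Schwarz with $\sum_{j\neq i} H_{ij}^2 \le D_{20}(t+1)^2np$. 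The last term is then bounded via Lemma~\ref{lemma:iterative_t_LOO} by $\sqrt{np}\,(t+1)\cdot\frac{1}{t+1}\sqrt{\frac{\log n}{npL}}$. The first term requires the analogous $\ell_2$ leave-one-out stability for the quadratic surrogate. This comes from the pseudo-inverse representation, Lemma~\ref{lemma:pseudoinverse_diff_L2}, and the gradient-difference bound in Lemma~\ref{lemma:gradient_t_diff_LOO}, and gives $O\bigl(\frac{1}{t+1}\sqrt{\frac{\log n}{npL}}(1+\frac{\log n}{\sqrt{np}})\bigr)$. After dividing by $(t+1)np$, these produce the term $\frac{1}{t+1}\frac{1}{np}\sqrt{\frac{\log n}{L}}(1+\frac{\log n}{\sqrt{np}})$, up to the $\sqrt{np}$ bookkeeping.

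The middle term is where the independence gained from leaving out $i$ is used. Split $H_{ij}$ into its conditional mean given the previous iterations, $p\sigma'(\theta^{(t)}_i-\theta^{(t)}_j)$ for the current round plus the past-round contributions, and its fluctuation $(G^{(t)}_{ij}-p)\sigma'(\cdot)$. The fluctuation is independent of $\overline{\theta}^{(t+1)}_{(i)}-\theta^{(t+1)}_{(i)}$, so Bernstein's inequality conditional on everything else bounds it by $\sqrt{p\log n}\,\|\cdot\|_2+\log n\,\|\cdot\|_\infty$. Lemma~\ref{lemma:iterative_t_LOO} gives $\|\cdot\|_2 \le D_{18}\frac{1}{t+1}\frac{\log n}{\sqrt n pL}$, and dividing by $(t+1)np$ yields the first displayed term of the lemma.

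The $\|\cdot\|_\infty$ pieces, together with the deterministic-mean pieces, are handled by passing back from $\overline{\theta}^{(t+1)}_{(i)}-\theta^{(t+1)}_{(i)}$ to $\overline{\theta}^{(t+1)}-\theta^{(t+1)}$ via the same triangle inequality. This produces the self-referential term with coefficient $\frac{1}{t+1}(\frac{\log n}{np}+\frac{1}{\sqrt{np}})$. The past-round part of $H_{ij}$, which is $\mathcal{H}_{t-1}$-measurable, is treated the same way. I expect the most delicate point to be reconciling how the row of the full $\nabla^{2}\mathcal{L}^{(\leq t)}$ depends on the edges of round $t$ versus earlier rounds. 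In particular, the earlier rounds' graphs are not independent of the leave-one-out estimator, so I would rely there only on degree bounds and $\ell_2$ estimates rather than on independence.

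Finally, take a union bound over $i\in[n]$, using the $1-O(n^{-10})$ events and Lemma~\ref{lemma:iterative_t_LOO}'s $1-O(n^{-5})$, and absorb constants into $D_{23}$.
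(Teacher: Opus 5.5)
Your skeleton matches the paper's: the same identity (your sign is the correct one), the same three-way leave-one-out split, Cauchy--Schwarz with $D_{17}$ and $D_{20}$ for $\theta^{(t+1)}_{(i)}-\theta^{(t+1)}$, and Bernstein for the round-$t$ edge fluctuation. There are, however, two genuine gaps.

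\textbf{The middle term.} Write $H:=\nabla^{2}\mathcal{L}^{(\leq t)}(\theta^{(t)})$, $v:=\overline{\theta}^{(t+1)}_{(i)}-\theta^{(t+1)}_{(i)}$ and $\sigma'_{ij}:=\sigma'(\theta^{(t)}_i-\theta^{(t)}_j)$. Because $H$ is a weighted Laplacian, $\sum_{j\neq i}|H_{ij}|=H_{ii}$. Suppose you bound the round-$t$ conditional-mean row $p\sigma'_{ij}$ and the past-round rows $\sum_{l<t}G^{(l)}_{ij}\sigma'_{ij}$ against $\|v\|_\infty$ and then pass back to $\|\overline{\theta}^{(t+1)}-\theta^{(t+1)}\|_\infty$. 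The mean row contributes about $\frac{1}{t+1}$, which is already about $1$ at $t=0$, and the past rows contribute about $\frac{t}{t+1}$. The total coefficient is therefore of order $1$, which is exactly the naive bound you rejected, and the absorption step in Lemma~\ref{lemma:diff_ori_qudra_t+1} then fails.

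These two pieces must be paired with the $\ell_2$ estimate instead. Cauchy--Schwarz gives $p\sqrt{n}\|v\|_2+t\sqrt{np}\|v\|_2\lesssim(t+1)\sqrt{np}\|v\|_2$. With $\|v\|_2\le D_{18}\frac{1}{t+1}\frac{\log n}{\sqrt{n}pL}$ from Lemma~\ref{lemma:iterative_t_LOO} and division by $H_{ii}\gtrsim(t+1)np$, this is exactly the first displayed term of the lemma. That term does not come from the Bernstein piece $\sqrt{p\log n}\|v\|_2$, which is of lower order. Only the Bernstein piece $\log n\|v\|_\infty$ should go through the triangle inequality, and it produces the coefficient $\frac{1}{t+1}\frac{\log n}{np}$. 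Your last paragraph mentions $\ell_2$ estimates for the past rounds, but this contradicts the sentence just before it, and the round-$t$ mean row needs the same treatment.

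\textbf{The outer term $\overline{\theta}^{(t+1)}-\overline{\theta}^{(t+1)}_{(i)}$.} An operator-norm perturbation via Lemma~\ref{lemma:pseudoinverse_diff_L2} does not give the bound you claim. Write $H^{(i)}:=\nabla^{2}\mathcal{L}^{(t\backslash i,\leq t)}(\theta^{(t)})$. The difference $H-H^{(i)}=\sum_{j\neq i}(G^{(t)}_{ij}-p)\sigma'_{ij}(\boldsymbol{e}_i-\boldsymbol{e}_j)(\boldsymbol{e}_i-\boldsymbol{e}_j)^{\top}$ is a star matrix with operator norm $\gtrsim\sqrt{np}$. Meanwhile $\|\nabla\ell_t^{(i)}(\theta^{(t)})\|_2\asymp n\sqrt{p/L}$. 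The product bound is therefore of order $\frac{1}{(t+1)^2}\frac{1}{\sqrt{n}p\sqrt{L}}$, which exceeds $\frac{1}{t+1}\sqrt{\frac{\log n}{npL}}$ by the factor $\frac{1}{(t+1)\sqrt{p\log n}}$. After the $\frac{1}{\sqrt{np}}$ factor from Cauchy--Schwarz, it leaves a term $\frac{1}{(t+1)^2}\frac{1}{np^{3/2}\sqrt{L}}$. This term is not dominated by the right-hand side, for example when $t=0$, $p=n^{-1/2}$ and $L=n$, which the assumptions allow.

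The paper avoids this by subtracting the two normal equations: $H^{(i)}(\overline{\theta}^{(t+1)}-\overline{\theta}^{(t+1)}_{(i)})=-w_1-w_2$, with $w_2=(H-H^{(i)})(\overline{\theta}^{(t+1)}-\theta^{(t)})$. It then exploits the star structure to get $\|w_2\|_2\lesssim np\|\overline{\theta}^{(t+1)}-\theta^{(t)}\|_\infty\le np(\|\theta^{(t+1)}-\theta^{(t)}\|_\infty+\|\overline{\theta}^{(t+1)}-\theta^{(t+1)}\|_\infty)$, and controls the first piece with \eqref{eq:E}. This yields $\|\overline{\theta}^{(t+1)}-\overline{\theta}^{(t+1)}_{(i)}\|_2\lesssim\frac{1}{t+1}\sqrt{\frac{\log n}{npL}}+\frac{1}{t+1}\|\overline{\theta}^{(t+1)}-\theta^{(t+1)}\|_\infty$. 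Multiplied by $\frac{1}{\sqrt{np}}$, this is where the $\frac{1}{t+1}\frac{1}{\sqrt{np}}$ part of the self-referential coefficient comes from.
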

\begin{proof}
See \S\ref{appendix:diff_quadratic}.
\end{proof}

We now describe the intuition of bounding $\left| \theta^{(t+1)}_{i} - \overline{\theta}^{(t+1, \prime)}_{i} \right|$. Since $\theta^{(t+1)}_{i}$ is the minimizer of $\mathcal{L}^{(\leq t)}|_{\theta_{-i}^{(t+1)}}(x_{i})$ and $\overline{\theta}^{(t+1, \prime)}_{i}$ is the minimizer of $\overline{\mathcal{L}}^{(\leq t)}|_{\theta_{-i}^{(t+1)}}(x_{i})$. Therefore, as long as $\mathcal{L}^{(\leq t)}|_{\theta_{-i}^{(t+1)}}(\cdot)$ and $\overline{\mathcal{L}}^{(\leq t)}|_{\theta_{-i}^{(t+1)}}(\cdot)$ are close enough, the difference between their minimizers $\theta^{(t+1)}_{i}$ and $\overline{\theta}^{(t+1, \prime)}_{i}$ is small. We summarize this finding in the following lemma \ref{lemma:diff_quadratic_MLE}.

\begin{lemma} \label{lemma:diff_quadratic_MLE}
    Under the assumptions of Theorem \ref{thm:iterative_loss}, for $i \in [n]$, with probability at least $1 - \mathcal{O}(n^{-5})$ we have
    \begin{align*}
        \left| \theta^{(t+1)}_{i} - \overline{\theta}^{(t+1, \prime)}_{i} \right| \leq D_{24} \frac{1}{(t+1)^{2}} \frac{\log n}{npL}
    \end{align*}
    for some constant $D_{24} > 0$.
\end{lemma}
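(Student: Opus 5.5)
The plan is to compare the two univariate minimizers through their first-order conditions. Fix $i\in[n]$ and set $\Delta:=\theta^{(t+1)}-\theta^{(t)}$. Define
\[
f(x):=\mathcal{L}^{(\leq t)}|_{\theta^{(t+1)}_{-i}}(x),\qquad g(x):=\overline{\mathcal{L}}^{(\leq t)}|_{\theta^{(t+1)}_{-i}}(x).
\]
By Proposition~\ref{prop:minimizer_L_i}, $f'(\theta^{(t+1)}_i)=0$. The function $g$ is an exact quadratic with curvature $h_{ii}:=[\nabla^{2}\mathcal{L}^{(\leq t)}(\theta^{(t)})]_{i,i}$, and it is minimized at $\overline{\theta}^{(t+1,\prime)}_i$ by \eqref{eq:diff_quadratic_prime_true}. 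Hence
\[
\theta^{(t+1)}_i-\overline{\theta}^{(t+1,\prime)}_i=\frac{g'(\theta^{(t+1)}_i)}{h_{ii}}=\frac{g'(\theta^{(t+1)}_i)-f'(\theta^{(t+1)}_i)}{h_{ii}},
\]
so it suffices to bound the numerator from above and $h_{ii}$ from below.

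\textbf{Numerator.} Evaluating at the full vector $\theta^{(t+1)}$ gives
\[
f'(\theta^{(t+1)}_i)=[\nabla\mathcal{L}^{(\leq t)}(\theta^{(t+1)})]_i,\qquad g'(\theta^{(t+1)}_i)=[\nabla\mathcal{L}^{(\leq t)}(\theta^{(t)})+\nabla^{2}\mathcal{L}^{(\leq t)}(\theta^{(t)})\Delta]_i .
\]
Their difference is the $i$-th coordinate of a second-order Taylor remainder. Coordinatewise it equals
\[
\tfrac12\sum_{k=0}^{t}\sum_{j:\,G^{(k)}_{ij}=1}\sigma''(\xi^{(k)}_{ij})\,(\Delta_i-\Delta_j)^2 ,
\]
up to sign, with $\xi^{(k)}_{ij}$ on the relevant segment. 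Since $|\sigma''|\le 1$ and $(\Delta_i-\Delta_j)^2\le 4\|\Delta\|_\infty^2$, the numerator is at most $2\|\Delta\|_\infty^2\sum_{k=0}^t d^{(k)}_i$. On the degree events $\mathcal{A}_1^{(k)}$ of Lemma~\ref{lemma:degree_concentration}, intersected over $k\le t$, we have $\sum_k d^{(k)}_i\le \tfrac32(t+1)np$. Lemma~\ref{lemma:equation_EFG} (hypothesis \eqref{eq:E} at iteration $t+1$) gives $\|\Delta\|_\infty\le D_2\frac{1}{t+1}\sqrt{\log n/(npL)}$. Together these yield
\[
|g'(\theta^{(t+1)}_i)-f'(\theta^{(t+1)}_i)|\lesssim (t+1)np\cdot\frac{1}{(t+1)^2}\frac{\log n}{npL}.
\]
It is essential to bound the remainder coordinate by coordinate, using vertex degrees, rather than through the operator-norm bound on $\mathbf{U}^{(t)}$ from the proof of \eqref{eq:G}. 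The operator-norm route would replace $\|\Delta\|_\infty^2$ by $\|\Delta\|_\infty\|\Delta\|_2$ and lose a factor $\sqrt{n}$.

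\textbf{Denominator.} Write
\[
h_{ii}=\sum_{k=0}^t\sum_{j:\,G^{(k)}_{ij}=1}\sigma'(\theta^{(t)}_i-\theta^{(t)}_j).
\]
By \eqref{eq:H} at iteration $t$, $\|\theta^{(t)}-\theta^{\ast}\|_\infty$ is bounded, so each $\sigma'$ term is at least $1/(4\kappa e^{2C})$, as in Lemma~\ref{lemma:Hessian_eigenvalue_min2}. Degree concentration then gives $h_{ii}\ge \frac{(t+1)np}{8\kappa e^{2C}}$. Dividing the numerator by this yields $|\theta^{(t+1)}_i-\overline{\theta}^{(t+1,\prime)}_i|\le D_{24}\frac{1}{(t+1)^2}\frac{\log n}{npL}$.

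\textbf{Probability.} A union bound over $i\in[n]$, over $k\le t$ for the degree events, and over the $\mathcal{O}(n^{-5})$ event of Lemma~\ref{lemma:equation_EFG} gives overall probability $1-\mathcal{O}(n^{-5})$.

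\textbf{Main obstacle.} The delicate point is the availability of the sharp increment bound $\|\theta^{(t+1)}-\theta^{(t)}\|_\infty\lesssim \frac{1}{t+1}\sqrt{\log n/(npL)}$ for the same $t$ at which this lemma is applied. This is exactly what Step~II supplies before Step~III, so the induction ordering must be respected: \eqref{eq:E} at $t+1$ must already be established when this bound is invoked. The $1/(t+1)^2$ gain then comes from squaring that increment, while the curvature scales as $(t+1)np$ in both the numerator and the denominator and cancels.
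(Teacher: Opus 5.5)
Your proof is correct, but it is organized differently from the paper's. The paper applies the mean value theorem to $f'$ between $\theta^{(t)}_i$ and $\theta^{(t+1)}_i$, giving the Newton-type step $\theta^{(t+1)}_i-\theta^{(t)}_i=-f'(\theta^{(t)}_i)/f''(b_1)$. It compares this with the exact quadratic step $\overline{\theta}^{(t+1,\prime)}_i-\theta^{(t)}_i$, which produces two terms:
\begin{itemize}
\item a curvature-mismatch term, controlled by $|f''(b_1)-h_{ii}|\lesssim (t+1)np\|\Delta\|_\infty$ together with $|[\nabla\mathcal{L}^{(\leq t)}(\theta^{(t)})]_i+\sum_{j\neq i}\Delta_j h_{ij}|\lesssim\sqrt{np\log n/L}$, from parts (1) and (3) of Lemma~\ref{lemma:supp_loss_t};
\item a second-order Taylor remainder that moves only the coordinates $\theta_j$, $j\neq i$.
\end{itemize}
Both terms are of order $\frac{1}{(t+1)^2}\frac{\log n}{npL}$. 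You instead evaluate both first-order conditions at the common point $\theta^{(t+1)}_i$ and use that $g'$ is exactly affine with slope $h_{ii}$. The difference of minimizers is then the single exact quantity $g'(\theta^{(t+1)}_i)/h_{ii}$: the $i$-th coordinate of the full-vector Taylor remainder of $\nabla\mathcal{L}^{(\leq t)}$ at $\theta^{(t)}$ in direction $\Delta$, divided by $h_{ii}$. This removes the curvature-mismatch term entirely. You therefore need neither the gradient and row-sum bounds of Lemma~\ref{lemma:supp_loss_t} nor a lower bound on $f''(b_1)$, which in the paper again relies on $\|\Delta\|_\infty$ being small. Your only inputs are degree concentration, the increment bound \eqref{eq:E} at iteration $t+1$, and boundedness of $\theta^{(t)}$ to get $h_{ii}\gtrsim (t+1)np$. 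The mechanism is the same in both: square the $\ell_\infty$ increment, then cancel the $(t+1)np$ curvature between numerator and denominator. Your version is best read as a shorter, more self-contained proof of the same estimate. Your remark that the remainder must be bounded coordinatewise via vertex degrees, not through $\|\mathbf{U}^{(t)}\|_2$, also matches what the paper implicitly does.
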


\begin{proof}
See \S\ref{appendix:diff_quadratic_MLE}.
\end{proof}

Finally, combining the conclusions of Lemma \ref{lemma:diff_quadratic_MLE} and \ref{lemma:diff_quadratic} we get the following theorem for $\norm{\theta^{(t+1)} - \overline{\theta}^{(t+1)}}_{\infty}$.

\begin{lemma} \label{lemma:diff_ori_qudra_t+1}
    Under the assumptions of Theorem \ref{thm:iterative_loss}, with probability at least $1 - \mathcal{O}(n^{-5})$, we have
    \begin{align*}
        \norm{\theta^{(t+1)} - \overline{\theta}^{(t+1)}}_{\infty} \leq D_{25}\frac{1}{t+1} \sqrt{\frac{\log n}{npL}} \left[ \frac{\sqrt{\log n}}{\sqrt{nL}p}  + \frac{1}{\sqrt{np}} \left( 1 + \frac{\log n}{\sqrt{np}} \right) \right]
    \end{align*}
    for some constant $D_{25} > 0$.
\end{lemma}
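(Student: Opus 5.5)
The plan is to split coordinatewise through the intermediate point $\overline{\theta}^{(t+1,\prime)}_{i}$:
\begin{align*}
    \left| \theta^{(t+1)}_{i} - \overline{\theta}^{(t+1)}_{i} \right|
    \le
    \left| \theta^{(t+1)}_{i} - \overline{\theta}^{(t+1,\prime)}_{i} \right|
    +
    \left| \overline{\theta}^{(t+1,\prime)}_{i} - \overline{\theta}^{(t+1)}_{i} \right| ,
\end{align*}
and then bound the two pieces by Lemma~\ref{lemma:diff_quadratic_MLE} and Lemma~\ref{lemma:diff_quadratic}, respectively, for every $i\in[n]$. A union bound over $i$ keeps the failure probability at $\mathcal{O}(n^{-5})$, after sharpening the per-coordinate exponents as in the discussion following Theorem~\ref{thm:iterative_loss}. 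Taking the maximum over $i$ then gives a bound on $\Delta := \norm{\theta^{(t+1)} - \overline{\theta}^{(t+1)}}_{\infty}$ that contains $\Delta$ itself on the right-hand side. The main work is to absorb that self-referential term.

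Concretely, the two lemmas combine to
\begin{align*}
    \Delta \le
    D_{24}\frac{1}{(t+1)^{2}}\frac{\log n}{npL}
    + D_{23}\frac{1}{t+1}\sqrt{\frac{\log n}{npL}}\left[\frac{\sqrt{\log n}}{\sqrt{nL}p} + \frac{1}{\sqrt{np}}\left(1+\frac{\log n}{\sqrt{np}}\right)\right]
    + D_{23}\frac{1}{t+1}\left(\frac{\log n}{np}+\frac{1}{\sqrt{np}}\right)\Delta .
\end{align*}
To get the middle term into this form, I factor $\sqrt{\log n/(npL)}$ out of the first two summands of Lemma~\ref{lemma:diff_quadratic}:
\begin{align*}
    \frac{1}{\sqrt{npL}}\frac{\log n}{\sqrt{nL}p} = \sqrt{\frac{\log n}{npL}}\cdot\frac{\sqrt{\log n}}{\sqrt{nL}p},
    \qquad
    \frac{1}{np}\sqrt{\frac{\log n}{L}} = \sqrt{\frac{\log n}{npL}}\cdot\frac{1}{\sqrt{np}} .
\end{align*}
The first term, coming from Lemma~\ref{lemma:diff_quadratic_MLE}, is also dominated by the bracket. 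It equals $\frac{1}{t+1}\sqrt{\frac{\log n}{npL}}\cdot \frac{1}{t+1}\sqrt{\frac{\log n}{npL}}$, and $\sqrt{\log n/(npL)} \le \sqrt{\log n}/(\sqrt{nL}p)$ because $p\le 1$.

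The absorption step uses the standing hypothesis of Theorem~\ref{thm:iterative_loss}, namely $\frac{1}{t+1}\bigl(\frac{\log n}{np}+\frac{1}{\sqrt{np}}\bigr)\le c$. Choosing the constant $c$ small enough that $D_{23}c\le 1/2$ lets me move the $\Delta$ term to the left-hand side. This yields the claim with $D_{25}=2(D_{23}+D_{24})$. The remaining concern is that $\Delta$ must be known to be finite and of controlled order a priori, so that the rearrangement is legitimate. That holds because Lemma~\ref{lemma:diff_ori_qudra_t+1_l2} already bounds $\norm{\overline{\theta}^{(t+1)} - \theta^{(t+1)}}_{2}$, and this controls the $\ell_\infty$ norm.

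I expect the main obstacle to be checking that the constants $D_{23},D_{24}$ in the two lemmas do not themselves depend on $\Delta$ or on $c$. Otherwise the absorption would be circular. I would verify this by tracing that Lemma~\ref{lemma:diff_quadratic} uses only the leave-one-out bounds of Lemma~\ref{lemma:iterative_t_LOO} and Lemma~\ref{lemma:supp_loss_t}, whose constants come from the earlier induction. Finally, I would confirm that all the events involved hold simultaneously on a single high-probability event for each $t$.
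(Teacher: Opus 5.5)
Your proposal is correct and matches the paper's proof. The paper also routes coordinatewise through $\overline{\theta}^{(t+1,\prime)}_{i}$, combines Lemma~\ref{lemma:diff_quadratic_MLE} with Lemma~\ref{lemma:diff_quadratic}, and absorbs the self-referential $\ell_\infty$ term using the smallness condition $\frac{1}{t+1}\bigl(\frac{\log n}{np}+\frac{1}{\sqrt{np}}\bigr)\le c$, arriving at the same $D_{25}=2(D_{23}+D_{24})$; your extra checks (that the $D_{24}$ term is dominated by the bracket, the union bound over coordinates, and finiteness of $\Delta$) are details the paper leaves implicit.
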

\begin{proof}
See \S\ref{appendix:diff_ori_qudra_t+1}.
\end{proof}
\subsubsection[{Proof of Primary Induction Hypothesis (H)}]{Proof of Primary Induction Hypothesis \eqref{eq:H}}
Combining Lemma~\ref{lemma:diff_qua_T+1_true} with Lemma~\ref{lemma:diff_ori_qudra_t+1}, we can now establish Lemma~\ref{lemma:equation_H} as follows.
\begin{proof}
As long as $\log T \left[ \frac{\sqrt{\log n}}{\sqrt{nL}p} + \frac{1}{\sqrt{np}} \left( 1 + \frac{\log n}{\sqrt{np}} \right) \right] = \mathcal{O}(1)$, we can derive
\begin{align*}
    \norm{\sum_{t=0}^{T}\theta^{(t+1)} - \overline{\theta}^{(t+1)}}_{\infty}
    & \leq
    \sum_{t=0}^{T} \norm{\theta^{(t+1)} - \overline{\theta}^{(t+1)}}_{\infty}
    \\ & \leq
    D_{25} \sum_{t=0}^{T} \frac{1}{t+1} \sqrt{\frac{\log n}{npL}} \left[ \frac{\sqrt{\log n}}{\sqrt{nL}p} + \frac{1}{\sqrt{np}} \left( 1 + \frac{\log n}{\sqrt{np}} \right) \right]
    \\ & \leq
    D_{26} \sqrt{\frac{\log n}{npL}} 
\end{align*}
for some constant $D_{26} > 0$ with probability exceeding $1 - \mathcal{O}(n^{-5})$.

In this way, we can further guarantee
\begin{align*}
    \norm{\theta^{(T+1)}-\theta^{\ast}}_{\infty}
    & =
    \norm{\sum_{t=0}^{T} \theta^{(t+1)}-\theta^{(t)}}_{\infty}
    \\ & \leq
    \norm{\sum_{t=0}^{T} \theta^{(t+1)} - \overline{\theta}^{(t+1)}}_{\infty} + 
    \norm{\sum_{t=0}^{T} \overline{\theta}^{(t+1)} - \theta^{(t)}}_{\infty} 
    \\ & \leq
    D_{5} \sqrt{\frac{\log n}{npL}} 
\end{align*}
with probability at least $1 - \mathcal{O}(n^{-5})$, where $D_{5} \geq D_{13} + D_{26}$. Thus \eqref{eq:H} holds for $T + 1$.
\end{proof}
\section{Proof of Theorem~\ref{thm:asy_norm}}
\label{appendix:asy_norm}

\begin{proof}
    The following content is conditioned on the event $\left\{ \mathcal{A}_{t} \right\}_{t=0}^{T+1}$. Let $\bar{\bm{c}} = \mathcal{P} \bm{c}$ be the projection of $\bm{c}$ onto linear space $\Theta$. Since $\bar{\bm{c}}$ is the projection of $\bm{c}$ onto linear space $\Theta$, we obtain
    \begin{align*}
        \bm{c}^{\top} \overline{\theta}^{(t+1)} - \bm{c}^{\top} \theta^{(t)} = \bar{\bm{c}}^{\top} \overline{\theta}^{(t+1)} - \bar{\bm{c}}^{\top} \theta^{(t)}
    \end{align*}
    for any iteration $t = 0, \cdots, T$ with $\overline{\theta}^{(t+1)}, \theta^{(t)} \in \Theta$. Since $\nabla \ell_{t}(\theta^{(t)}) + \nabla^{2} \mathcal{L}^{(\leq t)}(\theta^{(t)})  (\overline{\theta}^{(t+1)} - \theta^{(t)}) 
    = \bm{0}$, we define
    \begin{align*}
        X_t{} 
        & :=
        \bar{\bm{c}}^{\top} \overline{\theta}^{(t+1)} - \bar{\bm{c}}^{\top} \theta^{(t)}
        \\ & =
        \bm{c}^{\top} \overline{\theta}^{(t+1)} - \bm{c}^{\top} \theta^{(t)}
        \\ & = 
        - \bm{c}^{\top} \left[ \nabla^{2} \mathcal{L}^{(\leq t)}(\theta^{(t)})\right]^{\dagger}\nabla \ell_{t}
        (\theta^{(t)}) 
        \\ & =
        \sum_{1\le j<i\le n} \underbrace{G^{(t)}_{i,j} \bm{c}^{\top}\Big\lbrack \nabla^{2} \mathcal{L}^{(\le t)}(\theta^{(t)})\Big\rbrack^{\dagger}(\boldsymbol{e}_i - \boldsymbol{e}_j)}_{\text{fixed given $\mathcal{F}_{t-1}$}} \cdot 
        \underbrace{ \frac{1}{L}\sum_{l=1}^{L} \left\{y^{(t,l)}_{j,i} - \frac{e^{\theta^{(t)}_{i}}}{e^{\theta^{(t)}_{i}}+e^{\theta^{(t)}_{j}}}\right\}}_{\text{mean $0$ given $\mathcal{F}_{t-1}$}},
    \end{align*}
    where $\mathcal{F}_{t-1}:= \sigma \left\langle \{\mathcal{G}^{(s)}\}_{s=0}^{t},\{Y^{(s)}\}_{s=0}^{t-1} \right\rangle$ and $\mathcal{F}_{-1} = \sigma \left\langle \mathcal{G}^{(0)} \right\rangle$.

    We first show that $\lbrace X_{t}\rbrace_{t=0}^{T+1}$ is a martingale difference sequence w.r.t. $\lbrace \mathcal{F}_{t}\rbrace_{t=0}^{T+1}$ due to 
    \begin{align*}
        \mathbb{E} \left[ \nabla \ell_{t}(\theta^{(t)}) \mid \mathcal{F}_{t-1} \right] = \mathbb{E} \left[ \sum_{1 \leq j < i \leq n} G_{j,i}^{(t)} \left\{ - y_{j,i}^{(t)}+ \frac{e^{\theta^{(t)}_i}}{e^{\theta^{(t)}_i} + e^{\theta^{(t)}_j}} \right\} (\boldsymbol{e}_{i} - \boldsymbol{e}_{j}) \mid \mathcal{F}_{t-1} \right] = \bm{0},
    \end{align*}
    which implies
    \begin{align*}
        \mathbb{E} \left[ X_{t} \mid \mathcal{F}_{t-1} \right]
        = 
        - \bm{c}^{\top} \left[ \nabla^{2} \mathcal{L}^{(\leq t)}(\theta^{(t)}) \right]^{\dagger} \mathbb{E} \left[  \nabla \ell_{t}(\theta^{(t)}) \mid \mathcal{F}_{t-1} \right] = 0.
    \end{align*}
    By Lemma~\ref{lemma:largesteigen} and Lemma~\ref{lemma:smallesteigen}, we further have
    \begin{align*}
        & \mathbb{E}\left[ X_{t}^{2} \mid \mathcal{F}_{t-1} \right] 
        = 
        \text{Var} \left[ X_{t} \mid \mathcal{F}_{t-1} \right]
        \\ = & 
        \text{Var} \left[ \sum_{1 \leq j < i \leq n} G_{i,j}^{(t)} \left\{ y_{j,i}^{(t)} - \frac{e^{\theta^{(t)}_{i}}}{e^{\theta^{(t)}_{i}}+e^{\theta^{(t)}_{j}}} \right\} \bm{c}^{\top} \left[ \nabla^{2} \mathcal{L}^{(\leq t)}(\theta^{(t)}) \right]^{\dagger} (\boldsymbol{e}_i - \boldsymbol{e}_j) \mid \mathcal{F}_{t-1} \right]
        \\ = & 
        \sum_{1 \leq j < i \leq n} G_{i,j}^{(t)} \left\{ \bm{c}^{\top} \left[ \nabla^{2} \mathcal{L}^{(\leq t)}(\theta^{(t)}) \right]^{\dagger} (\boldsymbol{e}_i - \boldsymbol{e}_j) \right\}^{2} \text{Var} \left[ y_{j,i}^{(t)} - \frac{e^{\theta^{(t)}_{i}}}{e^{\theta^{(t)}_{i}}+e^{\theta^{(t)}_{j}}} \mid \mathcal{F}_{t-1} \right]
        \\ = &
        \frac{1}{L} \sum_{1 \leq j < i \leq n} G_{i,j}^{(t)} \left\{ \bm{c}^{\top} \left[ \nabla^{2} \mathcal{L}^{(\leq t)}(\theta^{(t)}) \right]^{\dagger} (\boldsymbol{e}_i - \boldsymbol{e}_j) \right\}^{2}\frac{e^{\theta_i^{(t)}}e^{\theta_j^{(t)}}}{(e^{\theta_i^{(t)}} + e^{\theta_j^{(t)}})^{2}}
        \\ = & 
        \frac{1}{L} \bm{c}^{\top} \left[ \nabla^{2} \mathcal{L}^{(\leq t)}(\theta^{(t)}) \right]^{\dagger} \left[\sum_{(i,j) \in \mathcal{E}^{(t)},\, i>j} \frac{e^{\theta_i^{(t)}}e^{\theta_j^{(t)}}}{(e^{\theta_i^{(t)}} + e^{\theta_j^{(t)}})^{2}}(\boldsymbol{e}_{i} - \boldsymbol{e}_{j})(\boldsymbol{e}_{i} - \boldsymbol{e}_{j})^{\top} \right] \left[ \nabla^{2} \mathcal{L}^{(\leq t)}(\theta^{(t)}) \right]^{\dagger} \bm{c}
        \\ = & 
        \frac{1}{L} \bm{c}^{\top} \left[ \nabla^{2} \mathcal{L}^{(\leq t)}(\theta^{(t)}) \right]^{\dagger}   \nabla^{2} \ell_{t}(\theta^{(t)})  \left[ \nabla^{2} \mathcal{L}^{(\leq t)}(\theta^{(t)}) \right]^{\dagger} \bm{c}
    \end{align*}
    and thus
    \begin{align*}
        \mathbb{E}[X_{t}^{2}] = \mathbb{E} \left\{  \mathbb{E} \left[ X_{t}^{2} \mid \mathcal{F}_{t-1} \right]\right\} = \mathbb{E} \left\{ \frac{1}{L} \bm{c}^{\top} \left[ \nabla^{2} \mathcal{L}^{(\leq t)}(\theta^{(t)}) \right]^{\dagger}   \nabla^{2} \ell_{t}(\theta^{(t)})  \left[ \nabla^{2} \mathcal{L}^{(\leq t)}(\theta^{(t)}) \right]^{\dagger} \bm{c} \right\}.
    \end{align*}

   Motivated by the martingale central limit theorem \citep{brown1971martingale}, we introduce a fine-grained martingale difference array $\left\{ X_{t,(i,j),\ell} : t = 0,\dots,T;\ (i,j)\in \mathcal{E}^{(t)},\, j < i;\ \ell \in [L] \right\}$ equipped with a deterministic ordering $\left\{ (i,j,\ell) : (i,j)\in \mathcal{E}^{(t)},\, j < i;\ \ell \in [L] \right\}$ for each fixed $t$. This ordering guarantees that the associated martingale difference sequence $\{ X_{t,(i,j),\ell} \}$ has length $k_{n,t} \lesssim (t+1)n^{2}pL$ for every $t = 0,\dots,T$, where
    \begin{align*}
        X_{t, (i,j), l} := \underbrace{G^{(t)}_{i,j} \bm{c}^{\top}\Big\lbrack \nabla^{2} \mathcal{L}^{(\le t)}(\theta^{(t)})\Big\rbrack^{\dagger}(\boldsymbol{e}_i - \boldsymbol{e}_j)}_{\text{fixed given $\mathcal{D}_{t-1, (i,j), l}$}} \cdot 
        \underbrace{ \frac{1}{L}\left\{y^{(t,l)}_{j,i} - \frac{e^{\theta^{(t)}_{i}}}{e^{\theta^{(t)}_{i}}+e^{\theta^{(t)}_{j}}}\right\}}_{\text{mean $0$ given $\mathcal{D}_{t-1, (i,j), l}$}}
    \end{align*}
    and
    \begin{align*}
        D_{t-1, (i,j), l} := \mathcal{F}_{t-1} \vee \sigma \left\langle y_{j^{\prime},i^{\prime}}^{(t, l^{\prime})}: (j^{\prime},i^{\prime},l^{\prime}) \leq (i,j,l) \right\rangle.
    \end{align*}
    The martingale CLT theorem then reduces our task to verifying the following two conditions:
    \begin{align*}
        \frac{1}{s_{n}^{2}}
        \sum_{t=0}^{T} \sum_{(i,j)\in\mathcal{E}^{(t)}, j<i} \sum_{l=1}^{L} \mathbb{E}\Bigl[ X_{t, (i,j), l}^{2}\,
        \boldsymbol{1}_{|X_{t, (i,j), l}| \geq \varepsilon s_{n}} \Bigr]
        \;\to\; 0, \forall \varepsilon>0
        \quad\text{and}\quad
        \frac{V_{n}^{2}}{s_{n}^{2}} \;\xrightarrow{p}\; 1,
    \end{align*}
    where 
    \begin{align*}
        S_{n} & := \sum_{t=0}^{T} \sum_{(i,j)\in\mathcal{E}^{(t)}, j<i} \sum_{l=1}^{L} X_{t, (i,j), l} = \sum_{t=0}^{T} X_{t} = \sum_{t=0}^{T} \left( \bar{\bm{c}}^{\top} \overline{\theta}^{(t+1)} - \bar{\bm{c}}^{\top} \theta^{(t)} \right), 
        \\ 
        V_{n}^{2} & := \sum_{t=0}^{T} \sum_{(i,j)\in\mathcal{E}^{(t)}, j<i} \sum_{l=1}^{L} \mathbb{E} \left[ X_{t, (i,j), l}^{2} \mid D_{t-1, (i,j), l}\right] = \sum_{t=0}^{T} \mathbb{E}\left[ X_{t}^{2} \mid \mathcal{F}_{t-1} \right],
        \\
        s_{n}^{2} & := \sum_{t=0}^{T} \mathbb{E} \left[ X_{t}^{2} \right].
    \end{align*}

    For the Lindeberg condition, we in fact establish a stronger result
    \begin{align*}
        \max_{t, (i,j), l} \frac{ \left| X_{t, (i,j), l} \right|}{s_{n}} \xrightarrow{p} 0
    \end{align*}
    due to
    \begin{align*}
        & \frac{1}{s_{n}^{2}}
        \sum_{t=0}^{T} \sum_{(i,j)\in\mathcal{E}^{(t)}, j<i} \sum_{l=1}^{L} \mathbb{E}\Bigl[ X_{t, (i,j), l}^{2}\,
        \boldsymbol{1}_{|X_{t, (i,j), l}| \geq \varepsilon s_{n}} \Bigr]
        \\ \leq &
        \frac{1}{s_{n}^{2}} \sum_{t=0}^{T} \sum_{(i,j)\in\mathcal{E}^{(t)}, j<i} \sum_{l=1}^{L} \mathbb{E} \left[ X_{t, (i,j), l}^{2} \boldsymbol{1}_{\max_{t, (i,j), l} \left| X_{t, (i,j), l} \right| \geq \varepsilon s_{n}} \right]
        \\ \leq &
        \frac{1}{s_{n}^{2}} \max_{t, (i,j), l} X_{t, (i,j), l}^{2}  \sum_{t=0}^{T} \sum_{(i,j)\in\mathcal{E}^{(t)}, j<i} \sum_{l=1}^{L} \mathbb{P} \left(\max_{t, (i,j), l} \left| X_{t, (i,j), l} \right| \geq \varepsilon s_{n}\right) 
    \end{align*}
    Combining with Lemma~\ref{lemma:smallesteigen}, we know
    \begin{align*}
        \left| X_{t, (i,j), l} \right| 
        & = 
        \left| G^{(t)}_{i,j} \bm{c}^{\top}\Big\lbrack \nabla^{2} \mathcal{L}^{(\le t)}(\theta^{(t)})\Big\rbrack^{\dagger}(\boldsymbol{e}_i - \boldsymbol{e}_j) \cdot \frac{1}{L}\left\{y^{(t,l)}_{j,i} - \frac{e^{\theta^{(t)}_{i}}}{e^{\theta^{(t)}_{i}}+e^{\theta^{(t)}_{j}}}\right\} \right|
        \\ & \leq
        \frac{1}{L} \norm{\bm{c}}_{2} \norm{\Big\lbrack \nabla^{2} \mathcal{L}^{(\le t)}(\theta^{(t)})\Big\rbrack^{\dagger}}_{2} \norm{\boldsymbol{e}_i - \boldsymbol{e}_j}_{2}
        \\ & \lesssim
        \frac{1}{t+1} \frac{\norm{\bm{c}}_{2}}{npL}.
    \end{align*}
    Consequently,
    \begin{align*}
        \max_{t, (i,j), l} \frac{ \left| X_{t, (i,j), l} \right|}{s_{n}} \lesssim \frac{\norm{\bm{c}}_{2}/npL}{\norm{\bm{c}}_{2}/\sqrt{npL}} = \frac{1}{\sqrt{npL}} \to 0.
    \end{align*}

    For the normalized quadratic variation of $X_{t, (i,j), l}$, it is equivalent to prove 
    \begin{align*}
        \frac{\left| V_{n}^{2} - s_{n}^{2} \right|}{s_{n}^{2}} \leq \underbrace{\frac{\left| V_{n}^{2} - (V_{n}^{\ast})^{2} \right|}{s_{n}^{2}}}_{u_{1}} + \underbrace{\frac{\left|(V_{n}^{\ast})^{2} - \mathbb{E} \left[(V_{n}^{\ast})^{2}\right] \right|}{s_{n}^{2}}}_{u_{2}} + \underbrace{\frac{\left|s_{n}^{2} - \mathbb{E} \left[(V_{n}^{\ast})^{2}\right] \right|}{s_{n}^{2}}}_{u_{3}} \xrightarrow{p} 0,
    \end{align*}
    where
    \begin{align*}
        (V_{n}^{\ast})^{2} := \frac{1}{L} \sum_{t=0}^{T} \bm{c}^{\top} \left[ \nabla^{2} \mathcal{L}^{(\leq t)}(\theta^{\ast}) \right]^{\dagger}   \nabla^{2} \ell_{t}(\theta^{\ast})  \left[ \nabla^{2} \mathcal{L}^{(\leq t)}(\theta^{\ast}) \right]^{\dagger} \bm{c}.
    \end{align*}
    For term $u_{1}$, we first notice the following decomposition
    \begin{align*}
        & \left[ \nabla^{2} \mathcal{L}^{(\leq t)}(\theta^{(t)}) \right]^{\dagger}   \nabla^{2} \ell_{t}(\theta^{(t)}) \left[ \nabla^{2} \mathcal{L}^{(\leq t)}(\theta^{(t)}) \right]^{\dagger} - \left[ \nabla^{2} \mathcal{L}^{(\leq t)}(\theta^{\ast}) \right]^{\dagger}   \nabla^{2} \ell_{t}(\theta^{\ast})  \left[ \nabla^{2} \mathcal{L}^{(\leq t)}(\theta^{\ast}) \right]^{\dagger}
        \\ = & 
        \left\{ \left[ \nabla^{2} \mathcal{L}^{(\leq t)}(\theta^{(t)}) \right]^{\dagger} - \left[ \nabla^{2} \mathcal{L}^{(\leq t)}(\theta^{\ast}) \right]^{\dagger} \right\} \nabla^{2} \ell_{t}(\theta^{(t)}) \left[ \nabla^{2} \mathcal{L}^{(\leq t)}(\theta^{(t)}) \right]^{\dagger}
        \\ & +
        \left[ \nabla^{2} \mathcal{L}^{(\leq t)}(\theta^{\ast}) \right]^{\dagger} \left\{ \nabla^{2} \ell_{t}(\theta^{(t)}) - \nabla^{2} \ell_{t}(\theta^{\ast}) \right\} \left[ \nabla^{2} \mathcal{L}^{(\leq t)}(\theta^{(t)}) \right]^{\dagger}
        \\ & + 
        \left[ \nabla^{2} \mathcal{L}^{(\leq t)}(\theta^{\ast}) \right]^{\dagger} \nabla^{2} \ell_{t}(\theta^{\ast}) \left\{ \left[ \nabla^{2} \mathcal{L}^{(\leq t)}(\theta^{(t)}) \right]^{\dagger} - \left[ \nabla^{2} \mathcal{L}^{(\leq t)}(\theta^{\ast}) \right]^{\dagger} \right\}.
    \end{align*}
    Combining Lemma~\ref{lemma:pseudoinverse_diff_L2} with Lemma~\ref{lemma:hessian_dif_L2}, we obtian
    \begin{align*}
        \norm{\nabla^{2} \ell_{t}(\theta^{(t)}) - \nabla^{2} \ell_{t}(\theta^{\ast})}_{2} 
        & \lesssim 
        np \norm{\theta^{(t)} - \theta^{\ast}}_{\infty},
        \\
        \norm{\left[ \nabla^{2} \mathcal{L}^{(\leq t)}(\theta^{(t)}) \right]^{\dagger} - \left[ \nabla^{2} \mathcal{L}^{(\leq t)}(\theta^{\ast}) \right]^{\dagger}}_{2} & \lesssim \frac{\norm{\theta^{(t)} - \theta^{\ast}}_{\infty}}{(t+1) np},
    \end{align*}
    which implies 
    \begin{align*}
        & \left| V_{n}^{2} - (V_{n}^{\ast})^{2} \right|
        \\ = & 
        \left| \sum_{t=0}^{T} \frac{1}{L} \bm{c}^{\top} \left\{ \left[ \nabla^{2} \mathcal{L}^{(\leq t)}(\theta^{(t)}) \right]^{\dagger}   \nabla^{2} \ell_{t}(\theta^{(t)})  \left[ \nabla^{2} \mathcal{L}^{(\leq t)}(\theta^{(t)}) \right]^{\dagger} -\left[ \nabla^{2} \mathcal{L}^{(\leq t)}(\theta^{\ast}) \right]^{\dagger}   \nabla^{2} \ell_{t}(\theta^{\ast})  \left[ \nabla^{2} \mathcal{L}^{(\leq t)}(\theta^{\ast}) \right]^{\dagger} \right\} \bm{c} \right|
        \\ \lesssim &
        \sum_{t=0}^{T} \frac{1}{L} \norm{\bm{c}}_{2}^{2} \norm{\left[ \nabla^{2} \mathcal{L}^{(\leq t)}(\theta^{(t)}) \right]^{\dagger}   \nabla^{2} \ell_{t}(\theta^{(t)})  \left[ \nabla^{2} \mathcal{L}^{(\leq t)}(\theta^{(t)}) \right]^{\dagger} -\left[ \nabla^{2} \mathcal{L}^{(\leq t)}(\theta^{\ast}) \right]^{\dagger}   \nabla^{2} \ell_{t}(\theta^{\ast})  \left[ \nabla^{2} \mathcal{L}^{(\leq t)}(\theta^{\ast}) \right]^{\dagger}}_{2}
        \\ \lesssim & 
        \sum_{t=0}^{T} \frac{1}{L} \norm{\bm{c}}_{2}^{2}
        \norm{\left[ \nabla^{2} \mathcal{L}^{(\leq t)}(\theta^{(t)}) \right]^{\dagger} - \left[ \nabla^{2} \mathcal{L}^{(\leq t)}(\theta^{\ast}) \right]^{\dagger}}_{2} \norm{\nabla^{2} \ell_{t}(\theta^{(t)})}_{2} \norm{\left[ \nabla^{2} \mathcal{L}^{(\leq t)}(\theta^{(t)}) \right]^{\dagger}}_{2}
        \\ & + 
        \sum_{t=0}^{T} \frac{1}{L} \norm{\bm{c}}_{2}^{2} \norm{\left[ \nabla^{2} \mathcal{L}^{(\leq t)}(\theta^{\ast}) \right]^{\dagger}}_{2} \norm{\nabla^{2} \ell_{t}(\theta^{(t)}) - \nabla^{2} \ell_{t}(\theta^{\ast})}_{2} \norm{\left[ \nabla^{2} \mathcal{L}^{(\leq t)}(\theta^{(t)}) \right]^{\dagger}}_{2}
        \\ & + 
        \sum_{t=0}^{T} \frac{1}{L} \norm{\bm{c}}_{2}^{2} \norm{\left[ \nabla^{2} \mathcal{L}^{(\leq t)}(\theta^{\ast}) \right]^{\dagger}}_{2} \norm{\nabla^{2} \ell_{t}(\theta^{\ast})}_{2} \norm{\left[ \nabla^{2} \mathcal{L}^{(\leq t)}(\theta^{(t)}) \right]^{\dagger} - \left[ \nabla^{2} \mathcal{L}^{(\leq t)}(\theta^{\ast}) \right]^{\dagger}}_{2}
        \\ \lesssim & 
        \sum_{t=0}^{T} \norm{\bm{c}}_{2}^{2} \frac{1}{(t+1)^{2}} \frac{1}{npL} \norm{\theta^{(t)}-\theta^{\ast}}_{2}
        \\ \lesssim & 
        \frac{\norm{\bm{c}}_{2}^{2}}{npL} \sqrt{\frac{\log n}{npL}}.
    \end{align*}

    As a result, with the help of theorem~\ref{thm:iterative_loss}, we have
    \begin{align*}
        u_{1} := \frac{\left| V_{n}^{2} - (V_{n}^{\ast})^{2} \right|}{s_{n}^{2}} \lesssim \sqrt{\frac{\log n}{npL}} \to 0.
    \end{align*}

    For the second term $u_{2}$, we define
    \begin{align*}
        (\overline{V}_{n}^{\ast})^{2} := \frac{1}{L} \sum_{t=0}^{T} \bm{c}^{\top} \mathbb{E} \left\{ \left[ \nabla^{2} \mathcal{L}^{(\leq t)}(\theta^{\ast}) \right]^{\dagger} \right\}  \mathbb{E} \left[ \nabla^{2} \ell_{t}(\theta^{\ast}) \right] \mathbb{E} \left\{ \left[ \nabla^{2} \mathcal{L}^{(\leq t)}(\theta^{\ast}) \right]^{\dagger} \right\} \bm{c}.
    \end{align*}
    Then it suffices to show
    \begin{align*}
        \frac{\left| (V_{n}^{\ast})^{2} -(\overline{V}_{n}^{\ast})^{2} \right|}{s_{n}^{2}} \xrightarrow{p} 0
    \end{align*}
    due to 
    \begin{align*}
        \left| (V_{n}^{\ast})^{2} -\mathbb{E} \left[ (V_{n}^{\ast})^{2} \right] \right| 
        & \leq 
        \left| (V_{n}^{\ast})^{2} -(\overline{V}_{n}^{\ast})^{2} \right| + \left| \mathbb{E} \left[ (V_{n}^{\ast})^{2} \right] - (\overline{V}_{n}^{\ast})^{2} \right|
        \\ & \leq
        \left| (V_{n}^{\ast})^{2} -(\overline{V}_{n}^{\ast})^{2} \right| + \mathbb{E}  \left| (V_{n}^{\ast})^{2} - (\overline{V}_{n}^{\ast})^{2} \right|.
    \end{align*}
    Let $\nabla^{2}\ell_{t}(\theta^{\ast}) = \sum_{i > j} Z_{j,i}$, where $Z_{j,i} := (G_{j,i}^{(t)} - p) \frac{e^{\theta_i^{\ast}}e^{\theta_j^{\ast}}}{(e^{\theta_i^{\ast}} + e^{\theta_j^{\ast}})^{2}}(\boldsymbol{e}_{i} - \boldsymbol{e}_{j})(\boldsymbol{e}_{i} - \boldsymbol{e}_{j})^{\top}$, then $\left\{ Z_{j,i} \right\}$ are in dependent with mean zero. Since $\norm{Z_{j,i}}_{2} \leq \frac{1}{2}$ and $\norm{\sum_{i > j} \mathbb{E} \left[ Z_{j,i}^{2} \right]}_{2} \lesssim p(1-p) \norm{n\boldsymbol{I}_{n} - \boldsymbol{1}\boldsymbol{1}^{\top}}_{2} \lesssim np.$
    By matrix Bernstein inequality, we have
    \begin{align*}
        \norm{\nabla^{2} \ell_{t}(\theta^{\ast}) - \mathbb{E} \left[ \nabla^{2} \ell_{t}(\theta^{\ast}) \right]}_{2} \lesssim \sqrt{np \log n} + \log n \lesssim \sqrt{np \log n}
    \end{align*}
    with probability at least $1 - \mathcal{O}(n^{-10})$.

    Similarly in the case of $u_1$, we notice that
    \begin{align*}
        & \left[ \nabla^{2} \mathcal{L}^{(\leq t)}(\theta^{\ast}) \right]^{\dagger}   \nabla^{2} \ell_{t}(\theta^{\ast})  \left[ \nabla^{2} \mathcal{L}^{(\leq t)}(\theta^{\ast}) \right]^{\dagger} - \mathbb{E} \left\{ \left[ \nabla^{2} \mathcal{L}^{(\leq t)}(\theta^{\ast}) \right]^{\dagger} \right\}  \mathbb{E} \left[ \nabla^{2} \ell_{t}(\theta^{\ast}) \right] \mathbb{E} \left\{ \left[ \nabla^{2} \mathcal{L}^{(\leq t)}(\theta^{\ast}) \right]^{\dagger} \right\}
        \\ = & 
        \left( \left[ \nabla^{2} \mathcal{L}^{(\leq t)}(\theta^{\ast}) \right]^{\dagger} - \mathbb{E} \left\{ \left[ \nabla^{2} \mathcal{L}^{(\leq t)}(\theta^{\ast}) \right]^{\dagger} \right\} \right) \nabla^{2} \ell_{t}(\theta^{\ast}) \left[ \nabla^{2} \mathcal{L}^{(\leq t)}(\theta^{\ast}) \right]^{\dagger}
        \\ & + 
        \mathbb{E} \left\{ \left[ \nabla^{2} \mathcal{L}^{(\leq t)}(\theta^{\ast}) \right]^{\dagger} \right\} \left( \nabla^{2} \ell_{t}(\theta^{\ast}) - \mathbb{E} \left[ \nabla^{2} \ell_{t}(\theta^{\ast}) \right] \right) \left[ \nabla^{2} \mathcal{L}^{(\leq t)}(\theta^{\ast}) \right]^{\dagger}
        \\ & + 
        \mathbb{E} \left\{ \left[ \nabla^{2} \mathcal{L}^{(\leq t)}(\theta^{\ast}) \right]^{\dagger} \right\}  \mathbb{E} \left[ \nabla^{2} \ell_{t}(\theta^{\ast}) \right] \left( \left[ \nabla^{2} \mathcal{L}^{(\leq t)}(\theta^{\ast}) \right]^{\dagger} - \mathbb{E} \left\{ \left[ \nabla^{2} \mathcal{L}^{(\leq t)}(\theta^{\ast}) \right]^{\dagger} \right\} \right),
    \end{align*}
    which implies
    \begin{align*}
        & \left| (V_{n}^{\ast})^{2} -(\overline{V}_{n}^{\ast})^{2} \right|
        \\ \lesssim & 
        \sum_{t=0}^{T} \frac{1}{L} \norm{\bm{c}}_{2}^{2} \norm{\left[ \nabla^{2} \mathcal{L}^{(\leq t)}(\theta^{\ast}) \right]^{\dagger} - \mathbb{E} \left\{ \left[ \nabla^{2} \mathcal{L}^{(\leq t)}(\theta^{\ast}) \right]^{\dagger} \right\}}_{2} \norm{\nabla^{2} \ell_{t}(\theta^{\ast})}_{2}  \norm{\left[ \nabla^{2} \mathcal{L}^{(\leq t)}(\theta^{\ast}) \right]^{\dagger}}_{2}
        \\ & + 
        \sum_{t=0}^{T} \frac{1}{L} \norm{\bm{c}}_{2}^{2} \norm{\mathbb{E} \left\{ \left[ \nabla^{2} \mathcal{L}^{(\leq t)}(\theta^{\ast}) \right]^{\dagger} \right\}}_{2} \norm{\left( \nabla^{2} \ell_{t}(\theta^{\ast}) - \mathbb{E} \left[ \nabla^{2} \ell_{t}(\theta^{\ast}) \right] \right)}_{2} \norm{\left[ \nabla^{2} \mathcal{L}^{(\leq t)}(\theta^{\ast}) \right]^{\dagger}}_{2}
        \\ & + 
        \sum_{t=0}^{T} \frac{1}{L} \norm{\bm{c}}_{2}^{2} \norm{\mathbb{E} \left\{ \left[ \nabla^{2} \mathcal{L}^{(\leq t)}(\theta^{\ast}) \right]^{\dagger} \right\}}_{2} \norm{\mathbb{E} \left[ \nabla^{2} \ell_{t}(\theta^{\ast}) \right]}_{2} \norm{\left( \left[ \nabla^{2} \mathcal{L}^{(\leq t)}(\theta^{\ast}) \right]^{\dagger} - \mathbb{E} \left\{ \left[ \nabla^{2} \mathcal{L}^{(\leq t)}(\theta^{\ast}) \right]^{\dagger} \right\} \right)}_{2} 
        \\ \lesssim &
        \frac{\norm{\bm{c}}_{2}^{2}}{npL} \sqrt{\frac{\log n}{np}}.
    \end{align*}
    Therefore,
    \begin{align*}
        \frac{\left| (V_{n}^{\ast})^{2} -(\overline{V}_{n}^{\ast})^{2} \right|}{s_{n}^{2}} \lesssim \sqrt{\frac{\log n}{np}} \to 0.
    \end{align*}
    As a result, we derive
    \begin{align*}
        u_{2} := \frac{\left| (V_{n}^{\ast})^{2} -(\overline{V}_{n}^{\ast})^{2} \right|}{s_{n}^{2}} \xrightarrow{p} 0.
    \end{align*}
    
    We then turn to the third term $u_{3}$. This is immediate from the same bound as $u_{1}$ by taking expectations
    \begin{align*}
        u_{3} := \frac{\left|s_{n}^{2} - \mathbb{E} \left[(V_{n}^{\ast})^{2}\right] \right|}{s_{n}^{2}} \leq \frac{\mathbb{E} \left| V_{n}^{2} - (V_{n}^{\ast})^{2} \right|}{s_{n}^{2}} \leq \sqrt{\frac{\log n}{npL}} \to 0.
    \end{align*}
    
    To summarize, as long as $np \gg \log n$ we have
    \begin{align*}
        \sup_{x \in \mathbb{R}} \left| \mathbb{P} \left( \frac{\sqrt{L} \sum_{t=0}^{T} \left( \bm{c}^{\top} \overline{\theta}^{(t+1)} - \bm{c}^{\top} \theta^{(t)} \right)}{\sqrt{\sum_{t=0}^{T}\bm{c}^{\top} \left[ \nabla^{2} \mathcal{L}^{(\leq t)}(\theta^{(t)}) \right]^{\dagger}   \nabla^{2} \ell_{t}(\theta^{(t)})  \left[ \nabla^{2} \mathcal{L}^{(\leq t)}(\theta^{(t)}) \right]^{\dagger} \bm{c}}} \leq x \right) - \mathbb{P} \left( \mathcal{N}(0,1) \leq x \right) \right| = r_{n}
    \end{align*}
    with $r_{n} := o(1)$.
    
    On the other hand, by Lemma \ref{lemma:diff_ori_qudra_t+1} we have
    \begin{align*}
        & \left| \frac{\sqrt{L} \sum_{t=0}^{T}\left(\bm{c}^{\top} \theta^{(t+1)} - \bm{c}^{\top} \overline{\theta}^{(t+1)}\right)}{\sqrt{\sum_{t=0}^{T}\bm{c}^{\top} \left[ \nabla^{2} \mathcal{L}^{(\leq t)}(\theta^{(t)}) \right]^{\dagger} \nabla^{2} \ell_{t}(\theta^{(t)}) \left[ \nabla^{2} \mathcal{L}^{(\leq t)}(\theta^{(t)}) \right]^{\dagger} \bm{c}}} \right| 
        \\ \lesssim & 
        \norm{\bm{c}}_{1} \sum_{t=0}^{T} \frac{1}{t+1} \sqrt{\frac{\log n}{npL}} \left[ \frac{\sqrt{\log n}}{\sqrt{nL}p}  + \frac{1}{\sqrt{np}} \left( 1 + \frac{\log n}{\sqrt{np}} \right) \right] \bigg/
        \sqrt{\sum_{t=0}^{T} \norm{\bm{c}}_{2}^{2} \frac{1}{(t+1)^{2}} \frac{1}{npL}} 
        \\ \lesssim & 
        \log T \left[ \frac{\log n}{\sqrt{nL}p} + \sqrt{\frac{\log n}{np}} \left( 1 + \frac{\log n}{\sqrt{np}} \right) \right] \frac{\norm{\bm{c}}_{1}}{\norm{\bm{c}}_{2}}
    \end{align*}
    with probability exceeding $1 - \mathcal{O}(n^{-5})$.

    For simplicity we denote by $\Gamma = \log T \left[ \frac{\log n}{\sqrt{nL}p} + \sqrt{\frac{\log n}{np}} \left( 1 + \frac{\log n}{\sqrt{np}} \right) \right] \frac{\norm{\bm{c}}_{1}}{\norm{\bm{c}}_{2}}$. Consider event $A = \left\{ \left| \frac{\sqrt{L} \sum_{t=0}^{T}\left(\bm{c}^{\top} \theta^{(t+1)} - \bm{c}^{\top} \overline{\theta}^{(t+1)}\right)}{\sqrt{\sum_{t=0}^{T}\bm{c}^{\top} \left[ \nabla^{2} \mathcal{L}^{(\leq t)}(\theta^{(t)}) \right]^{\dagger} \nabla^{2} \ell_{t}(\theta^{(t)}) \left[ \nabla^{2} \mathcal{L}^{(\leq t)}(\theta^{(t)}) \right]^{\dagger} \bm{c}}} \right|  \leq \Lambda \Gamma \right\}$, where $\Lambda > 0$ is some constant such that $\mathbb{P}(A^{c}) = \mathcal{O}(n^{-5})$. Then we consider the following three events
    \begin{align*}
        B_{1} & = \left\{ \left| \frac{\sqrt{L} \left(\bm{c}^{\top} \theta^{(T+1)} - \bm{c}^{\top} \theta^{\ast}\right)}{\sqrt{\sum_{t=0}^{T} \bm{c}^{\top} \left[ \nabla^{2} \mathcal{L}^{(\leq t)}(\theta^{(t)}) \right]^{\dagger} \nabla^{2} \ell_{t}(\theta^{(t)}) \left[ \nabla^{2} \mathcal{L}^{(\leq t)}(\theta^{(t)}) \right]^{\dagger} \bm{c}}} \right| \leq x \right\},
        \\ 
        B_{2} & = \left\{ \left| \frac{\sqrt{L} \sum_{t=0}^{T} \left(\bm{c}^{\top} \overline{\theta}^{(t+1)} - \bm{c}^{\top} \theta^{(t)}\right)}{\sqrt{\sum_{t=0}^{T} \bm{c}^{\top} \left[ \nabla^{2} \mathcal{L}^{(\leq t)}(\theta^{(t)}) \right]^{\dagger} \nabla^{2} \ell_{t}(\theta^{(t)}) \left[ \nabla^{2} \mathcal{L}^{(\leq t)}(\theta^{(t)}) \right]^{\dagger} \bm{c}}} \right| \leq x - \Lambda \Gamma \right\},
        \\ 
        B_{3} & = \left\{ \left| \frac{\sqrt{L} \sum_{t=0}^{T} \left(\bm{c}^{\top} \overline{\theta}^{(t+1)} - \bm{c}^{\top} \theta^{(t)}\right)}{\sqrt{\sum_{t=0}^{T} \bm{c}^{\top} \left[ \nabla^{2} \mathcal{L}^{(\leq t)}(\theta^{(t)}) \right]^{\dagger} \nabla^{2} \ell_{t}(\theta^{(t)}) \left[ \nabla^{2} \mathcal{L}^{(\leq t)}(\theta^{(t)}) \right]^{\dagger} \bm{c}}} \right| \leq x + \Lambda \Gamma \right\}.
    \end{align*}
    Then we have 
    \begin{align*}
        & \left| \mathbb{P} \left( \frac{\sqrt{L} \left( \bm{c}^{\top} \theta^{(T+1)} - \bm{c}^{\top} \theta^{\ast} \right)}{\sqrt{\sum_{t=0}^{T} \bm{c}^{\top} \left[ \nabla^{2} \mathcal{L}^{(\leq t)}(\theta^{(t)}) \right]^{\dagger} \nabla^{2} \ell_{t}(\theta^{(t)}) \left[ \nabla^{2} \mathcal{L}^{(\leq t)}(\theta^{(t)}) \right]^{\dagger} \bm{c}}} \leq x \right) - \mathbb{P} \left( \mathcal{N}(0,1) \leq x \right) \right|
        \\ = &
        \left| \mathbb{P}(B_{1} \cap A) - \mathbb{P}(B_{1} \cap A^{c}) + \mathbb{P} \left( \mathcal{N}(0,1) \leq x \right)\right|
        \\ \lesssim & 
        \left| \mathbb{P}(B_{1} \cap A) + \mathbb{P} \left( \mathcal{N}(0,1) \leq x \right)\right| + \mathbb{P}(A^{c}).
    \end{align*}
    Observe that $B_{2} \cap A \subset B_{1} \cap A \subset B_{3} \cap A$, we derive
    \begin{align*}
        & \left| \mathbb{P} \left( \frac{\sqrt{L} \left( \bm{c}^{\top} \theta^{(T+1)} - \bm{c}^{\top} \theta^{\ast} \right)}{\sqrt{\sum_{t=0}^{T} \bm{c}^{\top} \left[ \nabla^{2} \mathcal{L}^{(\leq t)}(\theta^{(t)}) \right]^{\dagger} \nabla^{2} \ell_{t}(\theta^{(t)}) \left[ \nabla^{2} \mathcal{L}^{(\leq t)}(\theta^{(t)}) \right]^{\dagger} \bm{c}}} \leq x \right) - \mathbb{P} \left( \mathcal{N}(0,1) \leq x \right) \right|
        \\ = &
        \max \left\{ \left| \mathbb{P}(B_{3}) - \mathbb{P}(\mathcal{N}(0,1) \leq x) \right|, \left| \mathbb{P}(B_{2} \cap A) - \mathbb{P}(\mathcal{N}(0,1) \leq x) \right| \right\} + \mathbb{P}(A^{c})
        \\ \lesssim &
        \max \left\{ r_{n} + \Gamma, r_{n} + \Gamma + n^{-5} \right\} + n^{-5}
        \\ \lesssim &
        r_{n} + \Gamma + n^{-5}.
    \end{align*}
    Since the above inequality holds for every $x \in \mathbb{R}$, we prove the desired result.

    Thus, we finally conclude our proof of Theorem~\ref{thm:asy_norm}.
\end{proof}

\section{Proof of Auxiliary Lemmas in Appendix~\ref{appendix:proof_outline}}
\label{appendix:last}
In this Appendix, we prove detailed proof of aforementioned building blocks.
\subsection{Proof of Auxiliary Lemmas in Appendix~\ref{appendix:equation_A}}
\subsubsection{Proof of Lemma~\ref{lemma:gradient_t_diff_LOO}}
\label{appendix:gradient_t_diff_LOO}
\begin{proof}
    Observe that
    \begin{align*}
        & \nabla \ell_{t}(\theta^{(\tau, m)}) - \nabla \ell_{t}^{(m)}(\theta^{(\tau, m)})
        \\ = & 
        \sum_{(i,j) \in \mathcal{E}_t,\, i>j} \left\{ - y_{j,i}^{(t)}+ \frac{e^{\theta_i^{(\tau, m)}}}{e^{\theta_i^{(\tau, m)}} + e^{\theta_j^{(\tau, m)}}} \right\} (\boldsymbol{e}_{i} - \boldsymbol{e}_{j}) 
        \\ & - 
        \sum_{(i,j) \in \mathcal{E}_{t},i > j, i \neq m, j \neq m} \left\{ - y_{j,i}^{(t)}+ \frac{e^{\theta_i^{(\tau, m)}}}{e^{\theta_i^{(\tau, m)}} + e^{\theta_j^{(\tau, m)}}} \right\} (\boldsymbol{e}_{i} - \boldsymbol{e}_{j}) 
        \\ & -
        \sum_{i \in [n] \backslash \lbrace m \rbrace} p \left\{ - \frac{e^{\theta_{i}^{(t)}}}{e^{\theta_{i}^{(t)}}+e^{\theta_{m}^{(t)}}} + \frac{e^{\theta_{i}^{(\tau, m)}}}{e^{\theta_{i}^{(\tau, m)}}+e^{\theta_{m}^{(\tau, m)}}} \right\} (\boldsymbol{e}_{i} - \boldsymbol{e}_{m}) 
        \\ = &
        \sum_{i \in [n] \backslash \{m\}} G_{m,i}^{(t)} \left\{ - y_{m,i}^{(t)} + \frac{e^{\theta_{i}^{(\tau, m)}}}{e^{\theta_{i}^{(\tau, m)}}+e^{\theta_{m}^{(\tau, m)}}} \right\} (\boldsymbol{e}_{i} - \boldsymbol{e}_{m})
        \\ & -
        \sum_{i \in [n] \backslash \lbrace m \rbrace} p \left\{ - \frac{e^{\theta_{i}^{(t)}}}{e^{\theta_{i}^{(t)}}+e^{\theta_{m}^{(t)}}} + \frac{e^{\theta_{i}^{(\tau, m)}}}{e^{\theta_{i}^{(\tau, m)}}+e^{\theta_{m}^{(\tau, m)}}} \right\} (\boldsymbol{e}_{i} - \boldsymbol{e}_{m})
        \\ = &
        \underbrace{\frac{1}{L} \sum_{i:(i,m)\in\mathcal{E}_{t}} \sum_{l=1}^{L} \left\{ -y_{m,i}^{(t,l)} + \frac{e^{\theta_{i}^{(t)}}}{e^{\theta_{i}^{(t)}}+e^{\theta_{m}^{(t)}}} \right\} (\boldsymbol{e}_{i} - \boldsymbol{e}_{m})}_{:= \boldsymbol{u}^{m}}
        \\ & +
        \underbrace{\sum_{i \in [n] \backslash \{ m \}} \left( G_{m,i}^{(t)} - p \right) \left\{ -\frac{e^{\theta_{i}^{(t)}}}{e^{\theta_{i}^{(t)}}+e^{\theta_{m}^{(t)}}} + \frac{e^{\theta_{i}^{(\tau, m)}}}{e^{\theta_{i}^{(\tau, m)}}+e^{\theta_{m}^{(\tau, m)}}} \right\}  (\boldsymbol{e}_{i} - \boldsymbol{e}_{m})}_{:= \boldsymbol{v}^{m}},
    \end{align*}
    we are going to control $\boldsymbol{u}^{m}$ and $\boldsymbol{v}^{m}$ seperately.

    Keep $i$ for the edge neighbor of $m$ and use $k \in \{ 1, \cdots, n \}$ only as the coordinate index. Since $\boldsymbol{e}_{k}^{\top}(\boldsymbol{e}_{i} - \boldsymbol{e}_{m}) = \boldsymbol{1}_{k=i}-\boldsymbol{1}_{k=m}$, we have
    \begin{align*}
        u_{k}^{m}
        &=
        \boldsymbol{e}_{k}^{\top} \boldsymbol{u}^{m}
        = 
        \sum_{i: (i,m)\in\mathcal{E}_{t}} \frac{1}{L} \sum_{l=1}^{L} \left\{ -y_{m,i}^{(t,l)} + \frac{e^{\theta_{i}^{(t)}}}{e^{\theta_{i}^{(t)}}+e^{\theta_{m}^{(t)}}} \right\} \left( \boldsymbol{1}_{k=i}-\boldsymbol{1}_{k=m} \right),
        \\
        v_{k}^{m}
        &=
        \boldsymbol{e}_{k}^{\top} \boldsymbol{v}^{m}
        =
        \sum_{i \in [n] \backslash \{ m \}} \left( G_{m,i}^{(t)} - p \right) \left\{ -\frac{e^{\theta_{i}^{(t)}}}{e^{\theta_{i}^{(t)}}+e^{\theta_{m}^{(t)}}} + \frac{e^{\theta_{i}^{(\tau, m)}}}{e^{\theta_{i}^{(\tau, m)}}+e^{\theta_{m}^{(\tau, m)}}} \right\} \left( \boldsymbol{1}_{k=i}-\boldsymbol{1}_{k=m} \right),
    \end{align*}
    For the first term $\boldsymbol{u}^{m}$, we note that
    \begin{align*}
        u_{k}^{m} = 
        \begin{cases}
            \frac{1}{L} \sum_{l=1}^{L} \left\{ -y_{m,i}^{(t,l)} + \frac{e^{\theta_{i}^{(t)}}}{e^{\theta_{i}^{(t)}}+e^{\theta_{m}^{(t)}}} \right\} & (k,m) \in \mathcal{E}_{t} \Rightarrow k = i
            \\
            - \sum_{i: (i,m)\in\mathcal{E}_{t}} \frac{1}{L} \sum_{l=1}^{L} \left\{ -y_{m,i}^{(t,l)} + \frac{e^{\theta_{i}^{(t)}}}{e^{\theta_{i}^{(t)}}+e^{\theta_{m}^{(t)}}} \right\} & k=m
            \\
            0 & o.w.
        \end{cases}.
    \end{align*}
    Since $-y_{m,i}^{(t,l)} + \frac{e^{\theta_{i}^{(t)}}}{e^{\theta_{i}^{(t)}}+e^{\theta_{m}^{(t)}}}$ is $i.i.d.$ bounded by $[-1,1]$ with  $\mathbb{E} \left[ -y_{m,i}^{(t,l)} + \frac{e^{\theta_{i}^{(t)}}}{e^{\theta_{i}^{(t)}}+e^{\theta_{m}^{(t)}}} \right] = 0$ and $d_{m}^{(t)} \lesssim np$ by \ref{lemma:degree_concentration}, we can apply Hoeffding's inequality and union bounds to derive
    \begin{align*}
        \mathbb{P} \left( |u_{k}^{m}| \geq \delta \mid \mathcal{E}_{t} \right) 
        \leq 
        2e^{-L\delta^{2}/2} \text{ with } (k,m) \in \mathcal{E}_{t}
        \quad \text{and} \quad
        \mathbb{P} \left( |\mu_{m}^{m}| \geq \xi \mid \mathcal{E}_{t} \right) 
        \leq
        2e^{-L\xi^{2} / (2 d_{m}^{(t)})}.
    \end{align*}
    With $\delta = C \sqrt{\frac{\log n}{L}}$ and $\xi = C\sqrt{\frac{np \log n}{L}}$, union bound yields
    \begin{align*}
        |u_{k}^{m}| \lesssim \sqrt{\frac{\log n}{L}}  \text{ with } (k,m) \in \mathcal{E}_{t}
        \quad \text{and} \quad
        |u_{m}^{m}| \lesssim \sqrt{\frac{np\log n}{L}},
    \end{align*}
    which further implies
    \begin{align*}
        \norm{\boldsymbol{u}^{m}}_{2} \leq |u_{m}^{m}| + \sqrt{\sum_{i: (i,m)\in\mathcal{E}_{t}} |u_{i}^{m}|^{2}} \lesssim \sqrt{\frac{np \log n}{L}}, \quad \forall 1 \leq m \leq n.
    \end{align*}
    For the second term $\boldsymbol{v}^{m}$, this is a zero-mean random vector that satisfies
    \begin{align*}
        v_{k}^{m} =
        \begin{cases}
            \left\{ -\frac{e^{\theta_{i}^{(t)}}}{e^{\theta_{i}^{(t)}}+e^{\theta_{m}^{(t)}}} + \frac{e^{\theta_{i}^{(\tau, m)}}}{e^{\theta_{i}^{(\tau, m)}}+e^{\theta_{m}^{(\tau, m)}}} \right\}(1-p) & (k,m) \in \mathcal{E}_{t} \Rightarrow k = i
            \\
            - \sum_{i \in [n] \backslash \{ m \}} \left( G_{m,i}^{(t)} - p \right) \left\{ -\frac{e^{\theta_{i}^{(t)}}}{e^{\theta_{i}^{(t)}}+e^{\theta_{m}^{(t)}}} + \frac{e^{\theta_{i}^{(\tau, m)}}}{e^{\theta_{i}^{(\tau, m)}}+e^{\theta_{m}^{(\tau, m)}}} \right\} & k = m
            \\
            - \left\{ -\frac{e^{\theta_{i}^{(t)}}}{e^{\theta_{i}^{(t)}}+e^{\theta_{m}^{(t)}}} + \frac{e^{\theta_{i}^{(\tau, m)}}}{e^{\theta_{i}^{(\tau, m)}}+e^{\theta_{m}^{(\tau, m)}}} \right\}p & o.w.
        \end{cases}.
    \end{align*}

    Define $g(x) = \frac{1}{1+e^{x}}$ for $x \in \mathbb{R}$, we have $|g^{\prime}(x)| \leq 1$ and thus
    \begin{align*}
        \left| -\frac{e^{\theta_{i}^{(t)}}}{e^{\theta_{i}^{(t)}}+e^{\theta_{m}^{(t)}}} + \frac{e^{\theta_{i}^{(\tau, m)}}}{e^{\theta_{i}^{(\tau, m)}}+e^{\theta_{m}^{(\tau, m)}}} \right|
        &=
        \left| -\frac{1}{1+e^{\theta_{m}^{(t)}-\theta_{i}^{(t)}}} + \frac{1}{1+e^{\theta_{m}^{(\tau, m)}-\theta_{i}^{(\tau, m)}}} \right|
        \\ &=
        \left| g \left( \theta_{m}^{(\tau, m)}-\theta_{i}^{(\tau, m)} \right) - g \left( \theta_{m}^{(t)}-\theta_{i}^{(t)} \right) \right|
        \\ &\leq
        \left| \left( \theta_{m}^{(\tau, m)}-\theta_{i}^{(\tau, m)} \right) - \left( \theta_{m}^{(t)}-\theta_{i}^{(t)} \right) \right|
        \\ &\leq
        \left| \theta_{i}^{(t)} - \theta_{i}^{(\tau, m)} \right| + \left| \theta_{m}^{(t)} - \theta_{m}^{(\tau, m)} \right|,
    \end{align*}
    which indicates that
    \begin{align*}
        \left| -\frac{e^{\theta_{i}^{(t)}}}{e^{\theta_{i}^{(t)}}+e^{\theta_{m}^{(t)}}} + \frac{e^{\theta_{i}^{(\tau, m)}}}{e^{\theta_{i}^{(\tau, m)}}+e^{\theta_{m}^{(\tau, m)}}} \right|
        &\leq 2 
        \norm{\theta^{(\tau, m)}-\theta^{(t)}}_{\infty},
        \\
        \sum_{i=1}^{n}\left| -\frac{e^{\theta_{i}^{(t)}}}{e^{\theta_{i}^{(t)}}+e^{\theta_{m}^{(t)}}} + \frac{e^{\theta_{i}^{(\tau, m)}}}{e^{\theta_{i}^{(\tau, m)}}+e^{\theta_{m}^{(\tau, m)}}} \right|^{2}
        &\leq 4n 
        \norm{\theta^{(\tau, m)}-\theta^{(t)}}_{\infty}^{2}.
    \end{align*}
    Applying Bernstein inequality we obtain
    \begin{align*}
        |v_{m}^{m}| 
        &\lesssim 
        \sqrt{\left( p \sum_{i=1}^{n} \left| -\frac{e^{\theta_{i}^{(t)}}}{e^{\theta_{i}^{(t)}}+e^{\theta_{m}^{(t)}}} + \frac{e^{\theta_{i}^{(\tau, m)}}}{e^{\theta_{i}^{(\tau, m)}}+e^{\theta_{m}^{(\tau, m)}}} \right|^{2}\right) \log n} 
        \\ & \quad +  
        \max_{1 \leq i \leq n} \left| -\frac{e^{\theta_{i}^{(t)}}}{e^{\theta_{i}^{(t)}}+e^{\theta_{m}^{(t)}}} + \frac{e^{\theta_{i}^{(\tau, m)}}}{e^{\theta_{i}^{(\tau, m)}}+e^{\theta_{m}^{(\tau, m)}}} \right| \log n
        \\ &\lesssim
        \left( \sqrt{np \log n} + \log n \right)  \norm{\theta^{(\tau, m)}-\theta^{(t)}}_{\infty}.
    \end{align*}
    As a consequence, 
    \begin{align*}
        \norm{\boldsymbol{v}^{m}}_{2}
        & \leq 
        |v_{m}^{m}| + \sqrt{\sum_{i: (i,m)\in\mathcal{E}_{t}} |v_{i}^{m}|^{2}}  + \sqrt{\sum_{i: (i,m)\notin\mathcal{E}_{t} \text{ and } i \neq m} |v_{i}^{m}|^{2}}
        \\ & \leq
        \left( \sqrt{np \log n} + \log n \right) \norm{\theta^{(\tau, m)}-\theta^{(t)}}_{\infty} + 
        \\ & \quad + \norm{\theta^{(\tau, m)}-\theta^{(t)}}_{\infty} + 
        p \sqrt{n} \norm{\theta^{(\tau, m)}-\theta^{(t)}}_{\infty}
        \\ & \lesssim
        \sqrt{np \log n} \norm{\theta^{(\tau, m)}-\theta^{(t)}}_{\infty},
    \end{align*}
    as long as $np \gtrsim \log n$.

    Putting the above results together, we see that
    \begin{align*}
        \norm{\nabla \ell_{t}(\theta^{(\tau, m)}) - \nabla \ell_{t}^{(m)}(\theta^{(\tau, m)})}_{2} 
        & \leq 
        \norm{\boldsymbol{u}^{m}}_{2} + \norm{\boldsymbol{v}^{m}}_{2}
        \\ & \lesssim
        \sqrt{\frac{np \log n}{L}} + \sqrt{np \log n} \norm{\theta^{(\tau, m)}-\theta^{(t)}}_{\infty}.
    \end{align*}
\end{proof}

\subsection{Proof of Auxiliary Lemmas in Appendix~\ref{appendix:equation_D_plus_plus}}
\subsubsection{Proof of Lemma~\ref{lemma:inner_reg_bound}}
\label{appendix:inner_reg_bound}
\begin{proof}
With a standard optimization result for a strongly convex objective function based on the auxiliary regularized gradient descent, for any $t = 0, 1, \cdots, T$, we have
\begin{align*}
    \norm{\theta^{(\tau^{\ast})} - \theta^{(t+1)}_{\lambda}}_{2} \leq \left( 1 - \frac{\lambda_{t}}{\lambda_{t}+(t+1)np}\right)^{\tau^{\ast}} \norm{\theta^{(t+1)}_{\lambda} - \theta^{\ast}}_{2}.
\end{align*}
By triangle inequality, we have
\begin{align*}
    \norm{\theta^{(t+1)}_{\lambda} - \theta^{\ast}}_{\infty} & \leq \norm{\theta^{(\tau^{\ast})} - \theta^{(t+1)}_{\lambda}}_{2} + \norm{\theta^{(\tau^{\ast})} - \theta^{\ast}}_{\infty} \\ & \leq 
    \left( 1 - \frac{\lambda_{t}}{\lambda_{t}+(t+1)np}\right)^{\tau^{\ast}} \sqrt{n} \norm{\theta^{(t+1)}_{\lambda} - \theta^{\ast}}_{\infty} + 2.
\end{align*}
Since $1 - \frac{\lambda_{t}}{\lambda_{t}+(t+1)np} \leq 1 - \frac{1}{1+n^{2}}$, we can take $\tau^{\ast} = n^{3}$ in order that $\left( 1 - \frac{\lambda_{t}}{\lambda_{t}+(t+1)np}\right)^{\tau^{\ast}} \sqrt{n} \leq \frac{1}{2}$. This implies $\norm{\theta^{(t+1)}_{\lambda} - \theta^{\ast}}_{\infty} \leq 4$ with probability at least $1 - \mathcal{O}(n^{-7})$.
\end{proof}
\subsubsection{Proof of Lemma~\ref{lemma:inner_unreg_bound}}
\label{appendix:inner_unreg_bound}
\begin{proof}
Define a constraint unregularized MLE as
\begin{align} \label{eq:Constraint_MLE_1}
    \theta^{(t+1)}_{\operatorname{con}} := \underset{\boldsymbol{1}^{\top}\theta = 0; \norm{\theta - \theta^{\ast}}_{\infty} \leq 5} {\arg\min} \mathcal{L}^{(\leq t)}(\theta)
\end{align}
Therefore, $\theta^{(t+1)}_{\lambda}$ is feasible for the above optimization problem. We then have
\begin{align*}
    \mathcal{L}^{(\leq t)}(\theta^{(t+1)}_{\lambda}) \geq \mathcal{L}^{(\leq t)}(\theta^{(t+1)}_{\operatorname{con}}).
\end{align*}
We apply Taylor expansion, and obtain
\begin{align*}
    \mathcal{L}^{(\leq t)}(\theta^{(t+1)}_{\operatorname{con}}) &= 
    \mathcal{L}^{(\leq t)}(\theta^{(t+1)}_{\lambda}) + \left( \theta^{(t+1)}_{\operatorname{con}} - \theta^{(t+1)}_{\lambda} \right)^{\top} \nabla \mathcal{L}^{(\leq t)}(\theta^{(t+1)}_{\lambda})
    \\ & \quad + 
    \frac{1}{2} \left( \theta^{(t+1)}_{\operatorname{con}} - \theta^{(t+1)}_{\lambda} \right)^{\top} \nabla^{2} \mathcal{L}^{(\leq t)}(\theta^{(t+1)}_{\lambda}(\varsigma)) \left( \theta^{(t+1)}_{\operatorname{con}} - \theta^{(t+1)}_{\lambda} \right),
\end{align*}
where $\theta^{(t+1)}_{\lambda}(\varsigma)$ is a convex combination of $\theta^{(t+1)}_{\operatorname{con}}$ and $\theta^{(t+1)}_{\lambda}$. We can derive $\norm{\theta^{(t+1)}_{\lambda}(\varsigma) - \theta^{\ast}}_{\infty} \leq 5$ due to $\norm{\theta^{(t+1)}_{\lambda} - \theta^{\ast}}_{\infty} \leq 4$ and $\norm{\theta^{(t+1)}_{\operatorname{con}} - \theta^{\ast}}_{\infty} \leq 5$ by the definition of $\theta^{(t+1)}_{\operatorname{con}}$. By Lemma \ref{lemma:smallesteigen}, we get the lower bound
\begin{align*}
    \mathcal{L}^{(\leq t)}(\theta^{(t+1)}_{\operatorname{con}}) & \geq 
    \mathcal{L}^{(\leq t)}(\theta^{(t+1)}_{\lambda}) + \left( \theta^{(t+1)}_{\operatorname{con}} - \theta^{(t+1)}_{\lambda} \right)^{\top} \nabla \mathcal{L}^{(\leq T)}(\theta^{(t+1)}_{\lambda})
    \\ & \quad + 
    c_{6}(t+1)np \norm{\theta^{(t+1)}_{\operatorname{con}} - \theta^{(t+1)}_{\lambda}}_{2}^{2}
\end{align*}
for some constant $c_{6} > 0$.

Consider the facts that 
\begin{align*}
    \nabla \mathcal{L}^{(\leq t)}(\theta^{(t+1)}_{\lambda}) + \lambda_{t}\theta^{(t+1)}_{\lambda} = \nabla \mathcal{L}^{(\leq t)}_{\lambda}(\theta^{(t+1)}_{\lambda}) = 0
\end{align*} 
and
\begin{align*}
    \norm{\theta^{(t+1)}_{\lambda}}_{2} \leq \sqrt{n} \norm{\theta^{(t+1)}_{\lambda}}_{\infty} \leq \sqrt{n} \norm{\theta^{(t+1)}_{\lambda} - \theta^{\ast}}_{\infty} + \sqrt{n} \norm{\theta^{\ast}}_{\infty} \leq c_{7} \sqrt{n},
\end{align*}
then we derive 
\begin{align*}
    \norm{\theta^{(t+1)}_{\operatorname{con}} - \theta^{(t+1)}_{\lambda}}_{2}
    \leq
    \frac{\norm{\nabla \mathcal{L}^{(\leq t)}(\theta^{(t+1)}_{\lambda})}_{2}}{c_{6}(t+1)np}
    =
    \frac{\lambda_{t} \norm{\theta^{(t+1)}_{\lambda}}_{2}}{c_{6}(t+1)np}
    \leq 
    \frac{\lambda_{t}c_{7} \sqrt{n}}{c_{6}(t+1)np} 
    \leq 
    D_{1} \sqrt{\frac{1}{n}},
\end{align*}
for some constant $c_{7}, D_{1} > 0$.
In this way, since
\begin{align*}
    \norm{\theta^{(t+1)}_{\operatorname{con}}-\theta^{\ast}}_{\infty} \leq \norm{\theta^{(t+1)}_{\lambda}-\theta^{\ast} }_{\infty} + \norm{\theta^{(t+1)}_{\operatorname{con}}-\theta^{(t+1)}_{\lambda}}_{\infty} \leq 4 + D_{1} \sqrt{\frac{1}{n}} \leq \frac{9}{2},
\end{align*}
the minimizer of constrained unregularzied MLE problem~\eqref{eq:Constraint_MLE_1} is in the interior of the constraint. By convexity, we have $\theta^{(t+1)}_{\operatorname{con}} = \theta^{(t+1)}$. Thus with probability at least $1 - \mathcal{O}(n^{-7})$, we obtain
\begin{align*}
    \norm{\theta^{(t+1)}-\theta^{\ast}}_{\infty} &\leq 5
\end{align*}
for any $t = 0, 1, \cdots, T$.
\end{proof}

\subsection{Proof Auxiliary Lemmas in Appendix~\ref{appendix:equation_EFG}}
\subsubsection{Proof of Lemma~\ref{lemma:auxil_reg_bound}}
\label{appendix:auxil_reg_bound}
\begin{proof}
With a standard optimization result for a strongly convex objective function based on the auxiliary regularized gradient descent, we have
\begin{align*}
    \norm{\theta_{0}^{(\upsilon^{\ast})} - \theta^{(T+1)}_{0,\lambda}}_{2} 
    & \leq 
    \left( 1 - \frac{\lambda_{0}}{\lambda_{0}+(T+1)np}\right)^{\upsilon^{\ast}} \norm{\theta^{(T+1)}_{0,\lambda} - \theta^{(T)}}_{2}
\end{align*}
By triangle inequality, we have
\begin{align*}
    \norm{\theta^{(T+1)}_{0,\lambda} - \theta^{(T)}}_{\infty} 
    & \leq 
    \norm{\theta_{0}^{(\upsilon^{\ast})} - \theta^{(T+1)}_{0,\lambda}}_{2} + \norm{\theta_{0}^{(\upsilon^{\ast})} - \theta^{(T)}}_{\infty} 
    \\ & \leq 
    \left( 1 - \frac{\lambda_{0}}{\lambda_{0}+(T+1)np}\right)^{\upsilon^{\ast}} \sqrt{n} \norm{\theta^{(T+1)}_{0,\lambda} - \theta^{(T)}}_{\infty} + D_{8} \frac{1}{T+1} \sqrt{\frac{\log n}{npL}}.
\end{align*}
We can take $\upsilon^{\ast} = \lceil (1+(T+1)n^{2}p) \log (2\sqrt{n}) \rceil$ with $T \leq \frac{1}{n^{2}p} \left( \frac{n^{5}}{\log (2\sqrt{n})} - 1\right) - 1 \lesssim \frac{n^{3}}{p\log n}$ in order that $\left( 1 - \frac{\lambda_{0}}{\lambda_{0}+(T+1)np}\right)^{\upsilon^{\ast}} \sqrt{n} \leq \frac{1}{2}$. This implies $\norm{\theta^{(T+1)}_{0,\lambda} - \theta^{(T)}}_{\infty} \leq D_{8} \frac{2}{T+1} \sqrt{\frac{\log n}{npL}}$ with probability at least $1 - \mathcal{O}(n^{-5})$.
\end{proof}
\subsubsection{Proof of Lemma~\ref{lemma:auxil_reg_bound_2}}
\label{appendix:auxil_reg_bound_2}
\begin{proof}
Note that $\theta^{(T)}$ is feasible for the regularized optimization problem \eqref{eq:Constraint_MLE_3}, then we observe that
\begin{align*}
    \mathcal{L}_{\lambda_{0}}^{(\leq T)}(\theta^{(T)}) 
    & \geq 
    \mathcal{L}_{\lambda_{0}}^{(\leq T)}(\theta_{0, \lambda}^{(T+1)}) 
    \\ & =
    \mathcal{L}_{\lambda_{0}}^{(\leq T)}(\theta^{(T)}) + \left( \theta^{(T+1)}_{0, \lambda} - \theta^{(T)} \right)^{\top} \nabla \mathcal{L}_{\lambda_{0}}^{(\leq T)}(\theta^{(T)})
    \\ & \quad + 
    \frac{1}{2}  \left( \theta^{(T+1)}_{0, \lambda} - \theta^{(T)} \right)^{\top} \nabla^{2} \mathcal{L}_{\lambda_{0}}^{(\leq T)}(\theta^{(T+1)}_{0, \lambda}(\varrho))  \left( \theta^{(T+1)}_{0, \lambda} - \theta^{(T)} \right),
\end{align*}
where $\theta^{(T+1)}_{0, \lambda}(\varrho)$ is a convex combination of $\theta^{(T)}$ and $\theta^{(T+1)}_{0, \lambda}$. With \eqref{eq:G}, We can derive $\norm{\theta^{(T+1)}_{0,\lambda}(\varrho) - \theta^{(T)}}_{\infty} \leq \norm{\theta^{(T+1)}_{0, \lambda} - \theta^{(T)}}_{\infty} + \norm{\theta^{(T)} - \theta^{\ast}}_{\infty} + \norm{\theta^{\ast}}_{\infty} \leq  D_{8} \frac{2}{T+1} \sqrt{\frac{\log n}{npL}} + D_{4} \sqrt{\frac{1}{pL}} + \log \kappa$. Combining with Lemma \ref{lemma:gradient_T_H} and Lemma \ref{lemma:smallesteigen}, we derive
\begin{align*}
    \norm{\theta_{0, \lambda}^{(T+1)} - \theta^{(T)}}_{2} 
    & \leq
    \frac{2 \norm{\nabla \mathcal{L}_{\lambda_{0}}^{(\leq T)}(\theta^{(T)})}_{2}}{\lambda_{\min, \perp} \left( \nabla^{2} \mathcal{L}_{\lambda_{0}}^{(\leq T)}(\theta^{(T+1)}_{0, \lambda}(\varrho)) \right)}
    \\ & \leq 
    2 \frac{\norm{\nabla \ell_{T}(\theta^{(T)})}_{2} + \lambda_{0} \norm{\theta^{(T)} - \theta^{\ast}}_{2} + \lambda_{0} \norm{\theta^{\ast}}_{2}}{\lambda_{\min, \perp} \left( \nabla^{2} \mathcal{L}_{\lambda_{0}}^{(\leq T)}(\theta^{(T+1)}_{0, \lambda}(\varrho)) \right)}
    \\ & \leq
    \frac{2}{c_{12}} \frac{\sqrt{\frac{n^{2}p}{L}} + D_{4} \frac{1}{n}\sqrt{\frac{1}{pL}} + \frac{1}{\sqrt{n}} \log \kappa}{(T+1)np}
    \\ & \leq
    D_{11} \frac{1}{T+1} \sqrt{\frac{1}{pL}}
\end{align*}
for some constant $c_{12}, D_{11} > 0$.
\end{proof}
\subsubsection{Proof of Lemma~\ref{lemma:auxil_reg_true_bound_2}}
\label{appendix:auxil_reg_true_bound_2}
\begin{proof}
since $\theta^{(T+1)}_{0,\lambda}$ is feasible for $\mathcal{L}^{(\leq T)}(\cdot)$, we have
\begin{align*}
    \mathcal{L}^{(\leq T)}(\theta^{(T+1)}_{0,\lambda}) \geq \mathcal{L}^{(\leq T)}(\theta^{(T+1)}).
\end{align*}
Applying Taylor expansion shows
\begin{align*}
    \mathcal{L}^{(\leq T)}(\theta^{(T+1)}) &= 
    \mathcal{L}^{(\leq T)}(\theta^{(T+1)}_{0, \lambda}) + \left( \theta^{(T+1)} - \theta^{(T+1)}_{0, \lambda} \right)^{\top} \nabla \mathcal{L}^{(\leq T)}(\theta^{(T+1)}_{0, \lambda})
    \\ & \quad + 
    \frac{1}{2} \left( \theta^{(T+1)} - \theta^{(T+1)}_{0, \lambda} \right)^{\top} \nabla^{2} \mathcal{L}^{(\leq T)}(\theta^{(T+1)}_{0, \lambda}(\varsigma)) \left( \theta^{(T+1)} - \theta^{(T+1)}_{0, \lambda} \right),
\end{align*}
where $\theta^{(T+1)}_{\lambda}(\varsigma)$ is a convex combination of $\theta^{(T+1)}$ and $\theta^{(T+1)}_{0, \lambda}$. We can derive $\norm{\theta^{(T+1)}_{0,\lambda}(\varsigma) - \theta^{(T)}}_{\infty}$ is bounded due to $\norm{\theta^{(T+1)}_{0,\lambda} - \theta^{(T)}}_{\infty} \leq D_{8}\frac{2}{T+1} \sqrt{\frac{\log n}{npL}}$ and $\norm{\theta^{(T+1)}- \theta^{(T)}}_{\infty} \leq \norm{\theta^{(T+1)} - \theta^{\ast}} + \norm{\theta^{(T)} - \theta^{\ast}} + 2 \norm{\theta^{\ast}}_{\infty} \leq 10 + 2 \log \kappa$. By Lemma \ref{lemma:smallesteigen}, we get the lower bound
\begin{align*}
    \mathcal{L}^{(\leq T)}(\theta^{(T+1)}) & \geq 
    \mathcal{L}^{(\leq T)}(\theta^{(T+1)}_{0, \lambda}) + \left( \theta^{(T+1)} - \theta^{(T+1)}_{0, \lambda} \right)^{\top} \nabla \mathcal{L}^{(\leq T)}(\theta^{(T+1)}_{0,\lambda})
    \\ & \quad + 
    c_{13}(T+1)np \norm{\theta^{(T+1)} - \theta^{(T+1)}_{0,\lambda}}_{2}^{2}
\end{align*}
for some constant $c_{13} > 0$. 

Consider the facts that 
\begin{align*}
    \nabla \mathcal{L}^{(\leq T)}(\theta^{(T+1)}_{0,\lambda}) + \lambda_{0}\theta^{(T+1)}_{0,\lambda} = \nabla \mathcal{L}^{(\leq T)}_{\lambda_{0}}(\theta^{(T+1)}_{0, \lambda}) = 0
\end{align*} 
and
\begin{align*}
    \norm{\theta^{(T+1)}_{0, \lambda}}_{2}  
    & \leq \norm{\theta^{(T+1)}_{0,\lambda} - \theta^{(T)}}_{2} + \norm{\theta^{(T)} - \theta^{\ast}}_{2}  + \norm{\theta^{(T)}}_{2} 
    \\ & \leq 
    D_{11} \frac{1}{T+1} \sqrt{\frac{1}{pL}} + 
    D_{4} \sqrt{\frac{1}{pL}} + \sqrt   {n} \log \kappa
\end{align*}
then we derive 
\begin{align*}
    \norm{\theta^{(T+1)} - \theta^{(T+1)}_{0,\lambda}}_{2}
    & \leq
    \frac{\norm{\nabla \mathcal{L}^{(\leq T)}(\theta^{(T+1)}_{0,\lambda})}_{2}}{c_{13}(T+1)np}
    =
    \frac{\lambda_{0} \norm{\theta^{(T+1)}_{0,\lambda}}_{2}}{c_{13}(T+1)np}
    \\ & \leq 
    \frac{D_{11}}{c_{13}(T+1)^{2}n^{2}p} \sqrt{\frac{1}{pL}} + 
    \frac{D_{4}}{c_{13}(T+1)n^{2}p} \sqrt{\frac{1}{pL}} + \frac{1}{c_{13}(T+1)n^{2}p} \sqrt{n} \log \kappa
    \\ & \leq
    D_{12} \frac{1}{T+1} \sqrt{\frac{\log n}{npL}}
\end{align*}
for some constant $D_{12} > 0$.
\end{proof}

\subsection{Proof Auxiliary Lemmas in Appendix~\ref{appendix:equation_H}}
\subsubsection{Proof of Lemma~\ref{lemma:hessian_dif_L2}}
\label{appendix:hessian_dif_L2}
\begin{proof}
By definition and Lemma \ref{lemma:degree_concentration}, we have
\begin{align*}
    & \norm{\nabla^{2} \mathcal{L}^{(\leq T)}(\theta^{\prime})-\nabla^{2} \mathcal{L}^{(\leq T)}(\theta^{\prime\prime})}_{2}
    \\ & =
    \norm{\sum_{t=0}^{T} \sum_{(i,j) \in \mathcal{E}_t,\, i>j} \left[ \frac{e^{\theta_i^{\prime}}e^{\theta_j^{\prime}}}{(e^{\theta_i^{\prime}} + e^{\theta_j^{\prime}})^{2}} -\frac{e^{\theta_i^{\prime\prime}}e^{\theta_j^{\prime\prime}}}{(e^{\theta_i^{\prime\prime}} + e^{\theta_j^{\prime\prime}})^{2}}\right](\boldsymbol{e}_{i} - \boldsymbol{e}_{j})(\boldsymbol{e}_{i} - \boldsymbol{e}_{j})^{\top}}_{2}
    \\ &=
    \norm{\sum_{t=0}^{T} \sum_{(i,j) \in \mathcal{E}_t,\, i>j} \left[ \sigma^{\prime}(\theta_i^{\prime}-\theta_j^{\prime}) - \sigma^{\prime}(\theta_i^{\prime\prime}-\theta_j^{\prime\prime}) \right](\boldsymbol{e}_{i} - \boldsymbol{e}_{j})(\boldsymbol{e}_{i} - \boldsymbol{e}_{j})^{\top}}_{2}
    \\ &\lesssim
    \norm{\sum_{t=0}^{T} \sum_{(i,j) \in \mathcal{E}_t,\, i>j} \left[ (\theta_i^{\prime}-\theta_i^{\prime\prime}) - (\theta_j^{\prime}-\theta_j^{\prime\prime}) \right](\boldsymbol{e}_{i} - \boldsymbol{e}_{j})(\boldsymbol{e}_{i} - \boldsymbol{e}_{j})^{\top}}_{2}
    \\ & \lesssim
    \norm{\sum_{t=0}^{T} \boldsymbol{L}_{\mathcal{G}}^{(t)}}_{2} \norm{\theta^{\prime}-\theta^{\prime\prime}}_{\infty}
    \\ & \lesssim
    (T+1)np \norm{\theta^{\prime}-\theta^{\prime\prime}}_{\infty}
\end{align*}
with probability exceeding $1-\mathcal{O}(n^{-10})$.
\end{proof}
\subsubsection{Proof of Lemma~\ref{lemma:diff_qua_T+1_true}}
\label{appendix:diff_qua_T+1_true}
\begin{proof}
Fix a coordinate $k$, we define 
\begin{align*}
    X_{k}^{(t)} 
    & := 
    - \boldsymbol{e}_{k}^{\top} \left[ \nabla^{2} \mathcal{L}^{(\leq t)}(\theta^{(t)}) \right]^{\dagger} \nabla \ell_{t}(\theta^{(t)}) 
    \\ & =
    \sum_{1\le j<i\le n} \underbrace{G^{(t)}_{i,j} \boldsymbol{e}_{k}^{\top}\Big\lbrack \nabla^{2} \mathcal{L}^{(\le t)}(\theta^{(t)})\Big\rbrack^{\dagger}(\boldsymbol{e}_i - \boldsymbol{e}_j)}_{\text{fixed given $\mathcal{H}_{t}$}} \cdot 
    \underbrace{ \frac{1}{L}\sum_{l=1}^{L} \left\{\frac{e^{\theta^{(t)}_{i}}}{e^{\theta^{(t)}_{i}}+e^{\theta^{(t)}_{j}}} - y^{(t,l)}_{j,i} \right\}}_{\text{mean $0$ given $\mathcal{H}_{t}$}},
\end{align*}
where $\left\{ \boldsymbol{e}_{k} \right\}_{k=1}^{n}$ is the $k$-th standard basis vector in $\mathbb{R}^{n}$. Set $\norm{S_{T}}_{\infty} := \max_{k \in [n]} \left| S_{k} \right| = \max_{k \in [n]} \left| \sum_{t=0}^{T} X_{k}^{(t)} \right|$, we can apply sub-Gaussian Azuma-Hoeffding inequality to $\left\lbrace \left( X_{k}^{(t)}, \mathcal{H}_{t} \right): t \in [T] \right\rbrace$ and union bound over $k \in [n]$, where $\mathcal{H}_{s}:=\sigma \left\langle \{\mathcal{G}^{(t)}\}_{t=0}^{s},\{Y^{(t)}\}_{t=0}^{s} \right\rangle, \forall s \in [T]$.

We first guarantee that $\left\lbrace \left( X_{k}^{(t)}, \mathcal{H}_{t}  \right): t \in [T] \right\rbrace$ is a martingale difference array due to
\begin{align*}
    \mathbb{E} \left[ \nabla \ell_{t}(\theta^{(t)}) \mid \mathcal{H}_{t-1}, \mathcal{G}^{(t)} \right] = \mathbb{E} \left[ \sum_{1 \leq j < i \leq n} G_{j,i}^{(t)} \left\{ - y_{j,i}^{(t)}+ \frac{e^{\theta^{(t)}_i}}{e^{\theta^{(t)}_i} + e^{\theta^{(t)}_j}} \right\} (\boldsymbol{e}_{i} - \boldsymbol{e}_{j}) \mid \mathcal{H}_{t-1}, \mathcal{G}^{(t)} \right] = 0,
\end{align*}
which implies
\begin{align*}
    \mathbb{E} \left[ X_{k}^{(t)} \mid \mathcal{H}_{t-1}, \mathcal{G}^{(t)} \right]
    = 
    - \boldsymbol{e}_{k}^{\top} \left[ \nabla^{2} \mathcal{L}^{(\leq t)}(\theta^{(t)}) \right]^{\dagger} \mathbb{E} \left[  \nabla \ell_{t}(\theta^{(t)}) \mid \mathcal{H}_{t-1}, \mathcal{G}^{(t)} \right] = 0,
\end{align*}
and thus
\begin{align*}
    \mathbb{E} \left[ X_{k}^{(t)} \mid \mathcal{H}_{t-1} \right] 
    = 
    \mathbb{E} \left\{ \mathbb{E} \left[ X_{k}^{(t)} \mid \mathcal{H}_{t-1}, \mathcal{G}^{(t)} \right] \mid \mathcal{H}_{t-1}\right\} = 0.
\end{align*}

Next we claim that $\left\{ X_{k}^{(t)} \right\}_{t=0}^{T}$ is a sub-Gaussian martingale difference sequence. Note that Given $\mathcal{H}_{t}$, the summands are independent, have mean zero, and are bounded since $\frac{e^{\theta^{(t)}_{i}}}{e^{\theta^{(t)}_{i}}+e^{\theta^{(t)}_{j}}} - y^{(t,l)}_{j,i} \in [-1, 1]$. By Hoeffding's lemma, we can guarantee that
\begin{align*}
    \mathbb{E} \left\{ \exp \left[ \lambda \left( \frac{1}{L}\sum_{l=1}^{L} \left\{\frac{e^{\theta^{(t)}_{i}}}{e^{\theta^{(t)}_{i}}+e^{\theta^{(t)}_{j}}} - y^{(t,l)}_{j,i} \right\} \right) \right] \mid \mathcal{H}_{t} \right\} 
    \leq
    \exp \left( \frac{\lambda^{2}}{2L} \right)
\end{align*}
and 
\begin{align*}
    \mathbb{E} \left[ \exp \left( \lambda X_{k}^{(t)} \right) \mid \mathcal{H}_{t} \right] & \leq 
    \exp \left[ \frac{\lambda^{2}}{2L} \sum_{1\le j<i\le n} G^{(t)}_{i,j} \left( \boldsymbol{e}_{k}^{\top}\Big\lbrack \nabla^{2} \mathcal{L}^{(\le t)}(\theta^{(t)})\Big\rbrack^{\dagger} (\boldsymbol{e}_i - \boldsymbol{e}_j) \right)^2 \right]
    \\ & \leq
    \exp \left( \frac{\lambda^{2}}{2L} \boldsymbol{e}_{k}^{\top} \Big\lbrack \nabla^{2} \mathcal{L}^{(\le t)}(\theta^{(t)})\Big\rbrack^{\dagger} \sum_{i>j}  G^{(t)}_{i,j} (\boldsymbol{e}_i - \boldsymbol{e}_j) (\boldsymbol{e}_i - \boldsymbol{e}_j)^{\top} \Big\lbrack \nabla^{2} \mathcal{L}^{(\le t)}(\theta^{(t)})\Big\rbrack^{\dagger} \boldsymbol{e}_{k} \right)
    \\ & \leq
    \exp \left( \frac{\lambda^{2}}{2L} \lambda_{\text{max}} \left( \boldsymbol{L}_{\mathcal{G}}^{(t)} \right) \norm{ \Big\lbrack \nabla^{2} \mathcal{L}^{(\le t)}(\theta^{(t)})\Big\rbrack^{\dagger}}_{2}^{2} \norm{\boldsymbol{e}_{k}}_{2}^{2} \right)
    \\ & \leq
    \exp \left( \frac{\lambda^{2}}{2} \sigma_{t}^{2} \right),
\end{align*}
where $\sigma_{t}^{2} \asymp \frac{1}{(t+1)^{2}}\frac{1}{npL}$ by Lemma \ref{lemma:largesteigen} and Lemma \ref{lemma:smallesteigen}. 

Based on the Azuma-Hoeffding inequality with $\sigma_{t}^{2}$-sub-gaussian martingale differences $X_{k}^{(t)}$, then for $\epsilon \geq 0$, we have
\begin{align*}
    \mathbb{P} \left( e_{k}^{\top} \left| \sum_{t=0}^{T} \left[ \nabla^{2} \mathcal{L}^{(\leq t)}(\theta^{(t)}) \right]^{\dagger} \nabla \ell_{t}(\theta^{(t)}) \right| \geq \epsilon \right)
    =
    \mathbb{P} \left( \left| \sum_{t=0}^{T} X^{(t)}_{k} \right| \geq \epsilon \right) \leq 
    2 \exp \left( -\frac{\epsilon^{2}}{2 \sum_{t=0}^{T} \sigma_{t}^{2}} \right)
\end{align*}
and thus
\begin{align*}
    \mathbb{P} \left( \norm{\sum_{t=0}^{T} \overline{\theta}^{(t+1)} - \theta^{(t)}}_{\infty} \geq \epsilon\right)
    & =
    \mathbb{P} \left( \norm{\sum_{t=0}^{T} \left[ \nabla^{2} \mathcal{L}^{(\leq t)}(\theta^{(t)}) \right]^{\dagger} \nabla \ell_{t}(\theta^{(t)})}_{\infty} \geq \epsilon\right)
    \\ & =
    \mathbb{P} \left( \max_{k \in [n]} e_{k}^{\top} \left| \sum_{t=0}^{T} \left[ \nabla^{2} \mathcal{L}^{(\leq t)}(\theta^{(t)}) \right]^{\dagger} \nabla \ell_{t}(\theta^{(t)}) \right| \geq \epsilon \right)
    \\ & \leq
    \sum_{k \in [n]}
    \mathbb{P} \left( e_{k}^{\top} \left| \sum_{t=0}^{T} \left[ \nabla^{2} \mathcal{L}^{(\leq t)}(\theta^{(t)}) \right]^{\dagger} \nabla \ell_{t}(\theta^{(t)}) \right| \geq \epsilon \right)
    \\ & \leq
    2n \exp \left( -\frac{\epsilon^{2}}{2 \sum_{t=0}^{T} \sigma_{t}^{2}} \right).
\end{align*}
With appropriate choice of $\epsilon$, we claim that there exists a constant $D_{13} > 0$ such that
\begin{align*}
    \norm{\sum_{t=0}^{T}\overline{\theta}^{(t+1)} - \theta^{(t)} }_{\infty} \leq D_{13} \sqrt{\frac{\log n}{npL}}
\end{align*}
with probability at least $1 - O(n^{-10})$ since $\sum_{t=0}^{T} \sigma_{t}^{2} \lesssim \frac{1}{npL}$.
\end{proof}
\subsubsection{Proof of Lemma~\ref{lemma:diff_ori_qudra_t+1_l2}}
\label{appendix:diff_ori_qudra_t+1_l2}
\begin{proof}
we observe that
\begin{align*}
    0
    = 
    \nabla \mathcal{L}^{(\leq t)}(\theta^{(t+1)})
    &=
    \nabla \mathcal{L}^{(\leq t)}(\theta^{(t)}) + \nabla^{2} \mathcal{L}^{(\leq t)}(\theta^{(t)}) \left( \theta^{(t+1)}-\theta^{(t)} \right) + \mathbf{R}^{(t)}
    \\ &=
    \nabla \mathcal{L}^{(\leq t)}(\theta^{(t)}) + \left\{ \int_{0}^{1} \nabla^{2} \mathcal{L}^{(\leq t)} (\theta^{(t)}+\delta(\theta^{(t+1)}-\theta^{(t)})) d \delta \right\} \left( \theta^{(t+1)}-\theta^{(t)} \right),
\end{align*}
where
\begin{align*}
    \mathbf{R}^{(t)} = \left\{ \int_{0}^{1} \nabla^{2} \mathcal{L}^{(\leq t)} (\theta^{(t)}+\delta(\theta^{(t+1)}-\theta^{(t)})) - \nabla^{2} \mathcal{L}^{(\leq t)}(\theta^{(t)}) d \delta \right\} \left( \theta^{(t+1)}-\theta^{(t)} \right).
\end{align*}
It immediately follows from Lemma \ref{lemma:hessian_dif_L2} that
\begin{align*}
    \norm{\mathbf{R}^{(t)}}_{2}
    & \lesssim 
    \int_{0}^{1} (t+1) np \norm{\theta^{(t+1)}-\theta^{(t)}}_{\infty} d \delta \cdot \norm{\theta^{(t+1)}-\theta^{(t)}}_{2}
    \\ & \lesssim 
    (t+1) np \cdot \norm{\theta^{(t+1)}-\theta^{(t)}}_{\infty} \cdot \norm{\theta^{(t+1)}-\theta^{(t)}}_{2}.
\end{align*}
Combine the above two solutions, we have
\begin{align*}
    \theta^{(t+1)} - \theta^{(t)} 
    & = 
    -\left[ \nabla^{2} \mathcal{L}^{(\leq t)}(\theta^{(t)}) \right]^{\dagger} \left( \nabla \mathcal{L}^{(\leq t)}(\theta^{(t)}) + \mathbf{R}^{(t)} \right)
    \\ & =
    -\left[ \nabla^{2} \mathcal{L}^{(\leq t)}(\theta^{(t)}) \right]^{\dagger} \nabla \mathcal{L}^{(\leq t)}(\theta^{(t)}) - \left[ \nabla^{2} \mathcal{L}^{(\leq t)}(\theta^{(t)}) \right]^{\dagger} \mathbf{R}^{(t)}
    \\ &=
    \overline{\theta}^{(t+1)} - \theta^{(t)} - \left[ \nabla^{2} \mathcal{L}^{(\leq t)}(\theta^{(t)}) \right]^{\dagger} \mathbf{R}^{(t)},
\end{align*}
thus 
\begin{align*}
    \norm{\theta^{(t+1)} - \overline{\theta}^{(t+1)}}_{2} 
    & = 
    \norm{\left[ \nabla^{2} \mathcal{L}^{(\leq t)}(\theta^{(t)}) \right]^{\dagger}}_{2} \norm{\mathbf{R}^{(t)}}_{2}
    \\ & \lesssim
    \frac{1}{(t+1)np} \cdot (t+1)np \cdot D_{2} \frac{1}{t+1} \sqrt{\frac{\log n}{npL}} \cdot D_{3} \frac{1}{t+1} \sqrt{\frac{1}{pL}}
    \\ & \lesssim
    D_{2}D_{3} \frac{1}{(t+1)^{2}} \sqrt{\frac{\log n}{npL}} \sqrt{\frac{1}{pL}}
    \\ & \leq
    D_{14} \frac{1}{(t+1)^{2}} \sqrt{\frac{\log n}{n}} \frac{1}{pL}. 
\end{align*}
for some sufficiently large constant $D_{14} > 0$.
\end{proof}
\subsubsection{Proof of Lemma~\ref{lemma:theta_shift}}
\label{appendix:theta_shift}
\begin{proof}
Take $\boldsymbol{v} = c \boldsymbol{1}_{n} \in \Psi^{\perp}$ with $c = -\frac{1}{n}\boldsymbol{1}_{n}^{\top} \theta$, one can verify that $\theta + \boldsymbol{v} \in \Theta$ since
\begin{align*}
    \boldsymbol{1}_{n}^{\top} \left( \theta + \boldsymbol{v} \right)
    =
    \boldsymbol{1}_{n}^{\top} \theta + c \boldsymbol{1}_{n}^{\top} \boldsymbol{1}_{n}
    =
    \boldsymbol{1}_{n}^{\top} \theta - \frac{1}{n}\boldsymbol{1}_{n}^{\top} \theta \cdot n = 0.
\end{align*}
\end{proof}    
\subsubsection{Proof of Proposition~\ref{prop:minimizer_L_i}}
\label{appendix:minimizer_L_i}
\begin{proof}
Assume that there exists a $z$ such that $\mathcal{L}^{(\leq t)}|_{\theta_{-i}^{(t+1)}}(z) < \mathcal{L}^{(\leq t)}|_{\theta_{-i}^{(t+1)}}(\theta^{(t+1)}_{i})$. Then we let $\boldsymbol{w} \in \mathbb{R}^{n}$ be the vector such that $\boldsymbol{w}_{-i} = \theta_{-i}^{(t+1)}$ and $w_{i} = z$. And, let $\boldsymbol{v}$ be the vector in $\Psi^{\perp}$ such that $\boldsymbol{w} + \boldsymbol{v} \in \Theta$. Then we have
\begin{align*}
    \mathcal{L}^{(\leq t)} (\boldsymbol{w} + \boldsymbol{v}) = \mathcal{L}^{(\leq t)}(\boldsymbol{w}) = \mathcal{L}^{(\leq t)}|_{\theta_{-i}^{(t+1)}}(z) < \mathcal{L}^{(\leq t)}|_{\theta_{-i}^{(t+1)}}(\theta^{(t+1)}_{i}) = \mathcal{L}^{(\leq t)}(\theta^{(t+1)}).
\end{align*}
This contradicts to the definition of $\theta^{(t+1)}$ in \eqref{eq:argminLossT}.
\end{proof}
\subsubsection{Proof of Proposition~\ref{prop:minimizer_quadratic_L_i}}
\label{appendix:minimizer_quadratic_L_i}
\begin{proof}
If we assume that there exists a $z$ such that $\overline{\mathcal{L}}^{(\leq t)}|_{\overline{\theta}_{-i}^{(t+1)}}(z) < \overline{\mathcal{L}}^{(\leq t)}|_{\overline{\theta}_{-i}^{(t+1)}}(\overline{\theta}^{(t+1)}_{i})$. Then we let $\boldsymbol{w} \in \mathbb{R}^{n}$ be the vector such that $\boldsymbol{w}_{-i} = \theta_{-i}^{(t+1)}$ and $w_{i} = z$. And, let $\boldsymbol{v}$ be the vector in $\Psi^{\perp}$ such that $\boldsymbol{w} + \boldsymbol{v} \in \Theta$. Then we have
\begin{align*}
    \overline{\mathcal{L}}^{(\leq t)} (\boldsymbol{w} + \boldsymbol{v}) = \overline{\mathcal{L}}^{(\leq t)}(\boldsymbol{w}) = \overline{\mathcal{L}}^{(\leq t)}|_{\overline{\theta}_{-i}^{(t+1)}}(z) < \overline{\mathcal{L}}^{(\leq t)}|_{\overline{\theta}_{-i}^{(t+1)}}(\overline{\theta}^{(t+1)}_{i}) = \overline{\mathcal{L}}^{(\leq t)}(\overline{\theta}^{(t+1)}).
\end{align*}
This contradicts to the definition of $\overline{\theta}^{(t+1)}$ in \eqref{eq:argminLosst_quadratic}.
\end{proof}
\subsubsection{Proof of Lemma~\ref{lemma:supp_loss_t}}
\label{appendix:supp_loss_t}
\begin{proof}
    (1) By definition for $i \in [n]$ we have
    \begin{align*}
        \left[ \nabla \mathcal{L}^{(\leq t)} (\theta^{(t)}) \right]_{i} 
        & = 
        \left[ \nabla \ell_{t} (\theta^{(t)}) \right]_{i}
        \\ & =
        \sum_{j \in [n] \backslash \{ i \}} G_{j,i}^{(t)} \left\{ - y_{j,i}^{(t)}+ \frac{e^{\theta_i^{(t)}}}{e^{\theta_i^{(t)}} + e^{\theta_j^{(t)}}} \right\}
        \\ & = 
        \frac{1}{L} \sum_{j \in [n] \backslash \{ i \}} \sum_{l=1}^{L} G_{j,i}^{(t)} \left\{ - y_{j,i}^{(t,l)}+ \frac{e^{\theta_i^{(t)}}}{e^{\theta_i^{(t)}} + e^{\theta_j^{(t)}}} \right\}.
    \end{align*}
    Since $\left| - y_{j,i}^{(t,l)}+ \frac{e^{\theta_i^{(t)}}}{e^{\theta_i^{(t)}} + e^{\theta_j^{(t)}}} \right| \leq 1$, by Bernstein inequality we have
    \begin{align*}
        \left| \left[ \nabla \mathcal{L}^{(\leq t)} (\theta^{(t)}) \right]_{i} - \mathbb{E} \left\{ \left[ \nabla \mathcal{L}^{(\leq t)} (\theta^{(t)}) \right]_{i} \mid \mathcal{G}^{(t)} \right\} \right| 
        & \lesssim 
        \frac{1}{L} \left( \sqrt{\log n \cdot L \left( \sum_{j \in [n] \backslash \{ i \}} G_{j,i}^{(t)} \right)} + \log n \right) 
        \\ & \lesssim 
        \sqrt{\frac{np \log n}{L}}
    \end{align*}
    with probability exceeding $1 - \mathcal{O}(n^{-10})$, as long as $npL \gtrsim \log n$. On the other hand, since  $\mathbb{E} \left[  - y_{j,i}^{(t,l)}+ \frac{e^{\theta_i^{(t)}}}{e^{\theta_i^{(t)}} + e^{\theta_j^{(t)}}}\right] = 0$, we know that $\mathbb{E} \left\{ \left[ \nabla \mathcal{L}^{(\leq t)} (\theta^{(t)}) \right]_{i} \mid \mathcal{G}^{(t)} \right\}  = 0$. As a result,  we have
    \begin{align*}
        \left| \left[\nabla \mathcal{L}^{(\leq t)} (\theta^{(t)})\right]_{i} \right| 
        & \leq 
        D_{19} \sqrt{\frac{np \log n}{L}}
    \end{align*}
    with a sufficiently large constant $D_{19} > 0$.

    (2) By definition we have
    \begin{align*}
        \sum_{j \in [n] \backslash \{i\}} \left[\nabla^{2} \mathcal{L}^{(\leq t)} (\theta^{(t)})\right]_{i,j}^{2}
        & = 
        \sum_{j \in [n] \backslash \{i\}} \left( \sum_{l=0}^{t}-G^{(l)}_{j,i} \frac{e^{\theta_i^{(t)}}e^{\theta_j^{(t)}}}{(e^{\theta_i^{(t)}} + e^{\theta_j^{(t)}})^{2}} \right)^{2}
        \\ & \lesssim
        \sum_{j \in [n] \backslash \{i\}} \left( \sum_{l=0}^{t} G^{(l)}_{j,i} \right)^{2}
        \\ & \lesssim
        \sum_{j \in [n] \backslash \{i\}} \left( \sum_{l=0}^{t} G^{(l)}_{j,i} + \sum_{l=0}^{t} G^{(l)}_{j,i} \sum_{0 \leq s \leq t, s \neq l} {G^{(s)}_{j,i}} \right)
        \\ & \lesssim
        (t+1) \sum_{l=0}^{t} \sum_{j \in [n] \backslash \{i\}} G^{(l)}_{j,i}
        \\ & \leq
        D_{20} (t+1)^{2} np
    \end{align*}
    with probability at least $1 - \mathcal{O}(n^{-10})$.

    (3) Observed that
    \begin{align*}
        \sum_{j \in [n] \backslash \{i\}} \left| \left[\nabla^{2} \mathcal{L}^{(\leq t)} (\theta^{(t)})\right]_{i,j} \right|
        & = 
        \sum_{j \in [n] \backslash \{i\}} \left| \sum_{l=0}^{t} -G^{(l)}_{j,i} \frac{e^{\theta_i^{(t)}}e^{\theta_j^{(t)}}}{(e^{\theta_i^{(t)}} + e^{\theta_j^{(t)}})^{2}} \right|
        \\ & \lesssim
        \sum_{l=0}^{t}  \sum_{j \in [n] \backslash \{i\}} G^{(l)}_{j,i}
        \\ & \leq
        D_{21} (t+1) np
    \end{align*}
    with probability at least $1 - \mathcal{O}(n^{-10})$.

    (4) For $i,j \in [n], i \neq j$, by definition we know that $y_{j,i}^{(t)} = \frac{1}{L} \sum_{l=1}^{L} y_{j,i}^{(t,l)}$ is the average of $L$ independent Bernoulli random variables. By Hoeffding's inequality we know that
    \begin{align*}
        \left| y_{j,i}^{(t)} - \mathbb{E} \left[ y_{j,i}^{(t)} \right] \right| \leq
        D_{22} \sqrt{\frac{\log n}{L}}
    \end{align*}
    with probability at least $1 - \mathcal{O}(n^{-12})$. Since there are at most $n(n-1) \leq n^{2}$ ordered pairs $(i,j)$ with $i \neq j$, by union bound we know that
    \begin{align*}
        \left| y_{j,i}^{(t)} - \mathbb{E} \left[ y_{j,i}^{(t)} \right] \right|
        \leq
        D_{22} \sqrt{\frac{\log n}{L}}
    \end{align*}
    holds for all $i, j \in [n], i \neq j$ with probability at least $1 - \mathcal{O}(n^{-10})$.
\end{proof}
\subsubsection{Proof of Lemma~\ref{lemma:diff_quadratic}}
\label{appendix:diff_quadratic}
\begin{proof}
Recall that we have the following two equations
\begin{align*}
    \overline{\theta}_{i}^{(t+1,\prime)} - \theta_{i}^{(t)}
    & = 
    - \frac{\left[ \nabla \mathcal{L}^{(\leq t)}(\theta^{(t)}) \right]_{i} + \sum_{j \in [n] \backslash \{i\}} \left(\theta^{(t+1)}_{j} - \theta^{(t)}_{j}\right) \left[\nabla^{2} \mathcal{L}^{(\leq t)} (\theta^{(t)})\right]_{i,j}}{\left[\nabla^{2} \mathcal{L}^{(\leq t)}(\theta^{(t)})\right]_{i,i}},
    \\
    \overline{\theta}_{i}^{(t+1)} - \theta_{i}^{(t)} 
    & =
    - \frac{\left[ \nabla \mathcal{L}^{(\leq t)}(\theta^{(t)}) \right]_{i} + \sum_{j \in [n] \backslash \{i\}} \left(\overline{\theta}^{(t+1)}_{j} - \theta^{(t)}_{j}\right) \left[\nabla^{2} \mathcal{L}^{(\leq t)} (\theta^{(t)})\right]_{i,j}  }{\left[\nabla^{2} \mathcal{L}^{(\leq t)}(\theta^{(t)})\right]_{i,i}}
\end{align*}
thus we can derive
\begin{align*}
    \overline{\theta}_{i}^{(t+1,\prime)} - \overline{\theta}_{i}^{(t+1)} 
    & = 
    \frac{\sum_{j \in [n] \backslash \{i\}} \left(\theta^{(t+1)}_{j} - \overline{\theta}^{(t+1)}_{j}\right) \left[\nabla^{2} \mathcal{L}^{(\leq t)} (\theta^{(t)})\right]_{i,j}  }{\left[\nabla^{2} \mathcal{L}^{(\leq t)}(\theta^{(t)})\right]_{i,i}}
    \\ & =
    \underbrace{\frac{\sum_{j \in [n] \backslash \{i\}} \left(\theta^{(t+1)}_{j} - \theta^{(t+1)}_{(i),j}\right) \left[\nabla^{2} \mathcal{L}^{(\leq t)} (\theta^{(t)})\right]_{i,j}}{\left[\nabla^{2} \mathcal{L}^{(\leq t)}(\theta^{(t)})\right]_{i,i}}}_{B_{1}} 
    \\ & \quad + 
    \underbrace{\frac{\sum_{j \in [n] \backslash \{i\}} \left(\theta^{(t+1)}_{(i),j} - \overline{\theta}^{(t+1)}_{(i),j} \right) \left[\nabla^{2} \mathcal{L}^{(\leq t)} (\theta^{(t)})\right]_{i,j}  }{\left[\nabla^{2} \mathcal{L}^{(\leq t)}(\theta^{(t)})\right]_{i,i}}}_{B_{2}}
    \\ & \quad +
    \underbrace{\frac{\sum_{j \in [n] \backslash \{i\}} \left(\overline{\theta}^{(t+1)}_{(i),j} - \overline{\theta}_{j}^{(t+1)} \right) \left[\nabla^{2} \mathcal{L}^{(\leq t)} (\theta^{(t)})\right]_{i,j}  }{\left[\nabla^{2} \mathcal{L}^{(\leq t)}(\theta^{(t)})\right]_{i,i}}}_{B_{3}}.
\end{align*}
    Next, we bound $B_{1} - B_{3}$ one by one. Before proceeding, the denominator $\left[\nabla^{2} \mathcal{L}^{(\leq t)}(\theta^{(t)})\right]_{i,i}$ can be bounded as 
\begin{align*}
    \left[\nabla^{2} \mathcal{L}^{(\leq t)}(\theta^{(t)})\right]_{i,i} 
    = 
    \sum_{l=0}^{t} \sum_{j \in [n] \backslash \{i\}} G_{j,i}^{(t)} \frac{e^{\theta_j^{(t)}}e^{\theta_i^{(t)}}}{(e^{\theta_j^{(t)}} + e^{\theta_i^{(t)}})^{2}}
    \geq c_{18} (t+1) np
\end{align*}
with probability at least $1 - \mathcal{O}(n^{-10})$ for some constant $c_{18} > 0$ due to $\norm{\theta^{(t)}}_{\infty} \leq \norm{\theta^{(t)}-\theta^{\ast}}_{\infty} + \norm{\theta^{\ast}}_{\infty} \leq 5 + \log \kappa$ and Lemma \ref{lemma:degree_concentration}.

For $B_{1}$, by Lemma \ref{lemma:iterative_t_LOO}, we have  
\begin{align*}
    \norm{\theta^{(t+1)}_{(i)} - \theta^{(t+1)}}_{2}
    & \leq
    D_{17} \frac{1}{t+1} \sqrt{\frac{\log n}{npL}}.
\end{align*}
With Cauchy-Schwarz inequality and Lemma \ref{lemma:supp_loss_t}, the numerator of $B_{1}$ can bounded as
\begin{align*}
    \left| \sum_{j \in [n] \backslash \{i\}} \left(\theta^{(t+1)}_{j} - \theta^{(t+1)}_{(i),j}\right) \left[\nabla^{2} \mathcal{L}^{(\leq t)} (\theta^{(t)})\right]_{i,j} \right|
    & \leq 
    \norm{\theta^{(t+1)}_{(i)} - \theta^{(t+1)}}_{2} \sqrt{\sum_{j \in [n] \backslash \{i\}}\left[\nabla^{2} \mathcal{L}^{(\leq t)} (\theta^{(t)})\right]_{i,j}^{2}} 
    \\ & \leq
    D_{17} \frac{1}{t+1} \sqrt{\frac{\log n}{npL}} \sqrt{D_{20} (t+1)^{2}np}
    \\ & =
    D_{17}\sqrt{D_{20}} \sqrt{\frac{\log n}{L}}
\end{align*}
with probability at least $1 - \mathcal{O}(n^{-5})$. As a results, $B_{1}$ can be bounded as
\begin{align*}
    \left| B_{1} \right| \leq \frac{D_{17}\sqrt{D_{20}}}{c_{18}} \frac{1}{t+1} \frac{1}{np} \sqrt{\frac{\log n}{L}}.
\end{align*}
with probability at least $1 - \mathcal{O}(n^{-5})$.

When it comes to $B_{2}$, by Bernstein's inequality we know that
\begin{align*}
    & \left| \sum_{j \in [n] \backslash \{i\}} \left(\theta^{(t+1)}_{(i),j} - \overline{\theta}^{(t+1)}_{(i),j} \right) \left[\nabla^{2} \ell_{t}(\theta^{(t)})\right]_{i,j} - \mathbb{E} \left[ \sum_{j \in [n] \backslash \{i\}} \left(\theta^{(t+1)}_{(i),j} - \overline{\theta}^{(t+1)}_{(i),j} \right) \left[\nabla^{2} \ell_{t} (\theta^{(t)})\right]_{i,j} \mid \theta^{(t+1)}_{(i)}, \overline{\theta}^{(t+1)}_{(i)} \right] \right|
    \\ \lesssim & 
    \sqrt{\log n \sum_{j \in [n] \backslash \{i\}} \left(\theta^{(t+1)}_{(i),j} - \overline{\theta}^{(t+1)}_{(i),j} \right)^{2} \mathbb{E} \left\{ \left[\nabla^{2} \ell_{t} (\theta^{(t)})\right]_{i,j}^{2}\right\}} + \log n \max_{j \in [n] \backslash \{i\}} \left| \theta^{(t+1)}_{(i),j} - \overline{\theta}^{(t+1)}_{(i),j}\right| \left| \left[\nabla^{2} \ell_{t} (\theta^{(t)})\right]_{i,j} \right|
    \\ \lesssim & 
    \sqrt{\log n \sum_{j \in [n] \backslash \{i\}} \left(\theta^{(t+1)}_{(i),j} - \overline{\theta}^{(t+1)}_{(i),j} \right)^{2} \mathbb{E} \left[ \left(G_{j,i}^{(t)} \right)^{2}\right]} + \log n \max_{j \in [n] \backslash \{i\}} \left| \theta^{(t+1)}_{(i),j} - \overline{\theta}^{(t+1)}_{(i),j} \right| \left| G_{j,i}^{(t)} \right|
    \\ \lesssim &
    \sqrt{\log n \sum_{j \in [n] \backslash \{i\}} \left(\theta^{(t+1)}_{(i),j} - \overline{\theta}^{(t+1)}_{(i),j} \right)^{2} \mathbb{E} \left[ G_{j,i}^{(t)} \right]} + \log n \norm{\theta^{(t+1)}_{(i)} - \overline{\theta}^{(t+1)}_{(i)}}_{\infty} 
    \\ \lesssim & 
    \sqrt{p \log n} \norm{\theta^{(t+1)}_{(i)} - \overline{\theta}^{(t+1)}_{(i)}}_{2} + \log n \norm{\theta^{(t+1)}_{(i)} -\overline{\theta}^{(t+1)}_{(i)}}_{\infty}
\end{align*}
with probability at least $1 - \mathcal{O}(n^{-10})$. Moreover, by Lemma \ref{lemma:iterative_t_LOO} we have
\begin{align*}
    & \mathbb{E} \left[ \sum_{j \in [n] \backslash \{i\}} \left(\theta^{(t+1)}_{(i),j} - \overline{\theta}^{(t+1)}_{(i),j} \right) \left[\nabla^{2} \ell_{(t)}(\theta^{(t)})\right]_{i,j} \mid \theta^{(t+1)}_{(i)}, \overline{\theta}^{(t+1)}_{(i)} \right]
    \\ \lesssim & 
    \sqrt{\sum_{j \in [n] \backslash \{i\}} \left(\theta^{(t+1)}_{(i),j} - \overline{\theta}^{(t+1)}_{(i),j} \right)^{2} } \sqrt{\sum_{j \in [n] \backslash \{i\}} \left(\mathbb{E} \left\{ \left[\nabla^{2} \ell_{(t)} (\theta^{(t)})\right]_{i,j} \right\}\right)^{2}}
    \\ \lesssim &
    p\sqrt{n} \norm{\theta^{(t+1)}_{(i)} -\overline{\theta}^{(t+1)}_{(i)}}_{2}
\end{align*}
with probability at least $1 - \mathcal{O}(n^{-10})$.
On the other hand, by Cauchy-Schwarz inequality, we have
\begin{align*}
    & \sum_{j \in [n] \backslash \{i\}} \left(\theta^{(t+1)}_{(i),j} - \overline{\theta}^{(t+1)}_{(i),j} \right) \left[\nabla^{2} \mathcal{L}^{(\leq t-1)} (\theta^{(t)})\right]_{i,j}
    \\ \lesssim &
    \sum_{j \in [n] \backslash \{i\}} \sum_{l=0}^{t-1} \left(G_{j,i}^{(l)}\right)^{2} \left(\theta^{(t+1)}_{(i),j} - \overline{\theta}^{(t+1)}_{(i),j} \right) 
    \\ \lesssim &
    \sqrt{\sum_{j \in [n] \backslash \{i\}} \sum_{l=0}^{t-1} \left(\theta^{(t+1)}_{(i),j} - \overline{\theta}^{(t+1)}_{(i),j} \right)^{2} \left(G_{j,i}^{(l)}\right)^{2}} \sqrt{\sum_{j \in [n] \backslash \{i\}} \sum_{l=0}^{t-1}\left(G_{j,i}^{(l)}\right)^{2}}
    \\ \lesssim &
    t\sqrt{np} \norm{\theta^{(t+1)}_{(i)} -\overline{\theta}^{(t+1)}_{(i)}}_{2}
\end{align*}
with probability exceeding $1 - \mathcal{O}(n^{-10})$. 
As a result, there exists a constant $c_{19} > 0$ such that
\begin{align*}
    \left| B_{2} \right| 
    & \lesssim 
    \frac{ (t+1)\sqrt{np} \norm{\theta^{(t+1)}_{(i)} - \overline{\theta}^{(t+1)}_{(i)}}_{2} + \log n \norm{\theta^{(t+1)}_{(i)} -\overline{\theta}^{(t+1)}_{(i)}}_{\infty}}{(t+1)np}
    \\ & \le
    c_{19}D_{18} \frac{1}{t+1} \frac{1}{\sqrt{npL}} \frac{\log n}{\sqrt{nL}p} + c_{19} \frac{1}{t+1} \frac{\log n}{np} \norm{\theta^{(t+1)}_{(i)} -\overline{\theta}^{(t+1)}_{(i)}}_{\infty}
\end{align*} 
with probability at least $1 - \mathcal{O}(n^{-5})$.

We finally consider bounding $B_{3}$. By definition, we know that
\begin{align*}
    \begin{cases}
        \nabla \mathcal{L}^{(\leq t)}(\theta^{(t)}) +  \nabla^{2} \mathcal{L}^{(\leq t)}(\theta^{(t)}) \left( \overline{\theta}^{(t+1)} - \theta^{(t)} \right) = \bm{0};
        \\
        \nabla \mathcal{L}^{(t \backslash i, \leq t)}(\theta^{(t)}) + \nabla^{2} \mathcal{L}^{(t \backslash i, \leq t)}(\theta^{(t)}) \left( \overline{\theta}^{(t+1)}_{(i)} - \theta^{(t)} \right) = \bm{0}.
    \end{cases}
\end{align*}
Combine the two equations we get
\begin{align*}
    \bm{0} 
    & = 
    \underbrace{\left( \nabla \mathcal{L}^{(\leq t)}(\theta^{(t)}) - \nabla \mathcal{L}^{(t \backslash i, \leq t)}(\theta^{(t)}) \right)}_{\boldsymbol{w}_{1}} + \underbrace{\left( \nabla^{2} \mathcal{L}^{(\leq t)}(\theta^{(t)}) - \nabla^{2} \mathcal{L}^{(t \backslash i, \leq t)}(\theta^{(t)}) \right) \left( \overline{\theta}^{(t+1)} - \theta^{(t)} \right)}_{\boldsymbol{w}_{2}}  
    \\ & \quad + 
    \nabla^{2} \mathcal{L}^{(t \backslash i, \leq t)}(\theta^{(t)}) \left( \overline{\theta}^{(t+1)} - \overline{\theta}^{(t+1)}_{(i)} \right)
\end{align*}

For $\bm{w}_{1}$, similar to Lemma \ref{lemma:gradient_t_diff_LOO}, by definition we have
\begin{align*}
    \nabla \mathcal{L}^{(\leq t)}(\theta^{(t)}) - \nabla \mathcal{L}^{(t \backslash i, \leq t)}(\theta^{(t)})
    & = 
    \nabla \ell_{t}(\theta^{(t)}) - \nabla \ell_{t}^{(i)}(\theta^{(t)})
    \\ & = 
    \frac{1}{L} \sum_{j \in [n] \backslash \{ i \}} G_{j,i}^{(t)} \sum_{l=1}^{L} \left\{ -y_{j,i}^{(t,l)} + \frac{e^{\theta_{i}^{(t)}}}{e^{\theta_{i}^{(t)}}+e^{\theta_{j}^{(t)}}} \right\} (\boldsymbol{e}_{i} - \boldsymbol{e}_{j}).
\end{align*}
For $\left[ \nabla \mathcal{L}^{(\leq t)}(\theta^{(t)}) - \nabla \mathcal{L}^{(t \backslash i, \leq t)}(\theta^{(t)}) \right]_{i}$, by Hoeffding inequality and Lemma \ref{lemma:degree_concentration}, we have
\begin{align*}
    \left| \left[ \nabla \mathcal{L}^{(\leq t)}(\theta^{(t)}) - \nabla \mathcal{L}^{(t \backslash i, \leq t)}(\theta^{(t)}) \right]_{i} \right|
    & = 
    \left| \sum_{j \in [n] \backslash \{ i \}} G_{j,i}^{(t)} \sum_{l=1}^{L} \frac{1}{L} \left\{ -y_{j,i}^{(t,l)} + \frac{e^{\theta_{i}^{(t)}}}{e^{\theta_{i}^{(t)}}+e^{\theta_{j}^{(t)}}} \right\} \right|
    \\ & \lesssim
    \sqrt{\sum_{j \in [n] \backslash \{ i \}} G_{j,i}^{(t)} \frac{\log n}{L}}
    \\ & \lesssim
    \sqrt{\frac{np \log n}{L}}
\end{align*}
with probability at least $1 - \mathcal{O}(n^{-10})$. Furthermore, Lemmas \ref{lemma:supp_loss_t} and \ref{lemma:degree_concentration} further guarantee that
\begin{align*}
     \sum_{j \in [n] \backslash \{ i \}} \left[ \nabla \mathcal{L}^{(\leq t)}(\theta^{(t)}) - \nabla \mathcal{L}^{(t \backslash i, \leq t)}(\theta^{(t)}) \right]_{j}^{2}
     & = 
     \sum_{j \in [n] \backslash \{ i \}}  G_{j,i}^{t} \left\{ -y_{j,i}^{(t)} + \frac{e^{\theta_{i}^{(t)}}}{e^{\theta_{i}^{(t)}}+e^{\theta_{j}^{(t)}}} \right\}^{2}
     \\ & =
     \sum_{j \in [n] \backslash \{ i \}}  G_{j,i}^{t} \left\{ -y_{j,i}^{(t)} + \mathbb{E} \left[ y_{j,i}^{(t)}\right] \right\}^{2}
     \\ & \lesssim
     D_{22}^{2} \frac{np \log n}{L} 
\end{align*}
with probability at least $1 - \mathcal{O}(n^{-10})$. Therefore, for $\bm{w}_{1}$ we have
\begin{align*}
    \norm{\bm{w}_{1}}_{2}^{2} 
    & =
    \norm{\nabla \mathcal{L}^{(\leq t)}(\theta^{(t)}) - \nabla \mathcal{L}^{(t \backslash i, \leq t)}(\theta^{(t)})}_{2}^{2}
    \\ & \leq
    \left[ \nabla \mathcal{L}^{(\leq t)}(\theta^{(t)}) - \nabla \mathcal{L}^{(t \backslash i, \leq t)}(\theta^{(t)}) \right]_{i}^{2} + \sum_{j \in [n] \backslash \{ i \}} \left[ \nabla \mathcal{L}^{(\leq t)}(\theta^{(t)}) - \nabla \mathcal{L}^{(t \backslash i, \leq t)}(\theta^{(t)}) \right]_{j}^{2}
    \\ & \leq
    c_{20}^{2} D^{2}_{18} \frac{np \log n}{L}
\end{align*}
for some constant $c_{20}>0$ with probability exceeding $1 - \mathcal{O}(n^{-10})$.

For $\bm{w}_{2}$, since
\begin{align*}
    \nabla^{2} \mathcal{L}^{(\leq t)}(\theta^{(t)}) - \nabla^{2} \mathcal{L}^{(t \backslash i, \leq t)}(\theta^{(t)})
    & = 
    \sum_{j \in [n] \backslash \{ i \}} \left( G_{j,i}^{(t)} - p \right) \frac{e^{\theta_{i}^{(t)}}e^{\theta_{j}^{(t)}}}{(e^{\theta_{i}^{(t)}}+e^{\theta_{j}^{(t)}})^{2}} (\boldsymbol{e}_{i} - \boldsymbol{e}_{j}) (\boldsymbol{e}_{i} - \boldsymbol{e}_{j})^{\top},
\end{align*}
by Lemma \ref{lemma:degree_concentration} it holds that
\begin{align*}
    & \norm{\left( \nabla^{2} \mathcal{L}^{(\leq t)}(\theta^{(t)}) - \nabla^{2} \mathcal{L}^{(t \backslash i, \leq t)}(\theta^{(t)}) \right) \left( \overline{\theta}^{(t+1)} - \theta^{(t)} \right)}_{2}
    \\ = & 
    \norm{\sum_{j \in [n] \backslash \{ i \}} \left( G_{j,i}^{(t)} - p \right) \frac{e^{\theta_{i}^{(t)}}e^{\theta_{j}^{(t)}}}{(e^{\theta_{i}^{(t)}}+e^{\theta_{j}^{(t)}})^{2}} \left[(\boldsymbol{e}_{i} - \boldsymbol{e}_{j})^{\top}\left( \overline{\theta}^{(t+1)} - \theta^{(t)} \right)\right](\boldsymbol{e}_{i} - \boldsymbol{e}_{j}) }_{2}
    \\ \leq & 
    \sum_{j \in [n] \backslash \{ i \}} \left|  G_{j,i}^{(t)} - p \right| \left| (\boldsymbol{e}_{i} - \boldsymbol{e}_{j})^{\top}\left( \overline{\theta}^{(t+1)} - \theta^{(t)} \right) \right| \norm{\boldsymbol{e}_{i} - \boldsymbol{e}_{j}}_{2}
    \\ \leq &
    \sum_{j \in [n] \backslash \{ i \}} \left( G_{j,i}^{(t)} + p \right) \norm{\overline{\theta}^{(t+1)} - \theta^{(t)}}_{\infty} \norm{\boldsymbol{e}_{i} - \boldsymbol{e}_{j}}_{2}
    \\ \lesssim & 
    np \norm{\overline{\theta}^{(t+1)} - \theta^{(t)}}_{\infty}
\end{align*}
with probability at least $1 - \mathcal{O}(n^{-10})$. On the other hand, by \eqref{eq:E} we obtain
\begin{align*}
    \norm{\overline{\theta}^{(t+1)} - \theta^{(t)}}_{\infty}
    & \leq
    \norm{\theta^{(t+1)} - \theta^{(t)}}_{\infty} + \norm{\overline{\theta}^{(t+1)} - \theta^{(t+1)}}_{\infty} 
    \\ & \leq 
    D_{2} \frac{1}{t+1} \sqrt{\frac{\log n}{npL}} + \norm{\overline{\theta}^{(t+1)} - \theta^{(t+1)}}_{\infty}
\end{align*}
with probability at least $1 - \mathcal{O}(n^{-10})$. That is to say, we have
\begin{align*}
    \norm{\bm{w}_{2}}_{2} \leq  c_{21}D_{2} \frac{1}{t+1} \sqrt{\frac{np \log n}{L}} + c_{21}np \norm{\overline{\theta}^{(t+1)} - \theta^{(t+1)}}_{\infty}
\end{align*}
for some constant $c_{21}>0$ with probability exceeding $1 - \mathcal{O}(n^{-10})$.

Combine the upper bounds of $\norm{\bm{w}_{1}}_{2}$ and $\norm{\bm{w}_{2}}_{2}$, we know that
\begin{align*}
    \norm{\nabla^{2} \mathcal{L}^{(t \backslash i, \leq t)}(\theta^{(t)}) \left( \overline{\theta}^{(t+1)} - \overline{\theta}^{(t+1)}_{(i)} \right)}_{2}
    & = 
    \norm{\bm{w}_{1}}_{2} + \norm{\bm{w}_{2}}_{2} 
    \\ & \leq 
    (c_{21}D_{2} + c_{20}D_{22}) \sqrt{\frac{np \log n}{L}} + c_{21}np \norm{\overline{\theta}^{(t+1)} - \theta^{(t+1)}}_{\infty}
\end{align*}
with probability at least $1 - \mathcal{O}(n^{-5})$.

Since $\overline{\theta}^{(t+1)} - \overline{\theta}^{(t+1)}_{(i)} \in \Theta$, we have
\begin{align*}
    & \lambda_{\min, \perp} \left( \nabla^{2} \mathcal{L}^{(t \backslash i, \leq t)}(\theta^{(t)}) \right) \norm{\overline{\theta}^{(t+1)} - \overline{\theta}^{(t+1)}_{(i)}}_{2}^{2}
    \\ \leq & 
    \left( \overline{\theta}^{(t+1)} - \overline{\theta}^{(t+1)}_{(i)} \right)^{\top} \nabla^{2} \mathcal{L}^{(t \backslash i, \leq t)}(\theta^{(t)}) \left( \overline{\theta}^{(t+1)} - \overline{\theta}^{(t+1)}_{(i)} \right)
    \\ \leq & 
    \norm{\overline{\theta}^{(t+1)} - \overline{\theta}^{(t+1)}_{(i)}}_{2} \norm{\nabla^{2} \mathcal{L}^{(t \backslash i, \leq t)}(\theta^{(t)}) \left( \overline{\theta}^{(t+1)} - \overline{\theta}^{(t+1)}_{(i)} \right)}_{2}.
\end{align*}
As a results, there exists a constant $c_{22} > 0$ such that 
\begin{align*}
    \norm{\overline{\theta}^{(t+1)} - \overline{\theta}^{(t+1)}_{(i)}}_{2} 
    & \leq
    \frac{\norm{\nabla^{2} \mathcal{L}^{(t \backslash i, \leq t)}(\theta^{(t)}) \left( \overline{\theta}^{(t+1)} - \overline{\theta}^{(t+1)}_{(i)} \right)}_{2}}{\lambda_{\min, \perp} \left( \nabla^{2} \mathcal{L}^{(t \backslash i, \leq t)}(\theta^{(t)}) \right)}
    \\ & \leq
    \frac{\norm{\nabla^{2} \mathcal{L}^{(t \backslash i, \leq t)}(\theta^{(t)}) \left( \overline{\theta}^{(t+1)} - \overline{\theta}^{(t+1)}_{(i)} \right)}_{2}}{c_{22}(t+1)np}
    \\ & \leq 
    \frac{c_{21}D_{2} + c_{20}D_{22}}{c_{22}} \frac{1}{t+1} \sqrt{\frac{\log n}{npL}} + \frac{c_{21}}{c_{22}} \frac{1}{t+1} \norm{\overline{\theta}^{(t+1)} - \theta^{(t+1)}}_{\infty}
\end{align*}
with probability at least $1 - \mathcal{O}(n^{-10})$. Therefore, for $B_{3}$ we finally achieve
\begin{align*}
    \left| B_{3} \right|
    & =
    \left|\frac{\sum_{j \in [n] \backslash \{i\}} \left(\overline{\theta}^{(t+1)}_{(i),j} - \overline{\theta}_{j}^{(t+1)} \right) \left[\nabla^{2} \mathcal{L}^{(\leq t)} (\theta^{(t)})\right]_{i,j}  }{\left[\nabla^{2} \mathcal{L}^{(\leq t)}(\theta^{(t)})\right]_{i,i}} \right|
    \\ & \leq
    \frac{\norm{\overline{\theta}^{(t+1)} - \overline{\theta}^{(t+1)}_{(i)}}_{2} \sqrt{\sum_{j \in [n] \backslash \{i\}}\left[\nabla^{2} \mathcal{L}^{(\leq t)} (\theta^{(t)})\right]_{i,j}^{2}}}{c_{18}(t+1)np}
    \\ & \leq 
    \frac{\sqrt{D_{20}}}{c_{18}} \frac{1}{\sqrt{np}} \norm{\overline{\theta}^{(t+1)} - \overline{\theta}^{(t+1)}_{(i)}}_{2}
    \\ & \leq
    \frac{\sqrt{D_{20}}}{c_{18}} \frac{c_{21}D_{2} + c_{20}D_{22}}{c_{22}} \frac{1}{t+1} \frac{1}{np} \sqrt{\frac{\log n}{L}} +  \frac{\sqrt{D_{20}}}{c_{18}} \frac{c_{21}}{c_{22}} \frac{1}{t+1} \frac{1}{\sqrt{np}} \norm{\overline{\theta}^{(t+1)} - \theta^{(t+1)}}_{\infty}
\end{align*}
with probability at least $1 - \mathcal{O}(n^{-5})$.

On the other hand, by Lemma \ref{lemma:iterative_t_LOO}, we know that
\begin{align*}
    \norm{\theta^{(t+1)}_{(i)} -\overline{\theta}^{(t+1)}_{(i)}}_{\infty}
    & \leq 
    \norm{\theta^{(t+1)}_{(i)} -\theta^{(t+1)}}_{2} + \norm{\theta^{(t+1)} -\overline{\theta}^{(t+1)}}_{\infty} + \norm{\overline{\theta}^{(t+1)} -\overline{\theta}^{(t+1)}_{(i)}}_{2}
    \\ & \leq
    \left( D_{17} + \frac{c_{21}D_{2} + c_{20}D_{22}}{c_{22}} \right) \frac{1}{t+1} \sqrt{\frac{\log n}{npL}} + \left(1+ \frac{c_{21}}{c_{22}}\right) \norm{\overline{\theta}^{(t+1)} - \theta^{(t+1)}}_{\infty}
\end{align*}
with probability at least $1 - \mathcal{O}(n^{-5})$. Combine with upper bound of $\left| B_{2} \right|$ we have
\begin{align*}
    \left| B_{2} \right| 
    & \leq 
    c_{19}D_{18} \frac{1}{t+1} \frac{1}{\sqrt{npL}} \frac{\log n}{\sqrt{nL}p} + c_{19} \left( D_{17} + \frac{c_{21}D_{2} + c_{20}D_{22}}{c_{22}} \right) \frac{1}{(t+1)^{2}} \frac{\log n}{np}  \sqrt{\frac{\log n}{npL}}
    \\ & + 
    c_{19}\left(1+ \frac{c_{21}}{c_{22}}\right) \frac{1}{t+1} \frac{\log n}{np}\norm{\overline{\theta}^{(t+1)} - \theta^{(t+1)}}_{\infty}
\end{align*}
with probability at least $1 - \mathcal{O}(n^{-5})$. Combine with $\left| B_{1} \right|$ we know that
\begin{align*}
    \left| \overline{\theta}_{i}^{(t+1,\prime)} - \overline{\theta}_{i}^{(t+1)} \right|
    & \leq 
    \frac{D_{17} \sqrt{D_{20}}}{c_{18}} \frac{1}{t+1} \frac{1}{np} \sqrt{\frac{\log n}{L}} 
    \\ & + 
    c_{19}D_{18} \frac{1}{t+1} \frac{1}{\sqrt{npL}} \frac{\log n}{\sqrt{nL}p} + c_{19} \left( D_{17} + \frac{c_{21}D_{2} + c_{20}D_{22}}{c_{22}} \right) \frac{1}{(t+1)^{2}} \frac{\log n}{np}  \sqrt{\frac{\log n}{npL}}
    \\ & + 
    c_{19}\left(1+ \frac{c_{21}}{c_{22}}\right)\frac{1}{t+1} \frac{\log n}{np}\norm{\overline{\theta}^{(t+1)} - \theta^{(t+1)}}_{\infty}
    \\ & +
    \frac{\sqrt{D_{20}}}{c_{18}} \frac{c_{21}D_{2} + c_{20}D_{22}}{c_{22}} \frac{1}{t+1} \frac{1}{np} \sqrt{\frac{\log n}{L}} +  \frac{\sqrt{D_{20}}}{c_{18}} \frac{c_{21}}{c_{22}} \frac{1}{t+1} \frac{1}{\sqrt{np}} \norm{\overline{\theta}^{(t+1)} - \theta^{(t+1)}}_{\infty}
    \\ & \lesssim
    \frac{1}{t+1} \frac{1}{\sqrt{npL}} \frac{\log n}{\sqrt{nL}p} + \frac{1}{t+1} \frac{1}{np} \sqrt{\frac{\log n}{L}} + \frac{1}{(t+1)^{2}} \frac{\log n}{np}  \sqrt{\frac{\log n}{npL}} 
    \\ & + 
    \frac{1}{t+1} \frac{\log n}{np} \norm{\overline{\theta}^{(t+1)} - \theta^{(t+1)}}_{\infty} + \frac{1}{t+1} \frac{1}{\sqrt{np}} \norm{\overline{\theta}^{(t+1)} - \theta^{(t+1)}}_{\infty}
    \\ & \leq
    D_{23} \frac{1}{t+1} \frac{1}{\sqrt{npL}} \frac{\log n}{\sqrt{nL}p} + D_{23} \frac{1}{t+1} \frac{1}{np} \sqrt{\frac{\log n}{L}} \left( 1 + \frac{\log n}{\sqrt{np}} \right) 
    \\ & + 
    D_{23} \frac{1}{t+1} \left( \frac{\log n}{np} + \frac{1}{\sqrt{np}} \right) \norm{\overline{\theta}^{(t+1)} - \theta^{(t+1)}}_{\infty}
\end{align*}
for some sufficiently large constant $D_{23} > 0$ with probability exceeding $1 - \mathcal{O}(n^{-5})$.
\end{proof}
\subsubsection{Proof of Lemma~\ref{lemma:diff_quadratic_MLE}}
\label{appendix:diff_quadratic_MLE}
\begin{proof}
    Since $\theta^{(t+1)}_{i}$ is the minimizer of $\mathcal{L}^{(\leq t)}|_{\theta_{-i}^{(t+1)}}(\cdot)$, we know that the gradient of $\mathcal{L}^{(\leq t)}|_{\theta_{-i}^{(t+1)}}(\cdot)$ with respect to the $i$-th coordinate is $\left( \mathcal{L}^{(\leq t)}|_{\theta_{-i}^{(t+1)}} \right)^{\prime} (\theta^{(t+1)}_{i}) = 0$. By the mean value theorem we have
    \begin{align*}
        \left( \mathcal{L}^{(\leq t)}|_{\theta_{-i}^{(t+1)}} \right)^{\prime} (\theta^{(t+1)}_{i}) = \left( \mathcal{L}^{(\leq t)}|_{\theta_{-i}^{(t+1)}} \right)^{\prime} (\theta^{(t)}_{i}) + \left( \mathcal{L}^{(\leq t)}|_{\theta_{-i}^{(t+1)}} \right)^{\prime\prime} (b_{1}) \left( \theta_{i}^{(t+1)} - \theta^{(t)}_{i} \right),
    \end{align*}
    where $b_{1}$ is a convex combination of $\theta^{(t)}_{i}$ and $\theta^{(t+1)}_{i}$. As a result, we have
    \begin{align*}
        \theta_{i}^{(t+1)} = \theta^{(t)}_{i} - \frac{\left( \mathcal{L}^{(\leq t)}|_{\theta_{-i}^{(t+1)}} \right)^{\prime} (\theta^{(t)}_{i})}{\left( \mathcal{L}^{(\leq t)}|_{\theta_{-i}^{(t+1)}} \right)^{\prime\prime} (b_{1})}.
    \end{align*}
    By definitions of $\mathcal{L}^{(\leq t)}(\cdot)$ and $\mathcal{L}^{(\leq t)}|_{\rm{x}_{-i}}(\cdot)$, we have
    \begin{align*}
        \left( \mathcal{L}^{(\leq t)}|_{\theta_{-i}^{(t+1)}} \right)^{\prime} (x) 
        & = 
        \sum_{l=0}^{t} \sum_{j \in [n] \backslash \{ i \}} G_{j,i}^{(l)} \left\{ - y_{j,i}^{(l)}+ \frac{e^{x}}{e^{x} + e^{\theta^{(t+1)}_{j}}} \right\},
        \\
        \left( \mathcal{L}^{(\leq t)}|_{\theta_{-i}^{(t+1)}} \right)^{\prime\prime} (x) 
        & = 
        \sum_{l=0}^{t} \sum_{j \in [n] \backslash \{i\}} G_{j,i}^{(l)} \frac{e^{x}e^{\theta^{(t+1)}_{j}}}{(e^{x} + e^{\theta^{(t+1)}_{j}})^{2}}.
    \end{align*}
    We first estimate the difference $\left( \mathcal{L}^{(\leq t)}|_{\theta_{-i}^{(t+1)}} \right)^{\prime\prime} (b_{1}) - \left[ \nabla^{2} \mathcal{L}^{(\leq t)}(\theta^{(t)})\right]_{i,i}$. We have
    \begin{align*}
        & \left| \left( \mathcal{L}^{(\leq t)}|_{\theta_{-i}^{(t+1)}} \right)^{\prime\prime} (b_{1}) - \left[ \nabla^{2} \mathcal{L}^{(\leq t)}(\theta^{(t)})\right]_{i,i} \right|
        \\ = &
        \left| \sum_{l=0}^{t} \sum_{j \in [n] \backslash \{i\}} G_{j,i}^{(l)} \frac{e^{b_{1}}e^{\theta^{(t+1)}_{j}}}{(e^{b_{1}} + e^{\theta^{(t+1)}_{j}})^{2}} - \sum_{l=0}^{t} \sum_{j \in [n] \backslash \{i\}} G_{j,i}^{(l)} \frac{e^{\theta^{(t)}_{i}}e^{\theta^{(t)}_{j}}}{(e^{\theta^{(t)}_{i}} + e^{\theta^{(t)}_{j}})^{2}} \right|
        \\ \leq &
        \sum_{l=0}^{t} \sum_{j \in [n] \backslash \{i\}} G_{j,i}^{(l)} \left| \frac{e^{b_{1}}e^{\theta^{(t+1)}_{j}}}{(e^{b_{1}} + e^{\theta^{(t+1)}_{j}})^{2}} - \frac{e^{\theta^{(t)}_{i}}e^{\theta^{(t)}_{j}}}{(e^{\theta^{(t)}_{i}} + e^{\theta^{(t)}_{j}})^{2}} \right|
        \\ \lesssim & 
        \sum_{l=0}^{t} \sum_{j \in [n] \backslash \{i\}} G_{j,i}^{(l)} \left| (b_{1} - \theta_{j}^{(t+1)}) - (\theta_{i}^{(t)} - \theta_{j}^{(t+1)}) \right|
        \\ \lesssim &
        (t+1) np \norm{\theta^{(t+1)}-\theta^{(t)}}_{\infty}
    \end{align*}
    and
    \begin{align*}
        \left| \left( \mathcal{L}^{(\leq t)}|_{\theta_{-i}^{(t+1)}} \right)^{\prime\prime} (b_{1}) \right| 
        & \geq
        \left| \left[ \nabla^{2} \mathcal{L}^{(\leq t)}(\theta^{(t)})\right]_{i,i} \right| -
        \left| \left( \mathcal{L}^{(\leq t)}|_{\theta_{-i}^{(t+1)}} \right)^{\prime\prime} (b_{1}) - \left[ \nabla^{2} \mathcal{L}^{(\leq t)}(\theta^{(t)})\right]_{i,i} \right| 
        \\ & \gtrsim
        (t+1)np - (t+1) np \norm{\theta^{(t+1)}-\theta^{(t)}}_{\infty}
    \end{align*}
    with probability at least $1 - \mathcal{O}(n^{-5})$. On the other hand, we have
    \begin{align*}
        & \left( \mathcal{L}^{(\leq t)}|_{\theta_{-i}^{(t+1)}} \right)^{\prime} (\theta_{i}^{(t)}) - \left\{ \left[ \nabla \mathcal{L}^{(\leq t)}(\theta^{(t)}) \right]_{i} + \sum_{j \in [n] \backslash \{i\}} \left(\theta^{(t+1)}_{j} - \theta^{(t)}_{j}\right) \left[\nabla^{2} \mathcal{L}^{(\leq t)} (\theta^{(t)})\right]_{i,j} \right\}
        \\ = & 
        \sum_{l=0}^{t} \sum_{j \in [n] \backslash \{ i \}} G_{j,i}^{(l)} \left\{ - y_{j,i}^{(l)} + \frac{e^{\theta_{i}^{(t)}}}{e^{\theta_{i}^{(t)}} + e^{\theta^{(t+1)}_{j}}} \right\} - \sum_{l=0}^{t} \sum_{j \in [n] \backslash \{ i \}} G_{j,i}^{(l)} \left\{ - y_{j,i}^{(l)}+ \frac{e^{\theta_{i}^{(t)}}}{e^{\theta_{i}^{(t)}} + e^{\theta_{j}^{(t)}}} \right\}
        \\ & -
        \sum_{j \in [n] \backslash \{i\}} \left( \theta^{(t)}_{j} - \theta^{(t+1)}_{j} \right) \sum_{l=0}^{t} G^{(l)}_{j,i} \frac{e^{\theta_{i}^{(t)}}e^{\theta_{j}^{(t)}}}{(e^{\theta_{i}^{(t)}} + e^{\theta_{j}^{(t)}})^{2}}
        \\ = & 
        \sum_{l=0}^{t} \sum_{j \in [n] \backslash \{ i \}} G^{(l)}_{j,i} \left\{ \frac{e^{\theta_{i}^{(t)}}}{e^{\theta_{i}^{(t)}} + e^{\theta^{(t+1)}_{j}}} - \frac{e^{\theta_{i}^{(t)}}}{e^{\theta_{i}^{(t)}} + e^{\theta_{j}^{(t)}}} - \left( \theta^{(t)}_{j} - \theta^{(t+1)}_{j} \right) \frac{e^{\theta_{i}^{(t)}}e^{\theta_{j}^{(t)}}}{(e^{\theta_{i}^{(t)}} + e^{\theta_{j}^{(t)}})^{2}} \right\}.
    \end{align*}
    By Taylor expansion we know that
    \begin{align*}
        \frac{e^{\theta_{i}^{(t)}}}{e^{\theta_{i}^{(t)}} + e^{\theta^{(t+1)}_{j}}}
        & = \frac{e^{\theta_{i}^{(t)}}}{e^{\theta_{i}^{(t)}} + e^{\theta_{j}^{(t)}}} 
        +
        \frac{e^{\theta_{i}^{(t)}}e^{\theta_{j}^{(t)}}}{(e^{\theta_{i}^{(t)}} + e^{\theta_{j}^{(t)}})^{2}} \left( \theta_{j}^{(t)} - \theta_{j}^{(t+1)} \right)
        + \frac{e^{b_{2}} ( 1 - e^{b_{2}})}{(1 + e^{b_{2}})^{3}} \left( \theta_{j}^{(t)} - \theta_{j}^{(t+1)} \right)^{2},
    \end{align*}
    where $b_{2}$ is the convex combination of $\theta_{i}^{(t)} - \theta_{j}^{(t+1)}$ and $\theta_{i}^{(t)} - \theta_{j}^{(t)}$. As a result, we have
    \begin{align*}
        & \left| \left( \mathcal{L}^{(\leq t)}|_{\theta_{-i}^{(t+1)}} \right)^{\prime} (\theta_{i}^{(t)}) - \left\{ \left[ \nabla \mathcal{L}^{(\leq t)}(\theta^{(t)}) \right]_{i} + \sum_{j \in [n] \backslash \{i\}} \left(\theta^{(t+1)}_{j} - \theta^{(t)}_{j}\right) \left[\nabla^{2} \mathcal{L}^{(\leq t)} (\theta^{(t)})\right]_{i,j} \right\} \right|
        \\ \leq & 
        \sum_{l=0}^{t} \sum_{j \in [n] \backslash \{ i \}} G^{(l)}_{j,i} \left| \frac{e^{\theta_{i}^{(t)}}}{e^{\theta_{i}^{(t)}} + e^{\theta^{(t+1)}_{j}}} - \frac{e^{\theta_{i}^{(t)}}}{e^{\theta_{i}^{(t)}} + e^{\theta_{j}^{(t)}}} - \left( \theta^{(t)}_{j} - \theta^{(t+1)}_{j} \right) \frac{e^{\theta_{i}^{(t)}}e^{\theta_{j}^{(t)}}}{(e^{\theta_{i}^{(t)}} + e^{\theta_{j}^{(t)}})^{2}} \right|
        \\ \leq & 
        \sum_{l=0}^{t} \sum_{j \in [n] \backslash \{ i \}} G^{(l)}_{j,i} \left| \frac{e^{b_{2}} ( 1 - e^{b_{2}})}{(1 + e^{b_{2}})^{3}}  \right| \left( \theta_{j}^{(t)} - \theta_{j}^{(t+1)} \right)^{2}
        \\ \lesssim & 
        (t+1) np \norm{\theta^{(t+1)} - \theta^{(t)}}_{\infty}^{2}
    \end{align*}
    with probability exceeding $1 - \mathcal{O}(n^{-5})$. Recall that
    \begin{align*}
        \overline{\theta}_{i}^{(t+1,\prime)} - \theta_{i}^{(t)}
        & = 
        - \frac{\left[ \nabla \mathcal{L}^{(\leq t)}(\theta^{(t)}) \right]_{i} + \sum_{j \in [n] \backslash \{i\}} \left(\theta^{(t+1)}_{j} - \theta^{(t)}_{j}\right) \left[\nabla^{2} \mathcal{L}^{(\leq t)} (\theta^{(t)})\right]_{i,j}}{\left[\nabla^{2} \mathcal{L}^{(\leq t)}(\theta^{(t)})\right]_{i,i}},
    \end{align*}
    then we can derive
    \begin{align*}
        & \theta^{(t+1)}_{i} - \overline{\theta}^{(t+1, \prime)}_{i} 
        \\ = &
        \left( \theta_{i}^{(t+1)} - \theta^{(t)}_{i} \right) - \left( \overline{\theta}_{i}^{(t+1,\prime)} - \theta_{i}^{(t)} \right)
        \\ =  & 
        - \frac{\left( \mathcal{L}^{(\leq t)}|_{\theta_{-i}^{(t+1)}} \right)^{\prime} (\theta^{(t)}_{i})}{\left( \mathcal{L}^{(\leq t)}|_{\theta_{-i}^{(t+1)}} \right)^{\prime\prime} (b_{1})} 
        + 
        \frac{\left[ \nabla \mathcal{L}^{(\leq t)}(\theta^{(t)}) \right]_{i} + \sum_{j \in [n] \backslash \{i\}} \left(\theta^{(t+1)}_{j} - \theta^{(t)}_{j}\right) \left[\nabla^{2} \mathcal{L}^{(\leq t)} (\theta^{(t)})\right]_{i,j}}{\left[\nabla^{2} \mathcal{L}^{(\leq t)}(\theta^{(t)})\right]_{i,i}}
        \\ = &
        \underbrace{\frac{\left( \mathcal{L}^{(\leq t)}|_{\theta_{-i}^{(t+1)}} \right)^{\prime\prime} (b_{1}) - \left[\nabla^{2} \mathcal{L}^{(\leq t)}(\theta^{(t)})\right]_{i,i}}{\left( \mathcal{L}^{(\leq t)}|_{\theta_{-i}^{(t+1)}} \right)^{\prime\prime} (b_{1}) \cdot \left[\nabla^{2} \mathcal{L}^{(\leq t)}(\theta^{(t)})\right]_{i,i}} \left\{ \left[ \nabla \mathcal{L}^{(\leq t)}(\theta^{(t)}) \right]_{i} + \sum_{j \in [n] \backslash \{i\}} \left(\theta^{(t+1)}_{j} - \theta^{(t)}_{j}\right) \left[\nabla^{2} \mathcal{L}^{(\leq t)} (\theta^{(t)})\right]_{i,j} \right\}}_{B_{1}}
        \\ & + 
        \underbrace{\frac{\left( \mathcal{L}^{(\leq t)}|_{\theta_{-i}^{(t+1)}} \right)^{\prime} (\theta^{(t)}_{i}) - \left\{ \left[ \nabla \mathcal{L}^{(\leq t)}(\theta^{(t)}) \right]_{i} + \sum_{j \in [n] \backslash \{i\}} \left(\theta^{(t+1)}_{j} - \theta^{(t)}_{j}\right) \left[\nabla^{2} \mathcal{L}^{(\leq t)} (\theta^{(t)})\right]_{i,j} \right\}}{\left( \mathcal{L}^{(\leq t)}|_{\theta_{-i}^{(t+1)}} \right)^{\prime\prime} (b_{1})}}_{B_{2}},
    \end{align*}
    where
    \begin{align*}
        \left| B_{1} \right| 
        & \lesssim 
        \frac{(t+1) np \norm{\theta^{(t+1)}-\theta^{(t)}}_{\infty}}{(t+1)np \left[(t+1)np - (t+1) np \norm{\theta^{(t+1)}-\theta^{(t)}}_{\infty}\right]} 
        \\ & \quad \cdot 
        \left| \left[ \nabla \mathcal{L}^{(\leq t)}(\theta^{(t)}) \right]_{i} + \sum_{j \in [n] \backslash \{i\}} \left(\theta^{(t+1)}_{j} - \theta^{(t)}_{j}\right) \left[\nabla^{2} \mathcal{L}^{(\leq t)} (\theta^{(t)})\right]_{i,j} \right|
        \\ & \lesssim
        D_{2} \frac{1}{(t+1)^{2}} \frac{1}{np} \sqrt{\frac{\log n}{npL}} \left| \left[ \nabla \mathcal{L}^{(\leq t)}(\theta^{(t)}) \right]_{i} + \sum_{j \in [n] \backslash \{i\}} \left(\theta^{(t+1)}_{j} - \theta^{(t)}_{j}\right) \left[\nabla^{2} \mathcal{L}^{(\leq t)} (\theta^{(t)})\right]_{i,j} \right|
    \end{align*}
    and
    \begin{align*}
        \left| B_{2} \right|
        \lesssim
        \frac{(t+1)np\norm{\theta^{(t+1)} - \theta^{(t)}}_{\infty}^{2}}{(t+1)np - (t+1) np \norm{\theta^{(t+1)}-\theta^{(t)}}_{\infty}}
        \lesssim
        D_{2}^{2} \frac{1}{(t+1)^{2}} \frac{\log n}{npL}
    \end{align*}
    with probability exceeding $1 - \mathcal{O}(n^{-5})$. On the other hand, we know that
    \begin{align*}
        & \left| \left[ \nabla \mathcal{L}^{(\leq t)}(\theta^{(t)}) \right]_{i} + \sum_{j \in [n] \backslash \{i\}} \left(\theta^{(t+1)}_{j} - \theta^{(t)}_{j}\right) \left[\nabla^{2} \mathcal{L}^{(\leq t)} (\theta^{(t)})\right]_{i,j} \right|
        \\ \leq &
        \left| \left[ \nabla \mathcal{L}^{(\leq t)}(\theta^{(t)}) \right]_{i} \right| + \left| \sum_{j \in [n] \backslash \{i\}} \left(\theta^{(t+1)}_{j} - \theta^{(t)}_{j}\right) \left[\nabla^{2} \mathcal{L}^{(\leq t)} (\theta^{(t)})\right]_{i,j} \right|
        \\ \leq &
        \left| \left[ \nabla \mathcal{L}^{(\leq t)}(\theta^{(t)}) \right]_{i} \right| + \norm{\theta^{(t+1)} - \theta^{(t)}}_{\infty} \sum_{j \in [n] \backslash \{i\}} \left| \left[\nabla^{2} \mathcal{L}^{(\leq t)} (\theta^{(t)})\right]_{i,j} \right|
        \\ \leq &
        D_{19} \sqrt{\frac{np \log n}{L}} + D_{2} \frac{1}{t+1} \sqrt{\frac{\log n}{npL}} D_{21} (t+1)np 
        \\ \leq &
        c_{23} \sqrt{\frac{np \log n}{L}}
    \end{align*}
    for some constant $c_{23} \geq D_{19} + D_{21}$ with probability at least $1 - \mathcal{O}(n^{-5})$. To sum up, we have
    \begin{align*}
        \left| \theta^{(t+1)}_{i} - \overline{\theta}^{(t+1, \prime)}_{i} \right| 
        & \leq 
        \left| B_{1} \right| + \left| B_{2} \right|
        \\ & \leq 
        D_{2} \frac{1}{(t+1)^{2}} \frac{1}{np} \sqrt{\frac{\log n}{npL}} c_{23} \sqrt{\frac{np \log n}{L}} + D_{2}^{2} \frac{1}{(t+1)^{2}} \frac{\log n}{npL}
        \\ & \leq 
        D_{24} \frac{1}{(t+1)^{2}} \frac{\log n}{npL}
    \end{align*}
    for some constant $D_{24} \geq (1 + c_{23})D_{2}^{2}$ with probability at least $1 - \mathcal{O}(n^{-5})$.
\end{proof}
\subsubsection{Proof of Lemma~\ref{lemma:diff_ori_qudra_t+1}}
\label{appendix:diff_ori_qudra_t+1}
\begin{proof}
    Combine Lemma \ref{lemma:diff_quadratic} and  Lemma \ref{lemma:diff_quadratic_MLE}, we know that
    \begin{align*}
        \norm{\theta^{(t+1)} - \overline{\theta}^{(t+1)}}_{\infty} 
        & \leq
        D_{24} \frac{1}{(t+1)^{2}} \frac{\log n}{npL} + D_{23} \frac{1}{t+1} \frac{1}{\sqrt{npL}} \frac{\log n}{\sqrt{nL}p} + D_{23} \frac{1}{t+1} \frac{1}{np} \sqrt{\frac{\log n}{L}} \left( 1 + \frac{\log n}{\sqrt{np}} \right) 
        \\ & + 
        D_{23} \frac{1}{t+1} \left( \frac{\log n}{np} + \frac{1}{\sqrt{np}} \right) \norm{\overline{\theta}^{(t+1)} - \theta^{(t+1)}}_{\infty} 
    \end{align*}
    with probability at least $1 - \mathcal{O}(n^{-5})$. To reveal the constant hidden in the above inequality, we write it as
    \begin{align*}
        \norm{\theta^{(t+1)} - \overline{\theta}^{(t+1)}}_{\infty} 
        & \leq
        \left( D_{23} + D_{24} \right) \left[ \frac{1}{t+1} \frac{1}{\sqrt{npL}} \frac{\log n}{\sqrt{nL}p} + \frac{1}{t+1} \frac{1}{np} \sqrt{\frac{\log n}{L}} \left( 1 + \frac{\log n}{\sqrt{np}} \right) \right]
        \\ & \quad +
        \left( D_{23} + D_{24} \right) \frac{1}{t+1} \left( \frac{\log n}{np} +  \frac{1}{\sqrt{np}} \right) \norm{\overline{\theta}^{(t+1)} - \theta^{(t+1)}}_{\infty} 
    \end{align*}
    with probability at least $1 - \mathcal{O}(n^{-5})$. As a result, as long as $\frac{1}{t+1}  \left( \frac{\log n}{np} + \frac{1}{\sqrt{np}} \right) \leq \frac{1}{2(D_{23}+D_{24})}$,
    we have
    \begin{align*}
        \norm{\theta^{(t+1)} - \overline{\theta}^{(t+1)}}_{\infty} 
        & \leq 
        2(D_{23}+D_{24}) \left[ \frac{1}{t+1} \frac{1}{\sqrt{npL}} \frac{\log n}{\sqrt{nL}p} + \frac{1}{t+1} \frac{1}{np} \sqrt{\frac{\log n}{L}} \left( 1 + \frac{\log n}{\sqrt{np}} \right) \right]
        \\ & \leq
        D_{25}\frac{1}{t+1} \sqrt{\frac{\log n}{npL}} \left[ \frac{\sqrt{\log n}}{\sqrt{nL}p} + \frac{1}{\sqrt{np}} \left( 1 + \frac{\log n}{\sqrt{np}} \right) \right]
    \end{align*}
    for some constant $D_{25} \geq 2(D_{23}+D_{24})$ with probability exceeding $1 - \mathcal{O}(n^{-5})$.
\end{proof}
\end{document}